\documentclass[11pt]{article}

\usepackage{times}
\usepackage{color}
\usepackage{fullpage}
\usepackage{mathtools}
\usepackage{amsthm, amssymb}
\usepackage{xfrac}
\usepackage[ruled, vlined, linesnumbered]{algorithm2e}
\usepackage[colorlinks,linkcolor=blue,citecolor=blue,urlcolor=black]{hyperref}
\usepackage{caption, subcaption}
\DeclareCaptionType{copyrightbox}
\usepackage{framed,url}
\usepackage{stackrel}

\newtheorem{theorem}{Theorem}
\newtheorem{lemma}[theorem]{Lemma}

\newtheorem{claim}[theorem]{Claim}
\newtheorem{corollary}[theorem]{Corollary}

\newtheorem{observation}[theorem]{Observation}

\newtheorem{remark}[theorem]{Remark}

\newtheorem{definition}{Definition}

\DeclareMathOperator{\bE}{{\mathop{\mathbb{E}}}}

\newcommand{\eps}{\epsilon}
\newcommand{\E}{\mathcal{E}}

\newcommand{\abs}[1]{\left| #1 \right|}

\newcommand{\I}{\mathcal{I}}
\newcommand{\A}{\mathcal{A}}
\newcommand{\B}{\mathcal{B}}

\newcommand{\D}{\mathcal{D}}

\newcommand{\Acc}{\mathit{Acc}}
\newcommand{\Rej}{\mathit{Rej}}
\newcommand{\Top}{\mathit{Top}}

\newcommand{\cenT}{{\tt{CentralApproxTop}}}
\newcommand{\cenB}{{\tt{CentralApproxBtm}}}
\newcommand{\colM}{{\tt{CollabTopM}}}
\newcommand{\colMG}{{\tt{CollabTopMGeneral}}}
\newcommand{\colMI}{{\tt{CollabTopMImproved}}}

\newcommand{\colN}{{\tt{CollabTopMSimple}}}
\newcommand{\verM}{{\tt{VerifyTopM}}}
\newcommand{\lucb}{{\tt{LUCB}}}
\newcommand{\reduct}{{\tt{Reduction}}}
\newcommand{\reductG}{{\tt{ReductionGeneral}}}

\renewcommand{\ln}{\log}

%\newenvironment{proof}{\trivlist\item[]\emph{Proof}:}%
%{\unskip\nobreak\hskip 1em plus 1fil\nobreak$\Box$
%\parfillskip=0pt%
%\endtrivlist}

\title{Collaborative Top Distribution Identifications with Limited Interaction\thanks{Accepted for presentation at the 61st Annual IEEE Symposium on Foundations of Computer Science (FOCS 2020). Nikolai Karpov and Qin Zhang are supported in part by NSF IIS-1633215, CCF-1844234 and CCF-2006591. Yuan Zhou is supported in part by NSF CCF-2006526 and a JPMorgan Chase AI Research Faculty Research Award.}}

\author{
Nikolai Karpov \\ 
Computer Science Department\\
Indiana University\\
\texttt{nkarpov@iu.edu}
\and 
Qin Zhang \\
Computer Science Department\\
Indiana University\\
\texttt{qzhangcs@indiana.edu}
\and 
Yuan Zhou\\
Department of ISE\\
University of Illinois at Urbana-Champaign\\
 \texttt{yuanz@illinois.edu}
}

\begin{document}

\maketitle

\begin{abstract}
We consider the following problem in this paper: given a set of $n$ distributions, find the top-$m$ ones with the largest means.  This problem is also called {\em top-$m$ arm identifications} in the literature of reinforcement learning, and has numerous applications. We study the problem in the collaborative learning model where we have multiple agents who can draw samples from the $n$ distributions in parallel.  Our goal is to characterize the tradeoffs between the running time of learning process and the number of rounds of interaction between agents, which is very expensive in various scenarios.  
We give optimal time-round tradeoffs, as well as demonstrate complexity separations between top-$1$ arm identification and top-$m$ arm identifications for general $m$ and between fixed-time and fixed-confidence variants.  As a byproduct, we also give an algorithm for selecting the distribution with the $m$-th largest mean in the collaborative learning model. 
\end{abstract}

\thispagestyle{empty}
\setcounter{page}{0}
\newpage

\section{Introduction}
\label{sec:intro}

In this paper we study the following problem: given a set of $n$ distributions, try to find the $m$ ones with the largest means via sampling.   We study the problem in the multi-agent setting where we have $K$ agents, who try to identify the top-$m$ distributions collaboratively via communication.  Suppose sampling from each distribution takes a unit time, our goal is to minimize both the running time and the number of rounds of communication of the collaborative learning process. 

The problem of top-$m$ distribution identifications originates from the literature of {\em multi-armed bandits} (MAB)~\cite{SB98}, where each distribution is called an {\em arm}, and each sampling from a distribution is called an {\em arm pull}.  When $m=1$, the problem is called {\em best arm identification}, and has been studied extensively in the centralized setting where there is only one agent~\cite{ABM10, BMS09, DGW02, EMM02, MT04, KKS13, JMNB14, KCG16, CLQ17a, CL16, GK16}.  Some of these algorithms can be easily modified to handle top-$m$ arm identification (e.g., \cite{ABM10,BWV13}).  
The problem of best arm identification has also been studied in the multi-agent collaborative learning model~\cite{HKKLS13,TZZ19}.  Surprisingly, we found that in the multi-agent setting, the tasks of identifying the best arm and the top-$m$ arms look to be very different in terms of problem complexities; the algorithm design and lower bound proof for the top-$m$ case require significantly new ideas, and need to address some fundamental challenges in collaborative learning.

\paragraph{Collaborative Learning with Limited Interaction.}  
A natural way to speed up machine learning tasks is to introduce multiple agents, and let them learn the target function collaboratively. In recent years some works have been done to address the power of parallelism  (under the name of {\em concurrent learning}, e.g., \cite{SNBWM13,GB15,DR18,DOR18}). Most of these works assume that agents have the full ability of communication. That is, they can send/receive messages to/from each other at {\em any} time step. This assumption, unfortunately, is unrealistic in real-world applications, as it would be very expensive to implement unrestricted communication, which is usually the biggest drain of time, data, energy and network bandwidth. For example, once we deploy sensors/robots to unknown environment such as deep sea and outer space, it would be almost impossible to recharge them; when we train a model in a central server by interacting with hundreds of thousands of mobile devices, the communication cost will directly contribute to our data bills, not mentioning the excessive energy and bandwidth consumption.

In this paper we consider the model of {\em collaborative learning with limited interaction}, where the learning process is partitioned into rounds of predefined time intervals. In each round, each of the $K$ agents takes a series of actions individually like in the centralized model, and they can only communicate at the end of each round. At the end of the last round before any communication, all agents should agree on the same output; otherwise we say the algorithm fails. Our goal is to minimize both the number of rounds of computation $R$ and the running time $T$ (assuming each action takes a unit time step).\footnote{We note that our model is a simplified version of the one formulated in \cite{TZZ19}. The model defined in \cite{TZZ19} allows each agent to perform different numbers of actions in each round, and the length of each round can be determined adaptively by the agents. However, we noticed that all the existing algorithms for collaborative learning in the literature have predefined round lengths,  under which there is no point for an agent to stop early in a round.}  

Naturally, there is a tradeoff between $R$ and $T$: If $R = 1$, that is, no communication is allowed, then $T \ge T_{\tt C}$ where $T_{\tt C}$ is the running time of the best centralized algorithm. When $R$ increases, $T$ may decrease. On the other hand we always have $T \ge T_{\tt C}/K$ even when $R = T$.  We are mostly interested in understanding the number of rounds needed to achieve almost {\em full  speedup}, that is, when $T = \tilde{O}(T_{\tt C}/K)$ where $\tilde{O}(\cdot)$ hides logarithmic factors.  
%We will also give a tradeoff between $R$ and $T$. \qinsays{add tradeoff results??}

We do not put any constraints on the lengths of the messages that each agent can send at the end of each round, but in the MAB setting they will not be very large -- the information that each agent collects can always be compressed to an array of $n$ pairs in the form of $(x_i, \tilde{\theta}_i)$, where $x_i$ is the number of arm pulls on the $i$-th arm, and $\tilde{\theta}_i$ is the empirical mean of the $x_i$ arm pull.

%\qinsays{Shall we talk about adaptive and non-adaptive? Seem no need in this paper?}

\paragraph{Top-$m$ Arm Identification.}  
To be consistent with the MAB literature, we will use the term {\em arm} instead of {\em distribution} throughout this paper.
The top-$m$ arm identification problem is motivated by a variety of applications ranging from industrial engineering~\cite{KL85} to medical tests~\cite{Thompson33}, and from evolutionary computation~\cite{SBC06} to crowdsourcing~\cite{AAKS13}. The readers may refer to~\cite{ABM10, KTAS12, CLKLC14, CGL16, CLQ17} and references therein for the state-of-the-art results on the top-$m$ arm identification in the centralized model. 

In this paper we mainly focus on the {\em fixed-time} case, where given a fixed time horizon $T$, the task is to identify the set of $m$ arms with the largest means with the smallest error probability.  We will also discuss the {\em fixed-confidence} case, where given a fixed error probability $\delta$, the task is to identify the top-$m$ arms with error $\delta$ using the smallest amount of time. 

Without loss of generality, we assume that each of the underlying distributions has support on $(0,1)$. 
In the centralized setting, Bubeck et al.~\cite{BWV13} introduced the following complexity to characterize the hardness of an input instance $V$ for the top-$m$ arm identification problem. Let $\theta_i$ be the mean of the $i$-th arm. Let $[j]$ be the index of the arm in $V$ with the $j$-th largest mean, and let $\theta_{[j]}(V)$ be the corresponding mean.   
Given an input instance $I$ of $n$ arms, let $\Delta^{\langle m\rangle}_i(I)$ be the gap between the mean of the $i$-th arm and that of the $[m]$-th arm or the $[m+1]$-th arm, whichever is larger. In other words,
\begin{equation}
\label{eq:gap}
    \Delta^{\langle m \rangle}_i(I) \triangleq
    \begin{cases}
    \theta_{i} - \theta_{[m + 1]}(I), & \text{if}~\theta_i \geq \theta_{[m]}(I), \\
    \theta_{[m]}(I) - \theta_{i}, & \text{if}~\theta_i \leq \theta_{[m + 1]}(I).
    \end{cases}
\end{equation}

\begin{definition}[{\bf Instance Complexity}]
\label{def:instance-complexity}

Given an input instance $I$ of $n$ arms and a parameter $m$ (call it the {\em pivot}), we define the following quantity which characterizes the complexity of $I$.
\begin{equation*}
H^{\langle m \rangle}(I) \triangleq \sum\limits_{i \in I} {\left(\Delta^{\langle m\rangle}_i(I)\right)^{-2}}.
\end{equation*}
We also define a related quantity which we call the {\em $\eps$-truncated instance complexity}. 
\begin{equation*}
H^{\langle m \rangle}_{\epsilon}(I) \triangleq \sum\limits_{i \in I} {\max\left\{\Delta^{\langle m\rangle}_i(I), \epsilon\right\}^{-2}}.
\end{equation*}
\end{definition}

To see why $H^{\langle m \rangle}(I)$ is the right measure for the instance complexity, note that if the mean of an arm is either $(\theta+\Delta)$ or $(\theta-\Delta)$ where $\theta$ is a known threshold, it takes $\Omega(\Delta^{-2})$ samples to decide whether the mean is above or below the threshold $\theta$ (as long as $\theta\pm\Delta$ are bounded away from $0$ and $1$). Therefore, suppose all the means are bounded away from $0$ and $1$, even if we are given the means of the $[m]$-th and the $[m+1]$-th arms, it still takes  $\Omega(H^{\langle m \rangle}(I))$ samples to decide for each arm whether it is one of the top-$m$ arms or not. Such intuition can be formalized to show that, in the fixed-confidence case, $\Omega(H^{\langle m \rangle}(I) \log (1/\delta))$ samples are needed to identify the top-$m$ arms with success probability $(1-\delta)$~\cite{SJR17,CLQ17}. On the other hand, there are centralized algorithms to achieve $O(H^{\langle m \rangle}(I) \log (1/\delta)+H^{\langle m \rangle}(I) \log H^{\langle m \rangle}(I))$ (see, e.g., \cite{KTAS12}), almost matching the lower bound (up to logarithmic factors).

For the fixed-time case, in \cite{BWV13} it was shown that there is a centralized algorithm that identifies the top-$m$ arms with probability at least
% \begin{equation}
% \label{eq:cen-ub}
% 1 - 2n^2 \cdot \exp\left(- \tilde{\Omega}\left(\frac{T}{H^{\langle m \rangle}(I)} \right)\right)
% \end{equation}
\begin{equation}
\label{eq:cen-ub}
1 - \exp\left(- \tilde{\Omega}\left(\frac{T}{H^{\langle m \rangle}(I)} \right)\right)
\end{equation}
using at most $T$ time steps, where $\tilde{\Omega}(\cdot)$ hides logarithmic factors in $n$.  This upper bound can also be shown to be tight up to logarithmic factors~\cite{KCG16,CL16,SJR17,CLQ17}.  In the collaborative learning setting, our goal is to replace the $T$ factor in (\ref{eq:cen-ub}) with $KT$ where $K$ is the number of agents, so as to achieve a full speedup.

\paragraph{Our Contributions.}  We summarize our main results and their implications.
%\nicksays{Unify $\log$ and $\ln$ in upper and lower bound?}
\begin{enumerate}
\item  
We give an algorithm for the fixed-time top-$m$ arm identification problem in the collaborative learning model with $K$ agents and a set $I$ of $n$ arms. For any choice of $r$, the algorithm uses $T$ time steps and $O(\log\frac{\log m}{\log K} + r)$ rounds of communication, and successfully computes the set of top-$m$ arms with probability at least %$1 - n \cdot \exp\left(-\tilde{\Omega}\left(\frac{K^{(r - 1) / r}T}{H^{\langle m \rangle}(I)}\right)\right)$.
$1 - \exp\left(-\tilde{\Omega}\left(\frac{K^{(R - 1) / R} \cdot T}{H^{\langle m \rangle}(I)}\right)\right)$. In particular, when $r = \log K$, the algorithm uses $T$ time steps and $O(\log \frac{\log m}{\log K}+ \log K)$ rounds of communication to compute the set of top-$m$ arms with probability at least $1 - \exp\left(-\tilde{\Omega}\left(\frac{KT}{H^{\langle m \rangle}(I)}\right)\right)$, achieving a full speedup.
See Section~\ref{sec:ub}.

\item  We prove that under the same setting, any collaborative algorithm that uses $T = \frac{1}{\sqrt{K}} \cdot H^{\langle m \rangle}(I)$ time steps and aims to achieve success probability $0.99$ needs at least $\Omega(\log\frac{\log m}{\log K})$ rounds of communication.  By leveraging a result in \cite{TZZ19}, we can also show that  any collaborative algorithm that uses $T = \frac{\alpha}{{K}} \cdot H^{\langle m \rangle}(I)$ time steps and aims to achieve success probability $0.99$ needs at least $\Omega({\log K}/{(\log\log K + \log \alpha)})$ rounds of communication. These indicate that our upper bound is almost the best possible.  See Section~\ref{sec:lb}.

\item Our lower bound gives a strong separation between the best arm identification and top-$m$ identifications: there is a collaborative algorithm for  best arm identification (i.e., when $m=1$) that uses $T = \tilde{O}\left(\frac{1}{\sqrt{K}} \cdot H^{\langle 1 \rangle}(I)\right)$ time and $2$ rounds of communication (see \cite{TZZ19,HKKLS13}), while Item 2 states that for general $m$, to achieve the same time bound we need $\Omega({\log K}/{(\log\log K + \log \alpha)})$ rounds of communication.

\item We give an algorithm for the fixed-confidence top-$m$ identification problem in the collaborative model with $K$ agents and a set of $n$ arms; the algorithm uses $O\left(\frac{H^{\langle m\rangle}(I)}{K} \log\left(\frac{n}{\delta} \log{H^{\langle m\rangle}}\right)\right)$ time steps and $O\left(\log (1/\Delta^{\langle m\rangle}_{[m]})\right)$ rounds of communication, and successfully computes the set of top-$m$ arms with probability at least $1 - \delta$. This is almost tight by a previous result in \cite{TZZ19}. See Section~\ref{sec:confidence}. 

\item Combining Items 1, 2, and 4, we have given a separation between fixed-time and fixed-confidence top-$m$ arm identification. We note that a similar separation result is also proved for the best arm identification problem~\cite{TZZ19}, although the round complexities for top-$m$ identification are quite different from the $m=1$ special case (i.e., best arm identification).
\end{enumerate}

\paragraph{Speedup.}
In \cite{TZZ19} the authors introduced a concept called {\em speedup} for presenting the power of collaborative learning algorithms.  The precise definition of speedup is rather complicated due to the definition of the instance complexity of MAB.  Roughly, the speedup is defined to be the ratio between the {\em best} running time of centralized algorithm and that of a collaborative algorithm (given a predefined round budget $R$) under the condition that the two algorithms achieve the same success probability. In this paper we simply focus on a fixed success probability $0.99$, and define the speedup of a collaborative algorithm which identifies the top-$m$ arms on input instance $I$ with accuracy $0.99$ using time $T_{\A}(I)$ to be $T_{\A}(I) / H^{\langle m\rangle}(I)$, since the best centralized algorithm achieving success probability $0.99$ has running time $\tilde{\Theta}(H^{\langle m\rangle}(I))$ \cite{BWV13}. Interpreting our results in terms of speedup, we have the following remarks:
\begin{enumerate}
\item Our algorithm for fixed-time top-$m$ arm identification achieves a speedup of $\tilde{O}(K^{\frac{r-1}{r}})$ and uses $O(\log\frac{\log m}{\log K} + r)$ rounds.  

\item Our lower bound shows that in order to achieve even an $\tilde{\Omega}(\sqrt{K})$ speedup, any algorithm for top-$m$ arm identification needs at least $\Omega(\log\frac{\log m}{\log K})$ rounds.  

\item Compared with the main result for the best arm identification in \cite{TZZ19}, which states that there is a $R$-round algorithm achieving a speedup of $\tilde{O}(K^{\frac{R-1}{R}})$, we have shown a separation between the complexities of the two problems (e.g., when $R = 2$).
\end{enumerate}

\paragraph{Selection under Uncertainty.}  As a byproduct,
we also get almost tight bounds for a closely related problem we call  {\em selection under uncertainty}.  This problem is similar to the classic {\em selection} problem where given a set of $n$ numbers, one needs to find the $m$-th largest number.  The difference is that now instead of having $n$ (deterministic) numbers, we have $n$ distributions/arms, and our goal is to find the one with the $m$-th largest mean via sampling.  It is easy to see that this problem can be solved by first identifying the top-$m$ arms, and then finding the worst arm in these top-$m$ arms, which can be done in the same way as identifying the best arm.

For convenience, let us introduce a new (but very similar) definition of instance complexity for the selection under uncertainty problem:
\[
\bar{H}^{\langle m\rangle}(I) \triangleq \sum\limits_{i \neq [m]} (\theta_i - \theta_{[m]})^{-2}.
\]
With $\bar{H}^{\langle m\rangle}$ we have the following immediate result:
\begin{itemize}
\item There exists an algorithm for the fixed-time $m$-th arm selection problem in the collaborative learning model with $K$ agents and a set $I$ of $n$ arms; the algorithm uses $T$ time steps and $O(\log\frac{\log m}{\log K} + r)$ rounds of communication, and successfully identifies the $m$-th arm with probability at least $1 -  \exp\left(-\tilde{\Omega}\left(\frac{K^{(r - 1) / r} \cdot T}{\bar{H}^{\langle m \rangle}(I)}\right)\right)$.
\end{itemize}

\paragraph{Why Top-$m$ Arm Identification is Difficult in the Collaborative Learning Model?}
Before presenting our results, let us first try to give some intuition on why top-$m$ arm identification is difficult in the collaborative learning setting, as one may think that the top-$m$ arm identification is a natural generalization of best arm identification (when $m=1$), and the algorithm for the latter in \cite{TZZ19} may be adapted to the former.  
%This is indeed the case in the centralized setting.  For example, the iterative elimination-based method can be used for both problems~\cite{ABM10,BWV13}. However, in the collaborative learning setting these two problems look to be very different, and significant new ideas are needed for top-$m$.

The key procedure used in previous collaborative algorithms for best arm identification~\cite{HKKLS13,TZZ19} is that in the first round, we {\em randomly} partition the set $I$ of $n$ arms into $K$ groups, and feed each group to one agent as a subproblem. Now if each of the $K$ agents computes the best arm in its subproblem, then we can reduce the number of best arm candidates from $n$ to $K$ after the first round, which is critical for us to achieve $\log K$ communication rounds.  The question now is whether each subproblem can be solved time-efficiently (more precisely, in $\tilde{O}(H^{\langle 1 \rangle}(I)/K)$ time steps if  we target a $\tilde{\Omega}(K)$ speedup) at each agent in the first round.

A nice property for the best arm identification is that if we randomly partition the set $I$ of $n$ arms to the $K$ groups, then the group (denoted by $G$) containing the global best arm has a subproblem complexity
$H' = \sum_{i =2}^{\abs{G}} \left(\Delta'_i\right)^{-2}$, where $\Delta'_i$ is the difference between the mean of the best arm and that of the $i$-th best arm in group $G$.  It is easy to show that 
\begin{equation}
\label{eq:partition}
\bE[H'] = \Theta\left(H^{\langle 1 \rangle}(I)/K\right).
\end{equation}  
Therefore, even though we cannot guarantee that each of the $K$ subproblems can be solved successfully under time budget $\tilde{O}(H/K)$, we still know that the global best arm will advance to the next round with a good probability, which is enough for the algorithm to succeed.

Unfortunately, the above property does not hold in the top-$m$ setting due to its ``multi-objective'' goal.  First, the global $m$-th arm will only be assigned to one agent, and thus others do not know what pivot to use for defining its subproblem complexity. Second, even for the agent who gets the $m$-th arm $j$, it does not know what is the local rank of $j$, and, thus, still does not know when to stop the local pruning. Third, even if the agents know the local ranks of the $m$-th arm, it may not have enough time budget to solve the sub-problem; note that this is an issue only for the top-$m$ case but {\em not} for the best arm case, since in the top-$m$ case each subproblem may contain some top-$m$ arms. 

We will design an algorithm which addresses all of these challenges, and then complement it with an almost tight lower bound.  Looking back, we feel that in the best arm case it was just lucky for us to have Equation~(\ref{eq:partition}), while in the general top-$m$ case we have to deal with some inherent challenges in collaborative learning, which, unfortunately, also make our algorithm for top-$m$ much more complicated than that for best arm identification.  We will give a technical overview for both the algorithm and lower bound proof in Section~\ref{sec:overview}.

\paragraph{Related Work.}
To the best of our knowledge, the collaborative learning model studied in this paper was first proposed in \cite{HKKLS13}, where the authors studied the best arm identification problem in MAB.  The model was recently formalized in \cite{TZZ19}, where almost tight time-round tradeoffs for best arm identification are given.

A number of works studied {\em regret minimization}, which is another important problem in MAB, in various distributed models, most of which are different from the collaborative learning model considered in this paper.  For example, several works \cite{LZ10,RSS16,BL18} studied regret minimization in the setting of cognitive ratio network, where radio channels are models as arms, and the rewards by pulling each arm depend on the number of simultaneous pulls by the $K$ agents (i.e., penalty is introduced for collisions).  In \cite{CCDJ17} the authors considered a model where at each time step each agent can choose either to pull an arm, or broadcast a message to other agents, but cannot do both. Authors of \cite{SBHOJK13,LSL16,XTZV15} considered regret minimization in communication networks. Distributed regret minimization has also been studied in the non-stochastic setting~\cite{AK05,KLR12,CGMM16}.

The collaborative learning model is closely related to the {\em batched} model (or, {\em learning with limited adaptivity}), where one wants to minimize the number of policy switches in the learning process.  In the batched model we want to minimize the number of policy switches when trying to achieve our learning goal.  Algorithms designed in the batched model can naturally be translated to a {\em restricted} version of the collaborative model in which at each time step, the action taken by each agent is determined by the information (historical actions and outcomes, messages received from other agents, and the randomness of the algorithm) the agent has at the beginning of the round, and the agents cannot change their policies in the middle of the a round.  A number of problems have been studied in the batched model in recent years, including best arm identification \cite{JJNZ16,AAAK17, JSXC19}, regret minimization in MAB~\cite{PRCS15,GHRZ19, EKMM19}, $Q$-learning~\cite{BXJW19}, convex optimization~\cite{DRY18}, online learning~\cite{CDS13}.  We note that our collaboratively learning algorithm for top-$m$ arm identification in the fixed-confidence case also works in the batched model, and improves the algorithm in \cite{JSXC19}.

%Blum et al.~\cite{BHPQ17} studied a collaborative model for PAC learning in which each agent can only sample from a particular distribution. Their results were later improved by \cite{CZZ18, NZ18}.  

Finally, we note that there is also a large body of work on sample/communication-efficient distributed algorithms for various learning-related tasks such as classification~\cite{BBFM12,DPSV12,KLMY19}, convex optimization~\cite{ZWSL10,ZDW12,AS15}, linear programming~\cite{AKZ19, VWW19}. Sample-efficient PAC learning in the collaborative setting is recently studied by~\cite{BHPQ17, CZZ18, NZ18}. However, the models considered in the papers mentioned above mainly focus on reducing the sample/communication cost, and are all different from the collaborative learning with limited interaction model we study in this paper.

\paragraph{Notations and Conventions.}
Let $\Top_m(V)$ be the indices of $m$ arms in $V$ with the largest means, and $\Top_1(V)$ be the index of the best arm in $V$.  

We say the $i$-th arm is $(\eps, j)$-top in $V$ if and only if $\theta_i \ge \theta_{[j]}(V) - \eps$. Similarly, the $i$-th arm is $(\eps, j)$-bottom in $V$ if and only if $\theta_i \le \theta_{[\abs{V}+1-j]}(V) + \eps$.

In this paper we focus on the case when $\theta_{[m]}(I) > \theta_{[m+1]}(I)$, since otherwise the instance complexity of $I$ will be infinity.

For simplicity, we will write $\Top_m(V)$, $\Top_1(V)$, $\theta_{[i]}(V)$, $\Delta_i^{\langle m \rangle}(V)$, $H^{\langle m \rangle}(V)$, and $H_{\eps}^{\langle m \rangle}(V)$ as $\Top_m$, $\Top_1$, $\theta_{[i]}$, $\Delta_i^{\langle m \rangle}$, $H^{\langle m \rangle}$, and $H_{\eps}^{\langle m \rangle}$, when $V = I$ ($I$ is the input instance) or it is clear from the context.

We include a list of frequently used (global) notations in Table~\ref{tab:notation}.

\begin{table}[t]
\centering
\begin{tabular}{|c|c|} 
\hline 
$n$ & number of arms in the input instance.\\
\hline 
$K$ & number of agents. \\
\hline
$T$ & running time. \\
\hline
$\theta_i$ & mean of the $i$-th arm.\\
\hline
$\theta_{[i]}(V)$ & the $i$-th largest mean among arms in $V$. \\
\hline
$\Top_m(V)$ & indices of the $m$ arms with the largest means in $V$.\\
\hline
$\Top_1(V)$ & index of the best arm in $V$.\\
\hline
$\Delta_i^{\langle m \rangle}(V)$ & mean gap of the $i$-th arm; defined in (\ref{eq:gap}).\\
\hline
$H^{\langle m\rangle}(V)$ & instance complexity; see Definition~\ref{def:instance-complexity}.\\
\hline
$H^{\langle m\rangle}_{\eps}(V)$ & $\eps$-truncated instance complexity; see Definition~\ref{def:instance-complexity}.\\
\hline
\end{tabular}
\caption{Summary of Notations}
\label{tab:notation}
\end{table}

\paragraph{Roadmap.} In the rest of the paper, we first give a technical overview of our main results in Section~\ref{sec:overview}.  We next present our algorithm for the fixed-time case in Section~\ref{sec:ub}, and then complement it with a matching lower bound in Section~\ref{sec:lb}. Finally in Section~\ref{sec:confidence}, we give an algorithm for the fixed-confidence case and discuss the correponding lower bound.

\section{Technical Overview} 
\label{sec:overview}

In this section we give a technical overview for our upper and lower bounds for fixed-time top-$m$ arm identification. 

\subsection{Upper Bounds for the Fixed-Time Setting}
\label{sec:ub-overview}

For simplicity we consider the full speedup setting (i.e., we target a speedup of $\tilde{\Omega}(K)$); the general speedup is an easy extension. We achieve our upper bound result for fixed-time top-$m$ arm identification in three stages.  We first design an algorithm for a special time horizon $T = \tilde{\Theta}(H^{\langle m \rangle}/K)$ which uses $O(\log\frac{\log n}{\log K} + \log K)$ rounds of communication and has an error probability $0.01$. We next consider general time horizon $T$, and target an error probability that is exponentially small in $T$.  Finally, we try to improve the round complexity to $O(\log\frac{\log m}{\log K} + \log K)$.  In each stage
we face new challenges which stem from the collaborative learning model, each of which demands novel ideas.

\paragraph{Stage $1$: A Basic Algorithm.}  We start with our basic algorithm.  A natural idea for achieving the $T = \tilde{O}(H^{\langle m \rangle}/K)$ running time is to {\em randomly} partition the $n$ arms to $K$ agents, and then ask each agent to solve a top-$\eta$ arms identification (for some value $\eta$) on its sub-instance.  At the end we try to aggregate the $K$ outputs.  As briefly mentioned in the introduction, there are multiple hurdles associated with this approach.  First, it is not clear how to set the value $\eta$, since we do not know how many global top-$m$ arms will be distributed to each agent.  Second, even if we know the number of global top-$m$ arms assigned to each agent, there are cases in which the global instance complexity is rarely distributed evenly across the $K$ agents. In other words, we cannot guarantee that each agent can solve the subproblem within our time budget $\tilde{O}(H^{\langle m \rangle}/K)$.

We resolve these issues using the following ideas: we take a {\em conservative} approach by setting $\eta \approx (m/K - \sqrt{n})$, and ask each agent to adopt a PAC algorithm for multiple arm identification and compute an {\em approximate} set of top-$\eta$ arms on its sub-instance using $\tilde{O}(H^{\langle m \rangle}/K)$ time steps.  The approximation error is a {\em random variable} depending on the random partition process. We then show that with a good probability this error is smaller than the gap between the smallest mean of the outputted arms and that of the {\em global} $m$-th arm. In this way we can guarantee that the approximate top-$\eta$ arms outputted by each agent are indeed in the set of global top-$m$ arms. Using the same idea we try to prune a set of ``bottom'' arms of size $\approx ((n-m)/K-\sqrt{n})$.  After these operations we recurse on the rest $O(K\sqrt{n})$ arms.  We continue the recursion for $O(\log\frac{\log n}{\log K})$ rounds until the number of arms is reduced to $K^{10}$, and then use a simple $O(\log n)$-round collaborative algorithm which is modified from an existing centralized algorithm. Note that for $n' = K^{10}$ we have $O(\log n') = O(\log K)$, and thus overall we have used $O(\log\frac{\log n}{\log K} + \log K)$ rounds.

\paragraph{Stage $2$: General Time Horizon.}
The basic algorithm only guarantees that the set of top-$m$ arms are correctly identified with probability $0.99$.  Our next goal is to make the error probability exponentially small in $T$, which is achievable in the centralized setting. The standard technique to achieve this is to perform parallel repetition and then take the majority. That is, we guess the instance complexity to be $H = 1, 2, 4, \ldots$, and for each guess we run the basic algorithm with time horizon $H$ for $T/H$ times.  Finally, we take the majority of the output.  In the case that the budget $T$ is larger than the actual instance complexity, at each run with probability $0.99$ we are guaranteed to obtain the correct answer.  Unfortunately, when $T$ is smaller than the actual instance complexity, not much can be guaranteed. For some bad input instances, the output of the basic algorithm can be {\em consistently} wrong, resulting in a wrong majority.  

We resolve this difficulty by introducing a notion we call {\em top-$m$ certificate}, which takes form of a pair $(S, \{\tilde{\theta_i}\}_{i \in I})$, with the property that $S = \Top_m$ and for each $i \in I$, it holds that $\abs{\tilde{\theta_i} - \theta_i} < \Delta_i^{\langle m \rangle}/4$. We can augment our basic algorithm to output a $(S, \{\tilde{\theta_i}\}_{i \in I})$ pair (instead of simply a set of top-$m$ arms). We then design a verification algorithm which is able to check for each $(S, \{\tilde{\theta_i}\}_{i \in I})$ pair whether it is indeed a top-$m$ certificate.  Our verification step can be fully parallelized and can finish within our guessed instance complexity $H$.  With such a verification step at hand, the situation that we take a wrong majority will {\em not} happen with high probability.

\paragraph{Stage $3$: Better Round Complexity.}
Our ultimate goal is to achieve an $O(\log\frac{\log m}{\log K} + \log K)$ round complexity, instead of $O(\log\frac{\log n}{\log K} + \log K)$ in the basic algorithm.  We approach this by first reducing the number of arms in the input instance to $\tilde{O}(m)$, and then applying the basic algorithm.  Such a reduction, however, is highly non-trivial, especially when we require the error probability introduced by the reduction to again be exponentially small in $T$.

Our basic idea for performing the reduction is the following: we construct a random sub-instance $V$ by sampling each of the $n$ arms with probability $1/m$. We can show that with constant probability, $V$ contains exactly one global top-$m$ arm, and $H^{\langle 1 \rangle}(V) = O(H^{\langle m \rangle}/m)$.  Therefore we have enough time budget to compute the best arm of $V$ and include it into set $S$ as a top-$m$ candidate.  We perform this subsampling procedure for $\tilde{O}(m)$ times, getting $\tilde{O}(m)$ sub-instances.  By the Coupon Collector's problem we know that all global top-$m$ arms will be included in $S$ with a good probability. 

The challenging part is to reduce the error probability of this reduction to a value that is exponentially small in $T$.  Unfortunate, the idea of ``guess-then-verify'' that we have used previously does not apply here -- there is simply no $(S, \{\tilde{\theta_i}\}_{i \in I})$ pair for us to verify in the reduction process.

We take the following new approach.  We try to make sure that for each randomly sampled sub-instance on which we try to compute the best arm, the probability of outputting any arm in $\Top_m$ is at least {\em half} of that of any arm outside $\Top_m$. This turns out to be enough for us to guarantee that the set $S$ contains all top-$m$ arms.
 We comment that the relaxation ``half'' is necessary here for a technical reason which we will elaborate next. 

Our key observation is that if we provide sufficient time budget, say, $T \ge \lambda H^{\langle 1 \rangle}(V)$ where $\lambda$ is a polylogarithmic factor, for solving a randomly sampled sub-instance $V$, then provided that there is only one arm $a \in \Top_m$ in $V$, we will output $a$ correctly with a good probability. Now for any two arms $a \in \Top_m$ and $b \not\in \Top_m$, by the uniformity of the sampling they will be in the sub-instance with equal probability.  We are thus able to conclude that the probability of outputting $a$ is at least as large as that of outputting $b$.   On the other hand, if $T \le H^{\langle 1 \rangle}(V)$, then we can use our verification step to detect this event.  The subtle part is the middle case when $H^{\langle 1 \rangle}(V) \le T \le \lambda H^{\langle 1 \rangle}(V)$, to handle which we perturb our time budget $T$ such that it takes values $T/\lambda$ or $\lambda T$ with equal probability. Using this trick we are able to ``reduce'' the third case to the first two cases with probability at least $1/2$, which leads to our desired property. The actual implementation of this idea is more involved, and we refer the readers to Section~\ref{sec:ub-improve} for details.

\subsection{Lower Bounds for the Fixed-Time Setting}
\label{sec:lb-overview}

In the lower bound part, we present two results. The first result is that $\Omega(\log K / (\log \log K + \log \alpha))$ communication rounds are needed for any algorithm with $(K / \alpha)$ speedup to identify the top-$m$ arms for any $m$. This matches (up to logarithmic factors) the $R$ term in the $O(\log \frac{\log m}{\log K} + R)$ rounds vs $\tilde{O}(K^{(R-1)/R})$ speedup trade-off in our upper bound result. This lower bound theorem is derived via a simple reduction together with the similar type of lower bound proved in \cite{TZZ19} for the $m=1$ special case.

Our main contribution in the lower bound part is the second theorem. The theorem states that even if the goal is an $O(\sqrt{K})$ speedup, the  $\log \frac{\log m}{\log K}$ term in the round-speedup trade-off is necessary. (In fact, the $\log \frac{\log m}{\log K}$ can be shown to be necessary for any $K^{\zeta}$ speedup where $\zeta$ is a positive constant.) This marks a completely different phenomenon from the $m=1$ special case where only $2$ rounds of communication are needed to achieve an $\tilde{O}(\sqrt{K})$ speedup \cite{TZZ19,HKKLS13}. Below we sketch the proof idea for this lower bound theorem.

The need for the $\log \frac{\log m}{\log K}$ term in the round complexity stems from the hardness of collaboratively learning the splitting position (i.e., where the $m$-th largest arm locates), which turns out to be substantially more difficult than estimating the best arm (the $m=1$ special case). We start from the fact that any (possibly randomized) algorithm cannot identify the number of $1$'s in the $n$-bit binary vector with success probability $\omega(n^{-1/2})$, if the algorithm is allowed to probe only $o(n)$ entries in the vector. A strengthened statement we will prove as the building block is the following lower bound for the ``learning the bias'' problem: given $n$ Bernoulli arms (i.e., the stochastic reward of the arm is either $0$ or $1$), each of which has mean reward $(\mu + \epsilon)$ or $(\mu - \epsilon)$, then any algorithm using $o(n \epsilon^{-2}/ \log (n/\epsilon))$ samples will not be able to identify the number of two types of arms with probability $\omega(n^{-1/2})$.\footnote{The sample complexity lower bound for a similar problem is proved in a recent work \cite{LV19}. Our lower bound is different from theirs in two aspects. First, in their setting, the number of arms is not bounded and the goal is to estimate the fraction of the two types of arms up to an additive error, while in our setting, the number of arms is $n$, and the goal is to find out the exact numbers of arms for the two types. Second, their lower bound is for algorithms with constant success probability, while our lower bound is for algorithms with only $\omega(n^{-1/2})$ success probability.}

Now we explain the connection between the learning the bias problem and the top-$m$ arm identification problem by sketching the plan of constructing the hard instances as follows. Suppose that we set all but $n^{1/2}$ arms in the hard instance to be Bernoulli with mean reward either $(\mu + \epsilon)$ (namely ``the top arms'') or $(\mu - \epsilon)$ (namely ``the bottom arms''). We denote the set of the rest $n^{1/2}$ arms by $M$, and their mean rewards are sandwiched between $(\mu + \epsilon)$ and $(\mu - \epsilon)$. We will set $m = n/2$, i.e., the goal is to identify the top half of the arms. Now, as long as the number of top arms, denoted by $X$, is bounded between $\frac{n}{2} - \sqrt{n}$ and $\frac{n}{2} + \sqrt{n}$, the goal is equivalent to identify the $X$ top arms and the top-$(\frac{n}{2} - X)$ arms in $M$. We then vary the number of the top arms and consequently the number of the bottom arms (say, let $X$ be uniformly randomly chosen from the range), and will argue that each agent will not be able to identify $X$ much better than a random guess without communication, and therefore must perform one round of communication to learn $X$ in order to identify the top-$(\frac{n}{2} - X)$ arms in $M$.  Here, the need for communication is due to the lower bound for learning the bias and the fact that any agent in a $\sqrt{K}$-speedup algorithm is allowed to make only $O(n \epsilon^{-2} / \sqrt{K}) = o(n \epsilon^{-2})$ samples (where we make a crucial assumption that the $H^{\langle m \rangle}$ complexity of the constructed hard instance is $O(n \epsilon^{-2})$). The last piece of plan is to argue that since $X$ is not known before the first round of communication, each agent cannot make much progress before the communication towards identifying the top-$(\frac{n}{2} - X)$ arms in $M$, which is a necessary sub-task. We will finally inductively prove a communication lower bound for this sub-task. Note that the number of arms in $M$ is $n^{1/2}$, and this plan will lead to a $\log \log n$-style round complexity lower bound. 

There are several challenges for the plan above. Note that in the sub-task for $M$, the goal is no longer to identify the top half arms, which is not well aligned with the (planned) induction hypothesis. Moreover, to make the induction work, $M$ would naturally have the similar structure as the $n$-arm instance, i.e., with many top and bottom arms (possibly with different $\mu$ and $\epsilon$ parameters). However, such a construction would hardly ensure that the $H^{\langle m' \rangle}$ complexity is still $O(n \epsilon^{-2})$. Indeed, if the goal of the sub-task is to identify, for example, the top $|M|/4$ arms, since most of the top half arms are the same, the corresponding the $H^{\langle m' \rangle}$ complexity would become infinitely large. Finally, it is not clear how to make sure that any agent will not gain much information about $M$ before the first round of communication so as to quickly identify the top $(\frac{n}{2} - X)$ arms in $M$ whenever $X$ is learned.

To address these challenges, we craft a more complex distribution of hierarchical instances. The main highlight is that we let $M$ consist of multiple blocks $I_1, I_2, \dots, I_k$, where each block has the same number of arms and is independently sampled from a recursively defined hard distribution. We restrict the possible values of $(\frac{n}{2} - X)$ to be the half multiples of the block size so that the sub-task always becomes to identify the top half arms in $I_{\xi}$ for some $\xi \in \{1, 2, \dots k\}$. We will make careful selection of the block parameters so that the $H^{\langle m \rangle}$ complexity for any instance in the support of the distribution, and the $H^{\langle m' \rangle}$ complexity of any sub-task, are all $\tilde{\Theta} (n\epsilon^{-2})$, where both upper and lower bounds are crucial to the proof.

The detailed construction and proof idea are presented in Section~\ref{sec:hard-instance}. For the analysis of the learning the bias problem and the ultimate round complexity lower bound, please refer to the subsequent subsections (Section~\ref{sec:learn-bias} and Section~\ref{lb:final-recursion}).

\section{A Collaborative Algorithm for the Fixed-Time Case}
\label{sec:ub}

\subsection{Preparation}
\label{sec:prepare}

In this section we give a few auxiliary lemmas to be used in our algorithm and analysis.

The following lemma gives connections between instance complexities and sub-instance complexities.
\begin{lemma}
\label{lem:sandwich}

Let $V \subseteq I\ (\abs{I} = n)$ be a subset of arms. Let $j \in \{1, \ldots, n-1\}$ and $k \in \left\{1, \dotsc, |V| - 1\right\}$ be two indices such that $\theta_{[k]}(V) \geq \theta_{[j]} \ge \theta_{[j + 1]} \geq \theta_{[k + 1]}(V)$. We have 
\begin{enumerate}
\item $H^{\langle k\rangle}(V) \leq \sum \limits_{i \in V} \left(\Delta^{\langle j \rangle}_i\right)^{-2} \leq H^{\langle j \rangle}$.

\item $H^{\langle k\rangle}_{\epsilon}(V) \leq \sum \limits_{i \in V} \max\{\Delta^{\langle j \rangle}_i, \epsilon\}^{-2} \leq H^{\langle j \rangle}_{\epsilon}$.
\end{enumerate}
\end{lemma}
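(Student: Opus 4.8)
The plan is to reduce both chains of inequalities to a single pointwise comparison of gaps, carried out arm by arm. First observe that the right-hand inequalities in both parts are essentially free. The summand $\left(\Delta^{\langle j\rangle}_i\right)^{-2}$ (respectively $\max\{\Delta^{\langle j\rangle}_i,\epsilon\}^{-2}$) appearing in the middle sum is computed with respect to the \emph{global} instance $I$ and the \emph{global} pivot $j$, exactly as in the definition of $H^{\langle j\rangle}$ (respectively $H^{\langle j\rangle}_{\epsilon}$). Since $V \subseteq I$ and every summand is nonnegative, restricting the sum from $I$ down to $V$ can only decrease it, which immediately gives $\sum_{i \in V}\left(\Delta^{\langle j\rangle}_i\right)^{-2} \le H^{\langle j\rangle}$ and its $\epsilon$-truncated analogue.

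The heart of the proof is the two left-hand inequalities, and for these I would establish the pointwise claim
\[
\Delta^{\langle k\rangle}_i(V)\ \ge\ \Delta^{\langle j\rangle}_i \qquad \text{for every } i \in V.
\]
Write $a = \theta_{[j+1]}$, $b = \theta_{[j]}$, $c = \theta_{[k+1]}(V)$, and $d = \theta_{[k]}(V)$; the hypothesis of the lemma is precisely the chain $c \le a \le b \le d$. Every $i \in V$ is either among the top-$k$ arms of $V$ (so $\theta_i \ge d$) or among the remaining arms of $V$ (so $\theta_i \le c$), and I would split into these two cases. If $\theta_i \ge d$, then $\theta_i \ge d \ge b$ places $i$ in the ``top'' branch of both gap definitions, so $\Delta^{\langle k\rangle}_i(V) = \theta_i - c$ and $\Delta^{\langle j\rangle}_i = \theta_i - a$, and the claim reduces to $a \ge c$. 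If $\theta_i \le c$, then $\theta_i \le c \le a$ places $i$ in the ``bottom'' branch of both, so $\Delta^{\langle k\rangle}_i(V) = d - \theta_i$ and $\Delta^{\langle j\rangle}_i = b - \theta_i$, and the claim reduces to $d \ge b$. Both reductions are immediate from the chain.

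Given the pointwise inequality I would finish by monotonicity of $x \mapsto x^{-2}$ on the positive reals: from $\Delta^{\langle k\rangle}_i(V) \ge \Delta^{\langle j\rangle}_i > 0$ we get $\left(\Delta^{\langle k\rangle}_i(V)\right)^{-2} \le \left(\Delta^{\langle j\rangle}_i\right)^{-2}$, and summing over $i \in V$ yields $H^{\langle k\rangle}(V) \le \sum_{i \in V}\left(\Delta^{\langle j\rangle}_i\right)^{-2}$, which is the left inequality of Part 1. For Part 2 the same pointwise bound passes through the monotone map $x \mapsto \max\{x,\epsilon\}$, giving $\max\{\Delta^{\langle k\rangle}_i(V),\epsilon\} \ge \max\{\Delta^{\langle j\rangle}_i,\epsilon\}$, whence the truncated statement follows after taking reciprocal-squares and summing over $V$.

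The only place demanding care, and what I regard as the main (minor) obstacle, is keeping the case split clean: I must verify that each $i \in V$ genuinely falls into exactly one branch of each gap definition. This relies on the fact that $V$ carries a well-defined ranking, so the consecutive order statistics $c = \theta_{[k+1]}(V)$ and $d = \theta_{[k]}(V)$ leave no arm of $V$ with mean strictly between them, together with the chain $c \le a \le b \le d$ which forces an arm's branch in $V$ to match its branch in $I$. Degenerate ties (for instance $\theta_{[j]} = \theta_{[j+1]}$, which makes the right-hand sides equal to $+\infty$) are handled automatically by the convention $1/0 = +\infty$, so they require no separate argument.
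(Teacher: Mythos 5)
Your proposal is correct and follows essentially the same route as the paper's proof: the right-hand inequalities come from restricting a sum of nonnegative terms from $I$ to $V$, and the left-hand inequalities come from the pointwise comparison $\Delta^{\langle k\rangle}_i(V) \ge \Delta^{\langle j\rangle}_i$ established by the same two-case split (arms above $\theta_{[k]}(V)$ versus arms below $\theta_{[k+1]}(V)$) using the sandwich hypothesis. The additional remarks on the $\epsilon$-truncated version and on degenerate ties are consistent with how the paper handles these (the paper simply says the second item ``follows from the same line of arguments'').
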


\begin{proof}
We start with the first item.  The second inequality is straightforward by the definition of $H^{\langle j \rangle}$.  For the first inequality, for each arm $i \in V$, we consider two cases.
\begin{enumerate}
\item $\theta_i \ge \theta_{[k]}(V)$:  It holds that $\Delta^{\langle k \rangle}_i(V) = \theta_i - \theta_{[k + 1]}(V) \geq \theta_i - \theta_{[j + 1]} = \Delta^{\langle j\rangle}_i$.

\item $\theta_i \le \theta_{[k + 1]}(V)$: It holds that $\Delta^{\langle k\rangle}_i(V) = \theta_{[k]}(V) - \theta_i \geq \theta_{[j]} - \theta_i = \Delta^{\langle j\rangle}_i$.
\end{enumerate}
We thus have 
$$H^{\langle k\rangle}(V) = \sum\limits_{i \in V}\left(\Delta^{\langle k\rangle}_i(V)\right)^{-2} \leq \sum\limits_{i \in V} \left(\Delta^{\langle j\rangle}_i\right)^{-2}.$$

The second item follows from the same line of arguments.
\end{proof}

The next lemma gives a connection between instance complexities under different pivots.  %Note that one instance complexity is truncated.

\begin{lemma}
\label{lem:truncate}
For any $t \in \{1, \ldots, n\}$,  $H^{\langle t \rangle}_{\Delta_{[t]}^{\langle m \rangle}} \leq 4 H^{\langle m\rangle}$.
\end{lemma}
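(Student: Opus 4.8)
The plan is to reduce the lemma to a single pointwise inequality between the summands, exploiting that both $H^{\langle t\rangle}_{\Delta_{[t]}^{\langle m\rangle}}$ and $H^{\langle m\rangle}$ are sums over the arms of $I$ of inverse-squared gaps. Writing $\eps := \Delta_{[t]}^{\langle m\rangle}$ for the truncation level, I would prove the claim that for every arm $i\in I$,
\[
\max\left\{\Delta_i^{\langle t\rangle},\, \eps\right\} \ \ge\ \tfrac12\, \Delta_i^{\langle m\rangle}.
\]
Since $x\mapsto x^{-2}$ is decreasing on the positive reals, this claim immediately yields $\max\{\Delta_i^{\langle t\rangle},\eps\}^{-2}\le 4\,(\Delta_i^{\langle m\rangle})^{-2}$ for each $i$, and summing over $i$ gives $H^{\langle t\rangle}_{\eps}=\sum_i\max\{\Delta_i^{\langle t\rangle},\eps\}^{-2}\le 4\sum_i(\Delta_i^{\langle m\rangle})^{-2}=4H^{\langle m\rangle}$, which is exactly the statement. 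So the whole proof rests on establishing the displayed pointwise bound.

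To prove the pointwise bound I would fix the sorted order $\theta_{[1]}\ge\theta_{[2]}\ge\cdots\ge\theta_{[n]}$ (breaking ties arbitrarily) and argue by the sign of $t-m$ together with the position of the arm relative to the two bands $[\theta_{[t+1]},\theta_{[t]}]$ and $[\theta_{[m+1]},\theta_{[m]}]$. Consider first $t\le m$, so that $\eps=\theta_{[t]}-\theta_{[m+1]}$. For an arm $[i]$ with $i\ge t+1$ I expect the inequality to hold with constant $1$ (no factor $2$ needed): if $t+1\le i\le m$ then $\Delta_{[i]}^{\langle m\rangle}=\theta_{[i]}-\theta_{[m+1]}\le\eps$ because $\theta_{[i]}\le\theta_{[t]}$, while if $i\ge m+1$ then $\Delta_{[i]}^{\langle t\rangle}=\theta_{[t]}-\theta_{[i]}\ge\theta_{[m]}-\theta_{[i]}=\Delta_{[i]}^{\langle m\rangle}$ using $\theta_{[t]}\ge\theta_{[m]}$. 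The symmetric case $t>m$, where $\eps=\theta_{[m]}-\theta_{[t]}$, is handled by the mirror-image argument (swapping the roles of ``top'' and ``bottom'').

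The one genuinely delicate situation, and where I expect the factor $2$ to become necessary, is when the arm sits on the same side of the pivots as the band shift, so that neither $\Delta_i^{\langle t\rangle}$ nor $\eps$ alone dominates $\Delta_i^{\langle m\rangle}$ -- concretely, a top arm $[i]$ with $i\le t$ in the case $t\le m$, and a bottom arm $[i]$ with $i\ge t+1$ in the case $t>m$. The trick there is to bound the gap additively rather than by a single term: for $i\le t\le m$ one has $\Delta_{[i]}^{\langle t\rangle}+\eps=(\theta_{[i]}-\theta_{[t+1]})+(\theta_{[t]}-\theta_{[m+1]})\ge\theta_{[i]}-\theta_{[m+1]}=\Delta_{[i]}^{\langle m\rangle}$ since $\theta_{[t]}\ge\theta_{[t+1]}$, and then $\max\{\Delta_{[i]}^{\langle t\rangle},\eps\}\ge\frac12(\Delta_{[i]}^{\langle t\rangle}+\eps)\ge\frac12\Delta_{[i]}^{\langle m\rangle}$; the case $t>m$ is identical and in fact gives the equality $\Delta_{[i]}^{\langle t\rangle}+\eps=\Delta_{[i]}^{\langle m\rangle}$. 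Assembling these cases proves the pointwise claim for every arm, and hence the lemma. The main obstacle is thus purely bookkeeping: making the case split on $(t\lessgtr m)\times(\text{position of }i)$ exhaustive and correctly identifying, in each cell, which of the three elementary bounds ($\eps$ dominates, $\Delta_i^{\langle t\rangle}$ dominates, or the additive bound) applies.
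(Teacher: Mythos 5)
Your proof is correct and follows essentially the same route as the paper's: both reduce the lemma to the pointwise bound $\max\{\Delta_i^{\langle t\rangle}, \Delta_{[t]}^{\langle m\rangle}\} \ge \Delta_i^{\langle m\rangle}/2$ for every arm $i$ (split into the cases $t\le m$ and $t>m$) and then sum over $i$. The only difference is cosmetic: the paper partitions the top arms by whether $\theta_i$ exceeds the threshold $\theta_{[m+1]}+2\Delta_{[t]}^{\langle m\rangle}$, whereas you partition by rank relative to $t$ and close the delicate cell with the observation $\max\{a,b\}\ge (a+b)/2$ --- a slightly cleaner bookkeeping of the same estimate.
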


\begin{proof}
We consider $t \le m$ and $t > m$ separately.  In the case that $t \le m$, for any $i \in I$ we consider three cases.
\begin{enumerate}
\item  $\theta_i < \theta_{[m]}$: In this case we have 
$$
\max\{\Delta^{\langle t\rangle}_i, \Delta_{[t]}^{\langle m \rangle}\} \geq \Delta^{\langle t\rangle}_i = \theta_{[t]} - \theta_i \geq \theta_{[m]} - \theta_i = \Delta^{\langle m\rangle}_i.
$$

\item  $\theta_{[m]} \le \theta_i < \theta_{[m+1]} + 2 \Delta_{[t]}^{\langle m \rangle}$: In this case we have 
$$
\max\{\Delta^{\langle t\rangle}_i, \Delta_{[t]}^{\langle m \rangle}\} \geq \Delta_{[t]}^{\langle m \rangle} \geq \frac{\theta_i - \theta_{[m + 1]}}{2} \geq \frac{\Delta^{\langle m\rangle}_i}{2}.
$$

\item  $\theta_{[m+1]} + 2\Delta_{[t]}^{\langle m \rangle} \le \theta_i$: In this case we have
\begin{eqnarray*}
\max\{\Delta^{\langle t\rangle}_i, \Delta_{[t]}^{\langle m \rangle}\} \geq  \Delta^{\langle t \rangle}_i &=& \theta_i - \theta_{[t + 1]} = \theta_i - \theta_{[m + 1]} + \theta_{[m + 1]} - \theta_{[t + 1]} \\
&\ge&  \frac{\theta_i - \theta_{[m + 1]}}{2} + \frac{\theta_i - \theta_{[m + 1]}}{2}  - (\theta_{[t + 1]} - \theta_{[m + 1]})  \\
&\geq& \frac{\Delta^{\langle m \rangle}_i}{2} + \Delta_{[t]}^{\langle m \rangle} - (\theta_{[t + 1]} - \theta_{[m + 1]}) \geq \frac{\Delta^{\langle m \rangle}_i}{2}.
\end{eqnarray*}
Therefore,
$$
H^{\langle t\rangle}_{\Delta_{[t]}^{\langle m \rangle}} = \sum\limits_{i \in I} \max\left\{\Delta^{\langle t\rangle}_i, \Delta_{[t]}^{\langle m \rangle}\right\}^{-2} \leq \sum\limits_{i \in I}4  \left(\Delta^{\langle m\rangle}_i\right)^{-2} \leq 4  H^{\langle m\rangle}.
$$
\end{enumerate}

In the case when $t > m$, the proof is symmetric by considering the following three cases: $(1')$ $\theta_i > \theta_{[m+1]}$, $(2')$ $\theta_{[m]} \ge \theta_i > \theta_{[m+1]} + 2 \Delta_{[t]}^{\langle m \rangle}$, and $(3')$ $\theta_{[m+1]} + 2\Delta_{[t]}^{\langle m \rangle} \ge \theta_i$.
\end{proof}

The following simple fact gives an upper bound of the contribution (to the instance complexity) of an arm that is not very close to the pivot.

\begin{lemma}
\label{lem:far}
For any $t \in \{1, \ldots, n\}$, if $t \le m - z$ or $t \ge m + z$, then 
$$
\left(\Delta_{[t]}^{\langle m \rangle}\right)^{-2} \le H^{\langle m\rangle} / z.
$$
\end{lemma}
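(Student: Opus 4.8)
The plan is to prove the bound by a simple counting argument: I will exhibit at least $z$ distinct arms, each of whose gap is no larger than $\Delta_{[t]}^{\langle m \rangle}$, so that each of them contributes at least $\left(\Delta_{[t]}^{\langle m \rangle}\right)^{-2}$ to the sum defining $H^{\langle m\rangle}$. Since every term of $H^{\langle m\rangle}$ is nonnegative, restricting the sum to just these $z$ arms already gives $H^{\langle m\rangle} \ge z \cdot \left(\Delta_{[t]}^{\langle m \rangle}\right)^{-2}$, which rearranges to the claim.

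First I would split into the two hypothesized cases $t \le m - z$ and $t \ge m + z$, which are symmetric. Consider $t \le m - z$; here $t \le m$, so by the first branch of~(\ref{eq:gap}) we have $\Delta_{[t]}^{\langle m \rangle} = \theta_{[t]} - \theta_{[m+1]}$. The candidate arms are those ranked $[t], [t+1], \dots, [m]$, of which there are $m - t + 1 \ge z + 1$. For each index $s$ with $t \le s \le m$ we have $\theta_{[s]} \ge \theta_{[m]}$, so the same branch applies and $\Delta_{[s]}^{\langle m \rangle} = \theta_{[s]} - \theta_{[m+1]}$; since $\theta_{[s]} \le \theta_{[t]}$ by monotonicity of the order statistics, this gives $\Delta_{[s]}^{\langle m \rangle} \le \Delta_{[t]}^{\langle m \rangle}$, hence $\left(\Delta_{[s]}^{\langle m \rangle}\right)^{-2} \ge \left(\Delta_{[t]}^{\langle m \rangle}\right)^{-2}$. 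The symmetric case $t \ge m + z$ uses the arms ranked $[m+1], \dots, [t]$ (there are $t - m \ge z$ of them) together with the second branch $\Delta_{[s]}^{\langle m \rangle} = \theta_{[m]} - \theta_{[s]}$ and the inequality $\theta_{[s]} \ge \theta_{[t]}$.

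There is essentially no hard step here; the only points requiring care are bookkeeping. I would verify that the candidate arms fall into the correct branch of the piecewise definition~(\ref{eq:gap})---this is exactly where the facts $\theta_{[s]} \ge \theta_{[m]}$ for $s \le m$ and $\theta_{[s]} \le \theta_{[m+1]}$ for $s \ge m+1$ are invoked---and that the count of such arms is at least $z$. Given this, the conclusion $H^{\langle m\rangle} \ge \sum_{s} \left(\Delta_{[s]}^{\langle m \rangle}\right)^{-2} \ge z \left(\Delta_{[t]}^{\langle m \rangle}\right)^{-2}$ follows immediately by discarding all remaining (nonnegative) terms of the complexity sum.
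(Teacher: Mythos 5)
Your proof is correct and follows essentially the same counting argument as the paper: identify at least $z$ arms whose gaps are bounded above by $\Delta_{[t]}^{\langle m \rangle}$ (those ranked between $[t]$ and $[m]$, or between $[m+1]$ and $[t]$ in the symmetric case), so each contributes at least $\bigl(\Delta_{[t]}^{\langle m \rangle}\bigr)^{-2}$ to $H^{\langle m \rangle}$. Your write-up is merely more explicit about which branch of the piecewise definition applies to each candidate arm.
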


\begin{proof}
In the case that $t \le m - z$, there are at least $z$ arms $i$ such that $\theta_{[t]} \ge \theta_i \ge \theta_{[m]}$, or $\Delta_i^{[m]} \le \Delta_{[t]}^{\langle m \rangle}$.  Consequently we have
$$
H^{\langle m\rangle} = \sum\limits_{i \in I}{\left(\Delta^{\langle m\rangle}_i\right)^{-2}} \geq z \cdot \left(\Delta_{[t]}^{\langle m \rangle}\right)^{-2}.
$$
The case $t \ge m+z$ can be proved in the same way.
\end{proof}

We need two centralized algorithms \cenT\ and \cenB\ for computing $(\eps, m)$-top/bottom arms.  We leave their detailed description to Section~\ref{sec:PAC}.  The following lemma summarizes the guarantees of these two algorithms; it is a direct consequence of Lemma~\ref{lem:cenT} and Lemma~\ref{lem:cenB}, which will be presented and proved in Section~\ref{sec:PAC}.

\begin{lemma}
\label{lem:centra-1}

Let $I$ be a set of $n$ arms, $m \in\{1, \dotsc, n - 1\}$, and $\eps \in (0,1)$ be an approximation parameter.  Let 
$$T_1(I, a, \eps, \delta) = c_{1} H^{\langle a\rangle}_{\epsilon / 2}(I) \cdot \log \left({H^{\langle a \rangle}_{\epsilon / 2}(I) / \delta}\right)
$$ for a universal constant $c_{1}$. 
We have that 
\begin{itemize}
\item If $T \ge T_1(I, m, \eps, \delta)$ then
\cenT$(I, m, T, \delta)$ with probability at least $1 - \delta$, returns $m$ arms each of which is $(\eps, m)$-top in $I$ using at most $T$ time steps.

\item If $T \ge T_1(I, n - m, \eps, \delta)$ then \cenB$(I, m, T, \delta)$ with probability at least $1 - \delta$, returns $m$ arms each of which is $(\eps, m)$-bottom in $I$ using at most $T$ time steps.
\end{itemize}
\end{lemma}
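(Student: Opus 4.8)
The statement merely packages the performance guarantees of the two centralized subroutines \cenT\ and \cenB, so the plan is two-fold: first reduce the bottom-arm case to the top-arm case by a reflection symmetry, and then establish the top-arm guarantee directly via a confidence-bound elimination analysis (this is essentially the content that Lemma~\ref{lem:cenT} and Lemma~\ref{lem:cenB} would carry out in detail). For the reduction I would consider the reflected instance $I'$ obtained by replacing each mean $\theta_i$ with $1 - \theta_i$; this keeps the supports inside $(0,1)$ and reverses the ranking, so $\theta'_{[j]}(I') = 1 - \theta_{[n+1-j]}(I)$. A short case check shows that the rank-$[m]/[m+1]$ boundary of $I'$ coincides with the rank-$[n-m]/[n-m+1]$ boundary of $I$, whence $\Delta_i^{\langle m \rangle}(I') = \Delta_i^{\langle n-m \rangle}(I)$ for every arm, and therefore $H^{\langle m \rangle}_{\epsilon/2}(I') = H^{\langle n-m \rangle}_{\epsilon/2}(I)$ and $T_1(I', m, \epsilon, \delta) = T_1(I, n-m, \epsilon, \delta)$. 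Since an $(\epsilon, m)$-top arm of $I'$ is exactly an $(\epsilon, m)$-bottom arm of $I$, implementing \cenB$(I,m,T,\delta)$ as \cenT$(I',m,T,\delta)$ makes the second bullet follow verbatim from the first. Hence only the \cenT\ guarantee requires a genuine argument.

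For \cenT\ I would run a phased elimination scheme. In phase $\ell = 1, 2, \ldots$ every surviving arm is pulled until its empirical mean has confidence radius $\beta_\ell \approx 2^{-\ell}$, costing $O(4^{\ell} \log(1/\delta_\ell))$ pulls per arm; after each phase I maintain a lower and an upper confidence bound for each arm. An arm is \emph{accepted} into the top set once its lower bound exceeds the upper bound of at least $n-m$ other surviving arms, and \emph{rejected} once its upper bound drops below the lower bound of at least $m$ others, with the target rank shrinking as arms are removed. The phases halt once $\beta_\ell$ reaches $\epsilon/2$: at that point every undecided arm is estimated to within $\pm\epsilon/2$, so the $\epsilon$-slack in the definition of $(\epsilon, m)$-top lets me assign the remaining arms arbitrarily with margin to spare. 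The crucial accounting is that arm $i$ is resolved as soon as $\beta_\ell \lesssim \max\{\Delta_i^{\langle m \rangle}, \epsilon/2\}$, so it contributes $O(\max\{\Delta_i^{\langle m \rangle}, \epsilon/2\}^{-2} \log(1/\delta_\ell))$ pulls; summing over $i$ gives total time $O(H^{\langle m \rangle}_{\epsilon/2}(I) \log(H^{\langle m \rangle}_{\epsilon/2}(I)/\delta))$, which matches $T_1(I, m, \epsilon, \delta)$.

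The main obstacle is the probabilistic bookkeeping that converts the per-phase, per-arm confidence failures into the clean factor $\log(H^{\langle m \rangle}_{\epsilon/2}(I)/\delta)$. I would set $\delta_\ell$ so that a union bound over all arms and all $O(\log(1/\epsilon))$ phases, with the effective number of confidence checks controlled by $H^{\langle m \rangle}_{\epsilon/2}(I)$, sums to at most $\delta$; conditioned on all confidence intervals being valid, the correctness of every accept/reject decision is deterministic. A second subtlety, specific to the fixed-budget formulation, is that \cenT\ is handed a horizon $T$ rather than run to completion: I must argue that whenever $T \ge T_1(I, m, \epsilon, \delta)$ the scheme reaches its $\beta_\ell = \epsilon/2$ stopping condition within $T$ pulls, so that a successful run never exhausts the budget prematurely. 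This is precisely where the total pull-count bound above is invoked, and closing this gap between ``runs to termination'' and ``runs within $T$'' is the part that demands the most care.
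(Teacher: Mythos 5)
Your proposal is correct, and the reduction you give for the second bullet is exactly the paper's: \cenB\ is obtained from \cenT\ by the reflection $x \mapsto 1-x$, and your verification that $\Delta_i^{\langle m\rangle}(I') = \Delta_i^{\langle n-m\rangle}(I)$ and hence $T_1(I',m,\eps,\delta) = T_1(I,n-m,\eps,\delta)$ is the (unstated) content behind that one-line remark. Where you genuinely diverge is in the first bullet. The paper does not use phased elimination: its \cenT\ is a budget-truncated variant of the \lucb\ algorithm of Kalyanakrishnan et al., which at every step adaptively pulls only the two most ambiguous arms ($h^{\star}$, the weakest-looking member of the current top set, and $l^{\star}$, the strongest-looking outsider) until the horizon $T$ is exhausted; the lemma is then cited as following "in essentially the same way" as the \lucb\ analysis, so the paper's proof is really a deferral to that prior work plus the observation that running to a budget $T \ge T_1$ rather than to a stopping rule yields the same $(\eps,m)$-PAC guarantee. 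Your uniform-confidence-radius elimination scheme is a different algorithm achieving the same $O\bigl(H^{\langle m\rangle}_{\eps/2}\log(H^{\langle m\rangle}_{\eps/2}/\delta)\bigr)$ bound; it is more self-contained and its per-arm accounting (arm $i$ resolved once $\beta_\ell \lesssim \max\{\Delta_i^{\langle m\rangle},\eps/2\}$) is cleaner to verify, whereas \lucb\ concentrates samples adaptively and inherits its guarantee from an existing analysis. Two small points you should make explicit if you flesh this out: the union bound over arms and phases yields a $\log(n\log(1/\eps)/\delta)$ factor, and converting this to the stated $\log(H^{\langle m\rangle}_{\eps/2}/\delta)$ uses that $H^{\langle m\rangle}_{\eps/2} \ge n$ (every term is at least $1$ since means lie in $(0,1)$) and that the number of phases actually executed before every arm is resolved is $O(\log H^{\langle m\rangle}_{\eps/2})$; and, as you note yourself, the fixed-budget packaging (halt at time $T$ and argue that $T \ge T_1$ guarantees the $\beta_\ell = \eps/2$ phase completes) is the step that turns a fixed-confidence elimination scheme into the interface this lemma actually promises.
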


The following lemma says that there is a simple collaborative algorithm \colN\ for top-$m$ arm identification that uses $O(\log n)$ rounds of communication. Note that this bound is still much larger than our final target $O(\log\log m + \log K)$ rounds. \colN\ is a simple modification of a centralized algorithm in \cite{BWV13}, and will be described in details in Section~\ref{sec:simple-alg}. Lemma~\ref{lem:centra-2} is a direct consequence of Lemma~\ref{lem:colN}, which will be presented and proved in Section~\ref{sec:simple-alg}.

\begin{lemma}
\label{lem:centra-2}
Let $I$ be a set of $n$ arms, and $m \in \{1, \ldots, n - 1\}$.  Let 
\begin{equation}
\label{eq:T2}
T_2(I, m, \delta) = c_{2} \cdot \frac{H^{\langle m \rangle}(I)}{K} \cdot \log{n} \cdot \log\frac{n}{\delta}
\end{equation}
for a universal constant $c_{2}$.
There is a collaborative algorithm \colN$(I, m, T)$ such that if $T \ge T_2(I, m, \delta)$ then with probability at least $1 - \delta$, one computes the set of top-$m$ arms of $I$ using at most $T$ time steps and $O(\log n)$ rounds.
\end{lemma}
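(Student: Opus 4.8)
The plan is to start from the centralized top-$m$ algorithm of \cite{BWV13} --- a phase-based elimination procedure that proceeds in $O(\log n)$ phases, where in each phase every currently active arm is pulled a prescribed number of times, empirical means are formed, and (based on these means together with their confidence radii) a subset of arms is either \emph{accepted} into the answer set or \emph{rejected}, shrinking the active set until all $n$ arms are resolved --- and to \emph{parallelize each phase} across the $K$ agents. The whole content of the lemma is that this parallelization is statistically transparent: the only adaptivity in the centralized algorithm occurs between phases, so each phase can be turned into exactly one round of communication, and the per-arm sampling of a phase can be spread evenly over the agents.

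Concretely, suppose phase $t$ of the centralized algorithm pulls each of the $\abs{A_t}$ active arms $N_t$ times. In the collaborative version \colN\ I would have each of the $K$ agents pull every arm in $A_t$ about $\lceil N_t / K \rceil$ times during phase $t$. At the end of the phase the agents perform one round of communication: each agent broadcasts, for every active arm, its partial sum of rewards and its pull count. Summing these across agents reconstructs, for each active arm, the empirical mean of $N_t$ i.i.d.\ pulls --- exactly the statistic the centralized algorithm would compute in phase $t$. All agents then apply the \emph{same deterministic} accept/reject rule to these shared statistics, so they agree on both the updated answer set and the updated active set $A_{t+1}$ with no further communication. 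After the final phase all agents already hold the same output, so the requirement that agents agree before any communication at the last round is met.

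Correctness is then immediate by coupling: the sequence of aggregated empirical means produced by \colN\ has exactly the same distribution as the sequence of empirical means in the centralized execution (in both cases each is an average of $N_t$ independent samples per arm), and the accept/reject decisions are a deterministic function of these means. Hence the $(1-\delta)$ success guarantee of the centralized algorithm transfers verbatim, provided the total sampling budget matches. For the round count there is one communication round per phase, giving $O(\log n)$ rounds. For the time, the per-agent work in phase $t$ is $\lceil N_t/K\rceil \cdot \abs{A_t} \le N_t \abs{A_t}/K + \abs{A_t}$, and summing over phases the dominant term is $\frac1K \sum_t N_t \abs{A_t}$, i.e.\ the centralized total sample count divided by $K$. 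Since the centralized total is $O\!\left(H^{\langle m\rangle}(I)\,\log n\,\log(n/\delta)\right)$, choosing $c_2$ appropriately makes the per-agent running time at most $T_2(I,m,\delta)$.

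The one place that needs care --- and the main (minor) obstacle --- is load balancing when a phase has $N_t < K$, so that the ideal split $N_t/K$ is fractional and the naive rounding $\lceil N_t/K\rceil$ could inflate the per-agent work. I would handle this by also splitting along arms: distribute the $N_t \abs{A_t}$ total pulls of phase $t$ as evenly as possible over the $K$ agents (each agent handling a batch of at most $\lceil N_t \abs{A_t}/K\rceil$ arm-pull pairs), which keeps the per-agent work at $N_t \abs{A_t}/K + O(1)$ and lets the additive overhead be absorbed into the $\log n\,\log(n/\delta)$ factors. The remaining bookkeeping --- verifying that the centralized total is indeed $O(H^{\langle m\rangle}(I)\log n\log(n/\delta))$ and that the additive $\sum_t \abs{A_t} = O(n\log n)$ terms are lower order --- is routine, and is exactly the content deferred to Lemma~\ref{lem:colN} in Section~\ref{sec:simple-alg}.
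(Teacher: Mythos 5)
Your parallelization step is exactly what the paper does: in Algorithm~\ref{alg:colN} each agent pulls every active arm $(T_{r+1}-T_r)$ times, the empirical means are aggregated over the $K$ agents at the end of the phase, and all agents apply the same deterministic accept/reject rule to the shared statistics, so the distributional coupling with a centralized run and the ``one round per phase'' accounting are both as you describe. That part of your argument is sound and is not where the work lies.

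The gap is in your premise that the centralized algorithm of \cite{BWV13} already ``proceeds in $O(\log n)$ phases'' with a confidence-radius-based accept/reject rule. The SAR algorithm of \cite{BWV13,ABM10} eliminates \emph{one} arm per phase and hence has $n-1$ phases; and the alternative you describe --- accepting/rejecting arms once their confidence intervals separate them from the empirical $m$-th arm --- yields a phase count governed by the gaps (roughly $\log(1/\Delta^{\langle m\rangle}_{[m]})$, as in the paper's own fixed-confidence Algorithm~\ref{alg:confidence}), which can be arbitrarily large compared to $\log n$. Neither gives $O(\log n)$ rounds off the shelf. The actual content of the paper's proof (Lemma~\ref{lem:colN}) is to \emph{force} the active set to shrink geometrically: in phase $r$ it removes the $(n_r - n_{r+1})$ arms with the largest \emph{empirical} gaps, where $n_r = \lfloor n^{1-r/R}\rfloor$, so that exactly $R+1$ phases suffice, and then proves that under the event $\E_2$ (all empirical means within $\Delta^{\langle m\rangle}_{\pi(n_{r+1})}/8$ of the truth) every forcibly removed arm is correctly classified; the error probability is controlled via the inequality $\max_i \bigl\{ i/(\Delta^{\langle m\rangle}_{\pi(i)})^2 \bigr\} \le H^{\langle m\rangle}\log 2n$. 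Deferring this to ``routine bookkeeping'' is circular, since designing and analyzing that batched elimination rule is precisely what Lemma~\ref{lem:colN} supplies and what your write-up is missing.
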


\subsection{Special Time Horizon $T$}

In this section we prove the following theorem concerning a special time horizon $T$.

\begin{theorem}
\label{thm:ub}

Let $I$ be a set of $n$ arms, and $m \in \{1, \dotsc, n-1\}$. Let 
\begin{equation}
\label{eq:T0}
T_0 = c_{0} \cdot \frac{H^{\langle m \rangle}}{K} \cdot \left(\log \left(H^{\langle m\rangle} \cdot K\right)+\log^2 n \right)\cdot \log\log n 
\end{equation}
for a large enough constant $c_{0}$.
There exists a collaborative algorithm \colM$(I, m, T)$\ that computes the set of top-$m$ arms of $I$ with probability at least $0.99$ when $T \ge T_0$, and uses at most $T$ time steps and $O(\log\frac{\log n}{\log K} + \log K)$ rounds of communication.
\end{theorem}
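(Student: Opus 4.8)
The plan is to implement \colM\ as a recursive ``peeling'' procedure. In each phase the $K$ agents share a common residual set $V \subseteq I$ of still-undetermined arms together with a counter recording how many of the global top-$m$ arms have already been confirmed. A designated leader randomly partitions $V$ into $K$ equal-size groups $V_1, \dots, V_K$ and assigns $V_k$ to agent $k$. Each agent runs the PAC routine \cenT\ to extract an approximate set of top arms from $V_k$ and \cenB\ to extract an approximate set of bottom arms, with a carefully chosen count $\eta$ and approximation parameter $\eps$. After one round of communication the confirmed top arms are added to the output set $S$, the confirmed bottom arms are discarded, and the agents recurse on the remaining (much smaller) set of arms. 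Once $|V|$ drops below $K^{10}$ we stop recursing and invoke the simple collaborative algorithm \colN\ of Lemma~\ref{lem:centra-2}.

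First I would set up the random partition and its concentration. Partitioning $V$ uniformly into $K$ equal groups, the number of (globally) top-$m$ arms landing in a fixed group concentrates around its mean with deviation $O(\sqrt{n\log K})$; a union bound over the $K$ groups shows that, with probability $1 - 1/\text{poly}(n)$, every group simultaneously receives at least $\eta := (\text{current top count})/K - c\sqrt{n\log K}$ top-$m$ arms and at least the analogous number of bottom arms. This is exactly the ``conservative'' choice that lets each agent safely prune a top chunk and a bottom chunk of size $\eta$ each without ever misclassifying an arm.

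Next I would argue correctness of the pruning and bound the per-agent running time. If agent $k$'s group $V_k$ contains at least $\eta$ global top-$m$ arms, then $\theta_{[\eta]}(V_k) \ge \theta_{[m']}$ where $m' \le m - \Theta(K\sqrt{n\log K})$ indexes a global arm bounded away from the pivot; applying Lemma~\ref{lem:far} with $z = \Theta(K\sqrt{n\log K})$ lower-bounds the relevant gap $\Delta_{[m']}^{\langle m\rangle}$, and I would choose $\eps$ a constant factor smaller than this gap. The guarantee of Lemma~\ref{lem:centra-1} then certifies that every arm returned by \cenT\ is $(\eps,\eta)$-top in $V_k$ and hence has mean strictly above $\theta_{[m+1]}$, i.e.\ a genuine global top-$m$ arm; the symmetric statement holds for \cenB. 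For the budget, I would bound the sub-instance complexity $H^{\langle\eta\rangle}_{\eps/2}(V_k)$ using Lemma~\ref{lem:sandwich} to compare it against $\sum_{i\in V_k}\max\{\Delta_i^{\langle m\rangle},\eps\}^{-2}$ and Lemma~\ref{lem:truncate} to pass between pivots, obtaining that in expectation it is $O(H^{\langle m\rangle}/K)$; with the chosen $\eps$ the required time $T_1(V_k,\eta,\eps,\delta)$ is then $\tilde O(H^{\langle m\rangle}/K)$ per agent per phase, matching a single-phase slice of $T_0$.

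The remaining bookkeeping is shrinkage, round count, and failure probability, and this is also where the main obstacle lies. After one phase we remove about $K\eta$ top and $K\eta$ bottom arms, so the residual shrinks from $n_i$ to $\tilde O(K\sqrt{n_i})$; iterating the recurrence $n_{i+1} = \tilde O(K\sqrt{n_i})$ shows $|V|$ reaches $K^{10}$ after $O(\log\frac{\log n}{\log K})$ phases, each costing $O(1)$ rounds, while the base case \colN\ runs on $K^{10}$ arms in $O(\log K^{10}) = O(\log K)$ rounds, giving the claimed $O(\log\frac{\log n}{\log K} + \log K)$ total. Setting the per-phase, per-agent failure probability to $1/\text{poly}(n)$ and taking a union bound over all phases, all $K$ agents, and the partition events yields overall success $0.99$; the $\log(H^{\langle m\rangle}K) + \log^2 n$ and $\log\log n$ factors in $T_0$ are absorbed here, the $\log\log n$ coming from the number of phases and the $\log^2 n$ from \colN. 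I expect the main difficulty to be precisely the uneven spreading of instance complexity across groups flagged in the introduction: a naive demand that every subproblem be fully solved within $\tilde O(H^{\langle m\rangle}/K)$ time would fail, so the argument must decouple ``how many arms we may safely prune'' from ``whether any single subproblem is solvable.'' The conservative choice of $\eta$ together with the gap-dependent choice of $\eps$ (justified through Lemmas~\ref{lem:far}, \ref{lem:sandwich}, and \ref{lem:truncate}) is what achieves this decoupling, and making these two quantities line up is the crux.
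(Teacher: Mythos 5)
Your overall architecture matches the paper's: random partition across the $K$ agents, conservative counts $\ell \approx m/K - \Theta(\sqrt{n\log(\cdot)})$ and $r \approx (n-m)/K - \Theta(\sqrt{n\log(\cdot)})$, per-agent calls to \cenT\ and \cenB, recursion down to $K^{10}$ arms followed by \colN, and the same shrinkage recurrence and round count. Your correctness step (an $(\eps,\eta)$-top arm of $V_k$ has mean above $\theta_{[m+1]}$ once $\theta_{[\eta]}(V_k)\ge\theta_{[a]}$ and $\eps<\Delta^{\langle m\rangle}_{[a]}$) is also fine \emph{provided} \cenT\ actually achieves accuracy $\eps$ within its budget. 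The genuine gap is there: you fix a deterministic accuracy $\eps\lesssim\Delta^{\langle m\rangle}_{[a]}$ with $a=m-\Theta(K\sqrt{n\log(\cdot)})$, and the resulting requirement $T_1\propto H^{\langle \eta\rangle}_{\eps/2}(V_k)$ is \emph{not} $\tilde O(H^{\langle m\rangle}/K)$ in general. Concretely, let ranks $1,\dots,a-1$ have mean $0.9$, ranks $a,\dots,m$ have mean $\tfrac12+\gamma$, and ranks $m+1,\dots,n$ have mean $\tfrac12$, with $m=n-2q$. Then $\Delta^{\langle m\rangle}_{[a]}=\gamma$ and $H^{\langle m\rangle}=\Theta(q\gamma^{-2})$ for small $\gamma$, yet with high probability each $V_k$ contains more than $\eta$ arms of mean $0.9$, so every such arm has local gap $\Delta^{\langle \eta\rangle}_i(V_k)=0$ and contributes $(\eps/2)^{-2}\ge 4\gamma^{-2}$ after truncation; summing over the $\approx n/K$ such arms gives $H^{\langle \eta\rangle}_{\eps/2}(V_k)=\Omega(n\gamma^{-2}/K)$, which exceeds $H^{\langle m\rangle}/K$ by a factor of order $\sqrt{n}/(K\sqrt{\log(\cdot)})$. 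Relatedly, comparing $H^{\langle \eta\rangle}_{\eps/2}(V_k)$ to $\sum_{i\in V_k}\max\{\Delta^{\langle m\rangle}_i,\eps\}^{-2}$ via Lemma~\ref{lem:sandwich} is not licensed: the sandwich condition requires $\theta_{[m+1]}\ge\theta_{[\eta+1]}(V_k)$, which fails precisely because $\eta$ is chosen conservatively (the local $(\eta+1)$-st arm is typically still a global top-$m$ arm).

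The missing idea is that the accuracy must itself be a random variable tied to the local pivot: let $P$ be the global rank of the $\ell$-th best arm of $V_k$ and take $\eps=\Delta^{\langle m\rangle}_{[P]}/4$ (in the example above this is $0.1$, not $\gamma$). Correctness is preserved since an $(\Delta^{\langle m\rangle}_{[P]}/4,\ell)$-top arm has mean at least $\theta_{[P]}-\Delta^{\langle m\rangle}_{[P]}/4>\theta_{[m+1]}$, and now Lemma~\ref{lem:sandwich} applies with the intermediate pivot $j=P$, while Lemma~\ref{lem:truncate} — whose truncation level must match its pivot, $H^{\langle P\rangle}_{\Delta^{\langle m\rangle}_{[P]}}\le 4H^{\langle m\rangle}$ — bounds the conditional expectation of the sub-instance complexity by $4H^{\langle m\rangle}/K$. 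Finally, an expectation bound is not enough to survive the union bound over $K$ agents and all rounds: you need the concentration argument of Lemma~\ref{lem:local-global}, in which Lemma~\ref{lem:far} is used not to set $\eps$ but to bound each term $\max\{\Delta^{\langle p\rangle}_i,\Delta^{\langle m\rangle}_{[p]}\}^{-2}\le 2H^{\langle m\rangle}/q$ so that Chernoff-Hoeffding applies, followed by a union bound over the at most $a$ possible values of $P$.
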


% \qinsays{Should change $O(\log\log n + \log K)$ to simply $\log\log n + \log K$?}

\begin{algorithm}[t]
\caption{\colM$(I, m, T)$}
\label{alg:main}
\KwIn{a set of $n$ arms $I$, parameter $m$, and time horizon $T$.}
\KwOut{the set of top-$m$ arms of $I$.}
Let $R$ be the global upper bound on the number of rounds \\ and $\delta$ be also the global parameter equal to $1/(100 R)$\;
$q \gets 4  K \sqrt{n\log \left(n R \right)}$\;

\eIf{$n > K^{10}$}{
$\Acc \gets \emptyset, \Rej \gets \emptyset$\;
randomly assign each arm in $I$ to one of the $K$ agents, and let $I_{i}$ be the set of arms assigned to $i$-th agent  \footnotemark \label{ln:a-0}\;
\If{$m > q$}{
$\ell \gets (m - q) / K$\;
\For{agent $i = 1$ to $K$}{
    $\Acc_i \gets \text{\cenT}\left(I_i, \ell, \frac{T}{4R}, \frac{\delta}{2K}\right)$  \label{ln:a-1}\;
}
$\Acc \gets \bigcup_{i=1}^{K} \Acc_i$\;
}
\If{$n - m > q$}{
$r \gets (n - m - q)/K$\;
\For{agent $i = 1$ to $K$}{
    $\Rej_i \gets \text{\cenB}\left(I_i, r, \frac{T}{4R}, \frac{\delta}{2K}\right)$\; \label{ln:a-2}
}
$\Rej \gets \bigcup_{i=1}^K \Rej_i$\;
}
%$U \gets I \setminus \left( \Acc \cup \Rej\right)$ \\
\KwRet{$\Acc \bigcup \text{\colM}(I \setminus (\Acc \cup \Rej), m - |\Acc|, T)$} \label{ln:a-3}\;
}{\KwRet{\colN$(I, m, T/2)$ \label{ln:a-4}.}}
\end{algorithm}

% Our algorithm for this special time horizon $T = T_0$ is described in  Algorithm~\ref{alg:main}.  

\paragraph{Algorithm and Intuition.}
Our algorithm is described in  Algorithm~\ref{alg:main}.  
Note that we have used {\em recursion} instead of {\em iteration} to omit a superscript $r$. But we still call each recursive step a {\em round}.  

Let us briefly describe Algorithm~\ref{alg:main} in words. At the beginning of each round we first randomly partition the set of arms to the $K$ agents. Then each agent tries to identify a subset of arms $\Acc_i$ of size $\ell \approx (m/K - \sqrt{n})$ to be included to $\Top_m$, and a subset of arms $\Rej_i$ of size $r \approx ((n-m)/K - \sqrt{n})$ to be pruned.  The intuition to introduce the additive $\sqrt{n}$ term is that by a concentration bound, we have with a good probability that at least $\ell$ true top-$m$ arms will be assigned to each agent, and similarly at least $r$ non-top-$m$ arms will be assigned to each agent.  However, even with this fact, we still cannot guarantee that each agent can identify the top and bottom arms successfully given its limited budget, which is approximately $H^{\langle m \rangle}/K$. Such a budget in some sense demands that the global instance complexity is {\em evenly} divided into the $K$ agents, which is not necessary true.  We thus adopt a PAC algorithm for top-$m$ arm identification which returns a set of $\ell$ $(\eps, \ell)$-top arms at each agent $A_i$, where $\eps$ is a random variable which, with a high probability, is smaller than the gap between the $\ell$-th top arm {\em locally} at $A_i$ and that of the $m$-th {\em global} top arm. In this way we can guarantee that it is safe to include each $\Acc_i$ that $A_i$ computes into $\Top_m$.  By essentially the same arguments, we can show that it is safe to prune the set of bottom arms $\Rej_i$.

The following lemma is critical for the correctness of the algorithm.

\begin{lemma}
\label{lem:correctness}

if $T \ge T_0$ (defined in~\refeq{eq:T0}). When running Algorithm~\ref{alg:main} \colM$(I, m, T)$ , we have that at each round,
\begin{equation}
\label{eq:contain}
\Pr\left[(\Acc \subseteq \Top_m) \wedge (\Rej \subseteq I \setminus \Top_m) \right]\geq 1 - {1}/{(200 R)},
\end{equation}
where $\Acc$ and $\Rej$ are defined in Algorithm~\ref{alg:main}.
\end{lemma}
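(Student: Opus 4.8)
The plan is to control the two independent sources of error at a single round and then union-bound: (i) the random partition being so unbalanced that some agent fails to receive enough genuine $\Top_m$ (resp.\ non-$\Top_m$) arms, and (ii) some invocation of the PAC subroutine \cenT\ or \cenB\ returning arms that miss its promised accuracy. I treat the $\Acc$ side; the $\Rej$ side is entirely symmetric, with \cenB, $r$, and the global bottom arms replacing \cenT, $\ell$, and the global top arms. For the partition, since each arm is sent to an independent uniform agent, the number of global top-$(m-q/2)$ arms landing in $I_i$ is a sum of $m-q/2$ independent $\{0,1\}$ variables with mean $(m-q/2)/K = \ell + q/(2K)$. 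Applying Hoeffding with deviation $q/(2K)=2\sqrt{n\log(nR)}$ and union-bounding over the $K$ agents, with probability $1-(nR)^{-\Omega(1)}$ every agent receives at least $\ell$ of the global top-$(m-q/2)$ arms (the branch is active because $m>q$ forces $\ell>0$). On this event $\theta_{[\ell]}(I_i)\ge\theta_{[m-q/2]}\ge\theta_{[m]}$, so the local $\ell$-th arm lies $q/2$ ranks inside the top block, leaving a margin $\Delta^{\langle m\rangle}_{[m-q/2]}=\theta_{[m-q/2]}-\theta_{[m+1]}>0$ above $\theta_{[m+1]}$.

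\emph{Safety.} I set the target accuracy $\eps := \tfrac12\,\Delta^{\langle m\rangle}_{[m-q/2]}$, which is positive by the standing assumption $\theta_{[m]}>\theta_{[m+1]}$. By Lemma~\ref{lem:centra-1}, whenever $T/(4R)\ge T_1(I_i,\ell,\eps,\delta/(2K))$ the call returns, with probability $1-\delta/(2K)$, $\ell$ arms that are each $(\eps,\ell)$-top in $I_i$. For such an arm $a$, on the partition event, $\theta_a\ge\theta_{[\ell]}(I_i)-\eps\ge\theta_{[m-q/2]}-\tfrac12\Delta^{\langle m\rangle}_{[m-q/2]}=\theta_{[m+1]}+\tfrac12\Delta^{\langle m\rangle}_{[m-q/2]}>\theta_{[m+1]}$, so $a\in\Top_m$. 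Thus $\Acc_i\subseteq\Top_m$ for every $i$, and hence $\Acc\subseteq\Top_m$.

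\emph{Budget sufficiency — the main obstacle.} What remains is to verify $T/(4R)\ge T_1(I_i,\ell,\eps,\delta/(2K))=c_1 H^{\langle\ell\rangle}_{\eps/2}(I_i)\log\!\big(H^{\langle\ell\rangle}_{\eps/2}(I_i)\cdot 2K/\delta\big)$ on the good event, and I expect this to be the hardest step, because the local pivot $\ell$ corresponds to a global rank that wanders with the partition. The plan is to prove $H^{\langle\ell\rangle}_{\eps/2}(I_i)=O(H^{\langle m\rangle}/K)$ in three moves: first pass from the local pivot $\ell$ to a global pivot via item 2 of Lemma~\ref{lem:sandwich}, bounding $H^{\langle\ell\rangle}_{\eps/2}(I_i)\le\sum_{a\in I_i}\max\{\Delta^{\langle j\rangle}_a,\eps/2\}^{-2}$ for a global rank $j$ straddling $\theta_{[\ell]}(I_i)$; next use both tails of the partition concentration to confine $j$ to a window of width $O(q)$ around $m$ — inside which $\eps/2\asymp\Delta^{\langle m\rangle}_{[m-q/2]}$ dominates all the relevant gaps, with Lemma~\ref{lem:far} guaranteeing the truncation level is not too small relative to $H^{\langle m\rangle}$ — and invoke Lemma~\ref{lem:truncate} to replace the pivot $j$ by $m$ termwise, giving a bound by $\sum_{a\in I_i}\max\{\Delta^{\langle m\rangle}_a,\eps/2\}^{-2}$ up to a constant; finally bound this last sum by a concentration argument whose partition-expectation is $H^{\langle m\rangle}_{\eps/2}/K=O(H^{\langle m\rangle}/K)$, union-bounding the concentration over the $O(n)$ candidate values of $j$ at a negligible $(nR)^{-\Omega(1)}$ cost. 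Substituting $H^{\langle\ell\rangle}_{\eps/2}(I_i)=O(H^{\langle m\rangle}/K)$, $\delta=1/(100R)$, and $R=O(\log n)$, the right-hand side of $T_1$ becomes $O\!\big(\tfrac{H^{\langle m\rangle}}{K}(\log(H^{\langle m\rangle}K)+\log\log n)\big)$, which is dominated by $T_0/(4R)\ge c\,\tfrac{H^{\langle m\rangle}}{K}\log n\log\log n$ once $c_0$ is large — exactly the reason $T_0$ carries the $\log^2 n\cdot\log\log n$ factor.

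\emph{Union bound.} Finally I collect the failures: the partition event fails with probability $(nR)^{-\Omega(1)}$ (negligible since $n\ge K^{10}$), and each of the at most $2K$ \cenT/\cenB calls fails with probability $\delta/(2K)$, so the PAC failures sum to at most $\delta$. Together these keep the total failure within the $\Theta(1/R)$ budget of \eqref{eq:contain}; the choices $\delta=1/(100R)$ and the per-call confidence $\delta/(2K)$ are made precisely for this. Since the $\Rej$ side is handled identically, the bound \eqref{eq:contain} follows.
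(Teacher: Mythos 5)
Your overall architecture (partition concentration, PAC safety, budget sufficiency, union bound) matches the paper's, and the partition, safety, and union-bound steps are sound. The genuine gap is exactly where you predicted it: the budget-sufficiency step. The problem is your choice of a \emph{fixed} accuracy $\eps = \tfrac12\Delta^{\langle m\rangle}_{[m-q/2]}$. Confining the global rank $j$ of the local pivot to a window $[m-O(q),\,m-q/2]$ does not confine the \emph{gap} $\Delta^{\langle m\rangle}_{[j]}$ to within a constant factor of $\Delta^{\langle m\rangle}_{[m-q/2]}$ --- rank proximity says nothing about gap proximity. Consequently the termwise pivot replacement via Lemma~\ref{lem:truncate} fails: that lemma's proof needs the truncation level to be matched to the pivot (it shows $\max\{\Delta^{\langle t\rangle}_i,\Delta^{\langle m\rangle}_{[t]}\}\ge\Delta^{\langle m\rangle}_i/2$, and the case $\theta_i\ge\theta_{[m+1]}+2\Delta^{\langle m\rangle}_{[t]}$ breaks down if you substitute the smaller level $\Delta^{\langle m\rangle}_{[m-q/2]}/4$ while keeping pivot $j$). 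Concretely, take arms $1,\dots,m-q$ with mean $0.9$ (infinitesimally perturbed), arms $m-q+1,\dots,m$ with mean $0.5+\gamma$, arms $m+1,\dots,m+q$ with mean $0.5$, and the rest with mean $0.1$, so $H^{\langle m\rangle}\approx 2q\gamma^{-2}$ and $\Delta^{\langle m\rangle}_{[m-q/2]}=\gamma$. With probability about $1/2$ an agent receives more than $\ell$ of the $0.9$-arms, so its local top-$\ell$ gaps are essentially $0$ and every such arm contributes $\Theta(\gamma^{-2})$ to $H^{\langle\ell\rangle}_{\eps/2}(I_i)$, giving $H^{\langle\ell\rangle}_{\eps/2}(I_i)=\Omega\left(\frac{n}{K}\gamma^{-2}\right)$, which exceeds $H^{\langle m\rangle}/K=\Theta\left(\frac{q}{K}\gamma^{-2}\right)$ by a factor $\Theta(n/q)=\tilde\Theta(\sqrt{n}/K)\gg 1$. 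So the budget $T/(4R)$ genuinely does not suffice for \cenT\ to deliver $(\eps,\ell)$-top arms at your $\eps$, even though the algorithm is in fact correct on this instance.

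The paper's fix, which you would need to adopt, is to make the accuracy adaptive to the realized partition: apply the guarantee of Lemma~\ref{lem:centra-1} with $\eps=\Delta^{\langle m\rangle}_{[P]}/4$, where $P$ is the (random) global rank of the agent's local $\ell$-th best arm. In the bad instance above $\Delta^{\langle m\rangle}_{[P]}=0.4$, so the demanded accuracy relaxes precisely when the local pivot is a high-mean arm, and the truncated complexity stays $O(H^{\langle m\rangle}/K)$. Handling the randomness of $P$ is then done by the law of total probability over $P=p$ (this is the content of Lemma~\ref{lem:local-global}): for each fixed $p\le m-q/2$, one combines Lemma~\ref{lem:sandwich} with Lemma~\ref{lem:truncate} at the \emph{matched} truncation level $\Delta^{\langle m\rangle}_{[p]}$ to bound the expectation by $4H^{\langle m\rangle}/K$, uses Lemma~\ref{lem:far} for the per-term range needed by Hoeffding, and pays a union bound over the $\le a$ values of $p$. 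Everything else in your write-up survives this substitution, but without it the central estimate $H^{\langle\ell\rangle}_{\eps/2}(I_i)=O(H^{\langle m\rangle}/K)$ is false.
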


Before proving Lemma~\ref{lem:correctness}, we first show that Lemma~\ref{lem:correctness} implies Theorem~\ref{thm:ub}.

\begin{proof}[Proof of Theorem~\ref{thm:ub}]  
W.l.o.g.\ we assume that $K = \omega(\log\log n)$. Note that when $K = O(\log\log n)$, we have $T_0 \ge K \cdot T_2(I, m, 1/100)$, and thus each agent can simply solve the problem independently using the centralized algorithm in \cite{BWV13}.

Let $m' = m - \abs{\Acc}$. If $\Acc \subseteq \Top_m$ and $\Rej \subseteq I \setminus \Top_m$, then we have
\begin{enumerate}
\item $\Top_m = \Acc \cup \Top_{m'}(I \setminus (\Acc \cup \Rej))$.

\item $H^{\langle m' \rangle}(I \setminus (\Acc \cup \Rej)) \le H^{\langle m \rangle}$.
\end{enumerate}
The first item is obvious.  The second item is due to Lemma~\ref{lem:sandwich}.  The second item ensures that the recursion goes through under the same time horizon.

By the first item, Lemma~\ref{lem:centra-2} (note that $T/2 \ge T_2(I, m, 1/200)$), and a union bound, we have that with error at most $(1/(200 R) \cdot R + 1/200) = 1/100$, Algorithm~\ref{alg:main} computes $\Top_{m}$.

Now we analyze the running time.  Under the condition that at each round we have $\Acc \subseteq \Top_m$ and $\Rej \subseteq I \setminus \Top_m$, it follows that when $n > K^8$ and $K = \omega(\log\log n)$,
$$
n - \abs{\Acc} - \abs{\Rej} \le 2 \cdot 4  K \sqrt{n \log\left(n R \right)} \le n^{7/8}.
$$
Therefore after $R = 10\log\left(\frac{\log n}{10 \log K}\right)$ rounds, we have $n^{(7/8)^{R}} \le K^{10}$. Consequently Algorithm~\ref{alg:main} must have already reached Line~\ref{ln:a-4}. The algorithm \colN\ in Line~\ref{ln:a-4} takes $O(\log K^{10}) = O(\log K)$ rounds by Lemma~\ref{lem:colN} (setting $R = \log n$ where $n = K^{10}$ here).   Thus the total number of rounds is bounded by $O\left(\log\left(\frac{\log n}{\log K}\right) + \log K\right)$.  
\end{proof}

In the rest of this section we will prove Lemma~\ref{lem:correctness}.  
\smallskip

Let $q = 4 K \sqrt{n\log \left(n R \right)}$, $\ell = \left(m - q\right)/K$ and $r = \left(\left(n - m\right) - q\right)/K$ be defined in Algorithm~\ref{alg:main}. Further, define $a = m - q/2$ and $b = \left(n - m\right) - q/2$.  

\footnotetext{This randomness can be precomputed and stored at each agent.}

The following lemma concerns properties of the random partition in Line~\ref{ln:a-0} in Algorithm~\ref{alg:main}.

\begin{lemma}
\label{lem:partition}
Let $V$ be a random subset of $I\ (n > K^{10})$ by taking each arm independently with probability $1/K$.  We have
\begin{enumerate}
\item If $m>q$, then $\Pr\left[\left(|V| \geq \ell \right) \wedge \left(\theta_{[\ell]}(V) \geq \theta_{[a]}\right)\right] \geq 1 - \frac{1}{1600 K R}$.  

\item If $n - m > q$, then $\Pr\left[\left(|V| \geq r \right) \wedge \left(\theta_{[|V| - r + 1]}(V) \leq \theta_{[n - b + 1]}\right)\right] \geq 1 - \frac{1}{1600 K R}$.
\end{enumerate}
\end{lemma}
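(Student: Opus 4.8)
The plan is to reduce each of the two order-statistic events to a one-sided \emph{counting} event --- ``how many of the extreme arms land in $V$'' --- and then control that count with a standard Chernoff/Hoeffding bound. First I would treat Item 1. Fix a set $G$ of exactly $a = m - q/2$ arms of $I$ each with mean at least $\theta_{[a]}$ (for instance the top-$a$ arms). Because each arm is assigned to a uniformly random agent independently, each arm of $G$ lies in $V$ independently with probability $1/K$, so $Y \triangleq \abs{V \cap G} \sim \mathrm{Bin}(a, 1/K)$ with $\bE[Y] = a/K$. The crucial observation is that $\{Y \ge \ell\}$ already implies the target event: if at least $\ell$ arms of mean $\ge \theta_{[a]}$ fall into $V$, then $\abs{V} \ge Y \ge \ell$ and the $\ell$-th largest mean in $V$ obeys $\theta_{[\ell]}(V) \ge \theta_{[a]}$. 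Hence it suffices to show that $\Pr[Y < \ell]$ is small.

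Next I would compute the gap between $\bE[Y]$ and the threshold $\ell$. Since $\ell = (m-q)/K$ and $\bE[Y] = (m - q/2)/K$, we get $\bE[Y] - \ell = q/(2K) = 2\sqrt{n\log(nR)} =: t$. As $Y$ is a sum of $a \le n$ independent $\{0,1\}$ variables, Hoeffding's inequality yields
\[
\Pr[Y < \ell] \le \Pr[Y \le \bE[Y] - t] \le \exp(-2t^2/a) \le \exp(-8\log(nR)) = (nR)^{-8},
\]
using $a \le n$ in the penultimate step. Finally, since $n > K^{10}$ and $R \ge 1$, the quantity $(nR)^{-8}$ is at most $1/(1600 K R)$ (it is enough that $1600 K \le n^8 R^7$, which holds with room to spare), establishing Item 1.

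Item 2 I would obtain by the symmetric argument at the bottom of the instance. Setting $b = (n-m)-q/2$, one checks that $\theta_{[n-b+1]} = \theta_{[m+q/2+1]}$ is the $b$-th smallest mean; take $G'$ to be a fixed $b$-element set of arms each with mean at most $\theta_{[n-b+1]}$ (the bottom-$b$ arms), and let $Z \triangleq \abs{V \cap G'} \sim \mathrm{Bin}(b, 1/K)$. Then $\{Z \ge r\}$ forces $\abs{V} \ge r$ and makes the $r$-th smallest mean of $V$, namely $\theta_{[\abs{V}-r+1]}(V)$, at most $\theta_{[n-b+1]}$. The gap is again $\bE[Z] - r = q/(2K) = t$, so the identical Hoeffding computation gives $\Pr[Z < r] \le (nR)^{-8} \le 1/(1600 K R)$.

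The only genuine content here is the reduction from the order-statistic statements to the Binomial counting statements; after that, the concentration step is routine. The main point to get right is the choice of the fixed sets $G$ and $G'$ with the correct mean thresholds, so that possible ties in the means cause no trouble and the inclusion indicators stay independent across the arms being counted. The constants then line up precisely because the additive $q/2$ buffer built into $a$ and $b$ (versus $\ell K$ and $rK$) translates into a deviation of exactly $2\sqrt{n\log(nR)}$, which the $q = 4K\sqrt{n\log(nR)}$ definition was engineered to produce.
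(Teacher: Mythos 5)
Your proof is correct and follows essentially the same route as the paper: reduce the order-statistic event to the binomial count of the top-$a$ (resp.\ bottom-$b$) arms landing in $V$, then apply Chernoff--Hoeffding with the deviation $q/(2K)$, yielding exactly the bound $\exp(-8n\log(nR)/a) \le 1/(1600KR)$ that the paper computes. The second item is handled by symmetry in both arguments.
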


\begin{proof}
We focus on the first item; the second item is symmetric and can be proved by similar arguments.

For each $i \in \{1, \ldots, a\}$, we define a random variable $X_i$ which is $1$ if the arm with mean $\theta_{[i]}$ lies in the set $V$, and $0$ otherwise.  Let $X = \sum_{i =1}^a X_i$.  Thus $\bE[X] = a/K$. By Chernoff-Hoeffding (Lemma~\ref{lem:chernoff}) we have
$$
\Pr[X < \ell] = \Pr \left[X < \frac{a}{K} - \frac{q}{2 K}\right] \leq \exp\left(\frac{-8 {n \log \left(n R\right)}}{a}\right) \leq \frac{1}{1600 K R}.
$$
Thus with probability at least $\left(1 - \frac{1}{1600 K R}\right)$, $V$ contains at least $\ell$ arms with mean at least $\theta_{[a]}$.
\end{proof}

The next lemma connects the global instance complexity with the local instance complexity. 

\begin{lemma}
\label{lem:local-global}
Let $V$ be a random subset of $I\ (n > K^{10})$ by taking each arm independently with probability $1/K$.  Let $P$ be a random variable such that $\theta_{[P]} = \theta_{[\ell]}(V)$, and $Q$ be a random variable such that $\theta_{[Q]} = \theta_{[\abs{V}-r+1]}(V)$.
We have
\begin{enumerate}
\item If $m>q$, then $\Pr\left[(\abs{V} \ge \ell)  \wedge  \left(H_{\Delta_{[P]}^{\langle m \rangle}}^{\langle \ell \rangle}(V) \le 5H^{\langle m \rangle}/K \right) \right] \ge 1 - \frac{1}{800 K R}$.  
\item If $n - m > q$, then $\Pr\left[(\abs{V} \ge r) \wedge \left(H_{\Delta_{[Q]}^{\langle m \rangle}}^{\langle r \rangle}(V) \le 5H^{\langle m \rangle}/K \right)\right] \ge 1 - \frac{1}{800 K R}$.  
\end{enumerate}
\end{lemma}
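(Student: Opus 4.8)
The plan is to prove the first item; the second is entirely symmetric, with $\Rej$, the pivot index $Q$, and the $t>m$ branch of Lemma~\ref{lem:truncate} replacing $\Acc$, $P$, and the $t\le m$ branch. Write $\eps = \Delta^{\langle m \rangle}_{[P]}$ for the (random) truncation level. I would combine two events. The first is the event $A$ from the first item of Lemma~\ref{lem:partition}, namely $\abs{V} \ge \ell$ together with $\theta_{[\ell]}(V) \ge \theta_{[a]}$, which holds with probability at least $1 - \tfrac{1}{1600 K R}$ and in particular forces $P \le a = m - q/2 < m$. The second is a concentration event $B$ that I will set up on a sum with \emph{deterministic} weights. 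The whole difficulty is that the pivot $P$ and the truncation $\eps$ both depend on the random set $V$, so I cannot directly take an expectation over $V$; the first half of the argument is therefore devoted to replacing $P$ and $\eps$ by the fixed quantities $m$ and $\Delta^{\langle m\rangle}_{[a]}$.

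On $A$ I have $P \le a \le m$, and $\theta_{[\ell]}(V) = \theta_{[P]} \ge \theta_{[P+1]} \ge \theta_{[\ell+1]}(V)$, the last inequality because $\theta_{[P+1]}$ is the largest global mean strictly below $\theta_{[P]} = \theta_{[\ell]}(V)$. Applying the second item of Lemma~\ref{lem:sandwich} with $k = \ell$, $j = P$ gives $H^{\langle \ell \rangle}_{\eps}(V) \le \sum_{i \in V} \max\{\Delta^{\langle P \rangle}_i, \eps\}^{-2}$. Next, reusing the case analysis in the proof of Lemma~\ref{lem:truncate} for $t = P \le m$, every term satisfies $\max\{\Delta^{\langle P \rangle}_i, \Delta^{\langle m\rangle}_{[P]}\} \ge \tfrac12 \max\{\Delta^{\langle m\rangle}_i, \Delta^{\langle m\rangle}_{[P]}\}$, so each summand grows by at most a factor $4$ when I swap the pivot $P$ for $m$. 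Finally, $P \le a$ implies $\Delta^{\langle m\rangle}_{[P]} \ge \Delta^{\langle m\rangle}_{[a]}$, and enlarging the truncation level only shrinks each term; hence on $A$,
\[
H^{\langle \ell \rangle}_{\eps}(V) \;\le\; 4 \sum_{i \in V} \max\bigl\{\Delta^{\langle m\rangle}_i, \Delta^{\langle m\rangle}_{[a]}\bigr\}^{-2} \;=:\; 4 Z,
\]
where the weights $w_i = \max\{\Delta^{\langle m\rangle}_i, \Delta^{\langle m\rangle}_{[a]}\}^{-2}$ are now deterministic.

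It remains to show $Z \le \tfrac54 H^{\langle m\rangle}/K$ with the required probability; call this event $B$, which depends on $Z$ alone. Here $Z = \sum_{i \in I} w_i \mathbf{1}[i \in V]$ is a sum of independent $[0, w_i]$-valued terms with $\bE[Z] = \tfrac1K \sum_{i} w_i \le H^{\langle m\rangle}/K$, since truncation only decreases $H^{\langle m\rangle}$. Crucially, Lemma~\ref{lem:far} applied at $a = m - q/2$ (i.e.\ $z = q/2$) bounds each weight by $w_i \le (\Delta^{\langle m\rangle}_{[a]})^{-2} \le 2 H^{\langle m\rangle}/q$; this is exactly the truncation effect that makes the sum concentrate, as without it a single near-pivot arm could dominate. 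Taking $t = \tfrac14 H^{\langle m\rangle}/K$ in Hoeffding's inequality (Lemma~\ref{lem:chernoff}), with $\sum_i w_i^2 \le (2H^{\langle m\rangle}/q)\sum_i w_i \le 2(H^{\langle m\rangle})^2/q$, gives
\[
\Pr[\neg B] \;\le\; \exp\Bigl(-\tfrac{q}{16 K^2}\Bigr) \;=\; \exp\Bigl(-\tfrac{\sqrt{n \log(nR)}}{4K}\Bigr),
\]
which, since $n > K^{10}$ forces $\sqrt{n} > K^5$, is far below $\tfrac{1}{1600 K R}$. A union bound over $\neg A$ and $\neg B$ then yields the claimed $1 - \tfrac{1}{800 K R}$, because on $A \cap B$ we have both $\abs{V} \ge \ell$ and $H^{\langle \ell\rangle}_{\eps}(V) \le 4Z \le 5 H^{\langle m\rangle}/K$.

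The step I expect to be the genuine obstacle is this decoupling of the random pivot: making precise that the factor-$2$ inequality and the truncation-monotonicity reductions hold \emph{pointwise on} $A$ and legitimately replace the $V$-dependent quantities $P,\eps$ by the fixed $m,\Delta^{\langle m\rangle}_{[a]}$, after which the concentration of $Z$ is routine. Everything for the second item mirrors this argument, replacing $P$ with $Q \ge m + q/2$ (so Lemma~\ref{lem:far} again gives the per-term bound $2H^{\langle m\rangle}/q$) and invoking the $t>m$ branch of Lemma~\ref{lem:truncate}.
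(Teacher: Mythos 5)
Your proof is correct, and although it follows the same skeleton as the paper's (the partition event of Lemma~\ref{lem:partition} forcing $P \le a$, Lemma~\ref{lem:sandwich} to pass from $H^{\langle \ell\rangle}_{\eps}(V)$ to a sum of global gaps, Lemma~\ref{lem:far} to bound individual terms, and Chernoff--Hoeffding), it handles the crux --- the randomness of the pivot $P$ and of the truncation level $\Delta^{\langle m\rangle}_{[P]}$ --- by a genuinely different device. The paper applies the law of total probability over the value $P = p$, drops the event $\{P = p\}$, runs concentration separately on each of the $a \le n$ deterministic-weight sums $\sum_{i \in V}\max\{\Delta^{\langle p\rangle}_i, \Delta^{\langle m\rangle}_{[p]}\}^{-2}$, and absorbs the resulting factor-$n$ union bound because the exponent $q^2/(2K^2 n) = 8\log(nR)$ is large. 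You instead dominate the random-pivot quantity \emph{pointwise} on the event $A$ by the single sum $4Z$ with weights $\max\{\Delta^{\langle m\rangle}_i, \Delta^{\langle m\rangle}_{[a]}\}^{-2}$ that do not depend on $V$, using the factor-$2$ inequality extracted from the case analysis in the proof of Lemma~\ref{lem:truncate} together with the monotonicity $\Delta^{\langle m\rangle}_{[P]} \ge \Delta^{\langle m\rangle}_{[a]}$ for $P \le a \le m$; a single application of Hoeffding then suffices. This buys you a cleaner argument (no union bound over pivot values, and the trivial expectation bound $\bE[Z] \le H^{\langle m\rangle}/K$ in place of the paper's appeal to Lemma~\ref{lem:truncate} for $\bE[X] \le 4H^{\langle m\rangle}/K$) and a far smaller failure probability for the concentration step. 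Two minor points to tidy up in a write-up: the variance form of Hoeffding you invoke (denominator $\sum_i w_i^2$) is not literally the statement of Lemma~\ref{lem:chernoff}, though the stated uniform-range form also suffices here given $n > K^{10}$; and the pointwise inequality $\max\{\Delta^{\langle P\rangle}_i, \Delta^{\langle m\rangle}_{[P]}\} \ge \frac{1}{2}\max\{\Delta^{\langle m\rangle}_i, \Delta^{\langle m\rangle}_{[P]}\}$ is slightly stronger than the stated conclusion of Lemma~\ref{lem:truncate}, so it should be recorded as a separate one-line consequence of that lemma's case analysis, as you indicate.
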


\begin{proof}
We only need to prove the first item. The second item follows by symmetry. 

When $m > q$, let $\chi$ denote the event $(|V| < \ell) \vee (\theta_{[\ell]}(V) < \theta_{[a]})$. By Lemma~\ref{lem:partition} we have that $\chi$ happens with probability at most $1/(1600 K R)$. 
We have 
\begin{eqnarray}
\Pr\left[\left(|V| < \ell\right) \lor \left(H_{\Delta_{[P]}^{\langle m \rangle}}^{\langle \ell \rangle}(V) > \frac{5H^{\langle m \rangle}}{K} \right)\right] &\le& \Pr\left[\bar{\chi} \land \left(H_{\Delta_{[P]}^{\langle m \rangle}}^{\langle \ell \rangle}(V) > \frac{5H^{\langle m \rangle}}{K} \right)\right] + \Pr[\chi] \nonumber \\
&\le& \Pr\left[\bar{\chi} \land \left(H_{\Delta_{[P]}^{\langle m \rangle}}^{\langle \ell \rangle}(V) > \frac{5H^{\langle m \rangle}}{K} \right)\right] + \frac{1}{1600KR}.
\label{eq:chi}
\end{eqnarray}
Apply the law of total probability to the first term:
\begin{eqnarray}\label{eq:total}
\Pr\left[\bar{\chi} \land \left(H_{\Delta_{[P]}^{\langle m \rangle}}^{\langle \ell \rangle}(V) > \frac{5H^{\langle m \rangle}}{K} \right)\right]  &=&  \sum_{p=1}^a \Pr\left[\left(H^{\langle \ell \rangle}_{\Delta_{[p]}^{\langle m \rangle}}(V) > \frac{5H^{\langle m \rangle}}{K} \right)\land (P = p) \right] \\
 & \le & \sum_{p=1}^a \Pr\left[\left(\sum_{i \in V}\max\left\{\Delta_i^{\langle p \rangle}, \Delta_{[p]}^{\langle m \rangle}\right\}^{-2} > \frac{5H^{\langle m \rangle}}{K} \right) \land (P = p) \right] \nonumber \\
& = & \sum_{p=1}^a \Pr\left[\sum_{i \in V}\max\left\{\Delta_i^{\langle p \rangle} , \Delta_{[p]}^{\langle m \rangle}\right\}^{-2} > \frac{5H^{\langle m \rangle}}{K} \right], \label{eq:c-3}
\end{eqnarray}
where the inequality is due to Lemma~\ref{lem:sandwich}.

For each $i \in I$, define $X_i = \max\left\{\Delta_i^{\langle p \rangle}, \Delta_{[p]}^{\langle m \rangle}\right\}^{-2}$ if $i \in V$, and $X_i = 0$ otherwise.  Let $X = \sum_{i \in I} X_i$.  We thus have $\bE[X] = {H^{\langle p \rangle}_{\Delta_{[p]}^{\langle m \rangle}}/ K}$. By Lemma~\ref{lem:truncate} we have 
\begin{equation}
\label{eq:d-1}
\bE[X] \le \frac{4H^{\langle m \rangle}}{ K}.
\end{equation}  
By Lemma~\ref{lem:far} and $p \le a = m - q/2$, we have 
\begin{equation}
\label{eq:d-2}
X_i = \left[0, H^{\langle m\rangle}\left/\frac{q}{2} \right.\right].
\end{equation}
By (\ref{eq:d-1}), (\ref{eq:d-2}), and Chernoff-Hoeffding we have
\begin{eqnarray}
\label{eq:d-3}
\Pr\left[X > 5H^{\langle m \rangle}/K \right]  &=& \Pr\left[X > \bE[X] + H^{\langle m \rangle}/K \right] \nonumber \\
&\le& \exp\left(\frac{-2 \left(H^{\langle m \rangle}/K\right)^2}{n \left(2H^{\langle m\rangle}/q\right)^2 } \right) = \exp\left(\frac{-q^2}{2 K^2 n}\right) \nonumber \\
&\le& \frac{1}{1600n K R}. \nonumber
\end{eqnarray}
We thus have $(\ref{eq:c-3}) \le a \cdot {1}/{(1600n K R)} \le {1}/{(1600 K R)}$, which, together with \eqref{eq:chi}, gives the first item of the lemma.
\end{proof}

Now we are ready to prove Lemma~\ref{lem:correctness}.
\begin{proof}[Proof of Lemma~\ref{lem:correctness}]
We first analyze the probability that $\Acc \subseteq \Top_m$.

Again let $P$ be the random variable such that $\theta_{[P]} = \theta_{[\ell]}(I_i)$ for a partition $I_i$.  Since at Line~\ref{ln:a-1} of Algorithm~\ref{alg:main} we call \cenT\ with time budget ${T/4R}$, we have that if $H_{\Delta_{[P]}^{\langle m \rangle}}^{\langle \ell \rangle}(V) \le 5H^{\langle m \rangle}/K$ then 
$$\frac{T}{4R} \ge T_1\left(I_i, \ell, \Delta_{[P]}^{\langle m \rangle} / 4, 1/(800KR)\right),$$
and with probability at least $1/(800KR)\cdot K = 1/(800R)$, it holds that \cenT$\left(I_i, \ell, \frac{T}{4R}\right)$ succeeds for all $i \in \{1, \ldots, K\}$, which, together with the first item of Lemma~\ref{lem:local-global}, gives that \[\Pr[\Acc \subseteq \Top_m] \ge 1 - \frac{1}{400 R}\,.\]

By symmetric arguments we can also show that $\Pr[\Rej \subseteq I \setminus \Top_m] \ge 1 - 1/(400 R)$.
\end{proof}

\subsection{General Time Horizon $T$}
\label{sec:general-T}

Theorem~\ref{thm:ub} only achieves a constant error probability for a special case of the time horizon $T = \tilde{\Theta}(T_0)$ where $T_0 = H^{\langle m \rangle}/K$.  Our next goal is to consider general time horizon $T \ge T_0$, and try to make the error probability decrease exponentially with respect to $T/T_0$.  More precisely, we have the following theorem.

\begin{theorem}
\label{thm:ub-general}
Let $I$ be a set of $n$ arms, and $m \in \{1, \ldots, n - 1\}$. Let $T$ be a time horizon.
There exists a collaborative algorithm \colMG\ that computes the set of top-$m$ arms of $I$ with probability at least 
$$1 - n \cdot \exp\left(-\Omega\left(\frac{TK}{H^{\langle m\rangle}\cdot (\log({H^{\langle m\rangle} K}) + \log^2{n})\cdot \log\log{n} \cdot \log^2\left(TK/H^{\langle m\rangle}\right)}\right)\right)$$
using at most $T$ time steps and $O\left(\log\frac{\log n}{\log K} + \log K\right)$ rounds.
\end{theorem}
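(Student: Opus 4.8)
The plan is to amplify the constant success probability of \colM\ (Theorem~\ref{thm:ub}) into an error that decays exponentially in $T/T_0$, using a \emph{guess-and-repeat} scheme guarded by a \emph{verification} step that rules out consistently-wrong outputs. First I would augment \colM\ into \colMA, which returns not just a set but a \emph{top-$m$ certificate}: a pair $(S,\{\tilde\theta_i\}_{i\in I})$ for which, whenever the run succeeds, $S=\Top_m$ and $\abs{\tilde\theta_i-\theta_i}<\Delta_i^{\langle m\rangle}/4$ for all $i$. This only adds an estimation phase that refines each surviving arm to a precision set by its current gap, so by Theorem~\ref{thm:ub} the augmented algorithm still runs in $O(\log\frac{\log n}{\log K}+\log K)$ rounds, uses budget $\tilde\Theta(T_0)$, and succeeds with probability $0.99$.

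The crux is the verifier \verM, which takes a candidate pair $(S,\{\tilde\theta_i\})$ and decides whether it is a genuine certificate. Given the candidate, \verM\ forms estimated thresholds from the $m$-th and $(m+1)$-th largest entries of $\{\tilde\theta_i\}$, derives an estimated gap $\tilde\Delta_i$ for each arm, and draws fresh samples of arm $i$ down to confidence radius $\ll\tilde\Delta_i$ (using the centralized primitives behind Lemma~\ref{lem:centra-1}); it accepts iff (i) every fresh empirical mean agrees with $\tilde\theta_i$ up to $\tilde\Delta_i/\Theta(1)$ and (ii) the induced ranking places exactly the arms of $S$ above the threshold. Since the per-arm sampling cost is $\tilde{O}(\tilde\Delta_i^{-2})$ and the arms are handled independently, \verM\ is fully parallelizable across the $K$ agents, finishes within the same guessed budget $H$, and needs only $O(1)$ extra rounds. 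I must establish \textbf{completeness} (a genuine certificate is accepted with probability $1-n\exp(-\Omega(H/\cdots))$) and \textbf{soundness} (any candidate with $S\ne\Top_m$ is rejected with the same high probability). Soundness is the main obstacle: a bad candidate may report optimistic $\tilde\theta_i$ that under-budget a near-threshold arm, and I would argue that such an arm is then misranked relative to the induced threshold and caught by (ii), whereas a genuinely inaccurate $\tilde\theta_i$ is caught by (i); the delicate point is to make this hold for \emph{every} wrong candidate within budget $H$ whether $H$ is larger or smaller than $H^{\langle m\rangle}$.

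With \colMA\ and \verM\ in hand, \colMG\ guesses $H\in\{H_{\min},2H_{\min},4H_{\min},\dots\}$ up to $T$ with $H_{\min}=\Theta(T_0)$, giving $L=O(\log(TK/H^{\langle m\rangle}))$ guesses; it splits the budget $T$ evenly among the guesses and, for each $H$, runs \colMA\ with budget $H$ for $N_H=\Theta(T/(LH))$ independent repetitions, feeds the produced candidates to \verM\ (also with budget $H$), and outputs the set $S$ of any candidate that passes verification. All repetitions across all guesses share the common $O(\log\frac{\log n}{\log K}+\log K)$-round schedule (which is independent of the budget) and are executed simultaneously---each agent interleaves the parallel instances inside each round---so the round complexity is unchanged and the verification rounds are absorbed. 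Correctness follows from a union bound over two failure events: (a) the smallest guess $H^\star=\Theta(T_0)$ yields no genuine certificate, which occurs with probability at most $0.01^{\,N_{H^\star}}=\exp(-\Omega(T/(L\,T_0)))$; and (b) some verification wrongly accepts a bad candidate, contributing at most the number of candidates times $n\exp(-\Omega(H/\cdots))$. Substituting $T_0=c_0\frac{H^{\langle m\rangle}}{K}(\log(H^{\langle m\rangle}K)+\log^2 n)\log\log n$ and $L=O(\log(TK/H^{\langle m\rangle}))$ yields exactly the stated bound, with the prefactor $n$ coming from the per-arm union bound inside \verM, one factor $\log(TK/H^{\langle m\rangle})$ from the number of guesses $L$, and the second from the amplification needed so that the verification union bound in (b) beats the number of candidates. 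The total round count remains $O(\log\frac{\log n}{\log K}+\log K)$.
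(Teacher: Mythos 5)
Your overall architecture — augment \colM\ to emit a certificate $(S,\{\tilde\theta_i\})$, build a parallelizable verifier, and wrap everything in a doubling/repetition scheme — is the same as the paper's (its \verM\ checks essentially your condition (ii) plus a budget precondition, and its Lemma~\ref{lem:verify} gives exactly the completeness/soundness pair you identify as the crux). But there is a genuine gap in your final step: \emph{``outputs the set $S$ of any candidate that passes verification''} does not work. The verification confidence at a given guess is necessarily tied to the amplification available at that guess (you set it so that the union bound ``beats the number of candidates,'' i.e., confidence $\approx N_H = \Theta(T/(LH))$). For the large-budget guesses $H \gg T_0$ — which must exist, since the algorithm cannot start the grid at $H_{\min}=\Theta(T_0)$ without knowing $H^{\langle m\rangle}$ — the number of repetitions $N_H$ is $O(1)$ and the verifier is run at only constant confidence; its soundness guarantee then only says a wrong candidate is rejected with constant probability. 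So with positive constant probability some garbage candidate at a large-$H$ guess passes verification, and ``output any passing candidate'' may return it, destroying the $\exp(-\Omega(T/T_0))$ error bound. Note that you cannot fix this by uniformly boosting every verification to confidence $\Theta(T/(LT_0))$, because that target again depends on the unknown $T_0$.

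The paper's resolution is a specific tie-breaking rule that your proposal is missing: it indexes the guesses by $s$ with per-repetition budget $3T/(\pi^2 s^2 4^s)$ and verification confidence $\gamma = 4^s$, and it returns $A_{s^\star}$ for the \emph{maximum} $s$ with $A_s \neq \bot$, i.e., the candidate that survived verification at the \emph{highest} confidence (smallest budget). Completeness at the critical level $s'$ (where $4^{s'}=\Omega((T/T_0)/\log^2(T/T_0))$) guarantees $s^\star \ge s'$ with probability $1-O(n e^{-4^{s'}})$, and then only levels $s \ge s'$ can determine the output; for those, soundness errors sum to $\sum_{s\ge s'} 2n e^{-4^s} = O(n e^{-4^{s'}})$. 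The low-confidence levels $s<s'$ are harmless precisely because they are never preferred. Adding this selection rule (and letting the guess grid span all scales rather than anchoring $H_{\min}$ at the unknown $T_0$) closes the gap; the rest of your argument, including the parallel execution of all levels within the common $O(\log\frac{\log n}{\log K}+\log K)$-round schedule, matches the paper.
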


\paragraph{High Level Idea.}
A standard technique to achieve an error probability that is exponentially small in terms of $T/T_0$ is to perform parallel repetition and then take the majority.  This is straightforward if we know the value $T_0$. Unfortunately, $T_0$ depends on the instance complexity which we do not know in advance.  A standard trick to handle this issue is to use the doubling method. That is, we guess $T_0 = 1, 2, 4, \ldots$, and for each value we repeat $T/T_0$ times (ignoring logarithmic factors). We know that one of these values is very close to the actual $T_0$. We hope that this value is the {\em first} value in $\{1, 2, 4, \ldots \}$ for which the $T/T_0$ runs of \colM$(I, m, T)$ contain a majority output.

The main issue in this approach is that when $T \le T_0$, the output of the algorithm can be {\em consistently} wrong, which leads to a wrong majority.  Note that we do not have much control on the output of the algorithm when the time horizon is very small.
%One possibly way to handle this is to somehow closely monitor the behavior of the algorithm in order to detect or avoid a consistently wrong output, but it is not clear how the monitoring can be done; even this is doable we still have to redesign the whole algorithm.

We handle this issue by introducing a concept called {\em top-$m$ certificate}. We require each algorithm for top-$m$ arm identification to output a pair $(S, \{\tilde{\theta}_i\}_{i \in I})$,  where $S$ is a subset of $I$ of size $m$ and $\{\tilde{\theta}_i\}_{i \in I}$ are the estimated means for {\em all} arms in $I$ (not just those in $S$).  We say a pair  $(S, \{\tilde{\theta}_i\}_{i \in I})$ is a top-$m$ certificate if it can pass an additional verification step which checks whether $S$ is indeed the set of top-$m$ arms of $I$ given the estimated means $\{\tilde{\theta}_i\}_{i \in I}$.  With such a verification step at hand, we do not need to worry about the case that \colM\ will output a wrong answer when $T$ is too small, since a wrong output will simply {\em not} pass the verification step.  Finally, we make sure that this verification step is perfectly parallelizable and thus fit in our time budget. 

In the rest of this section we will first give the definition of the top-$m$ certificate and describe the verification algorithm, and then give the collaborative algorithm for general time horizon $T$.

\subsubsection{Top-$m$ Certificate}
\label{sec:certificate}

\begin{definition}[Top-$m$ Certificate]  Let $S \subseteq I$.
We say $(S,  \{\tilde{\theta}_i\}_{i \in I})$ is a top-$m$ certificate of $I$ if
\begin{equation*}
\left(S = \Top_m\right) \wedge \left(\forall i \in I : \abs{\tilde{\theta}_i - \theta_i} < \Delta^{\langle m\rangle}_i / 4 \right).
\end{equation*}
\end{definition}

\begin{observation}
If $(S, \{\tilde{\theta}_i\}_{i \in I})$ is a top-$m$ certificate, then for any $i \in S$ and $j \in I \backslash S$, we have $\tilde{\theta}_i > \tilde{\theta}_j$. 
\end{observation}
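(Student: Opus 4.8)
The plan is to unfold both defining conditions of the top-$m$ certificate and compare the two estimates directly. Since $S = \Top_m$, every $i \in S$ satisfies $\theta_i \ge \theta_{[m]}$, so by the piecewise definition~(\ref{eq:gap}) its gap takes the top-arm form $\Delta^{\langle m\rangle}_i = \theta_i - \theta_{[m+1]}$; likewise every $j \in I \setminus S$ satisfies $\theta_j \le \theta_{[m+1]}$, so $\Delta^{\langle m\rangle}_j = \theta_{[m]} - \theta_j$. I would record these two identities first, since they let the estimation guarantee be rewritten purely in terms of the two boundary means $\theta_{[m]}$ and $\theta_{[m+1]}$.

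Next I would invoke the second clause of the certificate, $\abs{\tilde{\theta}_i - \theta_i} < \Delta^{\langle m\rangle}_i / 4$, to bound each estimate from the correct side. For $i \in S$ this yields the lower bound $\tilde{\theta}_i > \theta_i - (\theta_i - \theta_{[m+1]})/4 = \frac{3}{4}\theta_i + \frac{1}{4}\theta_{[m+1]} \ge \frac{3}{4}\theta_{[m]} + \frac{1}{4}\theta_{[m+1]}$, where the last step uses $\theta_i \ge \theta_{[m]}$. For $j \in I \setminus S$ it yields the upper bound $\tilde{\theta}_j < \theta_j + (\theta_{[m]} - \theta_j)/4 = \frac{3}{4}\theta_j + \frac{1}{4}\theta_{[m]} \le \frac{3}{4}\theta_{[m+1]} + \frac{1}{4}\theta_{[m]}$, where the last step uses $\theta_j \le \theta_{[m+1]}$.

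Finally I would compare the two resulting convex combinations: their difference is $\left(\frac{3}{4}\theta_{[m]} + \frac{1}{4}\theta_{[m+1]}\right) - \left(\frac{3}{4}\theta_{[m+1]} + \frac{1}{4}\theta_{[m]}\right) = \frac{1}{2}\left(\theta_{[m]} - \theta_{[m+1]}\right)$, which is strictly positive because we assume throughout that $\theta_{[m]} > \theta_{[m+1]}$. Chaining the three inequalities then gives $\tilde{\theta}_i > \frac{3}{4}\theta_{[m]} + \frac{1}{4}\theta_{[m+1]} > \frac{3}{4}\theta_{[m+1]} + \frac{1}{4}\theta_{[m]} > \tilde{\theta}_j$, as claimed. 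This is a one-shot calculation with no real obstacle; the only point requiring care is selecting the correct branch of~(\ref{eq:gap}) for the top arm $i$ versus the bottom arm $j$, which is legitimate precisely because $S = \Top_m$ fixes the signs of $\theta_i - \theta_{[m]}$ and $\theta_{[m+1]} - \theta_j$.
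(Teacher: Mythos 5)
Your proof is correct and is exactly the calculation the paper leaves implicit (the statement appears as an unproved Observation): the certificate conditions force $\tilde{\theta}_i > \frac{3}{4}\theta_{[m]} + \frac{1}{4}\theta_{[m+1]}$ and $\tilde{\theta}_j < \frac{3}{4}\theta_{[m+1]} + \frac{1}{4}\theta_{[m]}$, and strictness follows from the standing assumption $\theta_{[m]} > \theta_{[m+1]}$. Your identification of the correct branch of~(\ref{eq:gap}) for arms in $S=\Top_m$ versus arms outside is the only point of care, and you handle it properly.
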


Given an arbitrary pair $(S, \{\tilde{\theta}_i\}_{i \in I})$, we can design an algorithm to verify whether  $(S, \{\tilde{\theta}_i\}_{i \in I})$ is a top-$m$ certificate of $I$; see Algorithm~\ref{alg:verify} \verM.  We note that Algorithm~\ref{alg:verify} can be easily implemented in $O(1)$ rounds since the number of pulls on each arm is determined in advance and can thus be fully parallelised.

\begin{algorithm}[t]
\caption{\verM$(I, m, S, \{\tilde{\theta}_i\}_{i \in I}, \gamma, T)$}
\label{alg:verify}
\KwIn{a set of $n$ arms $I$, parameter $m$, pair $(S, \{\tilde{\theta}_i\}_{i \in I})$, parameter $\gamma$, and time horizon $T$.}
\KwOut{the set of top-$m$ arms, or $\bot$.}
Set $\ell \gets \arg\min_{i \in S} \tilde{\theta}_i$ and $r \gets \arg\max_{i \in I \backslash S} \tilde{\theta}_i$\; 
for each $i \in S$ set $\Delta_i \gets \tilde{\theta}_i - \tilde{\theta}_r$, and for $i \in I \backslash S$ set $\Delta_i \gets \tilde{\theta}_\ell - \tilde{\theta}_i$\; 
\lIf{$\exists i \text{ s.t. } \Delta_i \le 0$ or $|S| \neq m$}{\KwRet{$\bot$}} 
\If{$\sum\limits_{i \in I}{\left(64 \gamma \Delta_i^{-2}\right)} \le K T$
\label{ln:threshold}}{
pull $i$-th arm for $64 \gamma \Delta^{-2}_i$ times and let $\hat{\theta}_i$ be the empirical mean\; 
\lIf{$\min\limits_{i \in S}\{\hat{\theta}_i - \Delta_i/4\} > \max\limits_{i \in I \backslash S}\{\hat{\theta}_i + \Delta_i/4\}$ \label{ln:separation}}
{\KwRet{$S$}}
}
\KwRet{$\bot$.}
\end{algorithm}

The following lemma shows that if $(S, \{\tilde{\theta}_i\}_{i \in I})$ is indeed a certificate and $T$ is large enough, then \verM\ returns the set $S$ with a good probability.

\begin{lemma}
\label{lem:verify}
For any $I, m, S, \{\tilde{\theta}_i\}_{i \in I}, \gamma, T$, we have
\begin{equation}\label{eq:f-0}	
	\Pr[\text{\verM}(I, m, S, \{\tilde{\theta}_i\}_{i \in I}, \gamma, T) \not\in \{\Top_{m}, \bot\}] \le 2 n \cdot e^{-\gamma}\,.
\end{equation}
Moreover, if $(S, \{\tilde{\theta}_i\}_{i \in I})$ is a top-$m$ certificate of $I$, and $T \ge 200 \gamma  H^{\langle m \rangle} / K$ then 
\begin{equation}\label{eq:f-1}	
    \Pr[\text{\verM}(I, m, S, \{\tilde{\theta}_i\}_{i \in I}, \gamma, T) \neq \Top_m] \le 2 n \cdot e^{-\gamma}.
\end{equation}
\end{lemma}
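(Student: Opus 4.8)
The plan is to prove the two bounds separately, since they correspond to the two distinct ways \verM\ can produce an incorrect output. The verification algorithm can only return one of three things: the set $S$, or $\bot$. Thus an output lying outside $\{\Top_m, \bot\}$ must be $S$ itself with $S \neq \Top_m$. The first bound \eqref{eq:f-0} is a \emph{soundness} guarantee (a wrong $S$ rarely passes), while \eqref{eq:f-1} is a \emph{completeness} guarantee (a genuine certificate rarely gets rejected when the budget suffices). Both will follow from a single concentration estimate on the empirical means $\hat\theta_i$ obtained by pulling arm $i$ exactly $64\gamma\Delta_i^{-2}$ times.

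For \eqref{eq:f-0}, first I would dispose of the trivial cases: if some $\Delta_i \le 0$, or $|S| \neq m$, or the budget check on Line~\ref{ln:threshold} fails, the algorithm returns $\bot$ and the event in \eqref{eq:f-0} cannot occur. So condition on reaching the sampling step with all $\Delta_i > 0$ and $|S| = m$. The key step is a Chernoff–Hoeffding bound: since arm $i$ is pulled $64\gamma\Delta_i^{-2}$ times, we have
\begin{equation*}
\Pr\!\left[\abs{\hat\theta_i - \theta_i} \ge \Delta_i/4\right] \le 2\exp\!\left(-2 \cdot 64\gamma\Delta_i^{-2} \cdot (\Delta_i/4)^2\right) = 2 e^{-8\gamma} \le 2 e^{-\gamma}.
\end{equation*}
A union bound over the $n$ arms gives that with probability at least $1 - 2n e^{-\gamma}$, every $\hat\theta_i$ is within $\Delta_i/4$ of $\theta_i$. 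I would then argue that if $S \neq \Top_m$, there is some pair $i \in S$, $j \in I\setminus S$ with $\theta_i \le \theta_j$ (some truly-worse arm sits in $S$ while a truly-better one sits outside), so on the good concentration event the separation test on Line~\ref{ln:separation} fails: $\hat\theta_i - \Delta_i/4 \le \theta_i \le \theta_j \le \hat\theta_j + \Delta_j/4$ prevents the strict min-over-$S$ exceeds max-over-complement inequality from holding, forcing the return of $\bot$. Hence $S$ is returned only on the bad event, giving \eqref{eq:f-0}.

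For \eqref{eq:f-1}, assume $(S, \{\tilde\theta_i\})$ is a genuine top-$m$ certificate, so $S = \Top_m$ and $\abs{\tilde\theta_i - \theta_i} < \Delta_i^{\langle m\rangle}/4$ for all $i$. I would first check that the budget condition on Line~\ref{ln:threshold} is met: the computed gaps $\Delta_i = \tilde\theta_\ell - \tilde\theta_i$ (or $\tilde\theta_i - \tilde\theta_r$) are, by the certificate's closeness property, comparable up to a constant to the true gaps $\Delta_i^{\langle m\rangle}$, so $\sum_i 64\gamma\Delta_i^{-2} = O(\gamma H^{\langle m\rangle})$, and the hypothesis $T \ge 200\gamma H^{\langle m\rangle}/K$ guarantees $\sum_i 64\gamma\Delta_i^{-2} \le KT$; this is the place where the precise constant $200$ and the factor $64$ must be reconciled, and I expect the main (though still routine) obstacle is pinning down that the certificate condition $\abs{\tilde\theta_i-\theta_i}<\Delta_i^{\langle m\rangle}/4$ translates into a clean lower bound on each computed $\Delta_i$ and an upper bound on $\sum\Delta_i^{-2}$ in terms of $H^{\langle m\rangle}$. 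Once inside the sampling branch, the same concentration bound shows that with probability at least $1-2ne^{-\gamma}$ all $\hat\theta_i$ are $\Delta_i/4$-accurate; on this event the true ordering is faithfully reflected, the separation test on Line~\ref{ln:separation} passes, and the algorithm returns $S = \Top_m$. The complement has probability at most $2ne^{-\gamma}$, yielding \eqref{eq:f-1}.

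The only genuinely delicate point is the gap-translation in \eqref{eq:f-1}: I must verify that for a certificate, the algorithm's own $\Delta_i$ values are simultaneously large enough that the accuracy $\Delta_i/4$ implies correct comparisons, and small enough that $\sum 64\gamma\Delta_i^{-2}$ stays within $O(\gamma H^{\langle m\rangle})$. Using $\abs{\tilde\theta_i - \theta_i} < \Delta_i^{\langle m\rangle}/4$ together with the analogous bound at the pivot arms $\ell, r$, a short triangle-inequality argument should show $\Delta_i = \Theta(\Delta_i^{\langle m\rangle})$, after which both directions follow. Everything else is a union bound over a standard Hoeffding tail, so the two displayed inequalities drop out immediately.
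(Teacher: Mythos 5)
Your overall structure matches the paper's: a single Hoeffding-plus-union-bound concentration event for the $\hat\theta_i$, soundness of \eqref{eq:f-0} by contradiction, and for \eqref{eq:f-1} a budget check via $\Delta_i = \Theta(\Delta_i^{\langle m\rangle})$ followed by the claim that the separation test on Line~\ref{ln:separation} passes. The soundness half is correct: with accuracy $\Delta_i/4$ the chain $\hat\theta_i - \Delta_i/4 \le \theta_i \le \theta_j \le \hat\theta_j + \Delta_j/4$ for a misplaced pair indeed blocks a wrong $S$ from being returned, and your Chernoff computation gives $2e^{-8\gamma} \le 2e^{-\gamma}$ per arm. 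Your budget-check observation ($\Delta_i \ge \Delta_i^{\langle m\rangle}/2$, hence $\sum_i 64\gamma\Delta_i^{-2} \le 256\gamma H^{\langle m\rangle}$) is also essentially the paper's.

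The gap is in the completeness half, precisely at the step you flag and then defer (``the separation test passes''). With your accuracy radius $\Delta_i/4$, the guaranteed bounds are $\hat\theta_i - \Delta_i/4 \ge \theta_i - \Delta_i/2$ and $\hat\theta_j + \Delta_j/4 \le \theta_j + \Delta_j/2$, so you need $\theta_i - \theta_j > (\Delta_i + \Delta_j)/2$ for every $i \in \Top_m$, $j \notin \Top_m$. Consider the boundary pair $i=[m]$, $j=[m+1]$ and write $g = \theta_{[m]} - \theta_{[m+1]}$. The certificate only promises $\abs{\tilde\theta_i - \theta_i} < \Delta_i^{\langle m\rangle}/4 = g/4$ for these two arms, so the algorithm's computed gaps $\Delta_{[m]}$ and $\Delta_{[m+1]}$ can each be as large as $\tfrac{3}{2}g$; then $(\Delta_{[m]} + \Delta_{[m+1]})/2$ can reach $\tfrac{3}{2}g > g = \theta_{[m]} - \theta_{[m+1]}$, and the required inequality fails. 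So the concentration event at radius $\Delta_i/4$ does \emph{not} imply the test passes, and \eqref{eq:f-1} is not established. The paper instead works with the radius $\Delta_i/8$, which the same $64\gamma\Delta_i^{-2}$ pulls afford (at exponent $e^{-2\gamma}$ rather than $e^{-8\gamma}$), reducing the slack to $\tfrac{3}{8}(\Delta_i + \Delta_j)$ before comparing against the true gap $\theta_i - \theta_j \ge \tfrac12(\Delta_i^{\langle m\rangle} + \Delta_j^{\langle m\rangle})$. That comparison is exactly the ``gap-translation'' you describe as routine; it is not --- the factor $64$, the deviation radius, and the certificate tolerance $\Delta_i^{\langle m\rangle}/4$ all have to be balanced simultaneously, and with your choice of radius they cannot be.
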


\begin{proof}
By Chernoff-Hoeffding, for any $i \in I$, after $64\gamma\Delta_i^{-2}$ pulls, we have $\Pr[|\hat{\theta}_i - \theta_i| > {\Delta_i/8}] \leq 2 e^{-\gamma}$. By a union bound we have
\begin{equation}
\label{eq:f-1-5}
\Pr[\exists i : |\hat{\theta}_i - \theta_i| > \Delta_i/8] \leq 2 n \cdot e^{-\gamma}.
\end{equation}
We now show that if 
\begin{equation}
\label{eq:f-2}
\forall i : |\hat{\theta}_i - \theta_i| \le \Delta_i/8,
\end{equation}
then Algorithm~\ref{alg:verify} can only output $S$ when $S = \Top_m$.  

We prove by contradiction. Suppose  Algorithm~\ref{alg:verify} outputs $S$ when $S \neq \Top_m$, then there must exist a pair $(i, j)$ such that $i \in S \backslash \Top_m$ and $j \in \Top_m \backslash S$, and consequently $\theta_i < \theta_j$. Meanwhile, by Line~\ref{ln:separation} of Algorithm~\ref{alg:verify} we have
\begin{equation}
\label{eq:f-3}
\hat{\theta}_i - \Delta_i/4 > \hat{\theta}_j + \Delta_j/4.
\end{equation}
Combining (\ref{eq:f-2}) and (\ref{eq:f-3}) we have
$$\theta_i > \theta_i - \Delta_i/8 > \theta_j + \Delta_j/8 > \theta_j,$$ 
which contradicts to the choices of $(i, j)$.  This proves (\ref{eq:f-0}).

We next prove (\ref{eq:f-1}).  If $(S, \{\tilde{\theta}_i\}_{i \in I})$ is a top-$m$ certificate, then by (\ref{eq:f-2}) we have 
\begin{equation*}
{\Delta^{\langle m\rangle}_i}/{2}\le \Delta_{i} \le {3 \Delta^{\langle m \rangle}_i}/{2}.
\end{equation*}
Thus if $T \ge 200 \gamma H^{\langle m \rangle} / K$, then 
\begin{equation*}
\sum\limits_{i \in I}{\left(64 \gamma \Delta_i^{-2}\right)} \le 200\gamma \sum_{i \in I} \left(\Delta_i^{\langle m \rangle}\right)^{-2} \le K T.
\end{equation*}
We thus only need to show that 
\begin{equation}
\min\limits_{i \in \Top_m}\{\hat{\theta}_i - \Delta_i/4\} > \max\limits_{i \not\in \Top_m}\{\hat{\theta}_i + \Delta_i/4\}\,.
\end{equation}

We again prove by contradiction.  Suppose that there exists a pair $(i, j)$ such that $i \in \Top_m$, $j \not\in \Top_m$, and $\hat{\theta}_i - \Delta_i/4 \le \hat{\theta}_j + \Delta_j/4$, then by (\ref{eq:f-2}) we have 
\begin{equation}
\theta_i - \theta_j \le \frac{3}{8}(\Delta_i + \Delta_j) \le \frac{3}{8}\left(\Delta^{\langle m \rangle}_i + \Delta^{\langle m\rangle}_j\right),
\end{equation}
which contradicts to the fact that  $\frac{1}{2}\left(\Delta^{\langle m \rangle}_i + \Delta^{\langle m\rangle}_j\right) \le \theta_i - \theta_j$.
\end{proof}

For technical reasons we need the following lemma, which says that \verM\ is very likely to output $\perp$ when the time horizon $T$ is small.

\begin{lemma}
\label{lem:verify-bad}
For any $I$, $m$, $S$, $\{\tilde{\theta}_i\}_{i \in I}$, and $\gamma$,
if $T < \frac{\gamma}{16} H^{\langle m \rangle} / K$, then \[\Pr[\text{\verM}(I, m, S, \{\tilde{\theta}_i\}_{i \in I}, \gamma, T) = \bot] \geq 1 - 2n \cdot \exp(-\gamma)\,.\]
\end{lemma}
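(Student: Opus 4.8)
The plan is to show that when the time budget $T$ is too small relative to the instance complexity $H^{\langle m \rangle}$, the verification algorithm \verM\ is forced into the $\bot$ branch with high probability. I would structure the argument around the two gates that control whether \verM\ can output anything other than $\bot$: the budget check on Line~\ref{ln:threshold} (the condition $\sum_{i \in I}(64 \gamma \Delta_i^{-2}) \le K T$) and the separation check on Line~\ref{ln:separation}. The key point is that the arm pulls only happen if the budget check passes, and the definitions there are driven by the \emph{claimed} estimates $\{\tilde{\theta}_i\}_{i \in I}$, which are adversarial/arbitrary in this lemma since we are no longer assuming the input is a genuine certificate.

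First I would handle the easy case. If the budget check on Line~\ref{ln:threshold} \emph{fails}, i.e.\ $\sum_{i \in I}(64 \gamma \Delta_i^{-2}) > K T$, then the algorithm never enters the \texttt{If} block and returns $\bot$ deterministically, so there is nothing to prove. Thus I only need to worry about the branch where the budget check passes. In that branch, the quantities $\Delta_i$ (defined from the claimed means) satisfy $\sum_{i \in I} 64 \gamma \Delta_i^{-2} \le K T$, and by hypothesis $T < \frac{\gamma}{16} H^{\langle m \rangle}/K$, so $\sum_{i \in I} \Delta_i^{-2} \le KT/(64\gamma) < H^{\langle m \rangle}/1024$. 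In particular the claimed gaps $\Delta_i$ are, on aggregate, much \emph{larger} than the true gaps $\Delta_i^{\langle m \rangle}$ — the sum of their inverse squares is a tiny fraction of $H^{\langle m \rangle} = \sum_i (\Delta_i^{\langle m \rangle})^{-2}$. This mismatch is the engine of the proof: for the separation test to pass, the estimates $\hat{\theta}_i$ must be sharply separated according to the large claimed gaps, but the true means simply are not that well-separated, so concentration will fail the test.

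The main step is then to show the separation check on Line~\ref{ln:separation} fails with high probability in this branch. As in the proof of Lemma~\ref{lem:verify}, by Chernoff-Hoeffding each empirical mean satisfies $\Pr[|\hat{\theta}_i - \theta_i| > \Delta_i/8] \le 2 e^{-\gamma}$ after $64\gamma \Delta_i^{-2}$ pulls, so by a union bound (as in~\eqref{eq:f-1-5}) we have $|\hat{\theta}_i - \theta_i| \le \Delta_i/8$ for all $i$ except with probability at most $2n e^{-\gamma}$. I would then argue that, under this good event, the separation test \emph{cannot} pass. The idea is that since $\sum_i \Delta_i^{-2}$ is so much smaller than $H^{\langle m \rangle}$, there must exist a ``boundary'' pair — concretely, an arm $i^\star \in S$ and an arm $j^\star \in I\setminus S$ straddling the true top-$m$ cutoff — whose true means are close relative to the claimed gaps, i.e.\ $\theta_{i^\star} - \theta_{j^\star}$ is small compared to $\Delta_{i^\star} + \Delta_{j^\star}$. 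For such a pair, combining $|\hat{\theta}_{i^\star} - \theta_{i^\star}| \le \Delta_{i^\star}/8$ and $|\hat{\theta}_{j^\star} - \theta_{j^\star}| \le \Delta_{j^\star}/8$ with the small true gap yields $\hat{\theta}_{i^\star} - \Delta_{i^\star}/4 \le \hat{\theta}_{j^\star} + \Delta_{j^\star}/4$, contradicting the separation condition on Line~\ref{ln:separation}, so \verM\ returns $\bot$.

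The main obstacle I anticipate is making the ``boundary pair'' argument rigorous, i.e.\ extracting from the aggregate inequality $\sum_i \Delta_i^{-2} \ll H^{\langle m \rangle}$ a concrete pair $(i^\star, j^\star)$ crossing the top-$m$ boundary on which the claimed gap badly overstates the true gap. The cleanest route is probably a pigeonhole/averaging argument comparing, summand by summand, the claimed gaps $\Delta_i$ against the true gaps $\Delta_i^{\langle m \rangle}$: since $\sum_i \Delta_i^{-2}$ is a tiny fraction of $\sum_i (\Delta_i^{\langle m \rangle})^{-2}$, some arm near the pivot must have $\Delta_i \gg \Delta_i^{\langle m \rangle}$, and one then pairs it across the boundary (using that $\Delta_i$ is defined by $\tilde\theta_i - \tilde\theta_r$ on one side and $\tilde\theta_\ell - \tilde\theta_i$ on the other) to produce the contradiction. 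I would take care to treat edge cases where $S \neq \Top_m$ as a set (in which $S\setminus\Top_m$ and $\Top_m\setminus S$ are nonempty and directly furnish the crossing pair) and where $S = \Top_m$ but the claimed gaps are simply too large for the true means to support, and to confirm that the failure probability $2n e^{-\gamma}$ from the union bound is the only probabilistic loss, matching the stated bound $1 - 2n\exp(-\gamma)$.
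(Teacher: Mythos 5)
Your proposal is correct and follows essentially the same route as the paper's proof: the paper's dichotomy is that either every claimed gap satisfies $\Delta_i \le 32\,\Delta_i^{\langle m\rangle}$ (in which case $\sum_i 64\gamma\Delta_i^{-2} \ge \tfrac{\gamma}{16}H^{\langle m\rangle} > KT$ and the budget check on Line~\ref{ln:threshold} forces $\bot$ deterministically), or some arm has $\Delta_i > 32\,\Delta_i^{\langle m\rangle}$, and pairing that arm across the pivot with the arm of mean $\theta_{[m+1]}$ (or $\theta_{[m]}$) makes the separation test on Line~\ref{ln:separation} fail under the same Chernoff--union event of probability $1-2n e^{-\gamma}$. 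Your "budget passes $\Rightarrow$ $\sum_i\Delta_i^{-2} < H^{\langle m\rangle}/1024$ $\Rightarrow$ pigeonhole yields an inflated gap" is exactly the contrapositive of the paper's first case, and your crossing-pair contradiction is the paper's second case.
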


\begin{proof}
We can assume that $\Top_m = S$, and for any $i \in S$ and $j \not\in S$ we have $\tilde{\theta}_i > \tilde{\theta}_j$, since otherwise \verM\ will simply output $\perp$.

If for all $i$ we have $\Delta_i \le 32 \Delta^{\langle m \rangle}_i$, then $\sum\limits_{i \in I}{\Delta^{-2}_i} \ge H^{\langle m\rangle} / 1024$, and consequently 
$$
\sum_{i \in I}\left(64\gamma \Delta_i^{-2}\right) \ge \frac{\gamma}{16} H^{\langle m \rangle} > KT.
$$
In this case, according to Line~\ref{ln:threshold} of Algorithm~\ref{alg:verify}, \verM\ will output $\perp$.

Now consider the case that there exists $i$ such that $\Delta_i > 32 \Delta^{\langle m\rangle}_i$. We first consider the case $i \in S$. By (\ref{eq:f-1-5}) we have that with probability $1 - 2n \cdot \exp(-\gamma)$, the event holds: $\forall{i \in I} : |\hat{\theta}_i - \theta_i| < \Delta_i / 8$; denote this event by $\E_1$.
Consequently,
\begin{equation}
\label{eq:e-1}
\hat{\theta}_i - \Delta_i/4 \le \theta_i - \Delta_i/8 \le \theta_i - 4\Delta_i^{\langle m \rangle}.
\end{equation} 
Consider $j$ such that $\theta_j = \theta_{[m + 1]}$.  When $\E_1$ holds, we have
\begin{equation}
\label{eq:e-2}
\hat{\theta}_j + \Delta_j / 4 \ge \theta_j.
\end{equation}
Note that at Line~\ref{ln:threshold}, Algorithm~\ref{alg:verify} returns a set only if $\hat{\theta}_i - \Delta_i/4 > \hat{\theta}_j + \Delta_{j}/4$, which, by (\ref{eq:e-1}) and (\ref{eq:e-2}), is equivalent to $\theta_i - 4\Delta_i^{\langle m \rangle} > \theta_j$.  We now have
\begin{equation*}
\label{eq:e-3}
\Delta_i^{\langle m \rangle} = \theta_i - \theta_j > 4\Delta_i^{\langle m \rangle}.
\end{equation*}
A contradiction.

The case that $i \in I \backslash S$ can be proved by essentially the same arguments.
\end{proof}

\subsubsection{Algorithm for General Time Horizon}
\label{sec:general}

In this section we present an algorithm for general time horizon.  We first slightly augment Algorithm~\ref{alg:main} \colM\ so that it also outputs an estimate of the mean (i.e., the empirical mean) of each of the $n$ arms. Now the output of \colM\ is a top-$m$ certificate $(S, \{\tilde{\theta}_i\}_{i \in I})$.

We have the following lemma regarding \colM.  The proof is straightforward based on the properties of \cenT\ and \cenB, which can be found in Section~\ref{sec:ub-aux}.

\begin{lemma}
\label{lem:basic-certificate}
If $T \ge T_0$, then \colM$(I, m, T)$ is able to output a top-$m$ certificate of $I$ with probability at least $0.99$. 
\end{lemma}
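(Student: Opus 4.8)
The plan is to show that the single family of ``good events'' already used to prove Theorem~\ref{thm:ub}---a well-behaved random partition at every round together with the success of every subroutine call---simultaneously forces \emph{both} defining conditions of a top-$m$ certificate, so that no failure probability is spent beyond the $0.99$ of Theorem~\ref{thm:ub}. The first condition, $S = \Top_m$, is exactly the conclusion of Theorem~\ref{thm:ub}. The remaining work is to argue that the empirical means recorded by the augmented \colM\ satisfy $\abs{\tilde{\theta}_i - \theta_i} < \Delta^{\langle m\rangle}_i/4$ for every arm $i$, and to check that this accuracy is a free byproduct of the same events rather than an independent requirement.

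First I would condition, round by round, on the good-partition events of Lemma~\ref{lem:partition} and Lemma~\ref{lem:local-global} and on the success of each call to \cenT, \cenB\ (and to \colN\ in the base case $n \le K^{10}$). By a union bound over the $O(\log\frac{\log n}{\log K} + \log K)$ rounds and the $K$ agents per round, using the same per-round budget $1/(100R)$ as in Theorem~\ref{thm:ub}, all of these hold simultaneously with probability at least $0.99$. Recording empirical means introduces no new randomness, so there is no extra loss; the key point is that each subroutine's success event, which asserts that every maintained confidence interval is valid, yields accurate final means \emph{in addition to} correct classification.

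Next I would establish the per-arm accuracy conditioned on these events. Every arm $i$ is finalized by a single subroutine call---placed in $\Acc$ by some \cenT, placed in $\Rej$ by some \cenB, or output by \colN---which, by the guarantees of Lemma~\ref{lem:cenT} and Lemma~\ref{lem:cenB} (and the analogous guarantee behind Lemma~\ref{lem:centra-2}) in Section~\ref{sec:ub-aux}, pulls arm $i$ enough times that $\abs{\tilde{\theta}_i - \theta_i} < \frac{1}{4}\max\{\Delta^{\langle \ell\rangle}_i(I_i),\ \eps\}$, where $\eps = \Delta_{[P]}^{\langle m \rangle}/4$ is the accuracy parameter chosen in the proof of Lemma~\ref{lem:correctness}. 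I then relate this local bound to the global gap. For an accepted arm, $\theta_i \ge \theta_{[P]} \ge \theta_{[m]}$ gives $\eps = \Delta_{[P]}^{\langle m \rangle}/4 \le \Delta^{\langle m\rangle}_i/4$, while a mild strengthening of Lemma~\ref{lem:partition} (which holds with room to spare thanks to the generous slack $q$) ensures that at least $\ell+1$ global top-$m$ arms land in $I_i$, whence $\theta_{[\ell+1]}(I_i) \ge \theta_{[m+1]}$ and therefore $\Delta^{\langle\ell\rangle}_i(I_i) = \theta_i - \theta_{[\ell+1]}(I_i) \le \theta_i - \theta_{[m+1]} = \Delta^{\langle m\rangle}_i$. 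Combining the two inequalities yields $\abs{\tilde{\theta}_i - \theta_i} < \Delta^{\langle m\rangle}_i/4$; the rejected arms are handled symmetrically through \cenB. Together with $S = \Top_m$, this certifies $(S, \{\tilde{\theta}_i\}_{i \in I})$.

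The main obstacle is precisely this comparison of the local gap $\Delta^{\langle\ell\rangle}_i(I_i)$ with the global gap $\Delta^{\langle m\rangle}_i$. The danger is an arm sitting far above its \emph{local} acceptance boundary, so that the PAC subroutine commits after relatively few pulls and its empirical mean is only coarsely estimated, yet having a comparatively small \emph{global} gap; if the local boundary $\theta_{[\ell+1]}(I_i)$ were allowed to drop below $\theta_{[m+1]}$, the recorded mean could violate the required accuracy. Controlling this is where I would spend the most care, leaning on the concentration of the random partition to keep $\theta_{[\ell+1]}(I_i) \ge \theta_{[m+1]}$ (and symmetrically $\theta_{[\abs{I_i}-r+1]}(I_i) \le \theta_{[m]}$), and on the factor-$4$ slack built into both the certificate definition and the choice $\eps = \Delta_{[P]}^{\langle m \rangle}/4$ to absorb the remaining constants.
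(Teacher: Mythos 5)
Your overall route is the intended one, and it is considerably more explicit than what the paper provides: the paper's entire proof of Lemma~\ref{lem:basic-certificate} is the single sentence that it ``is straightforward based on the properties of \cenT\ and \cenB.'' You correctly identify that $S=\Top_m$ comes for free from Theorem~\ref{thm:ub}'s good events, that the mean-accuracy half of the certificate must come from the \emph{Moreover} clauses of Lemma~\ref{lem:cenT} and Lemma~\ref{lem:cenB} (which are part of the same success event, so no extra probability is spent), and --- most importantly --- that the non-trivial step is converting the local guarantee $\abs{\tilde\theta_i-\theta_i}\le\max\{\Delta^{\langle\ell\rangle}_i(I_i),\eps\}/4$ into the global requirement $<\Delta^{\langle m\rangle}_i/4$. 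Your argument for accepted arms is right: under a (trivially available) strengthening of Lemma~\ref{lem:partition} one gets $\theta_{[\ell+1]}(I_i)\ge\theta_{[a]}\ge\theta_{[m+1]}$, so the local gap is \emph{at most} the global one, and $\eps=\Delta^{\langle m\rangle}_{[P]}/4$ is at most $\Delta^{\langle m\rangle}_i/3$ for any arm \cenT\ actually outputs (your claimed $\le\Delta^{\langle m\rangle}_i/4$ needs $\theta_i\ge\theta_{[P]}$, whereas acceptance only gives $\theta_i\ge\theta_{[P]}-\eps$; the weaker constant still suffices). The rejected arms are indeed symmetric.

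There is, however, one genuine gap that your proposal waves past with ``or output by \colN'' and ``the analogous guarantee behind Lemma~\ref{lem:centra-2}'': the arms that survive all pruning rounds and are finalized by \colN. Lemma~\ref{lem:colN} certifies those arms only relative to the \emph{residual} instance $I^{(f)}$ with pivot $m^{(f)}$, and for that pair the pivots sandwich the global ones ($\theta_{[m^{(f)}]}(I^{(f)})\ge\theta_{[m]}\ge\theta_{[m+1]}\ge\theta_{[m^{(f)}+1]}(I^{(f)})$), so by exactly the computation in Lemma~\ref{lem:sandwich} the residual gaps satisfy $\Delta^{\langle m^{(f)}\rangle}_i(I^{(f)})\ge\Delta^{\langle m\rangle}_i$ --- the \emph{wrong} direction. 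If, say, the global $[m+1]$ arm is rejected in an earlier round while some lower-mean arm survives, a boundary top arm $i$ in $I^{(f)}$ has a residual gap that can exceed $\Delta^{\langle m\rangle}_i$ by an arbitrary factor, and the \colN\ certificate of $I^{(f)}$ then permits an error larger than $\Delta^{\langle m\rangle}_i/4$. Closing this requires an additional argument (e.g., that the global pivot arms survive to the last sub-instance, or a separate accuracy bound for the surviving arms in terms of the global gaps); your local-versus-global comparison, which works for accepted and rejected arms precisely because their local pivots lie entirely on one side of the global boundary, does not transfer to this case. To be fair, this loose end is inherited from the paper, which does not address it at all; but since you explicitly set out to verify the accuracy condition arm by arm, this is the case your proof still has to handle.
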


Our final algorithm for the general time horizon is described in Algorithm~\ref{alg:general}.   It follows the guess-and-verify framework mentioned earlier.  Now we are ready to prove the main theorem of this section.

\begin{algorithm}[t]
\caption{\colMG$(I, m, T)$}
\label{alg:general}
\KwIn{a set of $n$ arms $I$, parameter $m$, and time horizon $T$.}
\KwOut{the set of top-$m$ arms of $I$.}
\For{$s = 1, 2, \dotsc$}{
run $4^s$ copies of \colM$\left(I, m, 3T/(\pi^2 s^2 4^{s})\right)$, and record the returned pair $(S, \{\tilde{\theta}_i\}_{i \in I})$\; 
let $S^{(s)}$ be the most frequent answer of the $4^s$ returned sets of top-$m$ arms, and for each $i \in [n]$, let $\tilde{\theta}_i^{(s)}$ be the median of the $4^s$ estimated means for the $i$-th arm\;
run \verM$(I, m, S^{(s)}, \tilde{\theta}^{(s)}, 4^s, 3T / (\pi^2 s^2))$ and let $A_s$ be the output \label{ln:verify}\;
} 
let $s^{\star}$ be the maximum $s$ such that $A_s \neq \bot$\;
\If{we cannot find such an $s^{\star}$ after $T$ time steps}{\KwRet{an arbitrary set of $m$ arms}\;}
\Else{\KwRet{$A_{s^{\star}}$}.}
\end{algorithm}

\begin{proof}[Proof of Theorem~\ref{thm:ub-general}]
Let $s'$ be the largest $s \ge 1$ such that $3T/(\pi^2 s^2 4^s) \ge T_0$, and consequently $4^{s'} = \Omega\left(\frac{T/T_0}{\log^2(T/T_0)}\right)$. If no such $s$ exists then Theorem~\ref{thm:ub-general} holds trivially.   By Lemma~\ref{lem:basic-certificate} and the standard median trick we have that with probability at least $1- \exp(-4^{s'})$, $(S^{(s')}, \tilde{\theta}^{(s')})$ is a top-$m$ certificate.  We also have 
\begin{equation*}
3T/ (\pi^2 (s')^2) \geq 200 \cdot 4^{s'} \cdot \frac{H^{\langle m \rangle}}{K},
\end{equation*}
which, combined with Lemma~\ref{lem:verify}, guarantees that the call of \verM\ at Line~\ref{ln:verify} in Algorithm~\ref{alg:general} returns $\Top_m$ with probability at least $1 - 2n \cdot \exp(-4^{s'})$.  By a union bound, we have that 
\begin{equation}
\label{eq:g-1}
\Pr[A_{s'} = \Top_m] \ge 1 - 4n \cdot \exp(-4^{s'}).
\end{equation}

On the other hand, for each $s = s' + j\ (j \ge 1)$, by Lemma~\ref{lem:verify} we have
\begin{equation}
\label{eq:g-2}
\Pr[A_{s'} \not\in \{\Top_m, \perp\}] \le 2n \cdot \exp(-4^{s'+j}).
\end{equation}
Combining (\ref{eq:g-1}) and (\ref{eq:g-2}), we have
\begin{eqnarray*}
    \Pr[A_{s^*} = \Top_m] &\geq& \Pr[A_{s'} = \Top_m] - \sum\limits_{j = s' + 1}^{\infty}\Pr[A_{j} \not\in \{\Top_m, \bot\}] \\
    &\ge& 1 - 4n\sum_{s \ge s'} \exp(-4^{s}) \ge 1 - 8n \cdot \exp(-4^{s'}).
\end{eqnarray*}
Plugging the fact that $4^{s'} = \Omega\left(\frac{T/T_0}{\log^2(T/T_0)}\right)$, we have $\Pr[A_{s^*} = \Top_m] \ge 1 - n \cdot \exp\left(-\Omega\left(\frac{T/T_0}{\log^2(T/T_0)}\right)\right)$.

Finally, it is easy to see that \colMG\ can be implemented in $O(\log\frac{\log n}{\log K} + \log K)$ rounds of communication by Theorem~\ref{thm:ub}, because all runs of \colM\ can be done in parallel and \verM\ requires only the constant number of rounds of communication. 
\end{proof}

\subsection{An Improved Algorithm}
\label{sec:ub-improve}

In this section we further improve the round complexity of Algorithm~\ref{alg:main} to $O(\log\frac{\log m}{\log K} + \log K)$.

\paragraph{High Level Idea.}
The general idea to improve the $\log\log n$ term to $\log\log m$ in the round complexity is to first reduce the number of arms from $n$ to $\tilde{O}(m)$. This idea is relatively easy to implement if we only target for a constant error probability: We sample each of the $n$ arms with probability $1/m$, getting a subset $V$.  By an easy calculation, the event that $V$ contains {\em exactly} one top-$m$ arms of $I$ happens with a constant probability.  Conditioned on this event, the expected sub-instance complexity $H^{\langle 1 \rangle}(V)$ is upper bounded by $O(H^{\langle m \rangle}(I)/m)$.  Thus if we sample $\tilde{O}(m)$ sub-instances, and compute the best arm in each sub-instance , then the collection of the $\tilde{O}(m)$ best arms will be a {\em superset} of $\Top_m$ with a good probability.  

As before, our ultimate goal is to make the error probability exponentially small in terms of $T$.  To this end we also need to amplify the success probability of the aforementioned reduction.  Unfortunately, using a verification step as that in Section~\ref{sec:general-T} is not sufficient here, and once again we need new ideas. We note that this section is the most technically challenging part in the whole algorithm design of this paper.

To start with, we again try to guess the sub-instance complexity using a geometric sequence, and use parallel repetition to amplify the success probability. Similar as Section~\ref{sec:general}, the key is to avoid outputting a wrong majority when the guess is too small.  This is possible if the following statement holds: 
\begin{quote}
{\em For a randomly sampled sub-instance on which we compute the best arm, the probability of outputting any arm inside global $\Top_m$ is larger than that of any arm outside $\Top_m$}.
\end{quote} 
Unfortunately, this statement does not always hold.  We thus try to prove a slightly weaker version of the statement.  We show that the probability of outputting any arm inside global $\Top_m$ is {\em not much smaller} than that of any arm outside $\Top_m$.

To show this, we consider three cases.  Let $V$ be the randomly sampled sub-instance.  Let $T_0$ be our guessed complexity of $V$, and $H = H^{\langle 1 \rangle}(V)$ be the real expected complexity of $V$.  
\begin{enumerate}
\item
If $T_0 \ge \lambda H$ where $\lambda = \log^{\Theta(1)}(T_0 Kn)$, then if we apply the best arm identification algorithm from \cite{TZZ19} on $V$ with time budget $T$, with a good probability we can successfully identify the best arm.  Note that since our subsampling is uniform, for any pair of two arms $a, b$ where $a \in \Top_m, b \not\in \Top_m$, they will be included in the sub-instance with equal probability.  Therefore the probability of outputting $a$ must be at least that of outputting $b$.

\item
If $T_0 \le H$, then by applying \cite{TZZ19} on $V$ with time budget $T_0$, together with a verification step (Algorithm~\ref{alg:verify}), we can detect with a good probability that our guess $T_0$ is too small compared with $H$.

\item 
To handle the gap $H < T_0 < \lambda H$, we apply the following trick: We replace $T_0$ by $T_0$ or $T_0/\lambda$ with equal probability, so as to ``reduce'' this case to either the first or the second case, for which we know how to handle. In this way we can show that for $a \in \Top_m, b \not\in \Top_m$, the probability of outputting $a$ is at least {\em half} that of outputting $b$. We note that this intuitive description is not entirely precise, but it conveys the idea.
\end{enumerate}

In this section we prove the following theorem.

\begin{theorem}
\label{thm:ub-improve}
Let $I$ be a set of $n$ arms,  $m \in \{1, \ldots, n - 1\}$, and $T$ be the time horizon. There exists a collaborative algorithm that computes the set of top-$m$ arms of $I$ with probability at least $$1 - n \cdot \exp\left(-\Omega\left(\frac{KT}{H^{\langle m \rangle} \log^6(KT) \log^2{(KT / H^{\langle m\rangle})}\log{n}} \right)\right),$$ using at most $T$ time steps and $O(\log\frac{\log m}{\log K} + \log K)$ rounds.
\end{theorem}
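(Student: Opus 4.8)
The plan is to prove Theorem~\ref{thm:ub-improve} with a two-phase algorithm, which I will call \colMI. In the first phase a \emph{reduction} routine \reduct\ shrinks the instance $I$ from $n$ arms down to a candidate set $S$ of only $\tilde{O}(m)$ arms, guaranteeing that $\Top_m \subseteq S$ except with probability exponentially small in $T$; in the second phase I run \colMG\ from Theorem~\ref{thm:ub-general} on the sub-instance $S$. Since $|S| = \tilde{O}(m)$, \colMG\ uses $O(\log\frac{\log |S|}{\log K} + \log K) = O(\log\frac{\log m}{\log K} + \log K)$ rounds, and because $\Top_m \subseteq S$ implies $H^{\langle m\rangle}(S) \le H^{\langle m\rangle}(I)$ by Lemma~\ref{lem:sandwich}, the time budget $T$ is preserved and \colMG\ succeeds on $S$ with the error guarantee of Theorem~\ref{thm:ub-general}. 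The reduction phase will itself cost only $O(\log K)$ rounds, so the round bound follows, and the final error is a union bound over the reduction failing and \colMG\ failing on $S$.

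For \reduct\ I would repeatedly sample a sub-instance $V$ by including each of the $n$ arms independently with probability $1/m$, compute a candidate for the best arm of $V$ using the collaborative best-arm identification algorithm of \cite{TZZ19}, and collect the outputs into $S$. Two structural facts drive the design. First, for any fixed $a \in \Top_m$, the event $V \cap \Top_m = \{a\}$ has probability $\Theta(1/m)$, and conditioned on it $a$ is the global best arm of $V$. Second, conditioned on $V$ containing a single top-$m$ arm, $\bE[H^{\langle 1\rangle}(V)] = O(H^{\langle m\rangle}(I)/m)$; this is an averaging statement in the spirit of Equation~\eqref{eq:partition}, provable exactly as in Lemma~\ref{lem:local-global} by bounding the truncated complexity of the random sub-instance with a Chernoff argument. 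Taken together, $\tilde{O}(m)$ iterations suffice, by a coupon-collector argument, to place every top-$m$ arm in $S$ with constant probability while keeping $|S| = \tilde{O}(m)$.

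The crux is to amplify this constant-probability reduction to failure probability exponentially small in $T$, for which the guess-and-verify device of Section~\ref{sec:general} alone does not suffice, since no single $(S,\{\tilde{\theta}_i\})$ pair is produced inside \reduct. Instead I would establish the following \emph{comparison property}: in a single iteration, for every $a \in \Top_m$ and every $b \notin \Top_m$, the probability $p_a$ of outputting $a$ satisfies $p_a \ge \tfrac12 p_b$. The argument couples the inclusion of $a$ and $b$ while fixing the inclusion $W$ of all other arms: since $\theta_a > \theta_b$, whenever $b$ is the best arm of $W \cup \{b\}$ the arm $a$ is the best of $W \cup \{a\}$, and the two configurations occur with equal sampling probability, so it remains only to compare the subroutine's conditional success probabilities. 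I would handle this in three regimes governed by the guessed per-instance complexity $T_0$ versus the true $H = H^{\langle 1\rangle}(V)$: when $T_0 \ge \lambda H$ for a polylogarithmic $\lambda$, the routine of \cite{TZZ19} identifies the best arm with high probability and the conditional success probabilities are comparable, giving $p_a \ge p_b$; when $T_0 \le H$, the verification step (Algorithm~\ref{alg:verify}, via Lemma~\ref{lem:verify-bad}) detects the under-budget and outputs $\bot$, so $b$ is never reported; and the delicate middle band $H < T_0 < \lambda H$ is collapsed to the previous two by perturbing the budget to take value $\lambda T_0$ or $T_0/\lambda$ with equal probability, which lands in a clean regime with probability at least $1/2$ and accounts for the factor $\tfrac12$.

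With the comparison property in hand, I would run \reduct\ under a geometric sequence of guesses $T_0 \in \{1,2,4,\dots\}$ with parallel repetition, exactly as in \colMG: a correct guess yields $\Top_m \subseteq S$, while the comparison property prevents arms outside $\Top_m$ from being collected more often than the top-$m$ arms, so no over-small guess can cause $S$ to drop a top-$m$ arm, and amplification over the repetitions drives the reduction's failure probability down to exponentially small in $T$. A final union bound with Theorem~\ref{thm:ub-general} applied to $S$, together with $\lambda = \mathrm{polylog}(KTn)$ and $|S| = \tilde{O}(m)$, yields the stated $1 - n\cdot\exp(-\Omega(\cdot))$ bound, the extra $\log^6(KT)$ factor arising from $\lambda$ and the budget perturbation while the $\log^2(KT/H^{\langle m\rangle})$ and $\log n$ factors are inherited from \colMG. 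The main obstacle is precisely the comparison property in the middle complexity band: making the perturbation trick yield a clean $\tfrac12$ factor while keeping the failure exponentially small requires that the best-arm subroutine's error, the verification's false negatives, and the sampling-induced fluctuation of $H^{\langle 1\rangle}(V)$ all be controlled simultaneously, and reconciling these is the technically heaviest part of the argument.
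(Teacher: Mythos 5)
Your proposal follows essentially the same route as the paper's proof: a subsampling reduction to $\tilde{O}(m)$ arms followed by \colMG, with the reduction amplified via the exchange coupling of $a$ and $b$, the three-regime analysis with the randomized budget perturbation $\tau \in \{T, T/\beta\}$, and a guess-and-verify loop over geometric complexity guesses. The only cosmetic differences are that the paper states the comparison property additively (Lemma~\ref{lem:coupling}) rather than as a multiplicative $\tfrac12$ factor, and bounds the conditional sub-instance complexity by Markov's inequality rather than a Chernoff argument, neither of which changes the structure of the argument.
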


\subsubsection{Subsampling and Its Properties}
\label{sec:sampling}

We need the following technical lemma.

\begin{lemma}\label{lem:best-arm}
Let $V$ be a set of $n$ arms, and $\eta$ be the time horizon.
There exists a collaborative algorithm $\A(V, \delta, \eta)$, and functions
\begin{equation}
\label{eq:h-1}
f(V, \delta, \eta) = C_f \cdot \frac{H^{\langle 1 \rangle}(V)}{K}\cdot \log^3{\eta K} \cdot \log\frac{|V|}{\delta},
\end{equation}
and 
\begin{equation}
\label{eq:h-2}
g(V, \delta) = C_g \cdot \frac{H^{\langle 1\rangle}(V)}{K} \cdot  \log\frac{|V|}{\delta},
\end{equation}
where $C_f$  and $C_g$ are universal constants such that

\begin{enumerate}
    \item $\A$ returns an arm from $V$ or $\bot$ using at most $\eta$ time steps and $O(\log K)$ rounds of communication;
    \item $\Pr[\A(V,\delta, \eta) \not \in \{\Top_1(V), \bot\}] \le \delta$;
    \item If $\eta \ge f(V, \delta, \eta)$, then $\Pr[\A(V, \delta, \eta) = \Top_1(V)] \ge 1 - \delta$;
    \item If $\eta \le g(V, \delta)$, then $\Pr[\A(V, \delta, \eta) = \bot] \ge 1 - \delta$.
\end{enumerate}
\end{lemma}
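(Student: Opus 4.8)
The plan is to build $\A$ by composing two subroutines: the full-speedup best-arm identification algorithm of \cite{TZZ19} (augmented to also report empirical means, exactly as we augmented \colM\ before Lemma~\ref{lem:basic-certificate}), followed by the verification procedure \verM\ of Algorithm~\ref{alg:verify} instantiated with pivot $m=1$. Concretely, I would set the confidence parameter $\gamma := C_\gamma \log(|V|/\delta)$ for a large enough constant $C_\gamma$, split the horizon as $\eta/2 + \eta/2$, first run the (augmented) algorithm of \cite{TZZ19} on $V$ with budget $\eta/2$ and $O(\log K)$ rounds to obtain a candidate arm $a$ together with estimates $\{\tilde{\theta}_i\}_{i\in V}$, and then return the output of $\verM(V, 1, \{a\}, \{\tilde{\theta}_i\}_{i\in V}, \gamma, \eta/2)$ (reading its returned singleton as an arm, and passing $\bot$ through). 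Property~1 is then immediate: each subroutine uses at most $\eta/2$ time and $O(\log K)$ (resp.\ $O(1)$) rounds, so $\A$ uses at most $\eta$ time and $O(\log K)$ rounds.

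Property~2 (soundness) follows directly from the one-sided guarantee of the verifier. Since the output of $\A$ is exactly the output of \verM, Eq~(\ref{eq:f-0}) with $m=1$ gives $\Pr[\A(V,\delta,\eta)\notin\{\Top_1(V),\bot\}] \le 2|V|e^{-\gamma}$, and the choice $\gamma = C_\gamma\log(|V|/\delta)$ with $C_\gamma$ large enough makes this at most $\delta$. This step is the whole reason for threading $\{\tilde{\theta}_i\}$ and the candidate into \verM: the verifier re-estimates every arm with fresh, independent samples and emits a non-$\bot$ answer only when it passes the separation test, so no ``wrong but confident'' output can slip through regardless of how the first stage behaves.

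For Property~3 I would verify the two budget demands that $\eta \ge f$ is designed to cover. When $\eta\ge f$, we have $\eta/2 \ge$ the threshold of \cite{TZZ19} needed to output a correct best arm \emph{together with} mean estimates accurate enough to form a top-$1$ certificate --- this is the source of the $\log^3(\eta K)$ factor in $f$, which tracks the polylogarithmic overhead of their $O(\log K)$-round full-speedup algorithm --- so $(\{a\},\{\tilde{\theta}_i\})$ is a top-$1$ certificate with probability $\ge 1-\delta/2$. Simultaneously $\eta/2 \ge 200\gamma H^{\langle 1\rangle}(V)/K = \Theta\!\left(\tfrac{H^{\langle 1\rangle}(V)}{K}\log\tfrac{|V|}{\delta}\right)$, so, conditioned on having a certificate and using the fresh samples of the verifier, Eq~(\ref{eq:f-1}) of Lemma~\ref{lem:verify} guarantees $\verM$ returns $\Top_1(V)$ with probability $\ge 1-2|V|e^{-\gamma}\ge 1-\delta/2$; a union bound yields Property~3. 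For Property~4, when $\eta\le g = C_g\tfrac{H^{\langle 1\rangle}(V)}{K}\log\tfrac{|V|}{\delta}$, the verifier's budget $\eta/2$ is at most $\tfrac{C_g}{2}\tfrac{H^{\langle 1\rangle}(V)}{K}\log\tfrac{|V|}{\delta}$, which, by choosing $C_g < C_\gamma/8$, stays strictly below $\tfrac{\gamma}{16}H^{\langle 1\rangle}(V)/K$; Lemma~\ref{lem:verify-bad} then forces $\verM$ (hence $\A$) to output $\bot$ with probability $\ge 1-2|V|e^{-\gamma}\ge 1-\delta$, whatever the first stage produced.

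The main obstacle is the certificate guarantee underlying Property~3: I must establish that the best-arm algorithm of \cite{TZZ19}, run within budget $\tilde{O}(H^{\langle 1\rangle}(V)/K)$ and $O(\log K)$ rounds, not only identifies $\Top_1(V)$ but also leaves behind empirical means satisfying $|\tilde{\theta}_i-\theta_i|<\Delta_i^{\langle 1\rangle}/4$ for \emph{every} arm. This is plausible because any elimination-based best-arm routine must separate each discarded arm $i$ from the best at resolution $\Theta(\Delta_i^{\langle 1\rangle})$, so the requisite accuracy is a byproduct of success (mirroring Lemma~\ref{lem:basic-certificate} for \colM); making it precise requires either extracting the accuracy guarantee from \cite{TZZ19} or appending a cheap final estimation pass, and then pinning down the exact polylogarithmic dependence so that it matches the $\log^3(\eta K)$ in $f$. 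The remaining care is purely in the constants: $C_\gamma$ large enough for the soundness and certificate-confirmation failure probabilities, $C_g$ small enough relative to $C_\gamma$ to keep the $f$--$g$ window consistent, and $C_f$ large enough to dominate both the $200\gamma$-term and the \cite{TZZ19} threshold.
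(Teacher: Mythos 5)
Your proposal matches the paper's own construction essentially exactly: the paper defines $\A(V,\delta,T)$ by running the \cite{TZZ19} algorithm $\B(V,T/2)$ to produce a candidate-plus-means certificate and then returning $\verM(V,1,\{a\},\tilde\theta,\log(|V|/\delta),T/2)$, deriving the four properties from Lemma~\ref{lem:TTZ19}, Lemma~\ref{lem:verify}, and Lemma~\ref{lem:verify-bad} with $R=\log K$, which is precisely your two-stage split-the-budget argument. Your discussion of the one remaining subtlety (that $\B$ must also leave behind means accurate to $\Delta_i^{\langle 1\rangle}/4$) is a point the paper passes over silently by simply calling $\B$'s output a certificate, so if anything your write-up is more careful than the original.
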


We will show that the algorithm for fixed-time best arm identification in \cite{TZZ19} (denoted by $\B(V, T)$, where $V$ is the input and $T$ is the time horizon), combined with Algorithm~\ref{alg:verify} (setting $m=1$), satisfies Lemma~\ref{lem:best-arm}.  

We first recall the following result from \cite{TZZ19}.  
\begin{lemma}[\cite{TZZ19}]
\label{lem:TTZ19}
For any $V$ and $T$ and fixed $R$, $\B(V, T)$ uses in at most $T$ time steps and $R$ rounds of communication, and returns the best arm with probability at least 
$$
1 - |V| \exp\left(-\Omega\left(\frac{TK^{(R - 1)/R}}{H^{\langle 1 \rangle}(V) \cdot \log^3 \left(T K\right)} \right)\right).
$$
\end{lemma}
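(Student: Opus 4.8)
My plan is to reconstruct the fixed-time collaborative best-arm algorithm of \cite{TZZ19} and its analysis, since Lemma~\ref{lem:TTZ19} is exactly their main guarantee. The algorithm $\B$ follows a random-partition-plus-elimination scheme with per-round reduction factor $\kappa = K^{1/R}$. In the first round I would randomly partition $V$ among the $K$ agents and have each agent (approximately) identify the best arm of its group; this collapses the candidate pool from $|V|$ down to at most $K$ arms. The purpose of the random partition is Equation~(\ref{eq:partition}): the group containing $\Top_1(V)$ has expected best-arm subcomplexity $\Theta(H^{\langle 1\rangle}(V)/K)$, so a modest budget suffices for the global best to survive, and the surviving $K$ local bests form a subinstance whose complexity is still at most $H^{\langle 1\rangle}(V)$ (each local best's internal gap dominates its global gap). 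Crucially, this collapse from $|V|$ to $K$ is achieved \emph{inside} the first round and is charged only the $\tilde{O}(H^{\langle 1\rangle}(V)/K)$ cost of the local subproblems, not the cost of resolving all $|V|$ arms; this is what replaces the $|V|^{1/R}$ factor of generic batched elimination with the much smaller $K^{1/R}$.

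In the remaining rounds I would run successive elimination on the current candidate set $S$. The key parallelization is that when $|S| = m \le K$, one assigns $\approx K/m$ agents to each surviving candidate, so a round of length $\tau$ yields $\approx K\tau/m$ pulls per arm and hence resolves gaps down to $\epsilon$ with $\epsilon^{-2} \approx K\tau/m$. Since arms with gap below $\epsilon$ number at most $H^{\langle 1\rangle}(V)\,\epsilon^2$, the survivors drop to at most this many. I would tune the round budgets so that $|S|$ shrinks by the factor $\kappa = K^{1/R}$ in each of the $R$ rounds, making every round cost $\tilde{O}\!\big(H^{\langle 1\rangle}(V)\,K^{1/R}/K\big)$; over $R$ rounds the pool contracts from $K$ to $1$ and the total time is $\tilde{O}\!\big(R\,H^{\langle 1\rangle}(V)\,K^{1/R}/K\big) = \tilde{O}\!\big(H^{\langle 1\rangle}(V)/K^{(R-1)/R}\big)$. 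Equivalently, with budget $T$ one resolves an instance of complexity $\Theta\!\big(TK^{(R-1)/R}\big)$, which is precisely the source of the $K^{(R-1)/R}$ factor appearing in the exponent.

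For the failure probability, each elimination step fails only when some still-alive arm's empirical mean deviates from its true mean by more than a constant fraction of the gap at which it is being tested; by Chernoff--Hoeffding (Lemma~\ref{lem:chernoff}) this has probability $\exp(-\Omega(\gamma))$ after $\gamma\epsilon^{-2}$ pulls, and a union bound over the at most $|V|$ surviving arms across all rounds yields the $|V|$ prefactor. To drive the error down so that it is exponentially small in $T$ rather than merely a constant, I would scale every per-round budget linearly in the target exponent $\gamma$; matching the affordable $\gamma = \Theta\!\big(TK^{(R-1)/R}/(H^{\langle 1\rangle}(V)\log^3(TK))\big)$ against the time budget $T$ gives the stated bound, where the three logarithmic factors absorb (i) a doubling search over the unknown instance complexity, (ii) the $O(\log)$ distinct gap scales, and (iii) the repeated concentration estimates within each scale.

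The main obstacle is the budget balancing that produces the $K^{1/R}$ per-round factor rather than the $|V|^{1/R}$ factor of naive batched elimination: this rests both on the random-partition reduction from $|V|$ to $K$ candidates with subcomplexity still $O(H^{\langle 1\rangle}(V))$, and on the $K/|S|$-agents-per-candidate allocation that converts otherwise idle parallelism into extra pulls once the pool is small. A second delicate point is amplifying success to be exponentially small in $T$ while keeping the round count fixed at $R$ — naive repetition would multiply the number of rounds, so the amplification must be folded into the same $R$-round elimination schedule rather than layered on top of it. Since this is exactly the construction and analysis of \cite{TZZ19}, I would ultimately invoke their theorem, the steps above serving as a reconstruction of its proof.
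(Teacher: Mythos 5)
The paper offers no proof of this lemma at all---it is imported verbatim from \cite{TZZ19} (the lemma is even labeled as such), and your proposal likewise ends by invoking that theorem, so the two treatments coincide. Your reconstruction sketch (random partition collapsing $V$ to $K$ candidates in the first round, $K/|S|$-agents-per-arm elimination with per-round shrink factor $K^{1/R}$ and per-round budget $\tilde{O}\bigl(H^{\langle 1\rangle}(V)K^{1/R}/K\bigr)$, and linear budget scaling in the target exponent to make the error exponentially small in $T$) is a faithful outline of the \cite{TZZ19} analysis and its arithmetic checks out.
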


\begin{algorithm}[t]
\caption{$\A(V, \delta, T)$}
\label{alg:best-arm}
Run $\B(V, T/2)$, and let the pair $(\{a\}, \tilde{\theta})$ be the certificate produced by $\B$\;
\KwRet{\verM$(V, 1, \{a\}, \tilde{\theta}, \log({|V|/\delta}), T/2)$.}
\end{algorithm}

In Algorithm~\ref{alg:best-arm} we describe how to augment $\B(V, T)$ with a verification step to construct $\A(V, \delta, T)$.
By Lemma~\ref{lem:TTZ19}, Lemma~\ref{lem:verify}, Lemma~\ref{lem:verify-bad}, and setting $R = \log K$, $\A(V, \delta, T)$ satisfies all the four items of Lemma~\ref{lem:best-arm}.
\medskip

By (\ref{eq:h-1}) and (\ref{eq:h-2}), we have \[\frac{f(V, \delta, \eta)}{g(V, \delta)} \le \frac{C_f}{C_g} \cdot \log^3{{\eta K}}\,.\] Define 
$$\beta \triangleq \frac{C_f}{C_g} \cdot \log^3{{T K}}\,.$$

\begin{algorithm}[t]
\caption{$\A'(I, m, \delta, T)$}
\label{alg:subset-best-arm}
\KwIn{a set of $n$ arms $I$, parameters $m$ and $\delta$, and time horizon $T$.}
\KwOut{the top-$1$ arm in a subset of $I$ obtained by randomly sampling each arm in $I$ with probability $1/m$.}
Sample each element from $I$ independently with probability $1/m$; let $V$ be the sampled subset \label{ln:sample}\;
choose $\tau \in \{T, T/\beta\}$ uniformly at random\;
\KwRet{$\A(V, \delta, \tau)$.}
\end{algorithm}

We now design another algorithm $\A'$ for finding the best arm in a random subset of $I$, using Algorithm~\ref{alg:best-arm} as a subroutine. $\A'$ is described in Algorithm~\ref{alg:subset-best-arm}. The following lemma says that any arm in the top-$m$ arms of $I$ will be returned with a good probability by $\A'$.

\begin{lemma}
\label{lem:subset-best-arm}
For any $I$, $\delta \in \left(0, \frac{1}{2}\right)$ and arm $a \in \Top_m$, if \begin{equation}\label{topf:condt}
    T \ge c_{\A} \cdot \frac{H^{\langle m \rangle}}{\delta K m} \cdot \log^6{(TK)} \cdot \log\frac{n}{\delta}
\end{equation}
for a universal constant $c_{\A}$,
then we have
\begin{equation}\label{topf:freq}
\Pr\left[\A'(I, m, \delta, T) = a\right] \ge \frac{(1 - \delta)^2}{e m}\,.
\end{equation}
\end{lemma}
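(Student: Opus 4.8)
The plan is to lower-bound $\Pr[\A'(I,m,\delta,T) = a]$ by isolating a single favorable event on the random subsample $V$ whose probability is already at least $\frac{1}{em}$, and then arguing that on this event the subroutine $\A$ returns $a$ with probability $(1-\delta)^2$. Concretely, I would define the \emph{clean event} $E' = \{a \in V\} \cap \{V \cap (\Top_m \setminus \{a\}) = \emptyset\}$, i.e. $a$ is sampled but no \emph{other} top-$m$ arm is sampled. Since the $n$ arms are sampled independently with probability $1/m$ and $\{a\}$ is disjoint from the remaining $m-1$ arms of $\Top_m$, we get $\Pr[E'] = \frac1m(1-1/m)^{m-1} \ge \frac{1}{em}$, using $(m-1)\log(1-1/m) \ge -1$. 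The reason for insisting that \emph{no} other top-$m$ arm lands in $V$ (rather than merely requiring $a = \Top_1(V)$) is that, conditioned on $E'$, every arm of $V$ other than $a$ is a \emph{bottom} arm, since $\theta_i \le \theta_{[m+1]} < \theta_{[m]} \le \theta_a$; hence $a = \Top_1(V)$ and, crucially, $H^{\langle 1\rangle}(V)$ is driven only by bottom arms.

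The main quantitative step is to bound $\bE[H^{\langle 1\rangle}(V)\mid E']$. On $E'$ write $V = \{a\}\cup B$ with $B$ a subset of bottom arms; then $\Delta^{\langle 1\rangle}_i(V) = \theta_a - \theta_i$ for $i \in B$ and $\Delta^{\langle 1\rangle}_a(V) = \min_{i\in B}(\theta_a-\theta_i)$, so $H^{\langle 1\rangle}(V) \le 2\sum_{i\in B}(\theta_a - \theta_i)^{-2}$. For a bottom arm $i$ we have $\theta_a - \theta_i \ge \theta_{[m]} - \theta_i = \Delta^{\langle m\rangle}_i$, hence $(\theta_a - \theta_i)^{-2} \le (\Delta^{\langle m\rangle}_i)^{-2}$. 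Because $E'$ constrains only the top-$m$ arms, each bottom arm still lies in $V$ independently with probability $1/m$ conditioned on $E'$, so $\bE[H^{\langle 1\rangle}(V)\mid E'] \le \frac2m \sum_{i\notin\Top_m}(\Delta^{\langle m\rangle}_i)^{-2} \le \frac{2H^{\langle m\rangle}}{m}$. Markov's inequality then gives, conditioned on $E'$, that $E'' \triangleq \{H^{\langle 1\rangle}(V) \le \frac{2H^{\langle m\rangle}}{\delta m}\}$ holds with probability at least $1-\delta$; this is precisely where the $1/\delta$ factor in the hypothesis on $T$ comes from.

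It then remains to verify that on $E'\cap E''$ the subroutine $\A$ outputs $a$ with probability at least $1-\delta$ \emph{for both} values $\tau\in\{T,T/\beta\}$, so that the uniform choice of $\tau$ costs nothing. Plugging $H^{\langle 1\rangle}(V) \le \frac{2H^{\langle m\rangle}}{\delta m}$ into $f(V,\delta,\eta) = C_f\frac{H^{\langle 1\rangle}(V)}{K}\log^3(\eta K)\log\frac{|V|}{\delta}$ and using $\beta = \frac{C_f}{C_g}\log^3(TK)$, I would choose $c_{\A}$ in the hypothesis $T \ge c_{\A}\frac{H^{\langle m\rangle}}{\delta K m}\log^6(TK)\log\frac n\delta$ large enough that even the \emph{smaller} budget satisfies $T/\beta \ge f(V,\delta,T/\beta)$, and a fortiori $T \ge f(V,\delta,T)$. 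Item~3 of Lemma~\ref{lem:best-arm} then yields $\Pr[\A(V,\delta,\tau)=\Top_1(V)]=\Pr[\A(V,\delta,\tau)=a]\ge 1-\delta$ for each $\tau$. Combining the three steps gives $\Pr[\A'=a] \ge \Pr[E']\cdot\Pr[E''\mid E']\cdot(1-\delta) \ge \frac{1}{em}(1-\delta)^2$, as claimed. I expect the main obstacle to be two-fold: first, the conceptual choice of conditioning on $E'$ rather than on $\{a=\Top_1(V)\}$, since the latter lets other top-$m$ arms near $a$ into $V$ and makes $\bE[H^{\langle 1\rangle}(V)]$ blow up relative to $H^{\langle m\rangle}/m$; and second, the polylogarithmic bookkeeping in the last step needed to pin down $c_{\A}$ so that the smaller perturbed budget $T/\beta$ still clears the $f$-threshold.
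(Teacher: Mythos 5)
Your proposal is correct and follows essentially the same route as the paper's proof: your clean event $E'$ is exactly the paper's event $\{\Top_m \cap V = \{a\}\}$, and the subsequent steps (the $\frac{1}{m}(1-\frac{1}{m})^{m-1} \ge \frac{1}{em}$ bound, the bound $\bE[H^{\langle 1\rangle}(V)\mid E'] \le \frac{2}{m}H^{\langle m\rangle}$ via bottom arms only, the Markov step yielding the $1/\delta$ factor, and checking $f(V,\delta,\tau) \le T/\beta \le \tau$ so that Item~3 of Lemma~\ref{lem:best-arm} applies for both values of $\tau$) all match the paper's argument. No gaps.
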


\begin{proof}
We first note that for each $a \in \Top_m$ and $V \subseteq I$ sampled at Line~\ref{ln:sample} in Algorithm~\ref{alg:subset-best-arm}, if $f(V, \delta, \tau) \le T/\beta$ and $\Top_m \cap V = \{a\}$, then by Item 3 of Lemma~\ref{lem:best-arm} we have $\Pr[\A'(I, m, \delta, T) = a] \ge 1 - \delta$.  

According to the uniform subsampling, we have 
\begin{equation}
\label{eq:i-1}
\Pr[\Top_m \cap V = \{a\}] = \frac{1}{m} \cdot \left(1 - \frac{1}{m}\right)^{m-1} \ge \frac{1}{em}.
\end{equation}
For every $i \in I$, let $X_i = 1$ if $i \in V$ and $X_i = 0$ otherwise.  For any $V$ such that $V \cap \Top_m = \{a\}$, we have
\begin{equation*}
\label{eq:i-2}
   H^{\langle 1 \rangle}(V) \le 2 \sum\limits_{i \in I \backslash \Top_m} (\theta_a - \theta_i )^{-2} X_i  \le 2 \sum\limits_{i \in I \backslash \Top_m} {X_i \left(\Delta^{\langle m\rangle}_i\right)^{-2}}\,.
\end{equation*}
We thus have 
\begin{equation*}
\bE[H^{\langle 1 \rangle}(V) \mid \Top_m \cap V = \{a\}] \le \frac{2}{m} H^{\langle m \rangle}.
\end{equation*}
By a Markov inequality, we have that conditioned on $\Top_m \cap V = \{a\}$, with probability $(1-\delta)$, 
\begin{equation}
\label{eq:i-3}
H^{\langle 1 \rangle}(V) \le \frac{2}{\delta m} H^{\langle m \rangle}
\end{equation}
and so on
\begin{equation}
    f(V, \delta, \tau) \le C_f  \frac{H^{\langle 1 \rangle}(V)}{K}  \cdot \log^3{{T K}} \cdot \log\frac{n}{\delta} \le C_f \frac{2 H^{\langle m\rangle}}{\delta K m}  \cdot \log^3{{T K}} \cdot \log\frac{n}{\delta} \le \frac{T}{\beta} \le \tau \,,
\end{equation}
under which 
\begin{equation}
\label{eq:i-4}
\A(V, \delta, \tau)  \text{ outputs $a$ with probability } (1 - \delta) 
\end{equation}
according to Item 3 of Lemma~\ref{lem:best-arm}. 

Finally, we have
\begin{eqnarray*}
\Pr[\A'(I, m, \delta, T) = a] &=& \Pr[\A(V, \delta, \tau) = a] \\
&\ge& \Pr[\A(V, \delta, \tau) = a\ |\ H^{\langle 1 \rangle}(V) \le \tau, \Top_m \cap V = \{a\}] \\
&& \ \ \ \ \cdot \Pr[H^{\langle 1 \rangle}(V) \le \tau \ |\ \Top_m \cap V = \{a\}] \cdot \Pr[\Top_m \cap V = \{a\}] \\
&\ge& (1 - \delta) \cdot (1 - \delta) \cdot {1}/{(em)} = {(1-\delta)^2}/{(em)},
\end{eqnarray*}
where the last inequality is due to (\ref{eq:i-1}), (\ref{eq:i-3}) and (\ref{eq:i-4}). 
\end{proof}

The next lemma is critical.  It says that the probability that Algorithm~\ref{alg:subset-best-arm} outputs any arm from $\Top_m$ cannot be significantly smaller than that of outputting any arm from $I \backslash \Top_m$.

\begin{lemma}
\label{lem:coupling}
For any arms $a \not \in \Top_m$, $b \in \Top_m$ and $\delta \in \left(0, \frac{1}{2}\right)$ we have 
\[
\Pr[\A'(I, m, \delta, T) = a] \le \Pr[\A'(I, m, \delta, T) = b] +  \left(\frac{1}{2em} + \frac{2\delta}{m} \right).
\]
\end{lemma}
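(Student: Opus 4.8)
The plan is to build a coupling on the subsampling randomness, reduce the statement to a pointwise comparison of the behaviour of $\A$ on two closely related sub-instances $V_a=W\cup\{a\}$ and $V_b=W\cup\{b\}$, and then use the randomization of $\tau$ to control the unavoidable ``gray zone'' of budgets.

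First I would condition on $W := V\cap N$, where $N := I\setminus(\Top_m\cup\{a\})$ is the portion of the subsample lying outside the top-$m$ arms and outside $a$. Since $N$, $\{a\}$, and $\Top_m$ are pairwise disjoint and all arms are sampled independently, the two sampling events $E_a=\{a\in V,\ \Top_m\cap V=\emptyset\}$ and $E_b=\{b\in V,\ a\notin V,\ (\Top_m\setminus\{b\})\cap V=\emptyset\}$ have the \emph{same} conditional probability $\tfrac1m(1-\tfrac1m)^m$ given $W$, and on $E_a$ (resp.\ $E_b$) the subsample is exactly $V_a$ (resp.\ $V_b$). Because every arm of $N$ has mean at most $\theta_{[m+1]}<\theta_b$, arm $b$ is always $\Top_1(V_b)$; and because $\A$ returns an arm outside $\{\Top_1(V),\bot\}$ with probability at most $\delta$ (Item~2 of Lemma~\ref{lem:best-arm}), the remaining way to output $a$ (namely $a\in V$ together with some top-$m$ arm in $V$, so $a\ne\Top_1(V)$) contributes at most $\delta\cdot\Pr[a\in V]\le\delta/m$. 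Writing $\bar{P}_{\A}(U\to x):=\tfrac12\bigl(\Pr[\A(U,\delta,T)=x]+\Pr[\A(U,\delta,T/\beta)=x]\bigr)$, this reduces the lemma to showing, for every $W$,
\[
\bar{P}_{\A}(V_a\to a)-\bar{P}_{\A}(V_b\to b)\le \tfrac12+\tfrac\delta2,
\]
after which multiplying by $\tfrac1m(1-\tfrac1m)^m\le\tfrac1{em}$, summing over $W$, and adding the $\delta/m$ error yields exactly $\tfrac1{2em}+\tfrac{2\delta}m$ (using $\tfrac{\delta}{2em}\le\tfrac{\delta}{m}$).

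For the pointwise bound I would first record two structural facts. If $a=\Top_1(V_a)$, then $\theta_b>\theta_a\ge\max_{i\in W}\theta_i$, so $b=\Top_1(V_b)$ and every best-arm gap in $V_b$ dominates the corresponding gap in $V_a$, giving $H^{\langle 1\rangle}(V_b)\le H^{\langle 1\rangle}(V_a)$ and hence $g(V_b,\delta)\le g(V_a,\delta)$; if instead $a\ne\Top_1(V_a)$, then $\bar{P}_{\A}(V_a\to a)\le\delta$ by Item~2 and the bound is immediate. The crucial observation is that, by the definition of $\beta$, we have $f(U,\delta,T)=\beta\,g(U,\delta)$, so the gray zone $\bigl(g(V_a,\delta),\ \beta\,g(V_a,\delta)\bigr)$ of budgets for which $\A(V_a,\cdot)$ is guaranteed neither to find the best arm (Item~3) nor to return $\bot$ (Item~4) has multiplicative width exactly $\beta$, while the two budgets $T$ and $T/\beta$ differ by exactly the factor $\beta$. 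Therefore at most one of $T,T/\beta$ can lie in this gray zone. For the clean budget $\eta_c$: if $\eta_c\ge\beta\,g(V_a,\delta)$ then Item~3 applies to both $V_a$ and $V_b$ (using $g(V_b,\delta)\le g(V_a,\delta)$ and $f(\cdot,\delta,\eta_c)\le f(\cdot,\delta,T)=\beta\,g(\cdot,\delta)$), so $\Pr[\A(V_b,\delta,\eta_c)=b]\ge1-\delta$ and the per-budget difference is at most $\delta$; if $\eta_c\le g(V_a,\delta)$ then Item~4 forces $\Pr[\A(V_a,\delta,\eta_c)=a]\le\delta$, again a difference of at most $\delta$. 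For the (at most one) gray budget I would only use the trivial bound that the difference is at most $1$. Averaging over the two budgets then gives $\bar{P}_{\A}(V_a\to a)-\bar{P}_{\A}(V_b\to b)\le\tfrac12(1+\delta)$.

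The main obstacle is exactly the gray-zone budget: there $\A(V_a,\cdot)$ could output $a$ with high probability while $\A(V_b,\cdot)$ fails to output $b$, and no useful comparison between the two instances is available. The whole force of the argument lies in the randomization of $\tau$ combined with the identity $f(U,\delta,T)=\beta\,g(U,\delta)$, which guarantees the companion budget is always clean and costs only a $\delta$ error, so that the unavoidable loss from the gray budget is diluted by the averaging factor $\tfrac12$. A secondary point to check is the origin of the $\tfrac1{2e}$ in the statement: it comes from the factor $(1-\tfrac1m)^m\le\tfrac1e$, which appears because outputting $a$ requires \emph{all} $m$ top arms to be absent from $V$, while the matching event $E_b$ (absent $a$ and the other $m-1$ top arms) carries the identical sampling probability, allowing the term-by-term comparison in $W$.
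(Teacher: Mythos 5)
Your proof is correct and follows essentially the same route as the paper's: the conditioning on $W=V\cap N$ and the pairing of $V_a=W\cup\{a\}$ with $V_b=W\cup\{b\}$ is exactly the paper's swap coupling $(U,V)$ restricted to the relevant event, the $\delta/m$ residual term is the paper's bound on $\Pr[\A(V,\delta,\tau)=a \wedge \Top_1(V)\neq a]$ via Item~2, and your ``gray zone'' argument (at most one of $T, T/\beta$ can fall in an interval of multiplicative width $\beta=f(\cdot,\delta,T)/g(\cdot,\delta)$) is precisely the paper's two-case analysis in Claim~\ref{cla:connect}, yielding the same per-instance bound of $\frac{1+\delta}{2}$. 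The differences are purely presentational.
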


\begin{proof}
Let us consider a pair of random sets $(U, V)$, where $V$ is formed by picking each of the $n$ arms of $I$ with probability $1/m$, and $U$ is the set of $V$ after exchanging the assignments of $a$ and $b$.  By this construction we have that
\begin{equation}
\label{eq:j-1}
\Top_1(V) = a \Rightarrow \Top_1(U) = b.
\end{equation}
Moreover, it is easy to see that the marginal distribution of $U$ is {\em identical} to that of $V$.

%We now analyze the probability that $\A'$ outputs $a \not\in \Top_m$. 
Note that 
%$\Pr[\A'(I, m, \delta, T) = a] = \Pr[\A(V, \delta, \tau) = a]$, and $\Pr[\A'(I, m, \delta, T) = b] = \Pr[\A(V, \delta, \tau) = b]$.
$\A'(I, m, \delta, T) = \A(V, \delta, \tau)$. We thus only need to prove
\begin{eqnarray}
\label{eq:j-0}
\Pr[\A(V, \delta, \tau) = a] \le \Pr[\A(V, \delta, \tau) = b] +  \left(\frac{1}{2em} + \frac{2\delta}{m} \right).
\end{eqnarray}

We start from the left hand side.
\begin{eqnarray}
\Pr[\A(V, \delta, \tau) = a] &=& \Pr[(\A(V, \delta, \tau) = a) \wedge (\Top_1(V) = a)] \nonumber \\
&& + \Pr[(\A(V, \delta, \tau) = a) \wedge (\Top_1(V) \neq a)] \nonumber \\
&\le&  \Pr[(\A(V, \delta, \tau) = a) \wedge (\Top_1(V) = a)] \nonumber \\
&&  + \Pr[(a \in V) \wedge \A(V, \delta, \tau) \not\in \{\Top_1(V), \bot\}] \nonumber \\
&\le&   \Pr[(\A(V, \delta, \tau) = a) \wedge (\Top_1(V) = a)] + \frac{\delta}{m}, \label{eq:j-2}
\end{eqnarray}
where the last inequality follows from the fact that $\Pr[a \in V] = 1/m$ and Item 2 of Lemma~\ref{lem:best-arm}.

By (\ref{eq:j-1}) we have
\begin{equation}
\label{eq:j-3}
    \Pr[(\A(U, \delta, \tau) = b) \wedge (\Top_1(V) = a)] \le \Pr[(\A(U, \delta, \tau) = b) \wedge (\Top_1(U) = b)].
\end{equation}

Since the marginal distribution of $U$ and $V$ are identical, we have
\begin{equation}
\label{eq:j-4}
    \Pr[A(U, \delta, \tau) = b \land \Top_1(U) = b] \le \Pr[A(U, \delta, \tau) = b] = \Pr[A(V, \delta, \tau) = b].
\end{equation}

The following claim enables us to connect $\Pr[\A(V, \delta, \tau) = a]$ and $\Pr[\A(V, \delta, \tau) = b]$.
\begin{claim}
\label{cla:connect}
For any arm $a \not\in \Top_m$, $b \in \Top_m$, and $\delta \in (0, \frac{1}{2})$, we have
\begin{equation}
\label{eq:connect}
\Pr[(\A(V, \delta, \tau) = a) \wedge (\Top_1(V) = a)] \le \Pr[(\A(U, \delta, \tau) = b) \wedge (\Top_1(V) = a)] + \frac{1+\delta}{2 e m} .
\end{equation}
\end{claim}
We will prove Claim~\ref{cla:connect} shortly. By (\ref{eq:j-2}), (\ref{eq:connect}),  (\ref{eq:j-3}), (\ref{eq:j-4}) (one for each inequality below; in order), 
\begin{eqnarray*}
\Pr[\A(V, \delta, \tau) = a] &\le& \Pr[(\A(V, \delta, \tau) = a) \wedge (\Top_1(V) = a)] + \frac{\delta}{m} \\
&\le& \Pr[(\A(U, \delta, \tau) = b) \wedge (\Top_1(V) = a)] +\frac{1+\delta}{2 e m} + \frac{\delta}{m}\\
&\le& \Pr[(\A(U, \delta, \tau) = b) \wedge (\Top_1(U) = b)] + \left(\frac{1}{2em} + \frac{2\delta}{m} \right)\\
&\le& \Pr[\A(V, \delta, \tau) = b] + \left(\frac{1}{2em} + \frac{2\delta}{m} \right).\\
\end{eqnarray*}
This proves Inequality~(\ref{eq:j-0}) and gives the lemma.
\end{proof}

\begin{proof}[Proof of Claim~\ref{cla:connect}]
We first note that if $\Top_1(V) = a$, then by our construction of $U$, the fact that $a \not\in \Top_m$ and $b \in \Top_m$, and the monotonicity of functions $f$ and $g$, we have
\begin{equation}
\label{eq:k-0}
f(U, \delta, \tau) \le f(V, \delta, \tau) \text{ and } g(U, \delta) \le g(V, \delta).
\end{equation}

We expand the left hand side of (\ref{eq:connect}) by running over all subsets $S \subseteq I$.
\begin{equation}
\label{eq:k-1}
\Pr[(\A(V, \delta, \tau) = a) \wedge (\Top_1(V) = a)] = \sum\limits_{S : \Top_1(S) = a} \Pr[(\A(V, \delta, \tau) = a) \wedge (V = S)].
\end{equation}

For each $V = S$ where $\Top_1(S) = a$,  we analyze the quantity 
$$
\Psi \triangleq \Pr[(\A(V, \delta, \tau) = a) \wedge (V = S)] - \Pr[(\A(U, \delta, \tau) = b) \wedge (V = S)]
$$ 
in two cases which cover all possible $T$, in both cases we use properties of $\A$ from Lemma~\ref{lem:best-arm}.
\begin{enumerate}
    % \item $f(U, \delta, T) \le T/\beta$: By the construction of $U$ and Item 3 of Lemma~\ref{lem:best-arm}, we know that $\A(U, \delta, \tau)$ outputs $b$ with probability at least $1 - \delta$. Therefore
% 	\begin{equation*}
% 	\Psi \le  \Pr[V = S] - (1 - \delta) \Pr[V = S] = \delta \cdot \Pr[V = S].
% 	\end{equation*}
    
    \item $f(U, \delta, T) \le T$: 
    \begin{eqnarray*}
        \Psi &=& \frac{1}{2}\Pr[(\A(V, \delta, T) = a) \land (V = S)] + \frac{1}{2}\Pr[(\A(V, \delta, {T / \beta}) = a )\land (V = S)] \\
        && -\frac{1}{2}\Pr[(\A(U, \delta, T) = b) \land (V = S)] - \frac{1}{2}\Pr[(\A(U, \delta, {T / \beta}) = b )\land (V = S)] \\
        & \le & \frac{1}{2} + \frac{1}{2} - \frac{1 - \delta}{2} - \frac{0}{2} \le \frac{1 + \delta}{2}
    \end{eqnarray*}
    
    % \item $g(U, \delta) \le T/\beta < f(U, \delta, T) \le T$:
    % \begin{eqnarray*}
    % \Psi &\le& \Pr[V = S] - \Pr[\tau = T] \cdot  \Pr[(\A(U, \delta, \tau) = b) \wedge (V = S) \ |\ \tau = T] \\
    % &\le& \Pr[V = S] - \frac{1 - \delta}{2} \cdot \Pr[V = S] = \frac{1+\delta}{2} \cdot \Pr[V = S].
    % \end{eqnarray*}
    
    \item $\frac{T}{\beta} \le g(U, \delta)$:
    
    \begin{eqnarray*}
        \Psi &=& \frac{1}{2}\Pr[(\A(V, \delta, T) = a) \land (V = S)] + \frac{1}{2}\Pr[(\A(V, \delta, {T / \beta}) = a )\land (V = S)] \\
        && -\frac{1}{2}\Pr[(\A(U, \delta, T) = b) \land (V = S)] - \frac{1}{2}\Pr[(\A(U, \delta, {T / \beta}) = b )\land (V = S)] \\
        & \le & \frac{1}{2} + \frac{\delta}{2} - \frac{0}{2} - \frac{0}{2} \le \frac{1 + \delta}{2}
    \end{eqnarray*}
    % \item $T/\beta \le g(U, \delta) \le T < f(U, \delta, T)$:
    % \begin{eqnarray*}
    % \Psi &\le& \Pr[(\A(V, \delta, \tau) = a) \wedge (V = S)]  \\
    % &=&  \Pr[\tau = T] \cdot \Pr[(\A(V, \delta, \tau) = a) \wedge (V = S) \ |\ \tau = T] \\
    % && + \Pr[\tau = T/\beta] \cdot \Pr[(\A(V, \delta, \tau) = a) \wedge (V = S) \ |\ \tau = T/\beta] \\
    % &\le& \Pr[V = S] - \frac{1 - \delta}{2} \cdot  \Pr[V = S] = \frac{1+\delta}{2} \cdot \Pr[V = S],
    % \end{eqnarray*}
    % where in the last inequality we have used (\ref{eq:k-0}) which gives $T/\beta \le g(U, \delta) \le g(V, \delta)$, and Item 4 of  Lemma~\ref{lem:best-arm}.
    
    % \item $T < g(U, \delta)$:  By Item 4 of Lemma~\ref{lem:best-arm} we have 
    % \begin{equation*}
% 	\Psi \le  \Pr[V = S] - (1 - \delta) \Pr[V = S] = \delta \cdot \Pr[V = S].
% 	\end{equation*}
\end{enumerate}
Combining the two cases, we have that 
\begin{eqnarray*} 
&&\Pr[(\A(V, \delta, \tau) = a) \wedge (\Top_1(V) = a)] - \Pr[(\A(U, \delta, \tau) = b) \wedge (\Top_1(V) = a)] \\
&\le& \frac{1+\delta}{2}  \sum\limits_{S : \Top_1(S) = a} \Pr[V = S] \le \frac{1 + \delta}{2} \left(1 - \frac{1}{m}\right)^m \frac{1}{m}\\
&\le& \frac{1+\delta}{2} \cdot \frac{1}{em}\,.
\end{eqnarray*}
\end{proof}

\subsubsection{Reduction to $O(m)$ Arms}
With Lemma~\ref{lem:coupling} we are able to design an algorithm such that given a time horizon $T$, it either returns a superset of $\Top_m$ of size $O(m)$, or $\perp$.   The algorithm is described in Algorithm~\ref{alg:reduction}. The following lemma characterizes the property of Algorithm~\ref{alg:reduction}.

\begin{algorithm}[t]
\caption{\reduct$(I, m, \delta, \gamma, T)$}
\label{alg:reduction}
\KwIn{a set of $n$ arms $I$, parameters $m$, $\delta$, and $\gamma$, and time horizon $T$.}
\KwOut{a super set of $\Top_m$ arms or $\perp$.}
Run independently $z = 25 m \cdot 4^{\gamma} / \delta ^ 2$ copies of $\A'\left(I, m, \delta, T/z\right)$\;
let $\hat{\alpha}_i$ be the frequency of $i$-th arm among the $z$ returned values by $\A'$\;
\If{($m$-th largest $\hat{\alpha}_i$) $< {3}/{(4em)}$}{\KwRet{$\perp$}\;}
\Else{\KwRet{all arms $i$ for which $\hat{\alpha}_i \ge {1}/{(16em)}$}.}
\end{algorithm}

\begin{lemma}
\label{lem:reduction}
For any $I$, $m$, $\delta \in \left(0, \frac{1}{24}\right)$, and $\gamma$, we have
$$
\Pr[(\Top_m \subseteq \text{\reduct}(I, m, \delta, \gamma, T)) \vee (\text{\reduct}(I, m, \delta, \gamma, T) = \perp)] \ge 1 - n \cdot \exp(-4^{\gamma}).
$$
Moreover, If \[T \ge c_R  \cdot 4^{\gamma} \cdot \frac{H^{\langle m \rangle}}{\delta^3 K } \cdot \log^6 TK \cdot \log\frac{n}{\delta}\] for a universal constant $c_R$, then 
$$
\Pr[\Top_m \subseteq \text{\reduct}(I, m, \delta, \gamma, T)] \ge 1 - n \cdot \exp(-4^{\gamma}).
$$
Finally, if $\text{\reduct}(I, m, \delta, \gamma, T) \neq \perp$, then the number of returned arms is bounded by $O(m)$.
\end{lemma}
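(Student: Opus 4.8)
The plan is to work with the per-run output probabilities $p_i \triangleq \Pr[\A'(I, m, \delta, T/z) = i]$, so that each frequency $\hat\alpha_i$ is the empirical average of $z$ i.i.d.\ Bernoulli$(p_i)$ trials with $\bE[\hat\alpha_i] = p_i$. The first step is a concentration bound: since all the thresholds in \reduct\ live at the scale $\Theta(1/m)$ and $z = 25 m 4^\gamma / \delta^2$, I would apply a multiplicative Chernoff (or Bernstein) bound---plain additive Hoeffding is too weak at this scale---to conclude that, except with probability at most $n \cdot \exp(-4^\gamma)$, every arm simultaneously satisfies $\abs{\hat\alpha_i - p_i} \le \rho$ for some $\rho = O(\delta/m)$ that is comfortably smaller than $\frac{1}{16 e m}$. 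Call this event $\G$ and condition on it throughout; the three claims then reduce to deterministic statements about the $p_i$.

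For the first (containment-or-$\perp$) claim I would argue by contradiction: suppose $\reduct \neq \perp$ yet some $b \in \Top_m$ is dropped, i.e.\ $\hat\alpha_b < \frac{1}{16em}$, so $p_b < \frac{1}{16em} + \rho$. Lemma~\ref{lem:coupling} then caps every outside arm: for all $a \notin \Top_m$, $p_a \le p_b + \frac{1}{2em} + \frac{2\delta}{m}$, hence on $\G$ we have $\hat\alpha_a \le p_a + \rho < \frac{1}{16em} + \frac{1}{2em} + \frac{2\delta}{m} + 2\rho$. The point is that, for $\delta$ a small enough constant (this is where the hypothesis $\delta < 1/24$ is used, via the margin $\frac{3}{4em} - \bigl(\frac{1}{16em}+\frac{1}{2em}\bigr) = \frac{3}{16em}$), this upper bound is strictly below the acceptance threshold $\frac{3}{4em}$. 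Consequently no arm outside $\Top_m$ can reach $\frac{3}{4em}$, so the only arms reaching it lie in $\Top_m \setminus \{b\}$; there are at most $m-1$ of them, forcing the $m$-th largest frequency below $\frac{3}{4em}$ and hence $\reduct = \perp$---a contradiction. So the bad event is contained in $\bar{\G}$, giving the claimed $1 - n\exp(-4^\gamma)$.

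For the second claim I would first translate the time budget: each of the $z$ copies of $\A'$ runs with horizon $T/z$, and multiplying the per-run requirement of Lemma~\ref{lem:subset-best-arm} by $z = 25m4^\gamma/\delta^2$ turns it into exactly the displayed condition $T \ge c_R 4^\gamma \frac{H^{\langle m\rangle}}{\delta^3 K}\log^6(TK)\log\frac{n}{\delta}$. Under this condition Lemma~\ref{lem:subset-best-arm} gives $p_b \ge \frac{(1-\delta)^2}{em}$ for every $b \in \Top_m$; since $(1-\delta)^2 > \tfrac34$ for $\delta < 1/24$, on $\G$ we get $\hat\alpha_b \ge p_b - \rho \ge \frac{3}{4em}$. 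Thus all $m$ top arms clear the acceptance threshold, so the $m$-th largest frequency is $\ge \frac{3}{4em}$ (output $\neq \perp$) and each top-$m$ arm also clears $\frac{1}{16em}$, so $\Top_m \subseteq \reduct$. Finally, for the size bound, note every returned arm has $\hat\alpha_i \ge \frac{1}{16em}$, so on $\G$ its $p_i \ge \frac{1}{16em} - \rho \ge \frac{1}{32em}$; since a single run of $\A'$ outputs at most one arm, $\sum_{i \in I} p_i \le 1$, and therefore at most $32em = O(m)$ arms can be returned.

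I expect the main obstacle to be the concentration step together with the constant bookkeeping in the first claim. One must prove $\abs{\hat\alpha_i - p_i}$ is small relative to the $1/m$ scale, which genuinely requires exploiting $p_i = O(1/m)$ through a multiplicative bound rather than Hoeffding, and then verify that the threshold margin $\frac{3}{16em}$ strictly dominates the coupling slack $\frac{2\delta}{m}$ plus the concentration error $2\rho$. It is this inequality that pins down the admissible range of $\delta$ and justifies the choice of the constant $25$ in $z$; everything else (Parts~2 and~3) follows cleanly once $\G$ and the time-budget translation are in place.
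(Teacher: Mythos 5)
Your proposal follows essentially the same route as the paper's proof: work with the per-run output probabilities $p_i = \Pr[\A'(I,m,\delta,T/z)=i]$, concentrate the empirical frequencies with a multiplicative Chernoff bound at the $1/m$ scale, use Lemma~\ref{lem:coupling} to transfer a bound from a dropped top-$m$ arm to every arm outside $\Top_m$ (the paper runs the identical comparison in the other logical direction, via a case analysis on the set $H=\{i : p_i \ge \frac{5}{8em}\}$ rather than your proof by contradiction), invoke Lemma~\ref{lem:subset-best-arm} with the budget divided by $z$ for the second claim, and use $\sum_i \hat\alpha_i \le 1$ for the $O(m)$ size bound (the paper applies this directly to the $\hat\alpha_i$, with no need to pass back to the $p_i$).

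The one point to fix is the statement of your concentration event $\G$: the claim that $\abs{\hat\alpha_i - p_i}\le \rho$ with $\rho = O(\delta/m)$ holds \emph{simultaneously for all arms} with probability $1-n\exp(-4^\gamma)$ is false, since an arm with $p_i=\Theta(1)$ has fluctuations of order $\delta/\sqrt{m}$ at confidence level $\exp(-4^\gamma)$ when $z = 25m\cdot 4^\gamma/\delta^2$. No multiplicative bound yields a uniform additive error of $O(\delta/m)$. What you actually need --- and what the paper proves --- are one-sided multiplicative tails anchored at the $1/m$ scale: an upper tail for arms whose $p_i$ has already been shown to be at most roughly $\tfrac{5}{8em}$, and a lower tail for arms whose $p_i$ has been shown to be at least roughly $\tfrac{1}{8em}$. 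Each of your three invocations of $\G$ is of exactly one of these two forms, so the argument survives once $\G$ is restated this way; this is a repair of phrasing, not of structure. (Your constant bookkeeping is also tight: with $\delta$ near $1/24$ the coupling slack $\tfrac{2\delta}{m}$ is not strictly below the margin $\tfrac{3}{16em}$; but the paper's own constants are equally tight at this point, so this is not a defect specific to your argument.)
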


\begin{proof}
Let $\alpha_i$ be the probability that $\A'$ returns the $i$-th arm, that is, 
$$
\alpha_i \triangleq \Pr\left[\A'\left(I, m, \delta, \frac{T}{z}\right) = i\right].
$$
Let $H = \{i \in I\ |\ \alpha_i \ge \frac{5}{8em}\}$.  By Chernoff-Hoeffding we have that
\begin{equation*}
\forall i \in I \backslash H:  \Pr\left[\hat{\alpha}_i \ge (1 + \delta) \cdot \frac{5}{8 e m}\right] \leq \exp\left(-\frac{5 \delta^2 z}{24 e m}\right) \le \exp( -4^{\gamma}).
\end{equation*}
Since $\frac{5(1 + \delta)}{8em} < \frac{3}{4em}$, it holds that 
\begin{equation}
\label{eq:l-1}
\Pr\left[\forall{i \in I \setminus H} : 
\hat{\alpha}_i < \frac{3}{4em}\right] \ge 1 - n \cdot \exp(-4^{\gamma}).
\end{equation}

For the first part of the lemma, we consider two cases regarding $H$.
\begin{enumerate}
\item $\abs{H} < m$: By (\ref{eq:l-1}), Algorithm~\ref{alg:reduction} returns $\perp$ with probability at least $1 - n \cdot \exp(-4^{\gamma})$.

\item $\abs{H} \ge m$: We must have $\Top_m \subseteq H$ or $H \setminus \Top_m \neq \varnothing$. In the first case, by the definition of $H$ we have that $\forall i \in \Top_m : \alpha_i \ge \frac{5}{8 e m} \ge \frac{1}{8 e m}$. In the second case, by Lemma~\ref{lem:coupling} we have \[\forall{i \in \Top_m} : \alpha_i \ge \frac{3}{4em} - \left(\frac{1}{2em} + \frac{2\delta}{m}\right) \ge \frac{1}{8em}\,.\] By Chernoff-Hoeffding we have
\begin{equation*}
\forall{i \in \Top_m} : \Pr\left[\hat{\alpha}_i < \frac{1}{16em}\right] \le \Pr\left[\hat{\alpha}_i < (1 - \delta) \cdot \frac{1}{8em}\right] \le \exp\left(-\frac{\delta^2 z}{8em}\right) \le \exp(-4^{\gamma}).
\end{equation*}
Therefore with probability at least $1 - n \cdot \exp(-4^\gamma)$, Algorithm~\ref{alg:reduction} outputs a super set of $\Top_m$.
\end{enumerate}
Combining the two cases, with probability at least $1 - n \cdot \exp(-4^\gamma)$, Algorithm~\ref{alg:reduction} outputs either $\perp$ or  a super set of $\Top_m$.

For the second part, since \[\frac{T}{z} \ge  c_{T} \cdot \frac{H^{\langle m\rangle}}{25 \delta K m}\log^6(TK) \log\frac{n}{\delta}  \ge c_{\A} \cdot \frac{H^{\langle m\rangle}}{\delta K m} \log^6(TK) \log\frac{n}{\delta}\] for a large enough constant $c_T$, by Lemma~\ref{lem:subset-best-arm} we have $\alpha_i \ge \frac{(1-\delta)^2}{em}$ for any $i \in \Top_m$.  By Chernoff-Hoeffding we have
\begin{equation}
    \forall{i \in \Top_m} : \Pr\left[\hat{\alpha}_i \le \frac{(1 - \delta)^3}{em}\right] \le \exp\left( - \frac{\delta^2 (1 - \delta)^2 z}{em} \right) \le \exp(-4^{\gamma}).
\end{equation}
Since $\frac{(1 - \delta)^3}{e m} > \frac{3}{4 e m}$, Algorithm~\ref{alg:reduction} outputs a superset of $\Top_m$ with probability at least $1 - n \cdot \exp(-4^{\gamma})$.

Finally, since $\sum_{i \in I} \hat{\alpha}_i \le 1$, there are at most $O(m)$ arms $i$ with $\hat{\alpha}_i \ge 1/(16em)$.
\end{proof}

We now use Algorithm~\ref{alg:reduction} as a building block for designing the reduction algorithm for general time horizon.  The final reduction is described in Algorithm~\ref{alg:reduction-general}.

\begin{algorithm}[t]
\caption{\text{\reductG}$(I, m, T)$}
\label{alg:reduction-general}
\KwIn{a set of $n$ arms $I$, parameter $m$, and time horizon $T$.}
\KwOut{a super set of top-$m$ arms.}
\For{$s = 1, 2, \dotsc $}{
$B_s \gets \text{\reduct}\left(I, m, \frac{1}{25}, s, \frac{6 T}{\pi^2 s^2}\right)$\;
}
let $s^{\star}$ be the largest $s$ such that $B_s \neq \bot$\;
\If{there is no such $s$}{\KwRet{an arbitrary set of $m$ arms}\;}
\Else{\KwRet{$B_{s^{\star}}$}.}
\end{algorithm}

The following lemma summarizes the property of Algorithm~\ref{alg:reduction-general}.  

\begin{lemma}
\label{lemma:reduction-general}
For any $I, m$ and $T$, the collaborative algorithm \text{\reductG}$(I, m, T)$ returns a super set of $\Top_m$ of size $O(m)$ with probability at least 
$$
1 - 2 n \cdot \exp\left(-\Omega\left(\frac{KT}{H^{\langle m\rangle} \log^6(TK) \log^2(TK / H^{\langle m\rangle}) \log{n}}\right)\right).
$$
%Moreover, \reductG\ finishes in $O(\log\log m + \log K)$ rounds.
\end{lemma}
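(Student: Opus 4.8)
The plan is to reuse the doubling-and-select template from the proof of Theorem~\ref{thm:ub-general}, with \reduct\ (Lemma~\ref{lem:reduction}) now playing the combined role of \colM\ and \verM. The geometric time allocation $T_s \triangleq 6T/(\pi^2 s^2)$ obeys $\sum_{s\ge 1} T_s = T$ because $\sum_{s \ge 1} s^{-2} = \pi^2/6$, so all the calls $B_s = \text{\reduct}(I, m, 1/25, s, T_s)$ fit within the total budget $T$; I record this first so the time constraint is never in question. Since each $B_s$ inherits the $O(\log K)$ round complexity of $\A$ and all the calls parallelize, the round bound is immediate and the real content is the probability estimate.

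First I identify the critical guess. Let $\Phi \triangleq c_R \cdot \frac{H^{\langle m \rangle}}{(1/25)^3 K}\log^6(TK)\log(25n) = \Theta\!\left(\frac{H^{\langle m\rangle}}{K}\log^6(TK)\log n\right)$, which (using $T_s \le T$ to replace $\log^6(T_s K)$ by $\log^6(TK)$) dominates the per-level threshold in the second part of Lemma~\ref{lem:reduction} specialized to $\delta = 1/25$, $\gamma = s$. Let $s'$ be the largest $s$ with $T_s \ge 4^s \Phi$; if none exists then $T = O(\Phi)$ and the claimed probability is vacuous, so assume $s'$ exists. Evaluating the defining inequality at $s'$ and its failure at $s'+1$ gives $4^{s'} = \Omega\!\left((T/\Phi)/(s'+1)^2\right)$, and since the same inequality forces $s' = O(\log(T/\Phi))$, we get $4^{s'} = \Omega\!\left((T/\Phi)/\log^2(T/\Phi)\right)$. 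Substituting $T/\Phi = \Theta\!\left(\frac{TK}{H^{\langle m\rangle}\log^6(TK)\log n}\right)$ and using $\log(T/\Phi) \le \log(TK/H^{\langle m\rangle})$ yields
\[
4^{s'} = \Omega\!\left(\frac{KT}{H^{\langle m\rangle}\log^6(TK)\log^2(TK/H^{\langle m\rangle})\log n}\right),
\]
which is exactly the exponent in the target bound.

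Next I combine the two guarantees of Lemma~\ref{lem:reduction} across levels. Because $T_s/4^s = 6T/(\pi^2 s^2 4^s)$ is strictly decreasing in $s$, the inequality $T_s \ge 4^s\Phi$ that holds at $s'$ also holds at every $s \le s'$; hence the second part of Lemma~\ref{lem:reduction} gives $\Pr[\Top_m \subseteq B_{s'}] \ge 1 - n\exp(-4^{s'})$, and in particular $B_{s'} \neq \perp$ with at least this probability, forcing $s^\star \ge s'$. For each level $s > s'$, where the budget may be inadequate, the first part of Lemma~\ref{lem:reduction}, which holds for \emph{every} time horizon, guarantees that $B_s$ is a superset of $\Top_m$ or equals $\perp$ with probability at least $1 - n\exp(-4^s)$. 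This unconditional ``no wrong answer'' guarantee is exactly what makes selecting the \emph{largest} non-$\perp$ index safe: it ensures $B_{s^\star}$ cannot be a wrong (non-superset) output even when $s^\star > s'$.

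Finally I take a union bound over the event that $B_{s'}$ is a superset and that no $B_s$ with $s>s'$ is a wrong answer:
\[
\Pr[\text{failure}] \le n\exp(-4^{s'}) + \sum_{s>s'} n\exp(-4^s) \le 2n\exp(-4^{s'}),
\]
where the last step uses the super-geometric growth of $4^s$. On the complementary event $s^\star \ge s'$ and $B_{s^\star}\neq\perp$ is a superset of $\Top_m$, whose size is $O(m)$ by the third part of Lemma~\ref{lem:reduction}; plugging in the lower bound on $4^{s'}$ gives the stated probability. I expect the only delicate point to be this bookkeeping around taking the \emph{largest} surviving guess --- one must invoke the budget-free half of Lemma~\ref{lem:reduction} at all $s>s'$ rather than any adequacy of $T_s$ there --- while the rest is the same doubling/union-bound accounting as in Theorem~\ref{thm:ub-general}.
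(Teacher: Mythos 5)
Your proposal is correct and follows essentially the same route as the paper's proof: identify the critical guess $s'$ (the largest $s$ whose budget $6T/(\pi^2 s^2)$ clears the threshold of the second part of Lemma~\ref{lem:reduction}), apply that part at $s'$ to guarantee $\Top_m \subseteq B_{s'}$, apply the budget-free first part at every $s > s'$ to rule out wrong non-$\perp$ answers, and union-bound to get $2n\exp(-4^{s'})$ with $4^{s'}$ matching the stated exponent. The extra bookkeeping you supply (monotonicity of $T_s/4^s$, the explicit derivation of $4^{s'}$) is consistent with, and slightly more detailed than, what the paper writes.
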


\begin{proof}
Let $s'$ be the largest $s$ such that $\frac{6 T}{\pi^2 s^2} \ge 4^{s} \cdot c_R \frac{H^{\langle m \rangle}}{\delta^3 K} \log^6{TK} \log\frac{n}{\delta}$ with $\delta = \frac{1}{25}$.  If there is no such $s$ then the lemma follows trivially. 
Otherwise we have
\begin{equation}
\label{eq:m-0}
4^{s'} = \Omega\left(\frac{KT}{H^{\langle m\rangle} \log^{6}(KT) \log^2{(TK / H^{\langle m\rangle})}\log{n} }\right).
\end{equation}

By Lemma~\ref{lem:reduction} we have
\begin{equation}
\label{eq:m-1}
    \Pr\left[\Top_m \subseteq B_{s'}\right] \ge 1 - n \cdot \exp(-4^{s'}).
\end{equation}
For any $s = s' + j\ (j \ge 1)$, by Lemma~\ref{lem:reduction} we have
\begin{equation}\label{eq:m-2}
\Pr\left[(\Top_m \subseteq B_s) \vee (B_s = \perp) \right] \ge 1 - n \cdot \exp\left(-4^s\right). 
\end{equation}
By (\ref{eq:m-1}) and (\ref{eq:m-2}) we have 
\begin{eqnarray*}
    \Pr[\Top_m \subseteq B_{s^{\star}}] &\ge& \Pr[\Top_m \subseteq B_{s'}] - \sum\limits_{s = s' + 1}^{\infty}\Pr[(\Top_m \not\subseteq B_s) \wedge (B_s \neq \perp)] \\
    &\ge& 1 - n \cdot \exp(-4^{s'}) - n \cdot \sum\limits_{s=s'+1}^\infty \exp(-4^{s}) \\
    &\ge& 1 - 2n \cdot \exp(-4^{s'}),
\end{eqnarray*}
which, combined with (\ref{eq:m-0}), concludes the lemma.  
\end{proof}

Now we are ready to prove Theorem~\ref{thm:ub-improve}.

\begin{proof}[Proof of Theorem~\ref{thm:ub-improve}]
We describe the algorithm in Algorithm~\ref{alg:improve}.  We first use Algorithm~\ref{alg:reduction-general} %\reductG\ 
to reduce the number of arms to $O(m)$, and then call Algorithm~\ref{alg:general} \colMG\ to compute the set of top-$m$ arms. 

The success probability of Theorem~\ref{thm:ub-improve} follows directly from Lemma~\ref{lemma:reduction-general} and Theorem~\ref{thm:ub-general}.  

The round complexity follows from Lemma~\ref{lem:best-arm}, Theorem~\ref{thm:ub-general}, and the ways Algorithm~\ref{alg:reduction-general}, Algorithm~\ref{alg:reduction}, and Algorithm~\ref{alg:subset-best-arm} are designed (they can be perfectly parallelized). Basically, the first term $O\left(\log \frac{\log m}{\log K}\right)$ comes from Algorithm~\ref{alg:reduction} in which we reduce the number of arms from $O(m)$ to $O(K^{10})$.  The second term $O(\log K)$ comes from two sources.  One is from Line~\ref{ln:top-1} of Algorithm~\ref{alg:improve} where we have used the best arm identification algorithm in \cite{TZZ19} (setting $R = \log K$), and the other is due to the run of \colN\ on the remaining $K^{10}$ arms (setting $R = \log n = \log K^{10}$).   The round complexities introduced by other steps are negligible.
\end{proof}

\begin{algorithm}[t]
\caption{\colMI$(I, m, T)$}
\label{alg:improve}
\KwIn{a set of $n$ arms $I$, parameter $m$, and time horizon $T$.}
\KwOut{the set of top-$m$ arms of $I$.}
$B \gets$ \text{\reductG}$(I, m, T / 2)$\;
\KwRet{\colMG$(B, m, T / 2)$.\label{ln:top-1}}
\end{algorithm}

We comment that we can make the the second term in the round complexity of Theorem~\ref{thm:ub-improve} to be an arbitrary number $R$, at the cost of slightly reducing the speedup. These can be accomplished by using the general round-speedup of Lemma~\ref{lem:TTZ19} and Lemma~\ref{lem:colN}, by setting round complexity to $R$ and $10 R$ respectively. On the other hand, the first term $O\left(\log \frac{\log m}{\log K}\right)$  remain the same even we target a $\tilde{\Theta}(\sqrt{K})$ speedup. We will show in Section~\ref{sec:lb} that this is inevitable. 

\begin{theorem}
\label{thm:ub-final}
Let $I$ be a set of $n$ arms, and $m \in \{1, \ldots, n - 1\}$. Let $T$ be a time horizon and $R$ be a parameter $(1 \le R \le \log K)$.
There exists a collaborative algorithm that computes the set of top-$m$ arms of $I$ with probability at least 
$$1 - n \cdot \exp\left(-\Omega\left(\frac{TK^{(R - 1)/R}}{H^{\langle m\rangle}\cdot (\log({H^{\langle m\rangle} K}) + R \log{n})\cdot \log\log{n} \cdot \log^2\left(TK/H^{\langle m\rangle}\right)}\right)\right)$$
using at most $T$ time steps and $O\left(\log\frac{\log m}{\log K} + R\right)$ rounds.
\end{theorem}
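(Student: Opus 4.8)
The plan is to re-run the three-stage construction of Sections~\ref{sec:general-T} and \ref{sec:ub-improve} essentially verbatim, replacing the two hard-wired logarithmic round budgets by the tunable parameter $R$. Only two subroutines carry a genuine round-vs-speedup tradeoff: the best-arm routine $\B$ of Lemma~\ref{lem:TTZ19}, which drives the reduction \reductG\ through Algorithm~\ref{alg:best-arm}, and the base-case routine \colN\ of Lemma~\ref{lem:colN}, applied to the residual $\le K^{10}$-arm instance inside \colMG. In the proof of Theorem~\ref{thm:ub-improve} these were set to $\log K$ and $\log n = 10\log K$; here we set them to $\Theta(R)$ and $10R$ respectively. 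Since $\B$ with $\Theta(R)$ rounds attains effective parallelism $K^{(R-1)/R}$ instead of $\Theta(K)$, every factor $K$ produced by an exploration step is downgraded to $K^{(R-1)/R}$, and this is the sole origin of the $K^{(R-1)/R}$ term in the success probability.

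I would first restate Lemma~\ref{lem:best-arm} for general $R$: the augmented routine $\A$ still satisfies Items~1--4, but with
\[
f(V,\delta,\eta) = C_f\cdot\frac{H^{\langle 1\rangle}(V)}{K^{(R-1)/R}}\cdot\log^3(\eta K)\cdot\log\frac{|V|}{\delta},
\]
since the $f$-threshold is inherited from the $\Theta(R)$-round guarantee of Lemma~\ref{lem:TTZ19}, whereas $g$ in \eqref{eq:h-2} is unchanged because the verification step \verM\ (Lemma~\ref{lem:verify-bad}) runs in $O(1)$ rounds and is embarrassingly parallel, keeping the full speedup $K$. Consequently $\beta = f/g$ grows from $\Theta(\log^3(TK))$ to $\Theta\!\left(K^{1/\Theta(R)}\log^3(TK)\right)$. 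The crucial observation is that the coupling at the heart of the reduction, Lemma~\ref{lem:coupling} and Claim~\ref{cla:connect}, uses $\beta$ only through the identity $\beta = f/g$ (which makes the two cases ``$f(U,\delta,T)\le T$'' and ``$T/\beta\le g(U,\delta)$'' exhaustive) together with the monotonicity of $f,g$ and Items~1--4; none of these are affected by the new value of $\beta$, so Lemma~\ref{lem:coupling}, and hence Lemma~\ref{lem:reduction} and Lemma~\ref{lemma:reduction-general}, go through with $K^{(R-1)/R}$ in place of $K$ in every time threshold, yielding a reduction to $O(m)$ arms that fails with probability at most $n\exp\!\big(-\Omega(\,\cdot\,)\big)$.

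On the residual instance I would invoke \colMG\ (Theorem~\ref{thm:ub-general}) with the base case \colN\ running in $10R$ rounds. This is the only modification in Stages~1--2: the partition-based routines \cenT\ and \cenB\ already attain full speedup in $O(1)$ rounds at each recursion level, and \colM's recursion still contracts the arm count as $n\mapsto n^{7/8}$, costing $O(\log\tfrac{\log m}{\log K})$ rounds on the $O(m)$-arm input. The role of the $10R$-round base case is to replace the $\log^2 n$ factor in $T_0$ (Eq.~\eqref{eq:T0})---which was the product of $O(\log n)$ communication rounds with a $\log(n/\delta)$ confidence factor---by $R\log n$; this produces exactly the $(\log(H^{\langle m\rangle}K) + R\log n)$ factor in the claimed exponent. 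Combining the reduction and the residual solve by a union bound, as in the proof of Theorem~\ref{thm:ub-improve}, gives the stated success probability, and the round complexity is $O(\log\tfrac{\log m}{\log K}) + O(R) = O(\log\tfrac{\log m}{\log K} + R)$.

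I expect the main obstacle to be purely in the reduction's bookkeeping: one must ensure that widening $\beta$ by the \emph{non}-polylogarithmic factor $K^{1/\Theta(R)}$ does not silently erode the speedup below $K^{(R-1)/R}$. The delicate point is Lemma~\ref{lem:subset-best-arm}, where the perturbation $\tau\in\{T,T/\beta\}$ forces the success threshold of $\A$ to be met already at the smaller horizon $T/\beta$, so the budget requirement picks up the factor $\beta$ on top of the $K^{(R-1)/R}$ speedup of $f$. This is exactly why $\B$ is run with a constant multiple of $R$ rounds rather than $R$ itself: choosing the constant so that $K^{1/\Theta(R)}$ is absorbed leaves a net speedup of $K^{(R-1)/R}$ while keeping the round count $O(R)$. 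The one thing that then needs checking is that \reduct's acceptance thresholds ($3/(4em)$ and $1/(16em)$) and \reductG's guessing schedule still certify a correct reduction after this rescaling---equivalently, that an arm of $\Top_m$ is still returned with frequency provably bounded away from that of any rejected arm; everything else is a mechanical substitution of $K^{(R-1)/R}$ for $K$ and $R$ for $\log K$.
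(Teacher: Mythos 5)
Your proposal is correct and follows essentially the same route as the paper, which proves this theorem by exactly the re-parameterization you describe: keep the construction of Theorems~\ref{thm:ub-general} and \ref{thm:ub-improve} and set the round budgets of the two tradeoff-bearing subroutines --- the best-arm routine of Lemma~\ref{lem:TTZ19} and \colN\ of Lemma~\ref{lem:colN} --- to $\Theta(R)$ and $10R$ respectively. The paper disposes of this in a one-sentence remark, whereas you additionally verify the one non-mechanical point (that $\beta=f/g$ acquiring the factor $K^{1/\Theta(R)}$ leaves the coupling of Lemma~\ref{lem:coupling} intact and costs only a constant factor in $R$); that check is sound and worth having.
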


If we want to present Theorem~\ref{thm:ub-final} in the form of round-speedup tradeoff, we have the following corollary. 

\begin{corollary}\label{cor:tradeoff}
There is a collaborative algorithm for the top-$m$ arms identification problem that achieves $\tilde{\Omega}(K^{(R - 1) / R})$ speedup using at most $O(\log \frac{\log m}{\log K} + R)$ rounds of communication.
\end{corollary}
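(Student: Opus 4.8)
The plan is to reprove Theorem~\ref{thm:ub-final} by rerunning the machinery behind Theorem~\ref{thm:ub-improve} (i.e.\ Algorithm~\ref{alg:improve}, which first calls \reductG\ to shrink the instance to $O(m)$ arms and then calls \colMG) with the two $\log K$-round subroutines replaced by their tunable-round versions: the best-arm routine $\B$ of Lemma~\ref{lem:TTZ19} run with $R$ rounds, and the base-case routine \colN\ of Lemma~\ref{lem:colN} run with $10R$ rounds. The only substantive change is that every occurrence of the full-speedup factor $K$ becomes $K^{(R-1)/R}$, and the number-of-phases factor in the base case changes from $\Theta(\log n)$ to $\Theta(R)$; everything else is bookkeeping.

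First I would re-derive Lemma~\ref{lem:best-arm} from Lemma~\ref{lem:TTZ19} with a general round budget $R$ instead of $R=\log K$. This yields the same algorithm $\A$ (now using $O(R)$ rounds) but with threshold function
$$f(V,\delta,\eta)=C_f\cdot\frac{H^{\langle 1\rangle}(V)}{K^{(R-1)/R}}\cdot\log^3(\eta K)\cdot\log\frac{|V|}{\delta},$$
and correspondingly $g(V,\delta)=C_g\cdot H^{\langle 1\rangle}(V)/K^{(R-1)/R}\cdot\log(|V|/\delta)$. The crucial observation is that the ratio $\beta=f/g=\frac{C_f}{C_g}\log^3(TK)$ is \emph{invariant} under this substitution, since the $K^{(R-1)/R}$ factor cancels. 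Consequently the perturbation trick $\tau\in\{T,T/\beta\}$ in Algorithm~\ref{alg:subset-best-arm} and the entire coupling argument of Lemma~\ref{lem:coupling} and Claim~\ref{cla:connect}---the most delicate part of the construction, which invokes only the abstract Items 1--4 of Lemma~\ref{lem:best-arm}---carry over verbatim. The remaining lemmas then propagate the substitution mechanically: the time threshold in Lemma~\ref{lem:subset-best-arm} gains a factor $K/K^{(R-1)/R}$, and the success exponents of Lemma~\ref{lem:reduction} and Lemma~\ref{lemma:reduction-general} acquire $K^{(R-1)/R}$ in place of $K$, so that \reductG\ returns a size-$O(m)$ superset of $\Top_m$ with probability $1-2n\exp(-\Omega(K^{(R-1)/R}T/(H^{\langle m\rangle}\log^6(TK)\log^2(TK/H^{\langle m\rangle})\log n)))$ using $O(\log\frac{\log m}{\log K}+R)$ rounds.

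Next I would re-derive Theorem~\ref{thm:ub-general} with general $R$. The basic algorithm \colM\ calls \colN\ on its base case of $K^{10}$ arms; replacing the $O(\log n)$-round version (Lemma~\ref{lem:centra-2}) by the $10R$-round version of Lemma~\ref{lem:colN} changes its speedup to $K^{(10R-1)/10R}\ge K^{(R-1)/R}$ and replaces one $\log n$ phase-count factor by $\Theta(R)$, so the $\log^2 n$ term in $T_0$ becomes $R\log n$ while the base case stays off the critical path. Combining this with the reduction via a union bound, exactly as in the proof of Theorem~\ref{thm:ub-improve}, gives success probability $1-n\exp(-\Omega(TK^{(R-1)/R}/(H^{\langle m\rangle}(\log(H^{\langle m\rangle}K)+R\log n)\log\log n\,\log^2(TK/H^{\langle m\rangle}))))$. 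For the round count, the first term $O(\log\frac{\log m}{\log K})$ is unchanged---it arises from the recursion shrinking $O(m)$ arms down to $K^{10}$ and is insensitive to the speedup target---while the second term is now $O(R+10R)=O(R)$. Corollary~\ref{cor:tradeoff} then follows by choosing $T=\tilde{\Theta}(H^{\langle m\rangle}/K^{(R-1)/R})$ (times the polylog factors) so that the exponent is $\Omega(1)$ and hence the speedup $T/H^{\langle m\rangle}=\tilde{\Theta}(K^{-(R-1)/R})$, i.e.\ $\tilde{\Omega}(K^{(R-1)/R})$.

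The main point to verify---rather than a genuine obstacle---is that the substitution $K\mapsto K^{(R-1)/R}$ truly localizes. Concretely, I must confirm that $\beta$ (and hence the perturbation and coupling machinery) is round-independent, that Lemma~\ref{lem:coupling} uses only the four abstract guarantees of $\A$ and never the specific speedup, and that the $10R$-round base-case \colN\ never dominates the $R$-round best-arm routine. Once these three checks are in place, the polylogarithmic bookkeeping (tracking how the $\log^2 n$ factor splits into $R\log n$) is routine.
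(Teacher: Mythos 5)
Your proposal follows exactly the route the paper itself takes: the paper obtains Corollary~\ref{cor:tradeoff} from Theorem~\ref{thm:ub-final}, which it in turn derives from Theorem~\ref{thm:ub-improve} by substituting the general round-speedup versions of Lemma~\ref{lem:TTZ19} (run with $R$ rounds) and Lemma~\ref{lem:colN} (run with $10R$ rounds), and then choosing $T = \tilde{\Theta}\left(H^{\langle m\rangle}/K^{(R-1)/R}\right)$ so that the exponent in the failure probability is constant. The one place where your write-up is slightly optimistic is the claim that $g$ \emph{automatically} acquires the $K^{(R-1)/R}$ factor so that $\beta = f/g$ is invariant: $g$ comes from the fully parallel verification step (Lemma~\ref{lem:verify-bad}) and natively scales as $H^{\langle 1 \rangle}(V)/K$, so keeping $\beta$ round-independent requires weakening the threshold in Line~\ref{ln:threshold} of \verM\ from $KT$ to $K^{(R-1)/R}T$ (or, alternatively, absorbing the resulting $K^{1/R}$ loss by doubling $R$) --- a cosmetic adjustment that does not affect the corollary.
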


\subsection{Auxiliary Algorithms}
\label{sec:ub-aux}

%\nicksays{move the whole section in appendix?}
\subsubsection{PAC Top-$m$ Arm Identification} 
\label{sec:PAC}

Let $\beta(u, t) \coloneqq \sqrt{\frac{1}{2 u}\log\left(\frac{5n t^4 }{4 \delta} \right)}$.  We first recall the \lucb\ algorithm from \cite{KTAS12}, which is described in Algorithm~\ref{alg:lucb}.

\begin{algorithm}[t]
\caption{\lucb$(I, m, \eps, \delta)$}
\label{alg:lucb}
\KwIn{a set of $n$ arms $I$, parameter $m$, parameters $\eps, \delta \in (0,1)$}
\KwOut{the set of $m$ arms such that each arm is $(\eps, m)$-top}
Pull each arm once; $\forall i, p_i \gets 1$\;
\For{$t = |I| + 1, |I| + 2, \dotsc$}{
Let $H$ be the set of $m$ arms with highest estimated mean $\hat{\theta}_i$, and $L \gets I \setminus H$\;
$h^{\star} \gets \arg\min_{i \in H}\{\hat{\theta}_i - \beta(p_i, t)\}$, and 
$l^{\star} \gets \arg\max_{i \in L}\{\hat{\theta}_i + \beta(p_i, t)\}$\;
\If{$(\hat{\theta}_{l^{\star}} + \beta(p_{l^{\star}}, t)) - (\hat{\theta}_{h^{\star}} - \beta(p_{h^{\star}}, t)) < \eps / 2$}{
    \KwRet{$H$\;}
}
pull $h^{\star}$ and $l^{\star}$, and increment $p_{l^{\star}}$ and $p_{h^{\star}}$.
}
\end{algorithm}

The following lemma summarizes the properties of Algorithm~\ref{alg:lucb}.

\begin{lemma}[\cite{KTAS12}]
\label{lem:lucb}
Algorithm~\ref{alg:lucb} \lucb$(I, m, \eps, \delta)$ returns the set of $(\eps, m)$-top arms with probability at least $(1 - \delta)$ using at most $O\left(H^{\langle m\rangle}_{\eps} \log \left(H^{\langle m \rangle}_{\eps} / \delta\right)\right)$ time steps. Moreover,
$$
\Pr[\forall{i \in I} : |\hat{\theta}_i - \theta_i| \le \max\{\Delta^{\langle m \rangle}_i / 4, \eps / 4\}] \ge 1 - \delta.
$$
\end{lemma}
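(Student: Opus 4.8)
The plan is to run the standard \lucb\ analysis around a single high-probability ``good event'' on which all empirical means are simultaneously close to their true means, and then read off correctness, the sample-complexity bound, and the per-arm estimation guarantee as consequences. Throughout, let $\hat{\theta}_i$ and $p_i$ denote the empirical mean and pull count of arm $i$ at a generic time step $t$.

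First I would set up the good event. By Hoeffding's inequality (Lemma~\ref{lem:chernoff}), for a fixed arm pulled $u$ times and evaluated at time $t$ we have $\Pr[|\hat{\theta}_i - \theta_i| > \beta(u,t)] \le 2\exp(-2u\,\beta(u,t)^2) = \frac{8\delta}{5nt^4}$, using $\beta(u,t) = \sqrt{\frac{1}{2u}\log(5nt^4/(4\delta))}$. Taking a union bound over all $n$ arms and all time steps $t \ge 1$ — here the $t^4$ inside $\beta$ is exactly what makes $\sum_t t^{-4}$ converge — I would define
\[
\E = \{\forall i \in I,\ \forall t : |\hat{\theta}_i - \theta_i| \le \beta(p_i, t)\},
\]
and conclude $\Pr[\E] \ge 1 - \delta$, the numerical constants in $\beta$ being chosen precisely so that this union bound totals at most $\delta$. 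Every remaining step is carried out deterministically on $\E$.

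Next I would prove correctness. On $\E$, at the terminating step the stopping rule $\mathrm{UCB}(l^\star) - \mathrm{LCB}(h^\star) < \eps/2$ together with $\mathrm{UCB}(l^\star) \ge \theta_{l^\star}$ and $\mathrm{LCB}(h^\star) \le \theta_{h^\star}$ gives $\theta_{l^\star} - \theta_{h^\star} < \eps/2$. To show every arm of the returned set $H$ is $(\eps, m)$-top, I argue by contradiction: if some $i \in H$ had $\theta_i < \theta_{[m]} - \eps$, then, since $|H| = m$, there is an arm $j \in L$ with $\theta_j \ge \theta_{[m]} > \theta_i + \eps$, so $\theta_{l^\star} \ge \theta_j > \theta_i + \eps$; because $h^\star$ minimizes the lower bound over $H \ni i$ we have $\mathrm{LCB}(h^\star) \le \mathrm{LCB}(i) \le \theta_i$ on $\E$, whence $\mathrm{UCB}(l^\star) - \mathrm{LCB}(h^\star) \ge \theta_{l^\star} - \theta_i > \eps$, contradicting termination.

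The main obstacle is the sample-complexity bound, the genuinely technical part. The key is a per-arm pull bound: on $\E$, whenever arm $i$ is selected as $h^\star$ or $l^\star$ at time $t$ (so the algorithm has not yet stopped), the non-termination inequality combined with the confidence bounds forces $\beta(p_i,t) \gtrsim \max\{\Delta^{\langle m\rangle}_i, \eps\}$; inverting $\beta$ yields $p_i = O(\max\{\Delta^{\langle m\rangle}_i, \eps\}^{-2}\log(\cdot))$. I would establish this by the usual case analysis on whether $h^\star$ and $l^\star$ fall on the correct side of the $\Top_m$ boundary. Summing over $i$ gives total pulls $O(H^{\langle m\rangle}_{\eps}\log(H^{\langle m\rangle}_{\eps}/\delta))$; the one delicate point is that $\beta$ carries a $\log t$ factor, so I must bound the terminating time $t^\star$ by the total number of pulls, which is itself $O(H^{\langle m\rangle}_{\eps}\log(\cdot))$, and check that the resulting self-referential estimate collapses to $\log(H^{\langle m\rangle}_{\eps}/\delta)$. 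Finally, for the ``moreover'' statement I would note that the same selection mechanism forces a matching lower bound on each arm's pulls: an underpulled arm has a large radius $\beta(p_i,t)$, hence the smallest $\mathrm{LCB}$ or largest $\mathrm{UCB}$, hence is repeatedly chosen as $h^\star$ or $l^\star$, so every arm keeps being pulled until $\beta(p_i,t) \le \max\{\Delta^{\langle m\rangle}_i/4, \eps/4\}$. On $\E$ this immediately gives $|\hat{\theta}_i - \theta_i| \le \beta(p_i,t) \le \max\{\Delta^{\langle m\rangle}_i/4, \eps/4\}$ for every $i$ at termination; the only care needed is tracking the constants hidden in $\beta$ so that the stopping radius lands below the $\tfrac14$-scaled gap.
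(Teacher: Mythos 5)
The paper never proves this lemma: it is imported wholesale from \cite{KTAS12} (and Lemma~\ref{lem:cenT} is then asserted to follow ``in essentially the same way''), so there is no in-paper argument to compare against and your proposal must stand on its own. For the first claim it essentially does: the good event, the union bound exploiting the $t^4$ inside $\beta$, the termination-based correctness argument, and the per-arm pull bound via ``selected but not stopped $\Rightarrow \beta(p_i,t)\gtrsim\max\{\Delta^{\langle m\rangle}_i,\eps\}$'' are exactly the standard \lucb\ analysis. One small omission: since $p_i$ at time $t$ is adaptive, the union bound defining $\E$ must also range over the possible pull counts $u\le t$, not only over arms and times; this costs one power of $t$ and is precisely why the $t^4$ is there.

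The ``moreover'' clause is where the argument has a genuine gap. You claim that on $\E$ every arm keeps being selected as $h^\star$ or $l^\star$ until $\beta(p_i,t)\le\max\{\Delta^{\langle m\rangle}_i/4,\eps/4\}$, on the grounds that an underpulled arm has the smallest lower confidence bound or the largest upper confidence bound. But selection depends on $\hat{\theta}_i\mp\beta(p_i,t)$, not on $\beta(p_i,t)$ alone: an arm $i\in H$ whose empirical mean happens to sit near the top of its confidence band has $\hat{\theta}_i-\beta(p_i,t)$ close to $\theta_i$, which can exceed the corresponding quantity for a well-pulled arm in $H$ with a smaller true mean; such an arm is never chosen as $h^\star$ again, the stopping rule can fire while $\beta(p_i,t)$ is still large, and on $\E$ its deviation $|\hat{\theta}_i-\theta_i|$ can then be as large as $\beta(p_i,t)\gg\max\{\Delta^{\langle m\rangle}_i/4,\eps/4\}$. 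In other words, the ``moreover'' event is strictly smaller than $\E$ and is not implied by it; establishing it requires either a separate high-probability \emph{lower} bound on the terminal pull count of every arm, or a direct concentration argument for the deviations at the scale $\max\{\Delta^{\langle m\rangle}_i/4,\eps/4\}$, neither of which appears in the proposal.
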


Note that Algorithm~\ref{alg:lucb} is used to minimize the time (number of pulls) given PAC parameters $\eps$ and $\delta$.  In our task we need to (implicitly) minimize $\eps$ given $T$ and $\delta$. For this purpose we need to modify Algorithm~\ref{alg:lucb} a bit; we described our new algorithm in Algorithm~\ref{alg:cenT}.

\begin{algorithm}[t]
\caption{\cenT$(I, m, T, \delta)$}
\label{alg:cenT}
\KwIn{a set of $n$ arms $I$, parameter $m$, time horizon $T$, and parameter $\delta$.}
\KwOut{the set of $m$ arms such that each arm is $(\eps, m)$-top.}
Pull each arm once; $\forall i, p_i \gets 1$ and set set $t \gets n$\;

\While{$t \le T - 2$}{
let $H$ be the set of $m$ arms with highest estimated mean $\hat{\theta}_i$, and $L \gets I \setminus H$\;
$h^{\star} \gets \arg\min_{i \in H}\{\hat{\theta}_i - \beta(p_i, t)\}$, and 
$l^{\star} \gets \arg\max_{i \in L}\{\hat{\theta}_i + \beta(p_i, t)\}$\;
pull $h^{\star}$ and $l^{\star}$, update $\hat{\theta}_{l^{\star}}$ and $\hat{\theta}_{l^{\star}}$, set $t \gets t + 2$, and increment $p_{l^{\star}}$ and $p_{h^{\star}}$\;
}
\KwRet{H}.
\end{algorithm}

The following lemma summarizes the properties of Algorithm~\ref{alg:cenT}.  It can be proved in essentially the same way as that for Lemma~\ref{lem:lucb} in \cite{KTAS12}.

\begin{lemma}
\label{lem:cenT}

If $T \geq c_1 H^{\langle m\rangle}_{\eps} \log\left(H^{\langle m \rangle}_{\eps} / \delta\right)$ for a large enough constant $c_1$, then with probability at least $(1 - \delta)$, \cenT$(I, m, T, \delta)$ returns a set of $(\eps, m)$-top arms.  Moreover, 
$$
\Pr[\forall{i \in I} : |\hat{\theta}_i - \theta_i| \le \max\{\Delta^{\langle m \rangle}_i / 4, \eps / 4\}] \ge 1 - \delta.
$$
\end{lemma}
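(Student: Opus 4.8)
The plan is to mirror the analysis of \lucb\ in \cite{KTAS12} (i.e.\ the proof of Lemma~\ref{lem:lucb}), adapting it from the fixed-confidence regime, where one stops as soon as a gap criterion is met, to the fixed-budget regime, where \cenT\ simply runs for $T$ steps and reports the current top-$m$ set. Everything is conditioned on the single confidence event
$$
\mathcal{E} \triangleq \left\{\forall i \in I,\ \forall t : \abs{\hat\theta_i - \theta_i} \le \beta(p_i, t)\right\},
$$
where $\hat\theta_i$ and $p_i$ are the running empirical mean and pull count of arm $i$ at time $t$. First I would show $\Pr[\mathcal{E}] \ge 1-\delta$: for a fixed arm with $p_i = u$ pulls, Hoeffding's inequality gives $\Pr[\abs{\hat\theta_i - \theta_i} > \beta(u,t)] \le 2\exp(-2u\,\beta(u,t)^2) = \tfrac{8\delta}{5nt^4}$, and since $p_i \le t$ always, a union bound over all arms and all time steps $t \ge 1$ converges because $\sum_t t^{-4}$ is finite; the constant $5/4$ inside $\beta(u,t) = \sqrt{\tfrac{1}{2u}\log(\tfrac{5nt^4}{4\delta})}$ is calibrated precisely so that the total failure probability is at most $\delta$.

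The heart of the argument is a per-arm pull bound under $\mathcal{E}$. Call arm $i$ \emph{unresolved} at time $t$ if $\beta(p_i,t) > \tfrac14 \max\{\Delta^{\langle m\rangle}_i, \eps\}$. The key claim, imported from the \lucb\ analysis, is that whenever the implicit stopping test $(\hat\theta_{l^\star}+\beta(p_{l^\star},t)) - (\hat\theta_{h^\star}-\beta(p_{h^\star},t)) < \eps/2$ fails, at least one of the two arms pulled in that step, $h^\star$ or $l^\star$, is unresolved. Since a pull of an unresolved arm strictly shrinks its $\beta$, and arm $i$ can remain unresolved for at most $O\!\left(\max\{\Delta_i^{\langle m\rangle},\eps\}^{-2}\log(nT/\delta)\right)$ pulls before $\beta(p_i,t) \le \tfrac14\max\{\Delta_i^{\langle m\rangle},\eps\}$, the number of steps in which an unresolved arm is pulled is at most $\sum_{i\in I} O\!\left(\max\{\Delta_i^{\langle m\rangle},\eps\}^{-2}\log(nT/\delta)\right) = O\!\left(H^{\langle m\rangle}_\eps \log(H^{\langle m\rangle}_\eps/\delta)\right)$, where I absorb the $\log n$ and $\log T$ factors using $H^{\langle m\rangle}_\eps \ge n$ (each summand is at least $1$) and the fact that $T = \mathrm{poly}(H^{\langle m\rangle}_\eps, n, 1/\delta)$ in the relevant regime.

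With this counting both conclusions follow. Taking $c_1$ large enough that $T \ge c_1 H^{\langle m\rangle}_\eps\log(H^{\langle m\rangle}_\eps/\delta)$ exceeds the above budget, there must be a time $t \le T$ at which no unresolved arm remains; as subsequent pulls only shrink confidence intervals, every arm stays resolved through time $T$, so $\beta(p_i,T) \le \tfrac14\max\{\Delta_i^{\langle m\rangle},\eps\}$ for all $i$, which on $\mathcal{E}$ gives $\abs{\hat\theta_i-\theta_i} \le \tfrac14\max\{\Delta_i^{\langle m\rangle},\eps\}$ — the ``Moreover'' statement. For correctness, once all arms are resolved the stopping test holds, and I would argue by contradiction: were some $i \in H$ to satisfy $\theta_i < \theta_{[m]} - \eps$, then a genuine top-$m$ arm $j$ would lie in $L$, and chaining $\hat\theta_i - \beta(p_i,t) \ge \hat\theta_{h^\star}-\beta(p_{h^\star},t)$, $\hat\theta_j + \beta(p_j,t) \le \hat\theta_{l^\star}+\beta(p_{l^\star},t)$, the stopping test, and the bounds of $\mathcal{E}$ would yield $\theta_j - \theta_i < \eps/2$, contradicting $\theta_j - \theta_i > \eps$; hence every returned arm is $(\eps,m)$-top. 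I expect the main obstacle to be the per-arm claim that failure of the stopping test forces an unresolved $h^\star$ or $l^\star$, which is exactly the crux of the \lucb\ sample-complexity proof and needs the careful case analysis of \cite{KTAS12}; the only genuinely new ingredient is the fixed-horizon reinterpretation, namely that once the unresolved-pull budget is spent the configuration is permanently resolved, so reading off the terminal $H$ and empirical means delivers both guarantees.
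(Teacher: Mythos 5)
Your proposal is correct and follows essentially the same route as the paper, which gives no written argument beyond stating that Lemma~\ref{lem:cenT} ``can be proved in essentially the same way as that for Lemma~\ref{lem:lucb} in \cite{KTAS12}''; your write-up is a faithful expansion of exactly that LUCB-style analysis (confidence event, unresolved-arm counting, fixed-horizon reinterpretation). The only loose ends --- the mild self-referential dependence on $\log T$ inside $\beta(\cdot,t)$ and the claim that resolved arms stay resolved as $t$ grows --- are handled by the same budget-counting argument and do not affect correctness.
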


The algorithm \cenB\ is almost identical to \cenT: we can just follow \cenT, but replace all the sample values $x$ with $1 - x$.  

\begin{lemma}
\label{lem:cenB}

If $T \geq c_1 H^{\langle n - m\rangle}_{\eps} \log\left(H^{\langle n - m \rangle}_{\eps} / \delta\right)$ for a large enough constant $c_1$, then with probability at least $(1 - \delta)$, \cenB$(I, m, T, \delta)$ returns a set of $(\eps, m)$-bottom arms and
$$
\Pr[\forall{i \in I} : |\hat{\theta}_i - \theta_i| \le \max\{\Delta^{\langle n - m \rangle}_i / 4, \eps / 4\}] \ge 1 - \delta.
$$
\end{lemma}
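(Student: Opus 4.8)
The plan is to prove Lemma~\ref{lem:cenB} by a simple reflection reduction to Lemma~\ref{lem:cenT}. Since \cenB\ is defined to run exactly like \cenT\ after replacing every observed sample value $x$ with $1-x$, running \cenB$(I, m, T, \delta)$ is identical to running \cenT$(I', m, T, \delta)$ on the reflected instance $I'$ in which the $i$-th arm has mean $\theta'_i = 1 - \theta_i$ (and still has support in $(0,1)$). First I would set up this correspondence precisely and observe that the reflection reverses the order of the means: the $j$-th largest mean in $I'$ equals $1$ minus the $j$-th smallest mean in $I$, i.e.\ $\theta_{[j]}(I') = 1 - \theta_{[n+1-j]}(I)$. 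In particular the top-$m$ arms of $I'$ are exactly the bottom-$m$ arms of $I$, and an arm is $(\eps, m)$-top in $I'$ if and only if it is $(\eps, m)$-bottom in $I$.

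The crucial step is to show that the reflection preserves the relevant gaps, namely $\Delta_i^{\langle m\rangle}(I') = \Delta_i^{\langle n-m\rangle}(I)$ for every arm $i$, from which $H^{\langle m\rangle}_{\eps}(I') = H^{\langle n-m\rangle}_{\eps}(I)$ follows immediately. I would verify this by a short case analysis using $\theta_{[m]}(I') = 1 - \theta_{[n+1-m]}(I)$ and $\theta_{[m+1]}(I') = 1 - \theta_{[n-m]}(I)$: when $\theta_i \le \theta_{[n+1-m]}(I)$ (arm $i$ is among the bottom $m$ of $I$) both quantities equal $\theta_{[n-m]}(I) - \theta_i$, and when $\theta_i \ge \theta_{[n-m]}(I)$ both equal $\theta_i - \theta_{[n+1-m]}(I)$. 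Granting this identity, the hypothesis $T \ge c_1 H^{\langle n-m\rangle}_{\eps}(I)\log(H^{\langle n-m\rangle}_{\eps}(I)/\delta)$ is exactly the hypothesis of Lemma~\ref{lem:cenT} applied to $I'$, so that lemma yields that with probability at least $1-\delta$, \cenT$(I', m, T, \delta)$ returns a set of $(\eps, m)$-top arms of $I'$; by the correspondence above this is a set of $(\eps, m)$-bottom arms of $I$, giving the first claim.

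For the concentration claim I would translate the mean estimates back. Let $\hat{\theta}_i$ be the empirical mean of the original samples of arm $i$ and $\hat{\theta}'_i$ the empirical mean of the reflected samples; since each reflected sample is $1$ minus the corresponding original sample, $\hat{\theta}'_i = 1 - \hat{\theta}_i$, so $\abs{\hat{\theta}'_i - \theta'_i} = \abs{\hat{\theta}_i - \theta_i}$. Combining this with the gap identity $\Delta_i^{\langle m\rangle}(I') = \Delta_i^{\langle n-m\rangle}(I)$ and the concentration guarantee of Lemma~\ref{lem:cenT} for $I'$ gives $\Pr[\forall i : \abs{\hat{\theta}_i - \theta_i} \le \max\{\Delta_i^{\langle n-m\rangle}(I)/4, \eps/4\}] \ge 1-\delta$, which is the second claim. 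The only nonroutine part is the gap-preservation identity; the index bookkeeping there (matching pivot $m$ in $I'$ against pivot $n-m$ in $I$) is where I would be most careful, but it reduces to the two cases above.
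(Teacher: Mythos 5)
Your proposal is correct and matches the paper's (largely implicit) argument: the paper gives no separate proof of Lemma~\ref{lem:cenB}, stating only that \cenB\ is \cenT\ run on samples replaced by $1-x$, which is exactly the reflection reduction you carry out. Your explicit verification of the gap identity $\Delta_i^{\langle m\rangle}(I') = \Delta_i^{\langle n-m\rangle}(I)$ and of the correspondence between $(\eps,m)$-top in $I'$ and $(\eps,m)$-bottom in $I$ is accurate and fills in the details the paper omits.
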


\subsubsection{A Simple Collaborative Algorithm for Top-$m$ Identification}
\label{sec:simple-alg}

The \colN\ algorithm, described in Algorithm~\ref{alg:colN}, is a slightly modified version of the {\em successive accepts and rejects} (SAR) algorithm in \cite{ABM10}. The goal of the modification is to achieve a small number of rounds of communication in the collaborative setting.

\begin{algorithm}[t]
\caption{\colN$(I, m, T)$}
\label{alg:colN}
\KwIn{a set of arms $I$, parameter $m$, and time horizon $T$.}
\KwOut{the set of top-$m$ arms.}
$I_0 \gets I$, $m_0 \gets m$, $\Acc_1 \gets \emptyset$, $\Theta \gets \emptyset$\; 
$T_0 \gets 0$, $T_r \gets \left\lfloor\frac{n^{r / R} T}{n^{1+1/R} (R + 1)}\right\rfloor$ for $r = 1, \ldots, R + 1$\;
$n_r \gets \left\lfloor\frac{n}{n^{r/R}}\right\rfloor$ for $r = 0, \ldots, R + 1$\;
\For{
$r = 0, \dotsc, R$
}{
each agent pulls each arm $i \in I_r$ for $(T_{r+1} - T_{r})$ times\; let $\hat{\theta}^{(r)}_i$ for $i \in I_r$ be the aggregated mean of $i$-th arm  after $K T_{r}$ pulls\;
let $\sigma_r : \{1, \dotsc, |I_r|\} \to I_r$ be the bijection such that $\hat{\theta}_{\sigma_r(1)}^{(r)} \ge \hat{\theta}_{\sigma_r(2)}^{(r)} \ge \dotsc \ge \hat{\theta}_{\sigma_r(|I_r|)}^{(r)}$\;
for $i \in I_r$ define empirical gaps $\Delta^{(r)}_i = 
\begin{cases}\hat{\theta}_i^{(r)} - \hat{\theta}_{\sigma_r(m_r + 1)}^{(r)}, & \text{if $\hat{\theta}_i^{(r)} \ge \hat{\theta}_{\sigma_r(m_r)}^{(r)}$,} \\
\hat{\theta}_{\sigma_r(m_r)}^{(r)} - \hat{\theta}_i^{(r)}, & \text{if $\hat{\theta}_i^{(r)} \le \hat{\theta}_{\sigma_r(m_r + 1)}^{(r)}$}
\end{cases}\,;$ \\
let $E_r$ be the set of $(n_r - n_{r+1})$ arms from $I_r$ with the largest gaps $\Delta^{(r)}_i$\;
$\Acc_{r + 1}\gets \Acc_r \cup \left\{i \in E_r \mid \hat{\theta}_i^{(r)} \ge \hat{\theta}_{\sigma_r(m_r)}^{(r)}\right\}$\;
for $i \in E_r$ add $\hat{\theta}^{(r)}_i$ to $\Theta$\;
set $I_{r + 1} \gets I_r \setminus E_r$ and $m_{r + 1} \gets m - \abs{\Acc_{r + 1}}$\;
}
\KwRet{$(\Acc_{R+1}, \Theta)$}.
\end{algorithm}

\begin{lemma}
\label{lem:colN}
For any fixed $R$, Algorithm~\ref{alg:colN} \colN$(I, m, T)$ uses at most $T$ time steps and $R + 1$ rounds of communication, and returns a top-$m$ certificate $(S, \Theta)$ with probability at least
$$1 - 2n (R + 1) \cdot \exp\left(-\frac{KT}{256  \cdot n^{1/R} H^{\langle m \rangle}\cdot (R + 1)\log 2 n }\right).$$
\end{lemma}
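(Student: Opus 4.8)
The plan is to treat the three assertions — time, rounds, and success probability — separately, reusing the SAR elimination analysis of \cite{ABM10,KTAS12} and adding only the bookkeeping needed for the \emph{batched} eliminations and the fixed polynomial allocation $\{T_r\},\{n_r\}$. The round count is immediate: the loop runs over $r=0,\dots,R$ and the agents communicate only once per iteration (to aggregate the empirical means $\hat\theta_i^{(r)}$), giving $R+1$ rounds. For the running time I would bound the per-agent work one round at a time: in round $r$ each agent pulls each of the $n_r$ surviving arms $(T_{r+1}-T_r)$ times, so its work is $n_r(T_{r+1}-T_r)$; plugging in $n_r\le n^{(R-r)/R}$ and $T_{r+1}-T_r\le \frac{T}{(R+1)n^{1+1/R}}\,n^{(r+1)/R}$ makes every term collapse to at most $\frac{T}{R+1}$, and the $R+1$ terms sum to at most $T$.

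The core is correctness. First I would sort the arms by decreasing gap $\Delta_{(1)}\ge\dots\ge\Delta_{(n)}$ and record the elementary inequality
\begin{equation*}
\Delta_{(k)}^{-2}\le \frac{H^{\langle m\rangle}}{\,n-k+1\,},
\end{equation*}
valid because the $n-k+1$ arms of gap-rank at least $k$ each contribute at least $\Delta_{(k)}^{-2}$ to $H^{\langle m\rangle}$. The choice $n_r=\lfloor n^{(R-r)/R}\rfloor$ means round $r$ is meant to peel off the arms of gap-rank $n-n_r+1,\dots,n-n_{r+1}$, the hardest of which has gap $\tau_r:=\Delta_{(n-n_{r+1})}$. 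I then define the good event $\xi=\bigcap_{r=0}^{R}\xi_r$, where $\xi_r$ requires $|\hat\theta_i^{(r)}-\theta_i|<\tau_r/4$ for every arm $i$ active in round $r$, and prove, on $\xi$ by induction on $r$, that the arms eliminated in round $r$ are exactly the intended large-gap arms (up to harmless swaps of near-equal gaps) and are classified correctly — accepted iff in $\Top_m$. The certificate requirement then comes for free: an arm removed in round $r$ has $\Delta_i^{\langle m\rangle}\ge\tau_r$, so $\xi_r$ gives $|\hat\theta_i^{(r)}-\theta_i|<\tau_r/4\le\Delta_i^{\langle m\rangle}/4$, which is precisely the condition for $(\Acc_{R+1},\Theta)$ to be a top-$m$ certificate.

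To bound $\Pr[\neg\xi]$ I would note that an arm active in round $r$ has been pulled $KT_{r+1}$ times in total, so Chernoff--Hoeffding gives $\Pr[|\hat\theta_i^{(r)}-\theta_i|\ge\tau_r/4]\le 2\exp(-KT_{r+1}\tau_r^2/8)$. Substituting $T_{r+1}\ge\frac12\cdot\frac{T}{(R+1)n^{(R-r)/R}}$ together with $\tau_r^2\ge (n_{r+1}+1)/H^{\langle m\rangle}\ge n^{(R-r-1)/R}/H^{\langle m\rangle}$ (from the displayed inequality), the exponent becomes at least $\frac{KT}{16(R+1)n^{1/R}H^{\langle m\rangle}}$, which comfortably exceeds $\frac{KT}{256(R+1)n^{1/R}H^{\langle m\rangle}\log 2n}$. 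A union bound over the $\le n$ arms and $R+1$ rounds then yields the claimed failure probability $2n(R+1)\exp\!\big(-\frac{KT}{256\,n^{1/R}H^{\langle m\rangle}(R+1)\log 2n}\big)$; the generous $\log 2n$ is exactly the slack I would use to absorb the $\lfloor\cdot\rfloor$ losses in $T_r,n_r$ and to justify assuming the allocation is non-degenerate ($T_{r+1}\ge 1$), the complementary small-budget case lying outside the non-trivial regime of the bound.

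The hard part will be the inductive correctness claim on $\xi$ — the batched analogue of the ABM elimination lemma. The subtlety is that the empirical gap $\Delta_i^{(r)}$ of a candidate for removal is measured against the empirical boundary $\hat\theta_{\sigma_r(m_r)}^{(r)},\hat\theta_{\sigma_r(m_r+1)}^{(r)}$, which is formed from the still-active near-boundary arms whose true gaps are far below $\tau_r$ and whose empirical order may be scrambled in round $r$. The resolution, following \cite{ABM10}, is a sandwiching argument: on $\xi_r$ every large-gap top arm has empirical mean $>\theta_{[m+1]}+3\tau_r/4$ and every large-gap bottom arm has empirical mean $<\theta_{[m]}-3\tau_r/4$, and since $\tau_r\ge\theta_{[m]}-\theta_{[m+1]}$ these two thresholds cross, forcing every large-gap top arm strictly above every large-gap bottom arm; this pins down their classification and keeps their empirical gaps $\ge\tau_r/2$ regardless of where the scrambled boundary arms land. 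I would also verify the benign-swap observation — retaining an arm of gap $\approx\tau_r$ one round too long only makes it easier to classify at round $r+1$, where $\tau_{r+1}<\tau_r$ — so that the induction never needs the surviving set to be \emph{exactly} the smallest-gap arms, and finally dispatch the degenerate regime where the stated bound is already vacuous.
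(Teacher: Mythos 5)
Your proposal is correct and follows essentially the same route as the paper's proof: the same time/round accounting, the same good event requiring round-$r$ empirical means to be accurate to within a quarter of the gap of the $n_{r+1}$-th ranked arm (by sorted gap), the same Chernoff-plus-union bound using the sorted-gap-versus-$H^{\langle m\rangle}$ inequality from \cite{ABM10}, with $\log 2n$ absorbing the slack. The only difference is that you spell out the inductive elimination/certificate argument on the good event, which the paper declares "straightforward" and defers to \cite{ABM10}.
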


\begin{proof}
The $R + 1$ round complexity is clear from the description of the algorithm. The running time can be upper bounded by
$$
\sum\limits_{r=0}^{R} n_r \cdot T_{r + 1} \le \sum_{r=0}^{R}\frac{n}{n^{r/R}} \cdot \frac{n^{r/R}T}{n(R + 1)} = T.
$$

We next bound the error probability.  The argument is very similar to the one in \cite{ABM10}, and we include it here for completeness.

Let $\pi: \{1, \dotsc, n\} \to I$ be the bijection such that 
$$
\Delta^{\langle m \rangle}_{\pi(1)} \le \Delta^{\langle m\rangle}_{\pi(2)} \le \dotsc \le \Delta^{\langle m\rangle}_{\pi(n)}\,.
$$
Consider the event 
$$
\E_2 : \left\{\forall i \in I, \forall r = 0, \dotsc, R: 
|\hat{\theta}^{(r)}_i - \theta_i| \le \Delta^{\langle m\rangle}_{\pi(n_{r+1} )} / 8
\right\}.
$$
We have
\begin{eqnarray}
\Pr\left[\bar{\E}_2\right] &\le& \sum_{i \in I}\sum_{r = 0}^{R}\Pr\left[|\hat{\theta}^{(r)}_i - \theta_i| > \Delta^{\langle m\rangle}_{\pi(n_{r+1})} / 8\right] \label{eq:o-1} \\
&\le& \sum_{i \in I}\sum_{r = 0}^{R} 2\exp\left(-KT_{r+1} \cdot \left(\Delta^{\langle m \rangle}_{\pi(n_{r+1} )}/8\right)^2\right) \label{eq:o-2} \\
&\le & 2n (R + 1) \cdot \exp\left(-\frac{K T}{256 \cdot n^{1/R} H^{\langle m\rangle}\cdot(R + 1)\log 2n}\right), \label{eq:o-3}
\end{eqnarray}
where (\ref{eq:o-1}) $\to$ (\ref{eq:o-2}) is due to Chernoff-Hoeffding, and (\ref{eq:o-2}) $\to$ (\ref{eq:o-3}) we have used an inequality from \cite{ABM10}:
$$\max_{i \in I} \left\{{i}/{\left(\Delta^{\langle  m\rangle}_{\pi(i)}\right)^{2}}\right\} \le \log 2n \cdot H^{\langle m \rangle}.$$

Once $\E_2$ holds, the proof for that fact that the returned pair $(S, \Theta)$ is a top-$m$ certificate is straightforward.
\end{proof}

\section{Lower Bounds for the Fixed-Time Case}
\label{sec:lb}

In this section, we prove the following lower bound theorems for the fixed-time setting.

\begin{theorem}\label{thm:lb-logK}
For every $K$, $m$ ($m \leq K$), and $\alpha$ ($\alpha \in [1, K^{0.1}]$), if a fixed-time collaborative algorithm $\mathcal{A}$ with $K$ agents returns the top-$m$ arms for every instance $J$  with probability at least $0.99$, when given time budget $\frac{\alpha}{17K} \cdot H^{\langle m \rangle}(J)  $, then there exists an instance $J'$  such that $\mathcal{A}$ uses $\Omega(\ln K / (\ln \ln K  + \ln \alpha))$ rounds of communication  in expectation given instance $J'$ and time budget $ \frac{\alpha}{17 K} \cdot H^{\langle m \rangle}(J')$. 

In other words, to achieve $(K / \alpha)$ speedup for identifying the top $m$ arms, the collaborative algorithm needs $\Omega(\ln K / (\ln \ln K  + \ln \alpha))$ communication rounds.
\end{theorem}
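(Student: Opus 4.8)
The plan is to reduce from the best-arm ($m=1$) round--speedup lower bound of \cite{TZZ19}, which asserts that any collaborative best-arm algorithm succeeding with probability $0.99$ under a time budget $\Theta\!\big(\tfrac{\alpha}{K}H^{\langle 1\rangle}(\cdot)\big)$ must, on some instance, use $\Omega(\ln K/(\ln\ln K + \ln\alpha))$ rounds in expectation. Given a hard best-arm instance $I_1$ witnessing that bound (which, since $m\le K$ and the $\cite{TZZ19}$ family has complexity growing polynomially in $K$, we may take to satisfy $H^{\langle 1\rangle}(I_1)\ge 100m$, padding with easy arms if necessary), I would form the top-$m$ instance $J' = D \cup I_1$, where $D$ consists of $m-1$ ``dominating'' dummy arms all of mean $\theta^\star$ chosen so that $\theta^\star - \theta_{[1]}(I_1) = \Theta(1)$ with $\theta^\star$ bounded away from $0$ and $1$. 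Then the dummies occupy ranks $1,\dots,m-1$ of $J'$, the $m$-th arm of $J'$ is the best arm of $I_1$, and the $(m{+}1)$-th arm of $J'$ is the runner-up of $I_1$.

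The key observation is that the top-$m$ gaps of $J'$, restricted to the arms of $I_1$, coincide exactly with the best-arm gaps of $I_1$. For the best arm $i$ of $I_1$ we have $\Delta^{\langle m\rangle}_i(J') = \theta_{[m]}(J') - \theta_{[m+1]}(J') = \theta_{[1]}(I_1) - \theta_{[2]}(I_1) = \Delta^{\langle 1\rangle}_i(I_1)$, and for every other arm $i$ of $I_1$ we have $\Delta^{\langle m\rangle}_i(J') = \theta_{[m]}(J') - \theta_i = \theta_{[1]}(I_1) - \theta_i = \Delta^{\langle 1\rangle}_i(I_1)$. Each dummy arm has gap $\Theta(1)$ and hence contributes $O(1)$ to the complexity, so
\[
H^{\langle m\rangle}(J') = H^{\langle 1\rangle}(I_1) + O(m) = \Theta\!\big(H^{\langle 1\rangle}(I_1)\big),
\]
using $H^{\langle 1\rangle}(I_1)\ge 100m$. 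Consequently the top-$m$ budget $\tfrac{\alpha}{17K}H^{\langle m\rangle}(J')$ lies within a constant factor of the best-arm budget $\tfrac{\alpha}{17K}H^{\langle 1\rangle}(I_1)$.

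To transfer the round lower bound, I would turn any top-$m$ algorithm $\mathcal{A}$ (succeeding with probability $0.99$ under budget $\tfrac{\alpha}{17K}H^{\langle m\rangle}(\cdot)$) into a best-arm algorithm $\mathcal{A}'$ on $I_1$: each agent of $\mathcal{A}'$ simulates the corresponding agent of $\mathcal{A}$ on the virtual instance $J'$, answering every pull of a dummy arm locally by sampling the known, fixed dummy distribution at no cost in real pulls, and communicating exactly when $\mathcal{A}$ does. When $\mathcal{A}$ outputs its top-$m$ set, $\mathcal{A}'$ returns the unique non-dummy element of that set. Correctness of $\mathcal{A}$ on $J'$ yields correctness of $\mathcal{A}'$ on $I_1$; the number of real pulls made by each agent of $\mathcal{A}'$ is at most that of $\mathcal{A}$, hence at most $\tfrac{\alpha}{17K}H^{\langle m\rangle}(J') \le \tfrac{\alpha'}{17K}H^{\langle 1\rangle}(I_1)$ for some $\alpha' = \Theta(\alpha)$; and $\mathcal{A}'$ uses exactly as many rounds as $\mathcal{A}$, so expected round counts are preserved. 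Invoking the $\cite{TZZ19}$ lower bound on $\mathcal{A}'$ produces an instance $I_1$ on which $\mathcal{A}'$, and therefore $\mathcal{A}$ on the associated $J'$, uses $\Omega(\ln K/(\ln\ln K + \ln\alpha')) = \Omega(\ln K/(\ln\ln K + \ln\alpha))$ rounds, which is the claim.

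The gap computation and bookkeeping are routine; the two points requiring care are (i) choosing the dummy means to dominate $I_1$, stay bounded away from $0$ and $1$, and add only an $O(m)$ term, so that $H^{\langle m\rangle}(J')$ and $H^{\langle 1\rangle}(I_1)$ agree up to a constant and the effective speedup $\alpha' = \Theta(\alpha)$ stays within the admissible range $[1,K^{0.1}]$ demanded by $\cite{TZZ19}$; and (ii) verifying that simulating dummy pulls is genuinely free of information about the best arm of $I_1$, which holds because the dummy distributions are fixed and independent of $I_1$. The hypothesis $m\le K$ is precisely what ensures $H^{\langle 1\rangle}(I_1)=\Omega(K)\ge\Omega(m)$ for the hard family, making the additive $O(m)$ negligible; this matching of complexities (and hence of budgets) is the main thing to get right, since everything else follows mechanically from the $m=1$ result.
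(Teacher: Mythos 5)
Your proposal is correct and follows essentially the same route as the paper: a reduction to the $m=1$ lower bound of \cite{TZZ19} by padding the hard best-arm instance with $m-1$ dominating dummy arms of constant mean, simulating dummy pulls for free, and using $m \le K \le H^{\langle 1\rangle}(I')$ to show $H^{\langle m\rangle}(J') = \Theta(H^{\langle 1\rangle}(I'))$ so the budgets and round counts transfer. The paper's version fixes the dummy mean at $3/4$ against a best arm of mean at most $1/2$ and absorbs the factor into the explicit constant $17$, but the argument is the same.
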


We will prove Theorem~\ref{thm:lb-logK} in Section~\ref{sec:lb-logK}. It is relatively easy and resembles the round complexity lower bound $\Omega(\ln K / (\ln \ln K  + \ln \alpha))$ for top arm identification in the fixed-time setting \cite{TZZ19}.

\begin{theorem}\label{thm:lb-loglogm}
For every large enough $K$ and $m$ such that $K \geq \Omega(\ln^4 m)$, if a fixed-time collaborative algorithm $\mathcal{A}$ with $K$ agents returns the top-$m$ arms for every instance $J$ with probability at least $0.99$, when given time budget $\frac{1}{\sqrt{K}} \cdot H^{\langle m \rangle}(J)$, then there exists an instance $J'$ such that $\mathcal{A}$ uses $\Omega(\ln (\ln m / \ln K))$ rounds of communication given instance $J'$ and time budget $\frac{1}{\sqrt{K}} \cdot H^{\langle m \rangle}(J')$.

In other words, even if one only aims at $\sqrt{K}$ speedup, the collaborative algorithm needs \[\Omega(\ln (\ln m / \ln K))\] rounds of communication.
\end{theorem}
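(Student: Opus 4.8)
The plan is to prove this round lower bound by reducing from a combinatorial ``learning the bias'' problem and then driving it through a recursively constructed family of hard instances. The first building block I would establish is an information-theoretic bound: given $n$ Bernoulli arms, each with mean either $(\mu+\eps)$ or $(\mu-\eps)$, any algorithm making $o(n\eps^{-2}/\log(n/\eps))$ pulls cannot determine the exact number of each type with probability $\omega(n^{-1/2})$. I would prove this by a coupling/divergence argument: with such a budget a typical arm is pulled $o(\eps^{-2}/\log(n/\eps))$ times, so the KL divergence between the transcript laws for a $(\mu+\eps)$-arm and a $(\mu-\eps)$-arm is too small to tell them apart, and the induced transcript distribution is nearly insensitive to the true count over a window of $\Theta(\sqrt n)$ values; a Le~Cam/Fano-type resolution argument then caps the success probability at $O(n^{-1/2})$.

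Next I would connect this to top-$m$ identification. I would fix $m=n/2$ and build an instance in which all but $\sqrt n$ arms are Bernoulli with mean $(\mu+\eps)$ (the ``top'' arms) or $(\mu-\eps)$ (the ``bottom'' arms), where the number of top arms $X$ is random and, conditioned on lying in $[\tfrac n2-\sqrt n,\tfrac n2+\sqrt n]$, the goal of identifying $\Top_m$ becomes equivalent to recovering the bulk split (which requires learning $X$) together with the top-$(\tfrac n2-X)$ arms of the set $M$ of the remaining $\sqrt n$ arms. Under a $\sqrt K$-speedup budget each agent pulls only $O(n\eps^{-2}/\sqrt K)=o(n\eps^{-2})$ times, so by the learning-the-bias bound no single agent can pin down $X$ before the first communication round; hence the split position must be communicated, costing at least one round. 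Here I would invoke the crucial calibration that the constructed instance has $H^{\langle m\rangle}=\tilde\Theta(n\eps^{-2})$, so that the given budget is exactly $o(n\eps^{-2})$ per agent.

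The heart of the argument is iterating this one-round bottleneck. I would make $M$ itself a disjoint union of $k$ blocks $I_1,\dots,I_k$, each an independent draw from a recursively defined hard distribution on $\sqrt n$ arms (with rescaled gap parameters $\eps_\ell$ at recursion level $\ell$), and restrict the admissible values of $(\tfrac n2 - X)$ to half-integer multiples of the block size. This forces the residual sub-task, once $X$ is revealed, to be exactly ``identify the top half of $I_\xi$'' for the block index $\xi$ determined by $X$ — precisely the inductive instance on $\sqrt n$ arms. I would then run an induction on the recursion depth: at each level the algorithm must spend a round to learn the active index (because $\xi$ is hidden and the blocks are marginally identical, any agent has negligible information about $\xi$ before communicating), and it cannot usefully pull inside $I_\xi$ beforehand. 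Since the arm count is square-rooted at each level until the base case of $K^{\Theta(1)}$ arms, the depth is $\Theta(\log(\log n/\log K))=\Theta(\log(\log m/\log K))$, using $m=\tfrac n2$ and $K\ge\Omega(\log^4 m)$ to keep the base case and the $\log(n/\eps)$ slack healthy throughout.

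The main obstacle I anticipate is keeping the instance complexity $H^{\langle m'\rangle}$ pinned at $\tilde\Theta(n\eps^{-2})$ uniformly through the recursion. A naive nested construction makes the sub-task complexity blow up — indeed become infinite — since most arms outside the active block coincide with the new pivot and contribute unbounded $(\Delta^{\langle m'\rangle})^{-2}$ terms; this is exactly the difficulty flagged in the overview. The block design with carefully chosen per-level parameters is what I would use to force \emph{both} an upper bound (so the $\sqrt K$-speedup budget really is $o(n\eps^{-2})$ per agent, letting the learning-the-bias bound bite) \emph{and} a lower bound (so the reduction stays meaningful and solvable) on $H^{\langle m'\rangle}$ for every instance in the support and every induced sub-task. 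Formalizing a single recursive distribution whose induction hypothesis is robust enough to absorb the conditioning on $X$, the independence and marginal identity of the blocks, and the ``no useful early progress'' claim — all while the tight two-sided complexity bounds are maintained — is the delicate part, and I would carry it out with an explicit recursive distribution together with an entropy/potential argument bounding the information any agent holds about the active block prior to communication.
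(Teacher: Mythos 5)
Your overall architecture matches the paper's proof closely: the same recursive block construction for $M$, the same two-sided calibration of $H^{\langle m'\rangle}=\tilde\Theta(n\eps^{-2})$ at every level of the recursion, the same reduction of the first communication round to a ``learning the bias'' problem, and the same $\Theta(\log(\log n/\log K))$-depth induction. (The paper formalizes your ``no useful early progress'' step via \emph{augmented algorithms} that receive a small number of free, shared queries before the first round, and bounds the queries landing in the active block by a Markov argument derived from the bias lower bound itself; this is the concrete device standing in for your proposed entropy/potential argument.)

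The one genuine gap is in your proof of the building block. A Le~Cam/Fano resolution argument over a window of $M=\Theta(\sqrt n)$ candidate values of the bias yields, via Fano's inequality, a success probability bound of the form $(I(B;\tau)+1)/\log M$, which is $O(1/\log n)$ at best unless the mutual information is itself as small as $O(n^{-1/2}\log n)$ --- and it is not: with $\Theta(n\eps^{-2}/\log(n/\eps))$ samples the transcript carries a non-negligible amount of information about $B$. So the tools you name cannot deliver the required $O(n^{-1/2})$ success bound; pairwise indistinguishability (KL smallness per arm) is too weak. What is actually needed, and what the paper proves, is an \emph{anti-concentration} statement for the posterior of $B$ given the transcript: conditioned on the transcript the $b_i$ remain independent, at least $n/2$ arms are insufficiently explored and retain posterior probability $p_i\in[0.4,0.6]$ of being a top arm, hence the posterior of $B$ has variance $\Omega(n)$, and by the Berry--Esseen theorem every point mass of $B$ under the posterior is $O(n^{-1/2})$. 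This anti-concentration step is strictly stronger than what Le~Cam/Fano give and is the piece your proposal would need to supply; the rest of your plan then goes through as in the paper.
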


Theorem~\ref{thm:lb-loglogm} will be proved in Section~\ref{sec:lb-loglogm}. It marks the different round complexity requirement for collaborative multiple arm identification compared to the best arm identification problem. It is known that only constant number of round is needed to achieve $0.99$ success probability using $\tilde{O}(K^{-\zeta} \cdot H^{\langle m \rangle}(J))$ time budget (i.e., $\tilde{O}(K^{\zeta})$ speedup) for every constant $\zeta \in (0, 1)$ \cite{HKKLS13,TZZ19}. However, Theorem~\ref{thm:lb-loglogm} rules out such possibility for the multiple arm identification problem, proving it much harder than best arm identification in the collaborative setting. We note that we only prove the lower bound for $\zeta = 1/2$, for the simplicity of the exposition. However, the proof can be easily extended to any constant $\zeta > 0$. The only differences are that, in the theorem statement, the constraint $K \geq \Omega(\ln^4 m)$  will become $K \geq \ln^{f(\zeta)} m$, and the round complexity lower bound will become $\frac{1}{f(\zeta)} \cdot \ln (\ln m / \ln K)$ where $f(\zeta) > 0$ increases as $\zeta$ approaches $0$.

\subsection{Proof of Theorem~\ref{thm:lb-logK}} \label{sec:lb-logK}

The proof of Theorem~\ref{thm:lb-logK} is via a simple reduction from the following lower bound for the best arm identification problem.

\begin{theorem}[Theorem 10 in \cite{TZZ19}]\label{thm:lb-best-arm}
For every $K$ and $\alpha$ ($\alpha \in [1, K^{0.1}]$), if a fixed-time collaborative algorithm $\mathcal{B}$ with $K$ agents returns the best arm for every instance $I$ with probability at least $0.99$, when given time budget $ \frac{\alpha}{K} \cdot H^{\langle 1 \rangle}(I)$, then there exists an instance $I'$ with $H^{\langle 1 \rangle}(I') \geq K$ such that 1) the mean reward of the best arm in $I'$ is less or equal to $1/2$ and 2) $\mathcal{B}$ uses $\Omega(\ln K / (\ln \ln K  + \ln \alpha))$ rounds of communication  in expectation given instance $I'$ and time budget $ \frac{\alpha}{K} \cdot H^{\langle 1 \rangle}(I') $.
\end{theorem}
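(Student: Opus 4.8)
The final statement, Theorem~\ref{thm:lb-best-arm}, is a round--speedup lower bound for best arm identification, and although the paper imports it as a black box from \cite{TZZ19}, I sketch how I would prove it from scratch. The plan is to exhibit a distribution over hard instances together with a \emph{round-elimination} induction. First I normalize the target quantities: it suffices to build instances on which the best arm has mean exactly $1/2 - \Delta$ for a small gap $\Delta$ (so condition (1), best-arm mean $\le 1/2$, is automatic) and with $H^{\langle 1 \rangle} \ge K$ (arranged by including enough arms at gap $\Delta$). Writing $T = \frac{\alpha}{K} H^{\langle 1 \rangle}$ for the per-agent time budget, the \emph{total} number of pulls available to the $K$ agents across the whole execution is $K T = \alpha H^{\langle 1 \rangle}$, i.e.\ only a factor $\alpha \le K^{0.1}$ above the centralized complexity; in particular no single agent, having only $T = \alpha H^{\langle 1\rangle}/K \ll H^{\langle 1 \rangle}$ pulls, can solve the instance alone, so the agents must localize the best arm collaboratively.

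The heart of the construction is a \emph{hierarchical} family $\{J_\ell\}$ with $L+1$ levels and branching factor $b$, where $J_0$ is a single arm and $J_\ell$ consists of $b$ independent copies of $J_{\ell-1}$ (with shifted means) plus one uniformly random \emph{special} block $\xi_\ell \in \{1,\dots,b\}$ whose sub-best-arm is promoted by an extra gap $\Delta_\ell$, tuned so that each level contributes an equal share $\Theta(b \Delta_\ell^{-2})$ to the complexity. Choosing all level gaps equal makes $H^{\langle 1 \rangle}(J_L) = \Theta(L \cdot b \Delta^{-2})$, and I set $b = \Theta(\alpha \log K)$ and $b^{L} = \Theta(K)$, so that $L = \Theta\!\left(\frac{\log K}{\log\log K + \log \alpha}\right)$, which is exactly the claimed round bound. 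The global best arm sits at the bottom of the random \emph{path} $(\xi_L, \xi_{L-1}, \dots, \xi_1)$, and identifying $\Top_1$ is equivalent to recovering this path.

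The core technical lemma is a one-round \emph{localization} bound: conditioned on everything communicated before a round, the identity of the special block $\xi_\ell$ of the current sub-instance is (nearly) uniform over its $b$ candidates, and since the instances ``$\xi_\ell = i$'' and ``$\xi_\ell = j$'' differ only in which block carries the promoted top arm, their reward distributions have small pairwise KL divergence. A Fano/Le~Cam argument then shows that with the per-round share of the $\alpha H^{\langle 1 \rangle}$ pull budget the end-of-round transcript carries $o(\log b)$ bits about $\xi_\ell$, so after the round $\xi_\ell$ is still essentially unknown. Crucially, until $\xi_\ell$ is learned, no agent can tell which $b^{\ell-1}$ arms form the deeper sub-instance, so \emph{any} pulls an agent spends ``inside'' a guessed block are spread across the $b$ candidates and yield no usable progress on level $\ell-1$. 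I would formalize this via a potential/coupling argument that the expected progress on the level-$(\ell-1)$ subproblem before $\xi_\ell$ is communicated is $o(1)$. Letting $g(\ell)$ denote the expected number of rounds needed to solve $J_\ell$ with success $0.99$ under the budget, these two facts give the recursion $g(\ell) \ge g(\ell-1) + 1 - o(1)$; unrolling yields $g(L) = \Omega(L)$, and substituting $L$ completes the proof, with the averaging over the random instance producing the worst-case instance $J'$.

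The step I expect to be the main obstacle is the core localization lemma, and in particular ruling out ``pipelining'' across the $K$ agents. The budget $\alpha H^{\langle 1 \rangle}$ by itself does not forbid an agent from pulling arms belonging to \emph{several} not-yet-localized blocks simultaneously; what must be shown is that such speculative work is asymptotically wasted because, over the random choice of the special path, the block an agent happens to invest in is the correct one with probability only $\approx 1/b$. Making this precise requires (a) a distributional design that simultaneously keeps neighboring instances KL-close, keeps $H^{\langle 1 \rangle}(J')\ge K$, and keeps the best-arm mean below $1/2$, and (b) an information-theoretic round-elimination argument robust to adaptivity and to the agents sharing an arbitrary (unbounded-length) transcript at round boundaries; these together pin down the $\log\log K + \log\alpha$ denominator through the branching factor $b = \Theta(\alpha\log K)$.
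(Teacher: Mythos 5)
You should first note that the paper contains no proof of this statement at all: it is quoted verbatim as Theorem~10 of \cite{TZZ19} and used as a black box in the reduction that proves Theorem~\ref{thm:lb-logK}. So your sketch can only be judged against the actual argument in \cite{TZZ19}, whose structure closely parallels this paper's own top-$m$ lower bound (Sections~\ref{sec:hard-instance}--\ref{lb:final-recursion}): a hierarchical distribution with a uniformly random special block at each level, a one-round ``you cannot localize the special block'' lemma, and round elimination over depth $L = \Theta(\ln K/(\ln\ln K + \ln\alpha))$. Your high-level architecture matches that template, and your attention to the pipelining issue is well placed.

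However, your concrete parameterization is internally inconsistent and, as stated, the construction fails. You claim level $\ell$ contributes $\Theta(b\Delta_\ell^{-2})$ to $H^{\langle 1\rangle}$ and then ``choose all level gaps equal.'' In a $b$-ary hierarchy of depth $L$ with $b^L = \Theta(K)$ arms, the arms that diverge from the special path at level $\ell$ number $(b-1)b^{\ell-1}$, so level $\ell$ actually contributes $\Theta(b^{\ell}\Delta_\ell^{-2})$; balancing the levels forces $\Delta_\ell \propto b^{\ell/2}$, i.e.\ gaps must shrink geometrically with depth --- exactly as in this paper's construction $\I(C,\mu,n)$, where sub-blocks get gap $C\sqrt{\eta/n}$. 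With all gaps equal to $\Delta$, the best arm leads \emph{every} other arm by at least $\Delta$, so the trivial one-round strategy of spreading the total budget $\alpha H^{\langle 1\rangle} = \Theta(\alpha K \Delta^{-2})$ uniformly gives each arm $\Theta(\alpha\Delta^{-2})$ pulls; for $\alpha \geq c\ln K$ (well inside the allowed range $[1,K^{0.1}]$) all empirical means concentrate to within $\Delta/3$ and the best arm is found in a single round, contradicting the claimed $\Omega(\ln K/\ln\ln K)$ bound at that $\alpha$. The shrinking gaps are thus not a tuning detail but the mechanism of hardness: deeper levels require more pulls per arm, and those pulls are wasted while diluted over $b$ candidate blocks. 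Two further gaps: your recursion charges each round a fixed ``per-round share'' of the budget, but an adversarial algorithm may spend nearly its whole budget in one round, so the localization lemma must be proved against arbitrary adaptive allocation (the analogue here, Lemma~\ref{lemma:lb-first-round} combined with the augmented-algorithm device, is stated precisely to absorb this); and the localization lemma itself, which you defer, is essentially all of the technical content --- in \cite{TZZ19} and in this paper it is established via a learning-the-bias-style reduction with an anti-concentration (Berry--Esseen) argument, not a bare KL/Fano computation.
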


\begin{proof}[Proof of Theorem~\ref{thm:lb-logK}]
Given an algorithm $\mathcal{A}$ to identify the top-$m$ arms, we construct an algorithm $\mathcal{B}$ to identify the best arm as follows.

For every instance $I$ where the mean reward of the best arm is at most $1/2$, we create $(m-1)$ artificial arms with mean reward $3/4$, and together with $I$ we have an instance $J$. We have that $H^{\langle m \rangle}(J) \leq 16 (m-1) + H^{\langle 1 \rangle}(I) $. When $ H^{\langle 1 \rangle}(I) \geq K \geq m$, we further have $H^{\langle m \rangle}(J)\leq 17  H^{\langle 1 \rangle}(I)$. If $\mathcal{A}$ is given time budget $T$, the algorithm $\mathcal{B}$ will simulate $\mathcal{A}$ with instance $J$ and time budget $T$.
Whenever an artificial arm is queried by $\mathcal{A}$, $\mathcal{B}$ will generate a sample from a Bernoulli variable with mean $3/4$; otherwise, $\mathcal{B}$ queries the real arm in $I$ and feed the observation back to $\mathcal{A}$. The total time used by $\mathcal{B}$ (only counting queries to the real arms) is at most $T$, satisfying the time budget constraint. When $\mathcal{A}$ returns the identified arms, $\mathcal{B}$ will return any real arm from the set (or declare failure if no such arm exists). If $\mathcal{A}$ returns the correct top $m$ arms in $J$ (which happens with probability at least $0.99$), $\mathcal{B}$ will also return the correct best arm of $I$. 

Now invoking Theorem~\ref{thm:lb-best-arm}, we know that there exists an instance $I'$ with $ H^{\langle 1 \rangle}(I') \geq K$ such that $\mathcal{B}$ uses $\Omega(\ln K / (\ln \ln K + \ln \alpha))$ rounds of communication in expectation for instance $I'$ and time budget $T = \frac{\alpha}{K} \cdot H^{\langle 1 \rangle}(I')$. Therefore, we have that $\mathcal{A}$ uses $\Omega(\ln K  / (\ln \ln K + \ln \alpha))$ rounds of communication in expectation for the corresponding instance $J'$ (constructed from $I'$) and time budget $T$. Also note that $T \geq \frac{\alpha}{ 17 K} H^{\langle m \rangle}(J')$, and we complete the proof.
\end{proof}

\begin{remark}
Given the above proof, one may observe that the $m \leq K$ constraint in Theorem~\ref{thm:lb-logK} can be relaxed to admit much bigger $m$ if Theorem~\ref{thm:lb-best-arm} can be strengthened to provide lower bound instances $I'$ such that $H^{\langle 1 \rangle}(I') \gg K$. We highly believe this is possible, and we do have a proof sketch in the core of which is an improved argument of Section 3.2 in \cite{TZZ19}. Since the proof is quite involved and not directly related to this paper, we leave the formal proof as a future work.
\end{remark}

\subsection{Proof of Theorem~\ref{thm:lb-loglogm}} \label{sec:lb-loglogm}

\subsubsection{The Hard Instances and the Proof Intuition}\label{sec:hard-instance}

For any fixed $K$ (the number of agents), we define a  distribution of instances $\I(C, \mu, n)$ with $n$ Bernoulli arms, where we assume $n$ is an odd integer, the parameter $C \in (0, 1/4)$ denotes the gap between the mean rewards of the top arm and the bottom arm, the parameter $\mu \in (3/8, 5/8)$. We always set $m = (n-1)/2$, i.e., the goal is to identify the top half arms (not including the median arm). 

When $n \leq K^{10}$, we set $\I(C, \mu, n)$ to be a deterministic instance where the $(n-1)/2$ top arms have mean reward $(\mu + C/2)$, and the $(n-1)/2$ bottom arms have mean reward $(\mu - C/2)$. There is $1$ middle arm with mean reward $\mu$ sandwiched between the top and the bottom arms.  

When $n > K^{10}$, we define a random sample $I \sim \I(C, \mu, n)$ in a recursive fashion as follows (and illustrated in Figure~\ref{fig:lb-I}). Let $\eta$ be the smallest odd integer that is greater than $\sqrt[4]{n}$. There are $((n - \eta(2\eta + 1))/2  + \xi \eta)$ top arms with mean reward $(\mu + C/2)$, and there are $((n - \eta(2\eta+1))/2  - \xi \eta)$ bottom arms with mean reward $(\mu - C/2)$, where the \emph{bias} $\xi$ is an integer independently and uniformly sampled from $[-\eta, \eta]$. For the rest $\eta(2\eta+1)$ arms in the middle, we make $(2\eta+1)$ independent samples $I_1, I_2, \dots, I_{2\eta+1}$ (each of which has $\eta$ arms), such that for each $j \in [2\eta + 1]$, we have
\[
I_j \sim \I\left(C\sqrt{\eta/n}, \mu + \frac{j - \eta - 1}{8} \cdot C n^{-1/4}, \eta\right). 
\]
The final instance $I$ consists of the union of the arms in $I_j$ ($j \in [2\eta + 1]$) together with the top and the bottom arms.
We also say that the arm is in the $j$-th \emph{block} if it is an arm in $I_j$.

\begin{figure}[t!]
\begin{center}
    \includegraphics[width=0.8\textwidth,clip,trim=0 10.5cm 0 10cm]{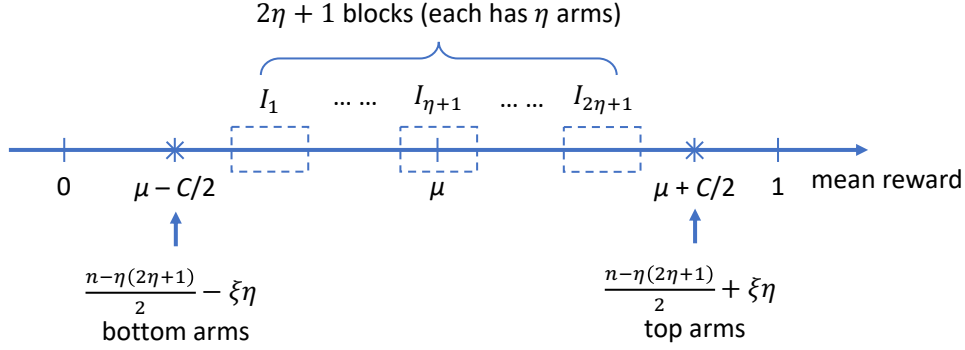}
\end{center}
\caption{Illustration of the mean rewards of the arms in $I \sim \mathcal{I}(C, \mu, n)$ for $n > K^{10}$.} \label{fig:lb-I}
\end{figure}

Below we will claim a few properties about our constructed hard instances. First, it is straightforward to verify the lemma.
\begin{lemma}\label{lem:lb-instance-fact}
For sufficiently large $n$, we have the following claims.
\begin{enumerate}
    \item For any arm in the $j$-th block, its mean reward $\theta \in \frac{1}{2} +  Cn^{-1/4} \cdot (\frac{j-\eta-1}{8} \pm \frac{1}{100})$. Therefore, the mean rewards of all middle arms are sandwiched between the top and the bottom arms, and any two distinct blocks do not overlap.
    \item The median arm of $I$ is the median arm of $I_{\xi + \eta + 1}$.  \label{lem:lb-instance-fact-item2}
\end{enumerate}
\end{lemma}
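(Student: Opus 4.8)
The plan is to derive both claims from a single structural invariant about $\I(C,\mu,n)$, established by induction on the recursion depth (equivalently, on $n$). The invariant I would maintain is: \emph{every arm of a sample $I\sim\I(C,\mu,n)$ has mean in $[\mu-C/2,\mu+C/2]$, and the recursion started from $\mu=\tfrac12$ only ever invokes $\I(\cdot,\mu',\cdot)$ with $\mu'\in(3/8,5/8)$.} The base case $n\le K^{10}$ is immediate, since the only means present are $\mu\pm C/2$ and $\mu$. For the inductive step with $n>K^{10}$, the top and bottom arms sit exactly at $\mu\pm C/2$, while a block-$j$ arm lies in $I_j\sim\I(C_j,\mu_j,\eta)$ with $C_j=C\sqrt{\eta/n}$ and $\mu_j=\mu+\frac{j-\eta-1}{8}Cn^{-1/4}$; by the induction hypothesis its mean lies in $[\mu_j-C_j/2,\mu_j+C_j/2]$. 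Since $j-\eta-1\in[-\eta,\eta]$ and $\eta\in(n^{1/4},n^{1/4}+2]$, we get $|\mu_j-\mu|\le\frac{\eta}{8}Cn^{-1/4}\le\frac{C}{8}(1+o(1))$ and $C_j/2\le\frac{C}{2}\sqrt{\eta/n}=o(C)$, so $\frac{\eta}{8}Cn^{-1/4}+C_j/2<C/2$ for large $n$; this keeps the mean inside $[\mu-C/2,\mu+C/2]$, and summing the geometrically shrinking per-level shifts $\tfrac18 C_{\text{level}}$ keeps every center within $\tfrac12\pm\tfrac{C}{4}\subset(3/8,5/8)$, so the whole construction is well defined.

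For Claim 1, I would read off the refined interval from the invariant applied to $I_j$: a block-$j$ arm has mean $\mu+\frac{j-\eta-1}{8}Cn^{-1/4}\pm\frac{C}{2}\sqrt{\eta/n}$. The quantitative heart is the estimate $\frac{C}{2}\sqrt{\eta/n}=Cn^{-1/4}\cdot\tfrac12\sqrt{\eta}\,n^{-1/4}\le\frac{1}{100}Cn^{-1/4}$, which holds once $\sqrt{\eta}\le n^{1/4}/50$, i.e.\ for all sufficiently large $n$ (as $\eta\le n^{1/4}+2$). This yields $\theta\in\mu+Cn^{-1/4}\big(\frac{j-\eta-1}{8}\pm\frac1{100}\big)$, which is exactly the stated bound at the top level where $\mu=\tfrac12$. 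The two corollaries follow by comparison: the largest possible block mean is below $\mu+Cn^{-1/4}(\frac{\eta}{8}+\frac1{100})<\mu+C/2$, and symmetrically the smallest exceeds $\mu-C/2$, so all middle arms are strictly sandwiched; and for $j\ne j'$ the block centers differ by at least $\frac18Cn^{-1/4}$ while each block window has radius only $\frac1{100}Cn^{-1/4}$, so distinct blocks cannot overlap.

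For Claim 2, I would run a rank-counting argument on the order determined by Claim 1. The descending order of means is: the $n_{\mathrm{top}}=\frac{n-\eta(2\eta+1)}{2}+\xi\eta$ top arms, then blocks $I_{2\eta+1},I_{2\eta},\dots,I_1$ in decreasing index (each contiguous of size $\eta$), then the bottom arms. The median arm of $I$ is the $\frac{n+1}{2}$-th largest; here $n_{\mathrm{top}}$ and $\frac{n+1}{2}$ are integers because $n$ and $\eta(2\eta+1)$ are odd. Subtracting, the median is the $k$-th arm of the middle section (from the top), where
\[
k=\frac{n+1}{2}-n_{\mathrm{top}}=\frac{\eta(2\eta+1)+1}{2}-\xi\eta=\eta^2+\frac{\eta+1}{2}-\xi\eta .
\]
Block $I_j$ occupies middle-section positions $(2\eta+1-j)\eta+1$ through $(2\eta+2-j)\eta$; substituting $j=\xi+\eta+1$ gives the range $\eta^2-\xi\eta+1,\dots,\eta^2+\eta-\xi\eta$, which contains $k$, and in fact $k-(\eta^2-\xi\eta)=\frac{\eta+1}{2}$ is precisely the median slot of a block of $\eta$ arms. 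Hence the median arm of $I$ is the median arm of $I_{\xi+\eta+1}$, with $\xi+\eta+1\in[1,2\eta+1]$ since $\xi\in[-\eta,\eta]$.

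I expect the main obstacle to be bookkeeping rather than any deep idea: nailing the constant $\frac1{100}$ in Claim 1 requires the clean bound $\tfrac12\sqrt{\eta}\,n^{-1/4}\le\frac1{100}$ together with the well-definedness check that all recursively generated centers remain in $(3/8,5/8)$ (which rests on the geometric decay of the per-level shifts), while Claim 2 hinges on the parity facts that make $n_{\mathrm{top}}$ and $k$ integers and on tracking the block boundaries exactly so that $k$ lands on the median position of block $\xi+\eta+1$. Neither step is conceptually hard, consistent with the lemma being ``straightforward to verify,'' but both reward careful index arithmetic.
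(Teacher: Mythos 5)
Your proof is correct; the paper offers no argument for this lemma (it is declared ``straightforward to verify''), and your inductive invariant that every arm of $\I(C,\mu,n)$ has mean in $[\mu-C/2,\mu+C/2]$, combined with the bound $\tfrac{C}{2}\sqrt{\eta/n}\le \tfrac{1}{100}Cn^{-1/4}$ and the rank-counting for the median, supplies exactly the missing verification. One small point worth flagging: the lemma as printed centers the interval at $\tfrac12$ rather than at $\mu$, even though it is later invoked for general $\mu\in(3/8,5/8)$; your derivation, which is naturally centered at $\mu$, is the correct reading, and your well-definedness check that all recursively generated centers stay in $(3/8,5/8)$ is a detail the paper silently assumes.
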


In the following lemma, we show the order of the complexity measure of the constructed instances.

\begin{lemma}\label{lemma:lb-H}
For each instance $I$ in the support of $\I(C, \mu, n)$, we have $H^{\langle m \rangle}(I) = \Theta(C^{-2} n \cdot \ln (e + \frac{\ln n}{\ln K}))$.
\end{lemma}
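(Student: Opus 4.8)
The plan is to prove the claim by unrolling the recursive definition of $\I(C,\mu,n)$ along the chain of nested pivot blocks. I would write $I^{(0)} = I$ and recursively let $I^{(\ell+1)}$ be the pivot block $I_{\xi+\eta+1}$ of $I^{(\ell)}$, so that each $I^{(\ell)}$ is a sample of $\I(C_\ell,\mu_\ell,n_\ell)$ with $n_0 = n$, $C_0 = C$, and $n_{\ell+1} = \eta_\ell$ (the smallest odd integer exceeding $n_\ell^{1/4}$), $C_{\ell+1} = C_\ell\sqrt{\eta_\ell/n_\ell}$. The chain stops at the first level $d$ with $n_d \le K^{10}$, where the instance is deterministic. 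By the second item of Lemma~\ref{lem:lb-instance-fact} applied repeatedly, the global median arm of $I$ is the median arm of $I^{(\ell)}$ for every $\ell$; consequently $\theta_{[m]}(I) = \theta_{[m_\ell]}(I^{(\ell)})$ and $\theta_{[m+1]}(I) = \theta_{[m_\ell+1]}(I^{(\ell)})$ where $m_\ell = (n_\ell-1)/2$, and hence $\Delta_i^{\langle m\rangle}(I) = \Delta_i^{\langle m_\ell\rangle}(I^{(\ell)})$ for every arm $i\in I^{(\ell)}$.

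The first key observation is the scale invariance $C_\ell^{-2} n_\ell = C^{-2} n$ for all $\ell$, which follows immediately from $C_{\ell+1}^{-2} = C_\ell^{-2}(n_\ell/\eta_\ell)$. Using the disjoint decomposition $I = I^{(d)} \cup \bigcup_{\ell=0}^{d-1}\bigl(I^{(\ell)}\setminus I^{(\ell+1)}\bigr)$ together with the identity $\Delta_i^{\langle m\rangle}(I) = \Delta_i^{\langle m_\ell\rangle}(I^{(\ell)})$, I would write
\[
H^{\langle m\rangle}(I) \;=\; \sum_{\ell=0}^{d-1}\;\sum_{i\in I^{(\ell)}\setminus I^{(\ell+1)}}\bigl(\Delta_i^{\langle m_\ell\rangle}(I^{(\ell)})\bigr)^{-2} \;+\; H^{\langle m_d\rangle}(I^{(d)}),
\]
so it suffices to bound the contribution of a single level, i.e.\ of the top arms, bottom arms, and off-pivot blocks of $I^{(\ell)}$.

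Next I would estimate these gaps. The median mean of $I^{(\ell)}$ is $\mu^\star_\ell = \mu_\ell + \tfrac{\xi_\ell}{8}C_\ell n_\ell^{-1/4}$ with $|\xi_\ell|\le \eta_\ell$; since $n_\ell^{1/4} < \eta_\ell \le 2n_\ell^{1/4}$, this places $\mu^\star_\ell$ within $[\mu_\ell - \tfrac14 C_\ell, \mu_\ell + \tfrac14 C_\ell]$. Hence every top arm (mean $\mu_\ell + C_\ell/2$) and every bottom arm (mean $\mu_\ell - C_\ell/2$) has gap $\Theta(C_\ell)$ to the median, so the $\Theta(n_\ell)$ top-and-bottom arms together contribute $\Theta(C_\ell^{-2} n_\ell) = \Theta(C^{-2} n)$ with universal constants independent of $\ell$. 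For an off-pivot block $I_j$ (with $j \neq \xi_\ell + \eta_\ell + 1$), every arm has mean within $\Theta(C_\ell n_\ell^{-1/4})$ of $\mu_j$, and by item 1 of Lemma~\ref{lem:lb-instance-fact} the block separation dominates the within-block spread, so each of its $\eta_\ell$ arms has gap $\Theta\bigl(|j - \xi_\ell - \eta_\ell - 1|\cdot C_\ell n_\ell^{-1/4}\bigr)$; summing $\sum_{k\ge1} k^{-2} = \Theta(1)$ over blocks gives an off-pivot contribution of $\Theta(C_\ell^{-2} \eta_\ell n_\ell^{1/2}) = \Theta(C^{-2} n\, n_\ell^{-1/4})$, a lower-order term since $n_\ell \ge K^{10}$. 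Combining, each level contributes $\Theta(C^{-2} n)$, and the deterministic base level $I^{(d)}$ contributes $\Theta(C_d^{-2} n_d) = \Theta(C^{-2} n)$ by a direct computation of the three gap types. Therefore $H^{\langle m\rangle}(I) = \Theta\bigl((d+1)\,C^{-2} n\bigr)$.

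It remains to evaluate the depth $d$. Since $\ln n_{\ell+1} = \tfrac14\ln n_\ell + o(1)$, we have $\ln n_\ell \approx 4^{-\ell}\ln n$, and the recursion terminates at the first $\ell$ with $\ln n_\ell \le 10\ln K$; this gives $d = \Theta\bigl(1 + \log(\ln n/\ln K)\bigr)$, and checking the two regimes $n\le K^{10}$ and $n > K^{10}$ shows $d + 1 = \Theta\bigl(\ln(e + \tfrac{\ln n}{\ln K})\bigr)$. Substituting yields $H^{\langle m\rangle}(I) = \Theta\bigl(C^{-2} n\,\ln(e + \tfrac{\ln n}{\ln K})\bigr)$, as claimed. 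I expect the main obstacle to be the gap bookkeeping of the third paragraph—specifically, verifying that the global median coincides with every nested pivot median (so that each arm's global gap equals its local gap at the level where it leaves the chain) and that the off-pivot blocks are genuinely lower-order; the scale invariance $C_\ell^{-2} n_\ell = C^{-2} n$ is what keeps each level's contribution at the uniform scale $\Theta(C^{-2} n)$ and makes the telescoping clean.
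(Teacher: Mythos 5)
Your proposal is correct and takes essentially the same route as the paper: the paper proves the bound by induction on $n$ using exactly the one-level decomposition you describe (top/bottom arms contribute $\Theta(C^{-2}n)$, off-pivot blocks contribute $O(C^{-2}n^{3/4})$, and the pivot block recurses with the scale-invariant complexity $C_{\ell}^{-2}n_{\ell}=C^{-2}n$), whereas you unroll that induction into an explicit telescoping sum over the chain of nested pivot blocks. The substance — the identification of the global pivot gaps with the local ones via Lemma~\ref{lem:lb-instance-fact}, the per-level gap estimates, and the depth count — matches the paper's argument.
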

\begin{proof}
We prove this lemma via induction, where the base case $n \leq K^{10}$ is straightforward to verify.

When $n > K^{10}$, let $\eta$ be defined in the construction of the instances. Let $I_1, I_2, \dots, I_{2\eta+1}$ be any instances such that $I_j$ is in the support of $\I(C\sqrt{\eta/n}, \mu + \frac{j - \eta - 1}{8} \cdot C n^{-1/4}, \eta)$ for each $j \in [2\eta+1]$. Let $\xi \in [-\eta, \eta]$ be any integer, and let $I$ be the instance constructed using the parameters above. Let $\theta_{[m]}$ and $\theta_{[m+1]}$ be the mean rewards of the $m$-th and the $(m+1)$-th best arm, respectively. We have
\begin{multline}
H^{\langle m \rangle}(I) = \\ \left(\frac{n - \eta(2\eta + 1)}{2}  + \xi \eta\right) \cdot \left(\frac{1+C}{2} - \theta_{[m+1]}\right)^{-2} + \left(\frac{n - \eta(2\eta + 1)}{2}  - \xi \eta\right) \cdot \left(\frac{1-C}{2} - \theta_{[m]}\right)^{-2}  \\
  + \sum_{j = 1}^{\xi + \eta} \sum_{\text{arm $i$ in block $j$}} (\theta_i - \theta_{[m]})^{-2} + \sum_{j = \xi+\eta+2}^{2\eta + 1} \sum_{\text{arm $i$ in block $j$}} (\theta_i - \theta_{[m+1]})^{-2} + H^{\langle (\eta-1)/2 \rangle} (I_{\xi + \eta + 1}) . \label{eq:lb-H-1}
\end{multline}
By Lemma~\ref{lem:lb-instance-fact}, we have $\theta_{[m]}, \theta_{[m+1]} \in \frac{1}{2} +  Cn^{-1/4} \cdot (\frac{\xi}{8} \pm \frac{1}{100})$. Therefore, we have
\begin{multline}
    \left(\frac{n - \eta(2\eta + 1)}{2}  + \xi \eta\right) \cdot \left(\frac{1+C}{2} - \theta_{[m+1]}\right)^{-2} + \left(\frac{n - \eta(2\eta + 1)}{2}  - \xi \eta\right) \cdot \left(\frac{1-C}{2} - \theta_{[m]}\right)^{-2}\\ \in \left[\frac{C^{-2}n}{4}, 16C^{-2}n\right]. \label{eq:lb-H-2}
\end{multline}
and
\begin{multline}
     \sum_{j = 1}^{\xi + \eta} \sum_{\text{arm $i$ in block $j$}} (\theta_i - \theta_{[m]})^{-2} + \sum_{j = \xi+\eta+2}^{2\eta + 1} \sum_{\text{arm $i$ in block $j$}} (\theta_i - \theta_{[m+1]})^{-2} \\
     \leq \sum_{k=1}^{2\eta} \eta \cdot \left(C n^{-1/4} \cdot (k/16)\right)^{-2} \leq 256 \eta C^{-2} n^{1/2} \cdot \frac{\pi^2}{6} \leq 512 C^{-2} n^{3/4}. \label{eq:lb-H-3}
\end{multline}
Combining \eqref{eq:lb-H-1}, \eqref{eq:lb-H-2}, and \eqref{eq:lb-H-3}, for sufficiently large $n$, we have
\[
H^{\langle m \rangle}(I) \in \left[\frac{C^{-2}n}{4}, 17C^{-2}n\right] + H^{\langle (\eta-1)/2 \rangle} (I_{\xi + \eta + 1}) .
\]
Apply induction hypothesis to $H^{\langle (\eta-1)/2 \rangle} (I_{\xi + \eta + 1}) $ and we prove the lemma.
\end{proof}

\paragraph{Proof Intuition.} The intuition about our lower bound instance distribution is as follows. As pointed out in Lemma~\ref{lem:lb-instance-fact} (Item~\ref{lem:lb-instance-fact-item2}), the top $m = (n-1)/2$ arms consists of the top arms, the blocks from $I_{\xi+\eta+2}$ to $I_{2\eta + 1}$, and finally the top half (excluding the median) arms in block $I_{\xi + \eta+1}$. Therefore, two necessary tasks are i) to complete is to identify the value of $\xi$, and ii) to identify the top half arms in $I_{\xi + \eta+1}$. 

For the first task, in Section~\ref{sec:learn-bias}, we will introduce a sub-problem named ``learning the bias''. Via studying this problem, we will show that, any agent, if using at most $H^{\langle m \rangle}(I) / \sqrt{K}$ queries, cannot learn the correct value of $\xi$ with probability $\omega(n^{-1/4})$. Note that since there are only $2\eta + 1 = O(n^{1/4})$ possible values for $\xi$, this means that the agent cannot do much better than random guessing. Also, we note that we prove the impossibility statement for agents with even $\Theta(n C^{-2} / \ln(n/C))$ queries, which is a stronger statement as Lemma~\ref{lemma:lb-H} shows that $H^{\langle m \rangle}(I)$ is always $\tilde{\Theta}(nC^{-2})$. 

The above discussion suggests that a communication step is needed for the agents to collectively decide the exact value of $\xi$. It also suggests that not too many queries are made to the block $I_{\xi + \eta+1}$ before the first communication step (more specifically, the amount is at most $O(n^{-1/4}$ fraction of the total number of queries before the communication step, see Item~\ref{lemma:lb-first-round-item2} of Lemma~\ref{lemma:lb-first-round} for detailed justification), which is negligible for further identifying the top half arms in $I_{\xi + \eta+1}$ (the second necessary task). On the other hand, by Lemma~\ref{lemma:lb-H}, $H^{\langle (\eta -1)/2 \rangle}(I_{\xi + \eta + 1})$ is still $\tilde{\Theta}(nC^{-2})$. Therefore, we can recursively apply the similar argument to $I_{\xi + \eta + 1}$, yielding a communication round lower bound that is proportional to the number of hierarchies in the definition of $\mathcal{I}(C, \mu, n)$, which is $\Theta(\ln (\ln n/\ln k))$. 

This recursive (or inductive) argument is presented in Section~\ref{lb:final-recursion}. Note that in the simplified explanation above, we neglected the extra queries made to the $I_{\xi + \eta + 1}$ before the first communication step. To formally deal with these extra queries, we need to strengthen the inductive hypothesis, and introduce the definition of \emph{augmented algorithms} where the agents enjoy a small number of free (and shared) queries before the very first communication round. We will show that the communication round lower bound still holds even for augmented algorithms.

\subsubsection{The ``Learning the Bias'' Sub-problem and Its Analysis} \label{sec:learn-bias}

In this subsection, we identify a critical sub-problem for identifying the top $m$ arms in our constructed hard instances. We then prove the sample complexity lower bounds for the sub-problem, which will be a crucial building block for the ultimate lower bound theorem for identifying the top $m$ arms.

We first define the sub-problem as follows.

\medskip
\noindent {\bf Problem Definition} (Learning the Bias)\textbf{.} {\it
There are $n$ Bernoulli arms. Given the parameters $\epsilon \in (0, 1/8)$, $\mu \in (3/8, 5/8)$, and a distribution $\D$ supported on $\{\pm 1\}^n$ (which are publicly known), a hidden vector $(b_1, b_2, \dots, b_n)$ is sampled from $\D$, and the mean reward of the $i$-th arm is set to be $\theta_i = \mu + b_i \epsilon$ (also hidden from the algorithm). The algorithm has a budget of $T$ adaptive samples from the arms, and the goal is to decide the bias $B = b_1 + b_2 + \dots + b_n$.
}

The following lemma shows the sample complexity lower bound for learning the bias when $\D$ is the uniform distribution.

\begin{lemma}\label{lemma:lb-bias}
Assume that $\D$ is the uniform distribution of $\{\pm 1\}^n$. For sufficiently large $n$, if \[T \leq n \epsilon^{-2} / (20000\ln (n/\epsilon))\,,\] then the probability that the player correctly identifies $B$ is at most $O(n^{-1/2})$.
\end{lemma}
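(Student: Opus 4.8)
The plan is to show that even the optimal Bayesian (maximum-a-posteriori) decoder cannot beat the prior guess by more than a constant factor. Since $\D$ is uniform, $B=\sum_i b_i$ is an affine image of a $\mathrm{Binomial}(n,1/2)$ variable, whose largest atom is already $\Theta(n^{-1/2})$; the entire content of the lemma is that too few samples cannot sharpen this. First I would fix the algorithm's internal coins (this only helps the lower bound, by an averaging/Yao argument), so that the transcript $\mathcal{O}=(a_1,y_1,\dots,a_T,y_T)$ of arms pulled and outcomes is the sole source of information. For any decoder $\hat B$, conditioning on $\mathcal{O}$ gives $\Pr[\hat B = B] \le \bE_{\mathcal{O}}\big[\max_v \Pr[B=v\mid \mathcal{O}]\big]$, so it suffices to bound the expected posterior mode of $B$.

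The key structural observation is that, despite adaptivity, the posterior over $(b_1,\dots,b_n)$ factorizes. For a fixed transcript the likelihood is $\Pr[\mathcal{O}\mid b]=g(\mathcal{O})\prod_i L_i(b_i)$, where $g$ collects the (reward-independent) policy probabilities $\pi_s(a_s\mid\text{history})$ and $L_i(b_i)=\prod_{s:\,a_s=i}\Pr[y_s\mid b_i]$ depends only on $b_i$. Hence, conditioned on $\mathcal{O}$, the $b_i$ are independent with posterior means $M_i:=\bE[b_i\mid\mathcal{O}]$ and posterior probabilities $p_i=(1+M_i)/2$, and $B\mid\mathcal{O}$ is an affine image of a Poisson-binomial sum $S$ with variance $\sigma^2:=\sum_i p_i(1-p_i)=\tfrac14\sum_i(1-M_i^2)$. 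By a standard anti-concentration estimate for sums of independent indicators (Kolmogorov--Rogozin), $\max_v\Pr[B=v\mid\mathcal{O}]\le C/\sqrt{\sigma^2}$ for a universal constant $C$. It therefore remains to prove that $\sigma^2=\Omega(n)$ except with probability $O(n^{-1/2})$: on the good event the mode is $O(n^{-1/2})$, and the bad event contributes at most its own probability.

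The crux is thus an exponential tail bound on the total \emph{resolution} $W:=\sum_i M_i^2 = n-4\sigma^2$, and this is exactly where the sample budget enters. Viewing the $T$ pulls as a filtration $\mathcal{F}_0\subseteq\dots\subseteq\mathcal{F}_T$, each posterior mean $M^{(s)}$ is a bounded martingale (tower property), so $W_s=\sum_i (M_i^{(s)})^2$ is a submartingale. Two facts drive the estimate, both relying on $\mu\in(3/8,5/8)$, $\epsilon<1/8$, which keep the per-sample log-likelihood ratio in $[-8\epsilon,8\epsilon]$: (i) a single pull moves the relevant posterior mean by at most $O(\epsilon)$, so the predictable compensator $A_T=\sum_s \var(M^{(s+1)}\mid\mathcal{F}_s)\le O(T\epsilon^2)$ deterministically, which under the hypothesis $T\le n\epsilon^{-2}/(20000\ln(n/\epsilon))$ is at most $n/1000$; and (ii) the martingale part $N_T=W_T-A_T$ has increments $O(\epsilon)$, so Azuma's inequality gives $\Pr[N_T\ge\lambda]\le\exp(-\lambda^2/O(T\epsilon^2))$. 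Taking $\lambda\approx n/2$ forces $\Pr[W_T\ge n/2]\le\exp(-\Omega(n^2/(T\epsilon^2)))\le\exp(-\Omega(n\ln(n/\epsilon)))=o(n^{-1/2})$. On the complementary event $W_T<n/2$ we have $\sigma^2>n/8$, and assembling the two cases yields $\bE_{\mathcal{O}}[\max_v\Pr[B=v\mid\mathcal{O}]]=O(n^{-1/2})$.

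The main obstacle is steps (i)--(ii): controlling $W$ against an adaptive algorithm that may pour its pulls onto a few arms. The expectation alone is easy---for instance the mutual-information inequality $\sum_i\bE[M_i^2]\le O(I(b;\mathcal{O}))\le O(T\epsilon^2)$ gives $\bE[W]=o(n)$---but Markov then only yields $\Pr[W\ge n/2]=O(1/\ln(n/\epsilon))$, which is hopelessly weak against the $O(n^{-1/2})$ target. The submartingale/Azuma route is precisely what upgrades the first moment to an exponentially small upper tail, and the generous budget (the $\ln(n/\epsilon)$ factor and the large constant) is what makes both $A_T\ll n$ and the Azuma exponent large. I would take particular care with the deterministic per-step bound $|\Delta M|\le O(\epsilon)$ and with the likelihood factorization, since these are the two places where adaptivity could a priori break the independence of the posterior or inflate the martingale increments.
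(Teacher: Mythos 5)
Your proposal is correct, and it shares the paper's skeleton: reduce to bounding the expected posterior mode of $B$ given the transcript, observe that the posterior on $(b_1,\dots,b_n)$ factorizes across arms even under adaptivity, and conclude via anti-concentration of a sum of independent posterior-Bernoullis once the total posterior variance is $\Omega(n)$. (Your use of Kolmogorov--Rogozin where the paper uses Berry--Esseen is immaterial; both give an $O(n^{-1/2})$ atom once $\mathrm{Var}(B\mid\tau)=\Omega(n)$.) Where you genuinely diverge is the core step of establishing that variance bound with high probability. The paper argues by counting: the budget can ``sufficiently explore'' at most $n/2$ arms, and for each under-explored arm it computes the posterior odds explicitly via the likelihood ratio $\left(\tfrac{\mu-\epsilon}{\mu+\epsilon}\right)^{r}\left(\tfrac{1-\mu+\epsilon}{1-\mu-\epsilon}\right)^{T_i-r}$, conditioned on a uniform concentration event over all partial empirical counts, to get $p_i\in[0.4,0.6]$ and hence $\sigma_i^2\geq 0.96$ for at least half the arms. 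You instead track the aggregate resolution $W=\sum_i M_i^2$ as a submartingale along the sampling filtration, bound its compensator by $O(T\epsilon^2)\ll n$ using the deterministic $O(\epsilon)$ per-pull movement of a posterior mean, and apply Azuma to the martingale part. Your route avoids both the per-arm likelihood computation and the global concentration event, yields failure probability $\exp(-\Omega(n\ln(n/\epsilon)))$ rather than $n^{-6}$, and in fact never uses the $\ln(n/\epsilon)$ factor in the budget: $T\leq n\epsilon^{-2}/c$ for a large enough constant $c$ already makes $A_T\ll n$ and the Azuma exponent $\Omega(cn)$, so your argument proves a slightly stronger statement than the lemma requires. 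The paper's counting argument is more elementary and produces explicit constants; the two are interchangeable for the purposes of this lemma.
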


\begin{proof}
Without loss of generality we can assume that the player makes the guess about $B$ after using all $T$ samples. Let $\tau = (i_1, y_1, i_2, y_2, \dots, i_T, y_T)$ be the \emph{transcript} of all samples, where $i_t \in [n]$ denotes the arm sampled from at time $t$, and $y_t \in \{0, 1\}$ denotes the observation at time $t$. The player will finally uses an algorithm $\A(\tau)$ to decide the guess about the bias. The probability that the player makes a correct guess is 
\begin{multline}
    \Pr[\text{correct guess}] = \mathop{\bE}_{\tau} \Pr_{(b_1, \dots b_n) \sim \D} \left[\A(\tau) = B | \tau \right]  \\ \leq \mathop{\bE}_\tau \max_{\beta} \left\{\Pr_{(b_1, \dots b_n) \sim \D} \left[B = \beta | \tau \right]\right\} = \mathop{\bE}_\tau \max_{\beta} \left\{\Pr_{(b_1, \dots b_n) \sim \D(\tau)} \left[B = \beta\right]\right\} , \label{eq:lb-bias-00}
\end{multline}
where we let $\D(\tau)$ be the posterior distribution of $(b_1, b_2, \dots, b_n)$ given $\tau$.

For fixed $\tau$, let $r_{i, t}$ be the number of $1$'s the player observes among the first $t$ samples made from the $i$-th arm. Also let $T_i$ be the total number of samples made from the $i$-th arm, we have $T = T_1 + T_2 + \dots + T_n$.  Since $\D$ is a product distribution, we have
\begin{align}
   \D(\tau) = \otimes_{i=1}^{n} \D_i(r_{i, T_i}, T_i), \label{eq:lb-bias-0}
\end{align}
where $\D_i(r_{i, T_i}, T_i)$ is the posterior of $b_i$ given that $r_{i, T_i}$ 1's are observed from $T_i$ samples from arm $i$.

 Note that the posterior distribution  $\D_i(r_{i, T_i}, T_i)$ is completely determined by the posterior probability $p_i \triangleq
\Pr\left[b_i = +1 | r_{i, T_i}, T_i\right]$.  Let $\sigma_i^2 \triangleq \mathrm{Var}[\D_i(r_{i, T_i}, T_i)] = 4p_i(1-p_i)$ for every arm $i$. Let $\sigma^2 \triangleq \sigma_1^2 + \sigma_2^2 + \dots + \sigma_n^2$ (where $\sigma \geq 0$). By \eqref{eq:lb-bias-0} and invoking the Berry-Esseen theorem (Theorem~\ref{thm:berry-esseen}) with $\rho = 8$, we have
\[
\forall x \in \mathbb{R}, \Pr_{(b_1, \dots b_n) \sim \D(\tau)}\left[\sigma^{-1} \sum_{i=1}^{n} (b_i - (1-2p_i)) \leq x\right] \in  \Phi(x) \pm 4.5 n \sigma^{-3/2}.
\]
Since $\Phi(x)$ is a continuous function, we have
\begin{align}
    \forall \beta \in \mathbb{R}, \Pr_{(b_1, \dots b_n) \sim \D(\tau)} [B = \beta] \leq 9n \sigma^{-3/2}. \label{eq:lb-bias-0a}
\end{align}

Now we will estimate the posterior probability $p_i$ and give a lower bound on $\sigma^2$ to upper bound the probability in \eqref{eq:lb-bias-0a}.

Via standard concentration inequalities (e.g., Hoeffding's inequality), we have
\[
\forall i, t, \Pr_\tau\left[|r_{i, t} - t \theta_i | \leq 2 \sqrt{t \ln (n/\epsilon)}\right] \geq 1 - 2 (\epsilon/n)^8 .
\]
Therefore, if we define $\E_4$ be the event 
\[
\E_4 \triangleq \{\forall i, t, |r_{i, t} - t \theta_i | \leq 2 \sqrt{t \ln (n/\epsilon)}\},
\]
we have $\Pr_\tau[\E_4] \geq 1 - 1/n^6$.

Say an arm $i$ is \emph{sufficiently explored} if $T_i \geq \epsilon^{-2} / (10000 \ln(n/\epsilon))$. By Markov's inequality, there are at most $n/2$ sufficiently explored arms. Fix an arm $i$, let $r = r_{i, T_i}$ for notational convenience.  Conditioned on the event $\E_4$ and that it is insufficiently explored, we have 
$|r - T_i \theta_i | \leq 1/(50\epsilon).$ Since $\theta_i = \mu \pm \epsilon$, we further have 
\begin{align}
\left|r - \mu T_i\right| \leq \frac{1}{50\epsilon} + \epsilon T_i \leq \frac{1}{40\epsilon} . \label{eq:lb-bias-1}
\end{align}
By the definition of $p_i$, we have
\begin{align}
p_i = \frac{(\mu + \epsilon)^r (1 - \mu - \epsilon)^{T_i - r}}{(\mu + \epsilon)^r (1- \mu - \epsilon)^{T_i - r} + (\mu - \epsilon)^r (1 - \mu + \epsilon)^{T_i - r}}
=  \frac{1}{1 + (\frac{\mu -\epsilon}{\mu + \epsilon})^{r} (\frac{1 - \mu + \epsilon}{1 - \mu - \epsilon})^{T_i - r}} . \label{eq:lb-bias-2}
\end{align}
By \eqref{eq:lb-bias-1}, for $\epsilon \in (0, 1/8)$ and $\mu \in (3/8, 5/8)$, we have that

Note that
\begin{align}
\left(\frac{\mu -\epsilon}{\mu + \epsilon}\right)^{r} \left(\frac{1 - \mu + \epsilon}{1 - \mu - \epsilon}\right)^{T_i - r} = \left(\frac{\mu -\epsilon}{\mu + \epsilon}\right)^{\mu T_i + (r - \mu T_i)} \left(\frac{1 - \mu + \epsilon}{1 - \mu - \epsilon}\right)^{(1-\mu)T_i + (\mu T_i - r)} ,  \label{eq:lb-bias-3}
\end{align}
and for $\epsilon \in (0, 1/8)$, $\mu \in (3/8, 5/8)$, and $T_i \leq \epsilon^{-2}/10000$, it holds that
\begin{align}
   \left(\frac{\mu -\epsilon}{\mu + \epsilon}\right)^{\mu T_i} \left(\frac{1 - \mu + \epsilon}{1 - \mu - \epsilon}\right)^{(1-\mu)T_i} \in [0.99, 1.01].  \label{eq:lb-bias-3a}
\end{align}
Also, by \eqref{eq:lb-bias-1} and the ranges for $\epsilon$ and $\mu$, we have
\begin{multline}
     \left(\frac{\mu -\epsilon}{\mu + \epsilon}\right)^{r - \mu T_i} \left(\frac{1 - \mu + \epsilon}{1 - \mu - \epsilon}\right)^{\mu T_i - r}\\  \in \left[\left(\frac{(\mu-\epsilon)(1-\mu-\epsilon)}{(\mu+\epsilon)(1 -\mu+\epsilon)}\right)^{1/(40\epsilon)}, \left(\frac{(\mu+\epsilon)(1-\mu+\epsilon)}{(\mu-\epsilon)(1 -\mu-\epsilon)}\right)^{1/(40\epsilon)}\right] \subseteq [0.75, 1.3] . \label{eq:lb-bias-3b}
\end{multline}
Combining \eqref{eq:lb-bias-2}, \eqref{eq:lb-bias-3}, \eqref{eq:lb-bias-3a}, and \eqref{eq:lb-bias-3b}, and conditioned on $\E_4$ and that arm $i$ is insufficiently explored, we have that $p_i \in [0.4, 0.6]$, meaning that $\sigma_i^2 \geq 0.96$. Conditioned on $\E_4$, since there are at most $n/2$ sufficiently explored arms, we have that $\sigma^2 \geq 0.48n$. Using \eqref{eq:lb-bias-0a}, we have
\begin{multline*}
\mathop{\bE}_\tau \max_{\beta} \left\{\Pr_{(b_1, \dots b_n) \sim \D(\tau)} \left[B = \beta\right]\right\} \leq \mathop{\bE}_\tau\left[ \max_{\beta} \left\{\Pr_{(b_1, \dots b_n) \sim \D(\tau)} \left[B = \beta\right]\right\} \Big| \E_4\right] + \Pr[\overline{\E_4}]\\
\leq 9n (0.48n)^{-3/2} + n^{-6} \leq 30 n^{-1/2}.
\end{multline*}
Together with \eqref{eq:lb-bias-00}, we prove the lemma.
\end{proof}

To analyze the lower bound for our hard instances $\I(C, \mu, n)$, we need to adapt Lemma~\ref{lemma:lb-bias} to a different distribution $\D$, as shown in the following corollary.
\begin{corollary}\label{cor:lb-bias}
For any $S \subseteq [n]$ ($S \neq \emptyset$), let $\D = \D(S)$ be the following distribution supported on $\{\pm 1\}^n$: first sample a uniformly random integer $s$ from $S$, and then sample a uniformly random vector from $\{\pm 1\}^n$ such that the number of $+1$'s in the vector is exactly $s$. Let $q = \min_{s \in S}\{2^{-n} {\binom{n}{s}}\}$. For sufficiently large $n$, if $T \leq n \epsilon^{-2}/(20000 \ln (n/\epsilon))$, then in the learning the bias problem, the probability that the player correctly identifies $B$ is at most $O(n^{-1/2}q^{-1}\cdot |S|^{-1})$
\end{corollary}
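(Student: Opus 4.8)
The plan is to reduce this corollary to Lemma~\ref{lemma:lb-bias} by a pointwise change-of-measure argument, exploiting the fact that the behavior of any (adaptive) sampling algorithm on a \emph{fixed} hidden vector $(b_1, \dots, b_n)$ does not depend on the prior placed over such vectors; only the outer averaging over vectors changes. Let $\mathcal{U}$ denote the uniform distribution on $\{\pm 1\}^n$ used in Lemma~\ref{lemma:lb-bias}, and let $\A$ be an arbitrary algorithm for the learning the bias problem under the prior $\D = \D(S)$. For each $v \in \{\pm 1\}^n$ I would define $f(v)$ to be the probability that $\A$ outputs the correct bias $B = b_1 + \cdots + b_n$ when the true hidden vector is $v$. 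Since the arm rewards and hence the distribution of $\A$'s transcript are determined by $v$ (through $\theta_i = \mu + b_i \epsilon$) and $\A$'s internal randomness alone, the quantity $f(v)$ is well-defined independently of which prior we assume.

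First I would establish a pointwise domination of $\D(S)$ by $\mathcal{U}$. For a vector $v$ with exactly $s$ coordinates equal to $+1$, we have $\D(S)(v) = \frac{1}{|S|}\binom{n}{s}^{-1}$ when $s \in S$ and $\D(S)(v) = 0$ otherwise, while $\mathcal{U}(v) = 2^{-n}$. Hence for every $v$ in the support of $\D(S)$,
\begin{equation*}
\frac{\D(S)(v)}{\mathcal{U}(v)} = \frac{2^n}{|S|\binom{n}{s}} = \frac{1}{|S|}\cdot\frac{1}{2^{-n}\binom{n}{s}} \le \frac{1}{|S|\,q},
\end{equation*}
where the final inequality is exactly the definition $q = \min_{s \in S} 2^{-n}\binom{n}{s}$. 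For $v$ outside the support the bound $\D(S)(v) \le \frac{1}{|S|q}\mathcal{U}(v)$ holds trivially, so it holds for all $v \in \{\pm 1\}^n$.

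Given the domination, the correctness probability transfers directly: using $f(v) \ge 0$,
\begin{equation*}
\Pr_{\D(S)}[\text{correct}] = \sum_{v} \D(S)(v)\,f(v) \le \frac{1}{|S|q}\sum_v \mathcal{U}(v)\,f(v) = \frac{1}{|S|q}\,\Pr_{\mathcal{U}}[\text{correct}].
\end{equation*}
Finally I would invoke Lemma~\ref{lemma:lb-bias}: since the sample budget satisfies $T \le n\epsilon^{-2}/(20000\ln(n/\epsilon))$, \emph{every} algorithm---in particular $\A$, now viewed as operating under the uniform prior $\mathcal{U}$---identifies $B$ correctly with probability $O(n^{-1/2})$. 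Substituting this into the displayed bound yields $\Pr_{\D(S)}[\text{correct}] = O(n^{-1/2}\,q^{-1}\,|S|^{-1})$, which is the claimed estimate.

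I expect the only conceptually delicate point to be the justification that $f(v)$ is prior-independent, so that the uniform-prior guarantee of Lemma~\ref{lemma:lb-bias} may be applied to the very same algorithm $\A$. This is immediate once one notes that $\A$'s interaction with the arms---and therefore its output distribution---is a function solely of the realized means $\theta_i = \mu + b_i\epsilon$ and of $\A$'s randomness; the prior enters only through the outer sum over $v$. Everything else reduces to the elementary binomial-ratio computation above, so no genuine analytic obstacle remains.
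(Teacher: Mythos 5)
Your proof is correct and is essentially the same argument as the paper's: the paper packages the pointwise domination $\D(S)(v) \le \frac{1}{q|S|}\,\mathcal{U}(v)$ as an explicit joint distribution with an auxiliary indicator $Y$ (so that $\mathcal{U}$ is the marginal, $\D(S)$ is the conditional given $Y=1$, and $\Pr[Y=1]=q|S|$), then bounds $\Pr[\text{correct}\mid Y=1]\le \Pr[\text{correct}]/\Pr[Y=1]$ and invokes Lemma~\ref{lemma:lb-bias}. Your direct change-of-measure phrasing, including the observation that the algorithm's success probability $f(v)$ on a fixed hidden vector is prior-independent, is an equivalent and if anything cleaner rendering of the same idea.
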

\begin{proof}
Construct the joint distribution with probability mass function $p$ for the random variables \[(b_1, b_2, \dots, b_n, Y) \in \{\pm 1\}^{n} \times \{0, 1\}\] as follows (where we let $\beta = \sum_{i=1}^{n} (b_i+1)/2$ be the number of $+1$'s in the vector $(b_1, b_2, \dots, b_n)$),
\[
p(b_1, b_2, \dots, b_n, Y) = \left\{
\begin{array}{ll}
0 & \text{when $\beta \not\in S$ and $Y=1$}\\
2^{-n} {\binom{n}{\beta}} & \text{when $\beta \not\in S$ and $Y=0$}\\
q{{\binom{n}{\beta}}}^{-1} & \text{when $\beta \in S$ and $Y = 1$}\\
2^{-n} {\binom{n}{\beta}} - q{{\binom{n}{\beta}}}^{-1} & \text{when $\beta \in S$ and $Y = 0$}
\end{array}
\right. .
\]
It is clear that the marginal distribution on $(b_1, b_2, \dots, b_n)$ is uniform over $\{\pm 1\}^{n}$ and the conditional distribution on $(b_1, b_2, \dots, b_n)$ given that $Y = 1$ is $\D(S)$. Therefore, the probability that the player correctly guesses $B$ given $\D = \D(S)$ is 
\begin{multline*}
    \Pr_{(b_1, b_2, \dots, b_n) \sim \D(S)}[\text{correct guess}] = \Pr_{(b_1, b_2, \dots, b_n, Y) \sim p}[\text{correct guess} | Y = 1] \\
    \leq \frac{\Pr_{(b_1, b_2, \dots, b_n, Y) \sim p}[\text{correct guess}]}{\Pr_{(b_1, b_2, \dots, b_n, Y) \sim p}[Y = 1]} \leq O(n^{-1/2}) \cdot \frac{1}{q |S|},
\end{multline*}
where in the last inequality, we invoked Lemma~\ref{lemma:lb-bias}.
\end{proof}

\subsubsection{The Lower Bound Theorems for Communication Rounds and Concluding the Proof}\label{lb:final-recursion}

The following lemma helps to relate the lower bound results derived for the learning the bias problem in Section~\ref{sec:learn-bias} to the form of top $m$ arm identification.

\begin{lemma}\label{lemma:lb-first-round}
For sufficiently large $K$, any odd integer $n$ such that $n > K^{10}$, any $C \in (0, 1/8)$, and any $\mu \in (3/8, 5/8)$, consider a random instance from $\I(C, \mu, n)$. For any player that  makes at most $T$ sequential samples, when $T \leq nC^{-2} / (40000\ln (n/C))$, we have the following claims.
\begin{enumerate}
    \item The probability that the player correctly identifies the top $m$ arms is at most $O(n^{-1/4})$ (recall that $m = (n-1)/2$).
    \item The expected number of samples made to any arm in the block that contains the median arm  is at most $O(n^{-1/4} T)$. \label{lemma:lb-first-round-item2}
\end{enumerate}
\end{lemma}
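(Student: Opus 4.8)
The plan is to reduce both parts to the posterior-concentration bound for the ``learning the bias'' problem in Corollary~\ref{cor:lb-bias}, identifying the hidden bias $\xi$ (uniform on $\{-\eta,\dots,\eta\}$) with the quantity $B$ there. The structural observation that drives everything is that the middle blocks $I_1,\dots,I_{2\eta+1}$ are sampled \emph{independently} of $\xi$ (each $I_j$ comes from a distribution depending only on $j$), so the only arms that carry information about $\xi$ are the $N = n-\eta(2\eta+1)$ top/bottom arms, each with mean $\mu\pm C/2$. Assuming (as is standard and w.l.o.g.\ for such a lower bound) a uniformly random assignment of the top/bottom roles to these $N$ indices, the top/bottom arms form precisely a learning-the-bias instance with gap $\epsilon=C/2$ under the distribution $\D(S)$ of Corollary~\ref{cor:lb-bias}, where $S=\{N/2+\xi\eta : \xi\in\{-\eta,\dots,\eta\}\}$. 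Since $\eta=\Theta(n^{1/4})$ I have $N=\Theta(n)$, $|S|=2\eta+1=\Theta(n^{1/4})$, and every $s\in S$ obeys $|s-N/2|\le\eta^2=\Theta(\sqrt N)$; a local-CLT (Stirling) estimate of the near-central binomial coefficients then gives $q=\min_{s\in S}2^{-N}\binom{N}{s}=\Theta(n^{-1/2})$. One also verifies the budget is admissible: $T\le nC^{-2}/(40000\ln(n/C))\le N\epsilon^{-2}/(20000\ln(N/\epsilon))$ for large $n$.

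For Part~1 I would argue by reduction. Given any player $\mathcal{A}$ identifying $\Top_m$, build a learning-the-bias solver $\mathcal{A}'$ that simulates all middle blocks internally (sampling them from their known, $\xi$-independent distributions at no query cost) and forwards every query to a top/bottom arm to the real bias instance. When $\mathcal{A}$ outputs its set, $\mathcal{A}'$ counts how many top/bottom arms it contains; since all top arms lie in $\Top_m$ and no bottom arm does, this count equals $N/2+\xi\eta$ and hence recovers $\xi$ (equivalently $B$). Thus $\mathcal{A}$ being correct forces $\mathcal{A}'$ to be correct, and $\mathcal{A}'$ makes at most $T$ real queries, so Corollary~\ref{cor:lb-bias} bounds the success probability by $O(N^{-1/2}q^{-1}|S|^{-1})=O(n^{-1/2}\cdot n^{1/2}\cdot n^{-1/4})=O(n^{-1/4})$.

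For Part~2 I would bound the expected number of samples to the median block $I_{\xi+\eta+1}$ (identified in Lemma~\ref{lem:lb-instance-fact}) by summing a per-step bound over all $T$ steps. Writing $h_{t-1}$ for the history before step $t$, at most one block is queried at step $t$, so the chance this query lands in the median block is at most the posterior mode of $\xi$:
\[
\Pr[\text{arm}_t\in I_{\xi+\eta+1}]\ \le\ \mathop{\bE}_{h_{t-1}}\Big[\max_{\beta}\Pr[\xi=\beta\mid h_{t-1}]\Big].
\]
Because the block observations are independent of $\xi$, the posterior $\Pr[\xi=\beta\mid h_{t-1}]$ depends only on the top/bottom sub-transcript (which contains at most $T$ samples), so Corollary~\ref{cor:lb-bias} gives $\mathop{\bE}_{h_{t-1}}[\max_\beta\Pr[\xi=\beta\mid h_{t-1}]]=O(n^{-1/4})$ (fewer samples only spread the posterior further). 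Summing over the $T$ steps yields $\bE[N_{\xi+\eta+1}]\le T\cdot O(n^{-1/4})=O(n^{-1/4}T)$.

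The main obstacle will be making the decoupling rigorous: I must show that the posterior of $\xi$ given the full \emph{adaptive} history coincides with the posterior given only the top/bottom observations, which requires checking that the likelihood factorizes so that the adaptively chosen block queries contribute only a $\xi$-independent factor, and that the estimate behind Lemma~\ref{lemma:lb-bias}/Corollary~\ref{cor:lb-bias}, stated for a fixed budget, applies verbatim to a transcript with a (random) number of at most $T$ top/bottom samples. The remaining care is bookkeeping: justifying the random-placement assumption so the induced distribution is exactly $\D(S)$, and nailing the estimates $q=\Theta(n^{-1/2})$ and $|S|=\Theta(n^{1/4})$ that collapse the final bound to $O(n^{-1/4})$.
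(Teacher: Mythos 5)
Your proposal is correct, and Part~1 is essentially the paper's argument: reduce from the learning-the-bias problem by simulating the $\xi$-independent middle blocks internally at no query cost, forwarding only the queries to the $n'=n-\eta(2\eta+1)$ top/bottom arms, and decoding $\xi$ from the returned set (the paper decodes it from which block straddles the threshold, you from the count of top/bottom arms in the output --- both work). Your parameter bookkeeping is also the right one: $|S|=\Theta(n^{1/4})$ and $q=\Theta(n^{-1/2})$ by a local CLT estimate of the near-central binomial coefficients, giving $O((n')^{-1/2}q^{-1}|S|^{-1})=O(n^{-1/4})$; the paper's text actually misstates these two intermediate quantities ($|S|=\Theta(n^{-1/4})$, $q=\Theta(1)$), so your version is the one that makes the arithmetic close. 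For Part~2 you take a genuinely different route from the paper. You bound, step by step, the probability that the $t$-th query lands in the median block by the posterior mode $\bE_{h_{t-1}}[\max_\beta \Pr[\xi=\beta\mid h_{t-1}]]$ and sum over $T$ steps; this forces you to prove the decoupling fact you flag as the main obstacle, namely that the posterior of $\xi$ given the full adaptive transcript depends only on the top/bottom sub-transcript (true, by the standard factorization of the transcript likelihood, but it is an extra lemma). The paper instead defines a single randomized bias-guesser that outputs $z=\tilde z$ with probability $t_{\tilde z+\eta+1}/T$, so that its success probability is exactly $\bE[t_{\xi+\eta+1}]/T$, and applies Corollary~\ref{cor:lb-bias} once; this converts the expected sample count directly into a guessing probability and entirely sidesteps the posterior analysis. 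Your route buys a slightly stronger, per-step statement (a bound on the conditional probability of touching the median block at every time), at the cost of the factorization lemma; the paper's buys brevity. Both yield $\bE[t_{\xi+\eta+1}]\le O(n^{-1/4}T)$, so there is no gap, only an avoidable detour.
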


\begin{proof}
We prove the lemma by reducing the learning of the bias problem to the top-$m$-arm identification problem for the instance distribution $\I(C, \mu, n)$

Let us consider the learning the bias problem with $n' = n - \eta(2\eta + 1)$ arms (note that $n' \geq n/2$), $\epsilon = C$, the same $\mu$ parameter, and the distribution $\D = \D(S)$ where $S = \{n'/2 + z \eta : z \in \{-\eta, \eta + 1, \dots, \eta -1, \eta\}\}$. Once the expected rewards of the $n'$ arms are determined by a sample from $\D$, let $J$ be set of the arms. To construct a top-$m$-arm identification problem instance, we independently sample smaller problem instances $I_1, I_2, \dots, I_{2\eta + 1}$ such that $I_j \sim \I(C\sqrt{\eta/n}, \mu + \frac{j - \eta - 1}{8} \cdot C n^{-1/4}, \eta)$ for each $j \in [2\eta + 1]$. Let $I = J \cup I_1 \cup I_2 \cup \dots \cup I_{2\eta + 1}$. One can verify that $I$ follows the distribution $\I(C, \mu, n)$.

Now we prove the first claim. Suppose that the player correctly identifies the top $m$ arms with probability $p$. According to Lemma~\ref{lem:lb-instance-fact}, only the arms in block $(z + \eta +1)$ have non-empty intersection with both the set of top $m$ arms and the set of the remaining arms. Therefore, when conditioned on that the top $m$ arms are correctly identified, by checking the identified set of arms, the player can find out the value of $z$, and deduce that the bias $B = 2z \eta$. Therefore, there exists an algorithm correctly identifying the bias with probability at least $p$. Invoking Corollary~\ref{cor:lb-bias}, and noting that $|S| = \Theta(n^{-1/4})$, $q = \Theta(1)$, we have that 
\[
p \leq O\left((n')^{-1/2}q^{-1} \cdot |S|^{-1}\right) \leq O(n^{-1/4}).
\]

Regarding the second claim, let us consider the player, after making at most $T$ samples, guessing $z = \tilde{z}$ with probability $t_{\tilde{z} + \eta + 1}/T$, where $t_{j}$ is the number of samples made to the $j$-th block, and finally guessing $B = 2\tilde{z}{\eta}$. The probability that the player successfully identifies the bias $B$ is $\bE[t_{z + \eta + 1}/T]$. Invoking Corollary~\ref{cor:lb-bias}, we have that this value is upper bounded by $O(n^{-1/4})$. Therefore, we have $\bE[t_{z + \eta + 1}] \leq O(n^{-1/4}T)$, which proves the claim by noting that block $(z+\eta+1)$ contains the median arm in $I$, due to Lemma~\ref{lem:lb-instance-fact}.
\end{proof}

\paragraph{Definition of Augmented Algorithms and Uniform Upper Bound $\mathfrak{p}$.} For fixed and sufficiently large $K$, let $\mathcal{A}^{(\alpha)}_{R, T}$ be the set of $R$-round $K$-agent \emph{augmented} algorithms defined as follows. Any algorithm in $\mathcal{A}^{(\alpha)}_{R, T}$ has $R$ rounds of communication, where during each round, the time budget is $T/\sqrt{K}$. Before the first round, there is an \emph{augmented round} where a single thread is allowed to make $\alpha T$ sequential samples and broadcast the observations to all agents. Let
\begin{align}
    \mathfrak{p}^{(\alpha)}_{R, T}(C, n) \triangleq \sup_{\mathbb{A} \in \mathcal{A}^{(\alpha)}_{R, T}} \sup_{\mu: |\mu - 1/2| < 1/8 - C} \Pr_{I \sim \I(C, \mu, n), \mathbb{A}} [\text{$\mathbb{A}$ identifies the top $m = (n-1)/2$ arms}] \label{eq:lb-frakp-def}
\end{align}
be the best success probability of augmented algorithms in $\mathcal{A}^{(\alpha)}_{R, T}$ when the input instance follows $\I(C, \mu, n)$ for any $\mu : |\mu - 1/2| < 1/8 - C$, where the subscript of $\Pr$ specifies that the probability is taken over both $I$ and the randomness of algorithm $\mathbb{A}$. Clearly, if we can prove that $\mathfrak{p}^{(\alpha)}_{R, T}(C, n) < 0.1$ for any $\alpha$, we obtain the round complexity lower bound $R$ for fixed-time algorithms with time budget $T$. In what follows, we will prove upper bounds for  $\mathfrak{p}^{(\alpha)}_{R, T}(C, n)$ via induction.

\begin{lemma}\label{lemma:lb-main-recurrence}
For any positive $C \in (0, 1/8)$, any constant $\iota \geq 0$, suppose $n > K^{10}$, $R \geq 2$, and $\sqrt{K} \geq 80000 \ln^{\iota+1} (nC^{-2})$. Let $T = (n C^{-2}) \ln^\iota (n C^{-2})$, and let $\eta$ be the smallest odd integer that is greater than $\sqrt[4]{n}$. It holds that
\[
\mathfrak{p}^{(n^{-1/8})}_{R, T}(C, n) \leq O\left(n^{-3/32}\right) + \mathfrak{p}^{(\eta^{-1/8})}_{R-1, T}(C\sqrt{\eta/n}, \eta) .
\]
\end{lemma}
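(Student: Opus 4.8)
The plan is to prove the recurrence by a reduction that isolates the \emph{boundary block} $M := I_{\xi+\eta+1}$ (the block containing the median arm, see Item~\ref{lem:lb-instance-fact-item2} of Lemma~\ref{lem:lb-instance-fact}) and shows that any $R$-round augmented algorithm for $\I(C,\mu,n)$ induces an $(R-1)$-round augmented algorithm for the recursively defined sub-instance $M\sim\I(C\sqrt{\eta/n},\mu',\eta)$, losing only an additive $O(n^{-3/32})$ in success probability. First I would fix an arbitrary $\mathbb{A}\in\mathcal{A}^{(n^{-1/8})}_{R,T}$ and a $\mu$ with $|\mu-1/2|<1/8-C$, bound $\Pr_{I\sim\I(C,\mu,n)}[\mathbb{A}\text{ identifies }\Top_m]$, and only take the two suprema defining $\mathfrak{p}$ at the very end.

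The first step is to control how many pulls $\mathbb{A}$ can devote to $M$ before its first communication, i.e.\ during the augmented round plus round~$1$. Writing $a$ for the augmented-round pulls landing in $M$ and $b_i$ for agent $i$'s round-$1$ pulls in $M$, I would note that the augmented thread alone is a single player making $n^{-1/8}T$ pulls, and that the concatenation ``augmented thread, then agent $i$'' is a single player making $n^{-1/8}T+T/\sqrt K$ pulls; this serialisation is legitimate because, with no round-$1$ communication, agent $i$'s behaviour depends only on its own randomness and the broadcast augmented observations, so its marginal pull pattern is unaffected by the other agents. Under $T=(nC^{-2})\ln^\iota(nC^{-2})$ and $\sqrt K\ge 80000\ln^{\iota+1}(nC^{-2})$ both budgets satisfy the precondition of Lemma~\ref{lemma:lb-first-round}, so Item~\ref{lemma:lb-first-round-item2} gives $\bE[a]=O(n^{-1/4}\cdot n^{-1/8}T)$ and $\bE[a+b_i]=O(n^{-1/4}(n^{-1/8}T+T/\sqrt K))$, whence $\bE[b_i]\le\bE[a+b_i]$. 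Summing over $i$ and using $K<n^{1/10}$ (from $n>K^{10}$), I obtain $\bE\!\left[a+\sum_i b_i\right]=O(n^{-1/4}(Kn^{-1/8}+\sqrt K)T)=O(n^{-1/5}T)$. Letting $E$ be the event that more than $\eta^{-1/8}T$ pulls land in $M$ before the first communication, Markov's inequality with $\eta^{-1/8}\ge\tfrac12 n^{-1/32}$ yields $\Pr[E]=O(n^{-1/5+1/32})=O(n^{-3/32})$.

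Next I would run the reduction conditioned on a fixed bias $\xi=\xi_0$, so that $M$ sits at position $\xi_0+\eta+1$ and $\mu'=\mu+\tfrac{\xi_0}{8}Cn^{-1/4}$ is determined; since $|\xi_0|\le\eta\le 2n^{1/4}$, a short calculation gives $|\mu'-1/2|<1/8-\tfrac{3C}{4}<1/8-C\sqrt{\eta/n}$, so $\mu'$ is admissible for the sub-instance. Given a fresh $M\sim\I(C\sqrt{\eta/n},\mu',\eta)$, the algorithm $\mathbb{A}'$ embeds $M$ at position $\xi_0+\eta+1$ and simulates the top arms, the bottom arms, and all other blocks internally (their laws are known). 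In its \emph{augmented} round a single thread replays all of $\mathbb{A}$'s augmented round and round~$1$ — issuing a real pull only when $\mathbb{A}$ pulls $M$, simulating every other pull, and finally broadcasting the post-round-$1$ state of all $K$ agents; conditioned on $\bar E$ this uses at most $\eta^{-1/8}T$ real pulls, and if $E$ occurs $\mathbb{A}'$ simply fails. Rounds $2,\dots,R$ of $\mathbb{A}$ then become rounds $1,\dots,R-1$ of $\mathbb{A}'$, each agent making at most $T/\sqrt K$ pulls to $M$ per round, so $\mathbb{A}'\in\mathcal{A}^{(\eta^{-1/8})}_{R-1,T}$. The budget $T$ is unchanged because the construction is complexity-preserving: $(C\sqrt{\eta/n})^{-2}\eta=nC^{-2}$, so with $n'=\eta$, $C'=C\sqrt{\eta/n}$ we have $(n'C'^{-2})\ln^\iota(n'C'^{-2})=T$ and the hypothesis $\sqrt K\ge 80000\ln^{\iota+1}(n'C'^{-2})$ is identical. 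Since the top-$m$ set of $I$ intersected with $M$ is exactly the top half of $M$ (Lemma~\ref{lem:lb-instance-fact}), on $\{\mathbb{A}\text{ succeeds}\}\cap\bar E$ the algorithm $\mathbb{A}'$ succeeds on the sub-instance, giving $\Pr[\mathbb{A}\text{ succeeds}\mid\xi=\xi_0]\le\mathfrak{p}^{(\eta^{-1/8})}_{R-1,T}(C\sqrt{\eta/n},\eta)+\Pr[E\mid\xi=\xi_0]$. Averaging over $\xi_0$ and taking $\sup$ over $\mathbb{A}$ and $\mu$ closes the recurrence.

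The hard part, and the step I expect to be the main obstacle, is the reduction itself — in particular justifying that the pulls $\mathbb{A}$ makes to $M$ across its \emph{parallel} round~$1$ can be faithfully re-serialised by a single thread inside $\mathbb{A}'$'s augmented round without exceeding $\eta^{-1/8}T$. This is exactly what the sample-accounting of the second paragraph buys, and it is the reason the augmented-round device (free, shared, serial pre-processing) is built into the definition of $\mathcal{A}^{(\alpha)}_{R,T}$. A secondary but essential point is the parameter bookkeeping showing the sub-problem inherits the same $T$ and the same $\sqrt K$-condition, which rests on the complexity-preserving identity $(C\sqrt{\eta/n})^{-2}\eta=nC^{-2}$ underlying Lemma~\ref{lemma:lb-H}.
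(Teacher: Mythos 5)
Your proposal is correct and follows essentially the same route as the paper's proof: bound the expected number of pre-first-communication pulls landing in the median block via Item~\ref{lemma:lb-first-round-item2} of Lemma~\ref{lemma:lb-first-round} applied to the serialized ``augmented thread plus one agent'', apply Markov to define the good event costing $O(n^{-3/32})$, and then, conditioned on the block index, fold the augmented round and round~1 into the augmented round of an $(R-1)$-round algorithm for the sub-instance. The minor differences (your threshold $\eta^{-1/8}T$ versus the paper's $nC^{-2}\eta^{-1/8}$, and your explicit admissibility check for $\mu'$) do not change the argument.
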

\begin{proof}
Fix any algorithm $\mathbb{A} \in \mathcal{A}^{(\alpha)}_{R, T}$ and any $\mu \in (3/8 + C, 5/8-C)$, we will upper bound the success probability of $\mathbb{A}$ given the input instance $I \sim \I(C, \mu, n)$. 

%For any agent $i \in [K]$, let $E_i$ be the event that agent $i$ terminates by the end of the first round and decides the correct set of top $m$ arms. Let $E = \cap_i E_i$ be the event that all agents terminate by the end of the first round and decide the correct set. Let $F$ be the event that all agents proceed to the second round and later unanimously decide the correct answer. We have
%\begin{align}
%    \Pr_{I \sim \I(C, \mu, n), \mathbb{A}} [\text{$\mathbb{A}$ identifies the top $m = (n-1)/2$ arms}] = \Pr_{I \sim \I(C, \mu, n), \mathbb{A}} [E] + \Pr_{I \sim \I(C, \mu, n), \mathbb{A}} [F]. \label{eq:lb-main-recurrence-1}
%\end{align}
%Note that
%\begin{align}
%     \Pr_{I \sim \I(C, \mu, n), \mathbb{A}} [E]  \leq \min_{i \in [K]}   \Pr_{I \sim \I(C, \mu, n), \mathbb{A}} [E_i] \leq O\left(n^{-1/4}\right), \label{eq:lb-main-recurrence-2}
%\end{align}
%where the last inequality is due to the first claim of Lemma~\ref{lemma:lb-first-round} by observing that the total number of samples made in the augmented round and the first round by agent $i$ is at most $T /\sqrt{K} + n^{-1/8} T \leq T (K^{-1/2} + K^{-{5/4}}) \leq nC^{-2}/(40000 \ln (n/C))$.

Recall in the construction of the instance $I \sim \I(C, \mu, n)$, $(2\eta+1)$ blocks are independently sampled. Let $\zeta \in [2\eta+1]$ be the block which the median arm is in. Let $n_0$ be the number of samples made in the augmented round to arms in block $\zeta$, and let $n_i$ be the number of samples made in the first round to arms in block $\zeta$ by agent $i$. For every agent $i$, by the second claim of Lemma~\ref{lemma:lb-first-round} (and  observing that the total number of samples made in the augmented round and the first round by agent $i$ is at most $T /\sqrt{K} + n^{-1/8} T \leq T (K^{-1/2} + K^{-{5/4}}) \leq nC^{-2}/(40000 \ln (n/C))$), we have that 
\[
\mathop{\bE}_{I \sim \I(C, \mu, n), \mathbb{A}}[n_0 + n_i]\leq O \left(nC^{-2} \cdot n^{-1/4}\right) .
\]
Therefore, 
\[
\mathop{\bE}_{I \sim \I(C, \mu, n), \mathbb{A}}[n_0 + n_1 + n_2 + \dots + n_K] \leq O\left(n^{3/4} C^{-2} \cdot K\right) \leq O\left(n^{7/8} C^{-2}\right).
\]
Let $\E_5$ be the event that $n_0 + n_1 + n_2 + \dots + n_K \leq n C^{-2} \cdot \eta^{-1/8}$. By Markov's inequality, we have
\begin{align}
\Pr_{I \sim \I(C, \mu, n), \mathbb{A}} [\E_5] \geq 1 -\frac{O(n^{7/8}C^{-2})}{nC^{-2}\cdot \eta^{-1/8}} = 1-O\left(n^{-3/32}\right). \label{eq:lb-main-recurrence-3}
\end{align}

Our next goal is to establish \eqref{eq:lb-main-recurrence-4} for every $j \in [2\eta + 1]$. Fix such $j$, consider the following algorithm $\mathbb{B}$ that works for an instance $I_j \sim \I(C\sqrt{\eta/n}, \mu + \frac{j-\eta-1}{8}\cdot Cn^{-1/4}, \eta)$. The algorithm first samples $I_{j'} \sim \I(C\sqrt{\eta/n}, \mu + \frac{j'-\eta-1}{8}\cdot Cn^{-1/4}, \eta)$ for all $j' \neq j$. The algorithm $\mathbb{B}$ also creates $((n - \eta(2\eta+1))/2 + (j-\eta-1)\eta)$ Bernoulli arms with mean reward $(\mu + C/2)$, and $((n - \eta(2\eta+1))/2 - (j-\eta-1)\eta)$ Bernoulli arms with mean reward $(\mu - C/2)$. Combining all the arms (including those in $I_1, I_2, \dots, I_{2\eta+1}$), we have an instance $I^{\flat}$ of $n$ arms.  The algorithm $\mathbb{B}$ simulates algorithm $\mathbb{A}$ with input instance $I^{\flat}$ in the following manner. Whenever $\mathbb{A}$ is to sample an arm in $I_j$, $\mathbb{B}$ queries the real arm in $I_j$, otherwise $\mathbb{B}$ simulates a sample to the artificial arm, and feed the observation to $\mathbb{A}$. More importantly, only a single thread is used to simulate $\mathbb{A}$ during the augmented round and the first round. Then, if event $\E_5$ holds, all $K$ agents are used to simulate the corresponding agents in $\mathbb{A}$ from the second round and reports $I_j$ intersecting the set of top arms returned by $\mathbb{A}$; otherwise, $\mathbb{B}$ reports failure and terminates.

Note that $\mathbb{B} \in \mathcal{A}_{R-1,T}^{(\eta^{-1/8})}$. By the definition in \eqref{eq:lb-frakp-def}, we have
\begin{align*}
\Pr_{I_j, \mathbb{B}} [\text{$\mathbb{B}$ identifies the top $(\eta-1)/2$ arms in $I_j$}] \leq \mathfrak{p}^{(\eta^{-1/8})}_{R-1, T}(C\sqrt{\eta/n}, \eta) . 
\end{align*}
Note that $I^{\flat}$ constructed above follows the conditional distribution $\I(C, \mu, n)$ given that the median arm is in the $j$-th block. Also note that when $\E_5$ holds and  $\mathbb{A}$ is correct, $\mathbb{B}$ is also correct. Therefore, we have
\begin{align}
&\Pr_{I\sim\I(C,\mu, n), \mathbb{A}} [\text{$\mathbb{A}$ identifies the top $(n-1)/2$ arms in $I$} | \zeta = j] \nonumber \\
\leq& \Pr_{I\sim\I(C,\mu, n), \mathbb{A}} [\text{$\mathbb{A}$ identifies the top $(n-1)/2$ arms in $I$} \wedge \E_5 | \zeta = j] + \Pr_{I\sim\I(C,\mu, n), \mathbb{A}} [\overline{\E_5} | \zeta = j] \nonumber  \\
\leq& \Pr_{I_j, \mathbb{B}} [\text{$\mathbb{B}$ identifies the top $(\eta-1)/2$ arms in $I_j$}] + \Pr_{I\sim\I(C,\mu, n), \mathbb{A}} [\overline{\E_5} | \zeta = j] \nonumber  \\
\leq& \mathfrak{p}^{(\eta^{-1/8})}_{R-1, T}(C\sqrt{\eta/n}, \eta) + \Pr_{I\sim\I(C,\mu, n), \mathbb{A}} [\overline{\E_5} | \zeta = j] , \label{eq:lb-main-recurrence-4}
\end{align}
where $\zeta$ is the block which the median arm is in.

Since \eqref{eq:lb-main-recurrence-4} holds for all $j \in [2\eta + 1]$. We have
\begin{align}
\Pr_{I\sim\I(C,\mu, n), \mathbb{A}} [\text{$\mathbb{A}$ identifies the top $(n-1)/2$ arms in $I$}] \leq \mathfrak{p}^{(\eta^{-1/8})}_{R-1, T}(C\sqrt{\eta/n}, \eta) + \Pr_{I\sim\I(C,\mu, n), \mathbb{A}} [\overline{\E_5}] . \label{eq:lb-main-recurrence-5}
\end{align}
Combining \eqref{eq:lb-main-recurrence-3} and \eqref{eq:lb-main-recurrence-5}, we conclude the proof of the lemma.
\end{proof}

When $R=1$, the agents do not communicate except for the shared observation from the augmented round. Therefore, Lemma~\ref{lemma:lb-first-round} implies that for $C \in (0, 1/8)$,  any constant $\iota \geq 0$, $n > K^{10}$, $\sqrt{K} \geq 80000 \ln^{\iota + 1}(nC^{-2})$, and $T = (nC^{-2})\ln^{\iota} (nC^{-2})$, it holds that
\begin{align}
\mathfrak{p}^{(n^{-1/8})}_{1, T}(C, n) \leq O\left(n^{-1/4}\right). \label{eq:lb-main-recurrence-base}
\end{align}

Combining Lemma~\ref{lemma:lb-main-recurrence} and \eqref{eq:lb-main-recurrence-base}, we have the following lemma.
\begin{lemma}\label{lemma:lb-loglogm}
For $C \in (0, 1/8)$,  any constant $\iota \geq 0$, $n > K^{10}$, $\sqrt{K} \geq 80000 \ln^{\iota + 1}(nC^{-2})$, $T = (nC^{-2})\ln^{\iota} (nC^{-2})$, and $R = \lceil \log_{4} \frac{\ln n}{10 \ln K}\rceil$, it holds that
\[
\mathfrak{p}^{(n^{-1/8})}_{R, T}(C, n) \leq \sum_{r=0}^{R-1} O\left(n^{-4^{-r} \cdot 3/32}\right) \leq O(K^{-1/5}).
\]
\end{lemma}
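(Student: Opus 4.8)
The plan is to unroll the recurrence of Lemma~\ref{lemma:lb-main-recurrence} exactly $R-1$ times, terminating with the base case (\ref{eq:lb-main-recurrence-base}). I would define the sequence of instance sizes $n_0 = n$ and $n_{r+1} = \eta_r$, where $\eta_r$ is the smallest odd integer exceeding $\sqrt[4]{n_r}$, together with the gap parameters $C_0 = C$ and $C_{r+1} = C_r\sqrt{\eta_r/n_r}$. First I would record the two structural facts that drive the induction. Since $\eta_r > \sqrt[4]{n_r}$ we have $n_{r+1} > n_r^{1/4}$, hence by iterating $n_r > n^{4^{-r}}$; and since $\eta_r < n_r$ we have $C_r < C < 1/8$, so every $C_r$ lies in the admissible range $(0,1/8)$. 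The crucial observation, however, is the scale invariance
\[
n_{r+1}C_{r+1}^{-2} = \eta_r\cdot C_r^{-2}\,(n_r/\eta_r) = n_r C_r^{-2},
\]
so that $n_r C_r^{-2} = nC^{-2}$ for all $r$. This is exactly what makes the hypotheses of Lemma~\ref{lemma:lb-main-recurrence} transferable across levels: both the requirement $T = (n_r C_r^{-2})\ln^{\iota}(n_r C_r^{-2}) = (nC^{-2})\ln^{\iota}(nC^{-2})$ and the requirement $\sqrt{K}\ge 80000\ln^{\iota+1}(n_r C_r^{-2}) = 80000\ln^{\iota+1}(nC^{-2})$ hold simultaneously at every $r$, by the hypotheses of the present lemma.

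Next I would verify that all $R-1$ applications are legitimate, i.e. that $n_r > K^{10}$ for every $r \le R-1$. Writing $x = \log_4\frac{\ln n}{10\ln K}$, the choice $R = \lceil x\rceil$ gives $R-1 < x$, hence $4^{R-1} < \frac{\ln n}{10\ln K}$ and therefore
\[
n_{R-1} > n^{4^{-(R-1)}} > n^{10\ln K/\ln n} = K^{10}.
\]
Since $n_r$ is strictly decreasing in $r$, this yields $n_r \ge n_{R-1} > K^{10}$ for all $r \le R-1$; in particular the recurrence applies at levels $r=0,\dots,R-2$ (where the round budget $R-r \ge 2$), and the base case (\ref{eq:lb-main-recurrence-base}) applies at level $R-1$ (where $n_{R-1} > K^{10}$). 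Unrolling then gives
\[
\mathfrak{p}^{(n^{-1/8})}_{R,T}(C,n) \le \sum_{r=0}^{R-2} O\!\left(n_r^{-3/32}\right) + O\!\left(n_{R-1}^{-1/4}\right) \le \sum_{r=0}^{R-1} O\!\left(n^{-4^{-r}\cdot 3/32}\right),
\]
where the last step uses $n_r > n^{4^{-r}}$ and $n_{R-1}^{-1/4} \le n_{R-1}^{-3/32}$. This establishes the first claimed inequality.

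Finally I would bound the sum. Setting $a_r = n^{-4^{-r}\cdot 3/32}$ we have $a_{r-1} = a_r^{4}$, so the terms decay super-geometrically as $r$ decreases and the sum is dominated by its largest term $a_{R-1}$. Using $n_{R-1} > K^{10}$ gives $a_{R-1} \le n_{R-1}^{-3/32} < K^{-30/32} \le 1/2$ for large $K$, whence
\[
\sum_{r=0}^{R-1} a_r \le 2a_{R-1} = O\!\left(K^{-30/32}\right) = O\!\left(K^{-1/5}\right),
\]
since $30/32 \ge 1/5$. I expect the only genuinely delicate point to be the bookkeeping that keeps the hypotheses of Lemma~\ref{lemma:lb-main-recurrence} valid along the whole chain: everything rests on the invariance $n_r C_r^{-2} \equiv nC^{-2}$, which freezes both the time-budget relation and the condition on $K$, while the choice $R = \lceil \log_4\frac{\ln n}{10\ln K}\rceil$ is tuned precisely so that the recursion bottoms out at an instance still larger than $K^{10}$.
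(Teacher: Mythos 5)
Your proposal is correct and follows exactly the route the paper intends: the paper's proof of this lemma is just the one-line remark that it follows by combining Lemma~\ref{lemma:lb-main-recurrence} with the base case \eqref{eq:lb-main-recurrence-base}, and your write-up supplies precisely the bookkeeping that makes this work — the invariance $n_r C_r^{-2} = nC^{-2}$ keeping the hypotheses valid at every level, the check that $n_{R-1} > K^{10}$ so the chain bottoms out legitimately, and the super-geometric decay of the resulting sum. No gaps.
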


Theorem~\ref{thm:lb-loglogm} is proved by setting $\iota = 1$ and $C = 1/10$ in Lemma~\ref{lemma:lb-loglogm}, together with Lemma~\ref{lemma:lb-H} and the discussion below the definition of $\mathfrak{p}$ (Equation \eqref{eq:lb-frakp-def}).

\section{The Fixed-Confidence Case}
\label{sec:confidence}

In this section we discuss the fixed-confidence case.  We first present a collaborative algorithm for the fixed-confidence case. The algorithm is inspired by~\cite{HKKLS13} and~\cite{BWV13}, and described in Algorithm~\ref{alg:confidence}.

\begin{algorithm}[t]
\caption{Collaborative algorithm for fixed-confidence setting.}
\label{alg:confidence}
\KwIn{a set of arms $I$, parameter $m$, and a confidence parameter $\delta$.}
\KwOut{a set of top-$m$ arms of $I$.}
Initialize $I_0 \gets I$, $m_0 \gets m$, $
\Acc_0 \gets \emptyset$, $\Rej_0 \gets \emptyset$, $r \gets 0$, $T_{-1} \gets 0$\;
for $r = 0, 1, \dotsc$, let $\eps_r = 2^{-(r+1)}$ and $T_r = {8 \log (4 n (r+1)^2 \delta^{-1})}/(K \eps^2_{r})$\;
\While{$I_{r} \neq \emptyset$}{
each agent pulls each arm in $I_{r}$ for $T_{r} - T_{r - 1}$ times\;
for each $i \in I_{r}$, let $\hat{\theta}^{(r)}_i$ be the estimated mean of the $i$-th arm in $I_r$ after $K T_r$ pulls (over all rounds and agents so far)\;
let $\pi_r : \{1, \dotsc, \abs{I_{r}}\} \to I_{r}$ be the bijection such that $\hat{\theta}^{(r)}_{\pi_r(1)} \ge \hat{\theta}^{(r)}_{\pi_r(2)} \ge \dotsc \ge \hat{\theta}^{(r)}_{\pi_r(|I_r|)}$\;
$\Acc_{r + 1} \gets \Acc_r \cup \{i \in I_{r} : \hat{\theta}^{(r)}_i > \hat{\theta}^{(r)}_{\pi_r(m_r + 1)} + \eps_r\}$\;
$\Rej_{r+1} \gets \Rej_r \cup \{i \in I_{r} : \hat{\theta}^{(r)}_i < \hat{\theta}^{(r)}_{\pi_r(m_r)} - \eps_r\}$\;
$m_{r + 1} \gets m - \abs{\Acc_{r+1}} $\;
$I_{r + 1} \gets I_{r} \setminus \left(\Acc_{r+1} \cup \Rej_{r + 1}\right)$\;
$r \gets r + 1$\;
}
\KwRet{$\Acc_r$.}
\end{algorithm}

\begin{theorem}
\label{thm:confidence-up}
There is an algorithm (Algorithm~\ref{alg:confidence}) that solves top-$m$ arm identification with probability at least $1-\delta$, using $O\left(\log \left(1/\Delta^{\langle m\rangle}_{[m]}\right)\right)$ rounds of communication and 
$O\left(\frac{H^{\langle m\rangle}}{K} \log\left(\frac{n}{\delta} \log{H^{\langle m\rangle}}\right)\right)$ time.
\end{theorem}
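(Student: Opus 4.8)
The plan is to run the standard ``good event'' analysis for this successive accept/reject scheme (Algorithm~\ref{alg:confidence}). First I would define the event $\mathcal{E}$ that every empirical mean stays close to its true mean throughout, namely $|\hat\theta_i^{(r)} - \theta_i| \le \eps_r/4$ for every arm $i$ and every round $r$. Since $\hat\theta_i^{(r)}$ averages $K T_r$ i.i.d.\ samples and $K T_r = 8\log(4n(r+1)^2/\delta)/\eps_r^2$, Chernoff--Hoeffding gives $\Pr[|\hat\theta_i^{(r)}-\theta_i|>\eps_r/4] \le 2\exp(-KT_r\eps_r^2/8) = \delta/(2n(r+1)^2)$; a union bound over the $n$ arms and all rounds, using $\sum_{r\ge 0}(r+1)^{-2}=\pi^2/6$, then yields $\Pr[\mathcal{E}] \ge 1-\delta$. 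Everything else is argued deterministically on $\mathcal{E}$.

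Next, correctness. I would prove by induction on $r$ the invariant $(\Acc_r\subseteq\Top_m)\wedge(\Rej_r\subseteq I\setminus\Top_m)$. Under this invariant $m_r = m - |\Acc_r|$ is exactly the number of top-$m$ arms still in $I_r$, so among any $m_r+1$ arms of $I_r$ at least one lies outside $\Top_m$, and any $m_r$ arms miss at least one arm of $\Top_m$. The crucial step is: if some $i\notin\Top_m$ were accepted then $i$ sits in the empirical top $m_r$ (the accept rule forces this), so by pigeonhole some true top arm $a\in I_r$ has empirical rank $>m_r$, i.e.\ $\hat\theta_a^{(r)}\le\hat\theta_{\pi_r(m_r+1)}^{(r)}$. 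Combining the accept inequality $\hat\theta_i^{(r)} > \hat\theta_{\pi_r(m_r+1)}^{(r)}+\eps_r \ge \hat\theta_a^{(r)}+\eps_r$ with the two $\eps_r/4$ concentration bounds forces $\theta_i>\theta_a$, contradicting $\theta_a\ge\theta_{[m]}\ge\theta_i$. The rejection case is symmetric, which closes the induction, so on $\mathcal{E}$ the output is never corrupted.

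Then termination and round complexity. On $\mathcal{E}$ I would show that an arm $i\in\Top_m$ is accepted in the first round with $\eps_r < 2\Delta_i^{\langle m\rangle}/3$: choosing a non-top arm $b$ among the empirical top $m_r+1$ gives $\hat\theta_{\pi_r(m_r+1)}^{(r)}\le\hat\theta_b^{(r)}\le\theta_{[m+1]}+\eps_r/4$, hence $\hat\theta_i^{(r)}-\hat\theta_{\pi_r(m_r+1)}^{(r)}\ge\Delta_i^{\langle m\rangle}-\eps_r/2>\eps_r$; symmetrically for $i\notin\Top_m$. Thus every arm is removed by round $r_i=O(\log(1/\Delta_i^{\langle m\rangle}))$, the loop halts once the smallest-gap arm leaves, and since $\min_i\Delta_i^{\langle m\rangle}=\theta_{[m]}-\theta_{[m+1]}=\Delta_{[m]}^{\langle m\rangle}$, the number of rounds is $O(\log(1/\Delta_{[m]}^{\langle m\rangle}))$. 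Together with the invariant this shows $\Acc$ equals $\Top_m$ at termination.

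Finally, the running time. The total pulls per agent equals $\sum_r |I_r|(T_r-T_{r-1}) = \sum_i T_{r_i}$, because arm $i$ receives exactly $T_{r_i}$ pulls per agent before removal. Using $\eps_{r_i}=\Theta(\Delta_i^{\langle m\rangle})$ I get $T_{r_i}=O\!\big(\log(4n(r_i+1)^2/\delta)/(K(\Delta_i^{\langle m\rangle})^2)\big)$. The bookkeeping I expect to be the only mildly delicate point is absorbing the per-arm log factor into the claimed bound: since $(\Delta_i^{\langle m\rangle})^{-2}\le H^{\langle m\rangle}$ we have $1/\Delta_i^{\langle m\rangle}\le\sqrt{H^{\langle m\rangle}}$ and $r_i=O(\log H^{\langle m\rangle})$, so $\log(4n(r_i+1)^2/\delta)=O\!\big(\log\tfrac{n}{\delta}+\log\log H^{\langle m\rangle}\big)=O\!\big(\log(\tfrac{n}{\delta}\log H^{\langle m\rangle})\big)$ uniformly in $i$; pulling this factor out and using $\sum_i(\Delta_i^{\langle m\rangle})^{-2}=H^{\langle m\rangle}$ yields the stated $O\!\big(\tfrac{H^{\langle m\rangle}}{K}\log(\tfrac{n}{\delta}\log H^{\langle m\rangle})\big)$ time. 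The main conceptual obstacle is the correctness induction itself: arranging the pigeonhole/concentration argument so that the empirical $(m_r+1)$-th order statistic is pinned against $\theta_{[m+1]}$, making both the ``never misclassify'' and the ``classify by round $r_i$'' statements fall out of the same two-sided $\eps_r/4$ bound.
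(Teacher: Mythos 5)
Your proposal is correct and follows essentially the same route as the paper's proof: the same $\eps_r/4$ concentration event, the same induction establishing $\Acc_r\subseteq\Top_m$, $\Rej_r\subseteq I\setminus\Top_m$, and elimination of every arm with $\Delta_i^{\langle m\rangle}=\Omega(\eps_r)$, and the same per-arm accounting $\sum_i T_{r(i)}$ with the $\log\log H^{\langle m\rangle}$ factor absorbed via $1/\Delta_i^{\langle m\rangle}\le\sqrt{H^{\langle m\rangle}}$. The only differences are cosmetic (a pigeonhole phrasing of the misclassification step and slightly different constants in the elimination threshold), so no further changes are needed.
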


\begin{proof}
First, by Chernoff-Hoeffding we have that for any $r \ge 0$ and $i \in I_r$,
\[\Pr\left[\abs{\hat{\theta}^{(r)}_i - \theta_i} \ge \frac{\eps_r}{4}\right]\le 2 \exp\left(-\frac{\eps^2_r}{8} K T_r \right) \le \frac{\delta}{2n(r + 1)^2}.\] 
By a union bound, the event $\E_3 : \forall{i, r}, \abs{\hat{\theta}^{(r)}_i - \theta_i} \le {\eps_r/4}$ holds with probability at least $1 - \delta$.

It suffices to show that conditioned on event $\E_3$, the algorithm does not make any error and terminates using the stated time and rounds. We  prove this by induction on $r$.  We have the following induction hypothesis:
\begin{enumerate}
    \item \label{ind:i1} $m_r = |\Top_m \cap I_r|$,
    \item \label{ind:i2}$\Acc_r \subseteq \Top_m$ and $\Rej_r \subseteq I \setminus \Top_m$,
    % \item $\{i \in I \mid \Delta^{\langle m\rangle}_i \le \eps_r \} \subset I_r$,
    \item \label{ind:i3} $\{i \in I \mid \Delta^{\langle m\rangle}_i \ge 4 \eps_r \} \cap I_{r+1} =\emptyset$.
\end{enumerate}

It is easy to see that the base case ($r = 0$) holds trivially. Let us assume that the hypothesis holds for round $(r - 1)$, and consider round $r$.  By event $\E_3$ we have 
\begin{equation}
\label{eq:x-1}
\forall{i \in I_r : \abs{\hat{\theta}^{(r)}_i - \theta_i} \le \eps_r/4}.
\end{equation}
And for any $a \in \{1, \dotsc, |I_{r}|\}$ we have
\begin{equation}
\label{eq:x-2}
\theta_{[a]}(I_r) - \eps_r / 4 \le \hat{\theta}^{(r)}_{\pi_r(a)} \le \theta^{(r)}_{[a]}(I_r) + \eps_r/4\,.    
\end{equation}
If $\hat{\theta}^{(r)}_{i} > \hat{\theta}^{(r)}_{\pi_r(m_r + 1)} + \eps_r$, in which case the algorithm adds $i$ to $\Acc_{r+1}$, then by (\ref{eq:x-1}) and (\ref{eq:x-2}) we have
\begin{equation*}
\theta_i + \eps_r / 4 \ge \hat{\theta}^{(r)}_{i} > \hat{\theta}^{(r)}_{\pi_r(m_r + 1)} + \eps_r \ge \theta_{[m_r + 1]}(I_r) +  3\eps_r/4.
\end{equation*}
We thus have $\theta_i - \theta_{[m_r+1]}(I_r) > \eps_r/2$, which implies that $\theta_i \ge\theta_{[m_r]}(I_r)$. By the first and second items of the induction hypothesis, we have $i \in \Top_m$, which implies $m_{r+1} = \abs{\Top_m \cap I_{r+1}}$ and $\Acc_{r + 1} \subseteq \Top_m$. Similarly we can also show $\Rej_{r + 1} \subseteq I \setminus \Top_m$.

We next consider the third item of the induction hypothesis.   For an arm $i \in I_r \supseteq I_{r+1}$ such that  $\Delta_i^{\langle m \rangle} \ge 4\eps_r$ and $\theta_i \ge \theta_{[m]}$, we have
\begin{eqnarray}
\label{eq:x-3}
 \hat{\theta}^{(r)}_i &\ge& \theta_i - {\eps_r}/{4} = \Delta^{\langle m \rangle}_i + \theta_{[m + 1]} - {\eps_r}/{4} \nonumber \\
 &\ge& \theta_{[m_r + 1]}(I_r) + \Delta^{\langle m\rangle}_i - {\eps_r}/{4} \nonumber \\
 &\ge& \hat{\theta}^{(r)}_{\pi_r(m_r  + 1)} + \Delta^{\langle m\rangle}_i - {\eps_r}/{2} \nonumber \\
 &\ge& \hat{\theta}^{(r)}_{\pi_r(m_r  + 1)} + 2\eps_r. \nonumber
\end{eqnarray}
Thus the $i$-th item in $I$ will be added into $\Acc_{r + 1}$, and thus will not appear in $I_{r+1}$.  By the same line of arguments, we can show that for any arm $i \in I_r$, if $\Delta_i^{\langle m \rangle} \ge 4\eps_r$ and $\theta_i \le \theta_{[m+1]}$, then $i \in \Rej_{r+1}$ and will not appear in $I_{r+1}$.  

With the three items in the induction hypothesis, we prove the correctness of the algorithm and analyze its time and round complexities.
By the definition of $\eps_r$, when $r \ge r_0 =  \log (4/\Delta^{\langle m\rangle}_i)$, we have $\{i \in I \mid \Delta^{\langle m\rangle}_i \ge 4 \eps_r\} = I$.  We have the followings: 
\begin{enumerate}
\item By the third item of the induction hypothesis, the algorithm will terminate in $r_0$ rounds.  

\item By the second item of the induction hypothesis, we have $\Acc_{r_0} = \Top_m$. 

\item Note that if $\eps_r \le \Delta^{\langle m \rangle}_i / 4$, then each agent pulls the $i$-th arm for at most $T_r = 8 \log(4 n (r+1)^2 \delta^{-1}) / (K \eps^2_{r})$ times.  Let $r(i) = \min_r \{\eps_r \le \Delta^{\langle m\rangle}_i / 4\}$; we thus have $\Delta^{\langle m\rangle}_i / 8 \le \eps_{r(i)} \le \Delta^{\langle m\rangle}_i / 4$. By the third item of the induction hypothesis, each agent pulls the $i$-th arm for at most 
$$
T_{r(i)} \le \frac{512}{K \left(\Delta^{\langle m\rangle}_i\right)^2} \log\left(\frac{16 n}{\delta} \log^2\left(4/\Delta^{\langle m\rangle}_i\right)\right)
$$
times.  Therefore, the total running time is bounded by 
$
\sum\limits_{i \in I} T_{r(i)} = O\left(\frac{H^{\langle m\rangle}}{K} \log\left({\frac{n}{\delta} \log H^{\langle m\rangle}}\right)\right).
$
\end{enumerate}
\end{proof}

Finally we comment on the lower bound. In \cite{TZZ19} it was shown that for the special case when $m = 1$, to achieve a running time of $\tilde{O}(H^{\langle 1 \rangle}/K)$ with success probability $0.99$ one needs at least $\log \left(1/\Delta_{[1]}^{\langle 1 \rangle}\right)$ rounds.  Therefore the upper bound in Theorem~\ref{thm:confidence-up} is  tight up to logarithmic factors.

\bibliographystyle{plain}
\bibliography{paper}

\appendix

\section{Probability Tools}

\begin{lemma}\label{lem:chernoff}
Let $X_1, \dotsc, X_n \in [0, d]$ be independent random variables and $X = \sum\limits_{i = 1}^n X_i$.
Then 
\[
\Pr[X > \bE[X] + t] \leq \exp\left(-\frac{2 t^2}{n d^2}\right)
\quad
\text{and} 
\quad
\Pr[X < \bE[X] - t] \leq \exp\left(-\frac{2 t^2}{n d^2}\right)
\,.
\]
Moreover, if $X_1, \dotsc, X_n \in [0, 1]$ and $\mu_L \le \bE[X] \le \mu_H$, then we also have for every $\delta \in [0, 1]$, 
\begin{align*}
    \Pr\left[{X}  \geq (1 + \delta) \mu_H \right] \leq  \exp\left({-\frac{\delta^2 \mu_H}{3}}\right)  \quad \text{and} \quad \Pr\left[{X}\le (1-\delta) \mu_L \right] \leq  \exp\left({-\frac{\delta^2 \mu_L}{3}}\right).
\end{align*}
\end{lemma}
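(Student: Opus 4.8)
The plan is to establish both inequalities through the standard exponential-moment (Chernoff) method: bound the moment generating function of each independent summand, take a product, apply Markov's inequality to the exponentiated deviation, and finally optimize the free parameter $s$.

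\textbf{Additive (Hoeffding) bound.} For the upper tail I would fix $s > 0$ and apply Markov's inequality to $e^{s(X - \bE[X])}$, giving
\[
\Pr[X > \bE[X] + t] \le e^{-st} \prod_{i=1}^n \bE\!\left[e^{s(X_i - \bE[X_i])}\right].
\]
The key ingredient is Hoeffding's lemma: if $Y \in [a,b]$ with $\bE[Y] = 0$, then $\bE[e^{sY}] \le \exp\!\big(s^2(b-a)^2/8\big)$, which one proves by bounding the convex function $e^{sy}$ by its chord on $[a,b]$ and checking via a second-order Taylor expansion that the resulting log-MGF $\psi(s)$ has $\psi(0)=\psi'(0)=0$ and $\psi''(s) \le (b-a)^2/4$. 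Since each centered variable $X_i - \bE[X_i]$ lies in an interval of length $d$, this yields $\bE[e^{s(X_i-\bE[X_i])}] \le \exp(s^2 d^2/8)$, hence $\Pr[X > \bE[X]+t] \le \exp(-st + n s^2 d^2/8)$. Minimizing over $s$ at $s = 4t/(nd^2)$ gives the stated $\exp(-2t^2/(nd^2))$, and the lower-tail inequality follows by running the identical argument on the variables $-X_i$.

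\textbf{Multiplicative (Chernoff) bound.} For the upper tail I would again use Markov on $e^{sX}$ with $s>0$, bounding each factor via the convexity inequality $e^{sx} \le 1 + x(e^s-1)$ for $x \in [0,1]$, so that $\bE[e^{sX_i}] \le \exp\!\big(\bE[X_i](e^s-1)\big)$ and thus $\bE[e^{sX}] \le \exp\!\big(\bE[X](e^s-1)\big)$. This is precisely where the one-sided hypothesis is used: because $e^s - 1 \ge 0$ for $s \ge 0$, I may replace $\bE[X]$ by its upper bound $\mu_H$, obtaining $\Pr[X \ge (1+\delta)\mu_H] \le \exp\!\big(-s(1+\delta)\mu_H + \mu_H(e^s-1)\big)$. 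Setting $s = \ln(1+\delta)$ produces the classical factor $\big(e^{\delta}/(1+\delta)^{1+\delta}\big)^{\mu_H}$, and the elementary inequality $\delta - (1+\delta)\ln(1+\delta) \le -\delta^2/3$ for $\delta \in [0,1]$ upgrades this to $\exp(-\delta^2 \mu_H/3)$. The lower-tail bound is symmetric: I would use $e^{-sX}$ with $s>0$, where now $e^{-s}-1 \le 0$ forces me to relax $\bE[X]$ downward to $\mu_L$, and optimizing at $s = -\ln(1-\delta)$ together with the companion inequality $-\delta - (1-\delta)\ln(1-\delta) \le -\delta^2/2 \le -\delta^2/3$ delivers the claimed $\exp(-\delta^2 \mu_L/3)$.

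\textbf{Main obstacle.} Once the two MGF estimates (Hoeffding's lemma and $\bE[e^{sX_i}] \le \exp(\bE[X_i](e^s-1))$) are in hand, everything reduces to one-dimensional optimization, so the calculations are routine. The only genuinely delicate points are: (i) correctly handling the one-sided mean bounds $\mu_L$ and $\mu_H$, which requires tracking the sign of $e^{\pm s}-1$ to know in which direction the mean may safely be relaxed while preserving the inequality; and (ii) verifying the closing scalar inequality $\delta - (1+\delta)\ln(1+\delta) \le -\delta^2/3$ on $[0,1]$, which is exactly what pins down the constant $3$ appearing in the exponent.
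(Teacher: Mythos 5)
Your proof is correct: both halves follow the standard exponential-moment method (Hoeffding's lemma for the additive bound, the convexity estimate $e^{sx}\le 1+x(e^s-1)$ for the multiplicative one), the sign analysis justifying the relaxation to $\mu_H$ and $\mu_L$ is handled properly, and the closing scalar inequalities $\delta-(1+\delta)\ln(1+\delta)\le -\delta^2/3$ and $-\delta-(1-\delta)\ln(1-\delta)\le -\delta^2/2$ do hold on $[0,1]$. The paper states this lemma as a standard probability tool without proof, so there is nothing to compare against; your argument is exactly the classical one the authors implicitly rely on.
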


\begin{theorem}[Berry-Esseen, \cite{Shevtsova10}]\label{thm:berry-esseen}
Let $X_1, X_2, \dots, X_n$ be independent random variables with $\bE[X_i] = 0$, $\bE[X_i^2] = \sigma_i^2$ and $\bE[|X_i|^3] \leq \rho$ for all $i \in [n]$. Let 
\[
S = \frac{X_1 + X_2 + \dots + X_n}{\sqrt{\sigma_1^2 + \sigma_2^2 + \dots + \sigma_n^2}}.
\]
Let $F$ be the cumulative distribution function of $S$, and $\Phi$ be the cumulative distribution function of the standard normal distribution. It holds that
\[
\sup_{x \in \mathbb{R}} |F(x) - \Phi(x)| \leq 0.5601 \cdot \rho n \left(\sum_{i=1}^{n} \sigma_i^2\right)^{-3/2} .
\]
\end{theorem}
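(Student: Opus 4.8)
The plan is to establish the stated inequality through the classical Fourier-analytic (characteristic function) argument. Write $s^2 = \sum_{i=1}^n \sigma_i^2$ and set $Y_i = X_i/s$, so that $S = \sum_i Y_i$ has unit variance; let $\Lambda = \rho n\, s^{-3}$ be the quantity appearing on the right-hand side, and observe that since $\bE[|X_i|^3] \ge (\bE[X_i^2])^{3/2} = \sigma_i^3$ and $\max_i \sigma_i^2 \ge s^2/n$, we always have $\Lambda \ge n^{-1/2}$, so the bound is only nontrivial when $\Lambda$ is small. Let $f(t) = \bE[e^{itS}] = \prod_{i=1}^n f_i(t)$ and $\phi(t) = e^{-t^2/2}$ be the characteristic functions of $S$ and of the standard normal. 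The first step is Esseen's smoothing inequality, which for every $U > 0$ reduces the CDF distance to a frequency-domain integral:
\[
\sup_{x \in \mathbb{R}} |F(x) - \Phi(x)| \le \frac{1}{\pi}\int_{-U}^{U} \frac{|f(t) - \phi(t)|}{|t|}\,dt + \frac{24}{\pi\sqrt{2\pi}\,U}.
\]

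The second and main step is to bound $|f(t) - \phi(t)|$. The cleanest route is to work with logarithms on the range where each factor stays away from zero: the third-order Taylor estimate $f_i(t) = 1 - \tfrac12 (\sigma_i^2/s^2) t^2 + \theta_i$ with $|\theta_i| \le \tfrac16 (\bE|X_i|^3\, s^{-3})|t|^3$ gives $\log f_i(t) = -\tfrac12 (\sigma_i^2/s^2)t^2 + O(\bE|X_i|^3\, s^{-3}|t|^3)$. Summing over $i$, the quadratic terms assemble into $-t^2/2$ while the cubic remainders accumulate to $O(\Lambda |t|^3)$, so that $f(t) = \phi(t)\, e^{O(\Lambda|t|^3)}$, and hence $|f(t) - \phi(t)| \le c\,\Lambda\,|t|^3 e^{-t^2/3}$ on the range $|t| \le 1/(4\Lambda)$ for an absolute constant $c$ (this is essentially Feller's lemma, where the restriction on $|t|$ keeps the accumulated cubic error below a constant). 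Substituting this estimate into the smoothing inequality with $U = \Theta(1/\Lambda)$, so that the boundary term is itself $O(\Lambda)$, and using $\int_{-\infty}^{\infty} |t|^2 e^{-t^2/3}\,dt = O(1)$, yields the qualitative conclusion $\sup_x|F(x) - \Phi(x)| \le C\,\Lambda$ for some absolute constant $C$.

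The hard part — and the reason the statement is invoked from \cite{Shevtsova10} rather than reproved from scratch — is driving the absolute constant down to the sharp value $0.5601$. The argument above is lossy in at least three places: the smoothing inequality itself, the passage through $\log f_i$ (valid only after truncating $t$), and the crude accounting of the Taylor remainder. Each inflates $C$, and recovering the optimal constant demands a delicate non-asymptotic analysis of $|f(t) - \phi(t)|$ with an exactly tuned truncation level $U$ and precise tracking of the lower-order terms, which is precisely the technical content of the cited refinement and lies beyond what a short sketch can recover.
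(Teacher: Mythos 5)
The paper does not prove this statement at all: it is quoted verbatim as an external tool from \cite{Shevtsova10}, so there is no internal proof to compare your argument against. Your sketch is a correct outline of the classical Fourier-analytic route --- Esseen's smoothing inequality, the third-order Taylor expansion of each characteristic factor, and Feller-style control of $|f(t)-\phi(t)|$ on $|t|\le 1/(4\Lambda)$ --- and it does deliver $\sup_x|F(x)-\Phi(x)|\le C\,\rho n\,(\sum_i\sigma_i^2)^{-3/2}$ for some absolute constant $C$. Your side observations (that $\bE|X_i|^3\ge\sigma_i^3$ forces $\Lambda\ge n^{-1/2}$, and that the choice $U=\Theta(1/\Lambda)$ balances the boundary term) are also correct. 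The one genuine gap, which you yourself flag, is that this argument does not yield the specific constant $0.5601$ appearing in the statement; the elementary smoothing argument produces a constant several times larger, and the sharp value is exactly the technical content of the cited reference. For the purposes of this paper the gap is immaterial: the theorem is invoked only once, in the proof of the learning-the-bias lower bound, where the constant is immediately absorbed into a $4.5\,n\sigma^{-3/2}$ bound and ultimately into an $O(n^{-1/2})$ estimate, so any absolute constant would serve. If you wanted your sketch to stand in for the citation, you would either have to carry out the delicate optimization of the truncation level and Taylor remainders from \cite{Shevtsova10}, or restate the theorem with an unspecified absolute constant and propagate that change through the (purely asymptotic) application.
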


\end{document}